\newtheorem{assumption}{Assumption}[section]
\newtheorem{theorem}{Theorem}[section]
\newtheorem{definition}{Definition}[section]
\newtheorem{lemma}[theorem]{Lemma}
\newtheorem{corollary}[theorem]{Corollary}
\newcounter{note}[section]
\def\sse{\subseteq}
\newcommand{\pr}{\mathbf{P}} 
\newcommand{\E}{\mathbb{E}} 
\newcommand{\ZZ}{\mathbb{Z}}
\newcommand{\R}{\mathbb{R}}
\newcommand{\A}{\mathcal{A}}
\newcommand{\T}{\mathcal{T}}
\newcommand{\calE}{\mathcal{E}}
\newcommand{\D}{\mathcal{D}}
\newcommand{\OPT}{\mathtt{OPT}}
\newcommand{\NA}{\mathtt{NA}}
\newcommand{\AD}{\mathtt{AD}}
\newcommand{\ODT}{\mathtt{ODT}}
\newcommand{\Stem}{\mathtt{Stem}}
\newcommand{\sym}{\star}
\def\nil{\perp}
\newcommand{\ignore}[1]{}
\newcommand{\rv}[1]{\mathcal{{#1}}}
\newcommand{\rl}[2]{#1(#2)}
\newcommand{\asc}{\ensuremath{\mathsf{AdSubCov}}\xspace}
\newcommand{\ssp}{\ensuremath{\mathsf{SSPC}}\xspace}
\newcommand{\WISERu}{\mathtt{WISER}-\mathtt{U}}
\newcommand{\WISERr}{\mathtt{WISER}-\mathtt{R}}
\newcommand{\SYNu}{\mathtt{SYN}-\mathtt{U}}
\newcommand{\SYNr}{\mathtt{SYN}-\mathtt{R}}
\newcommand{\Qh}{\widehat{Q}}
\newcommand{\fh}{\widehat{f}}
\newif\ifICMLVersion
\title{The Power of Adaptivity for Stochastic Submodular Cover}
\author{Rohan Ghuge\thanks{Department of Industrial and Operations Engineering, University of Michigan, Ann Arbor, USA. Research supported in part by NSF grants CMMI-1940766 and CCF-2006778.} \and Anupam Gupta\thanks{Department of Computer Science, Carnegie Mellon University, Pittsburgh, USA. Supported in part by NSF awards CCF-1907820, CCF-1955785, and CCF-2006953.} \and Viswanath Nagarajan$^*$}
\begin{document}
\maketitle

\begin{abstract}
In the stochastic submodular cover problem, the goal is to select a
subset of stochastic items of minimum expected cost to cover a submodular function. Solutions in this setting correspond to sequential decision processes that select items one by one ``adaptively'' (depending on prior observations). While such adaptive solutions achieve the best objective, the inherently sequential nature makes them undesirable in many applications. We ask: \emph{how well can solutions with only a few adaptive rounds approximate fully-adaptive solutions?}   
We give nearly tight answers for both independent and correlated settings, proving smooth tradeoffs between the number of adaptive rounds and the solution quality, relative to fully adaptive solutions. Experiments on synthetic and real datasets  
show qualitative improvements in the solutions as we allow more rounds
of adaptivity; in practice, solutions with a few rounds of adaptivity are nearly as good as fully adaptive solutions.
\end{abstract}

\section{Introduction}\label{sec:intro}

Submodularity is a fundamental notion that arises in applications such
as image segmentation, data summarization~\cite{SimonSS07, LB11,
  SiposSS+12}, hypothesis identification~\cite{BarinovaLK12,
  ChenSM+14}, information gathering~\cite{RadanovicSK+18}, and social
networks~\cite{KempeKT15} . The {\em submodular cover} optimization
problem requires us to pick a minimum-cost subset $S$ of items to
cover a monotone submodular function $f$. Submodular cover has been
extensively studied in machine learning, computer science and
operations research~\cite{W82,GK11,MirzasoleimanKB15,BateniEM18}: here
are two examples from 
sensor deployment and medical diagnosis.

In the sensor deployment setting, we consider the problem of deploying a collection of sensors to monitor some phenomenon \cite{KrauseG07, MiniUS14, SunLL19}, for example: we may wish to monitor air quality or traffic situations. The area each sensor can cover depends on its sensing range. The goal of the problem is to deploy as few sensors as possible to cover a desired region entirely. The area covered as a function of the sensors deployed is a submodular function, and we can cast the sensor deployment problem as a special case of submodular cover. 

In the medical diagnosis example, we have $s$ possible conditions
(hypotheses) the patient may suffer from along with the priors on
their occurrence, and our goal is to perform tests to identify the
correct condition as quickly as possible~\cite{GG74,KPB99,D01,CLS14}. This can be cast as
submodular cover by viewing each test as eliminating all inconsistent
hypotheses. Hence we want a coverage value of $s-1$: once $s-1$
inconsistent hypotheses are eliminated, the remaining one must be
correct. 

Observe that both these applications involve uncertain data: 
the precise area covered by a sensor  is not known before the sensor is actually setup 
and the precise outcome (positive/negative) of a test is not known
until the action has been taken. This uncertainty can be modeled using
{\em stochastic submodular optimization}, where the items are
stochastic.   
  As a simple example of the stochastic nature, each item may
be active or inactive (with known probabilities), and only active
items contribute to the submodular function.

A solution for stochastic submodular cover is now a \emph{sequential
  decision process}. At each step, an item is {\em probed} and its
realization (e.g., active or inactive) is observed. The process is
typically \emph{adaptive}, where all the information from previously
probed items is used to identify the next item to probe. This process
continues until the submodular function is covered, and the goal is to
minimize the expected cost of probed items. Such adaptive solutions are inherently fully sequential, which is undesirable if probing an item
is time-consuming. E.g., in sensor deployment, probing a sensor corresponds to physically deploying a sensor and observing whether it functions as expected, which may take hours or days.
Or, probing/performing a test in medical diagnosis
may involve a long wait for test results. Therefore, we prefer
solutions with only few {\em rounds} of adaptivity.

Motivated by this, we ask: \emph{how well do solutions with only a few
  adaptive rounds approximate fully-adaptive solutions for the
  stochastic submodular cover problem?} We consider both cases where
realizations of different items are independent, and where they are
allowed to be correlated.  For both these situations we give nearly
tight answers, with smooth tradeoffs between the number $r$ of
adaptive rounds and the solution quality (relative to fully adaptive
solutions).

The main contribution of our work is an $r$-round adaptive solution for stochastic submodular cover in the ``set-based'' model for adaptive rounds. In this model, a fixed subset of items is probed in parallel every round (and their total cost is incurred). The decisions in the current round can depend on the realizations seen in all previous rounds. However, as noted in \cite{AAK19},  if we require  function $f$ to be covered with probability one then the $r$-round-adaptivity gap turns out to be very large. 
\ifICMLVersion
\else
 (See \S\ref{app:set-based} for an example.) 
 \fi  
Therefore, we focus on  set-based solutions that are only required to cover the function with high probability.

In designing algorithms, it turns out to be more convenient to work with the ``permutation'' model for adaptive rounds, where the function is covered with probability one. This model was also used in prior literature \cite{GV06, AAK19}. Here,  every round of an $r$-round-adaptive solution  specifies an ordering of all remaining items and probes them in this order until some stopping rule. See Definition~\ref{def:r-round} for a formal definition.  Moreover, our $r$-round adaptive algorithm in the permutation model can be transformed into an  $r$-round adaptive algorithm in the set-based model. We obtain algorithms in the set-based model that:
\begin{itemize}
    \item for any $\eta \in (0,1)$, finds an $r$-round adaptive solution that has expected cost at most $\frac{r\alpha}{\eta }\cdot \OPT$ and  covers the function with  probability at least $1-\eta$.
    \item finds an  $O(r)$-round adaptive solution that has expected cost at most $O({\alpha})\cdot \OPT$ and covers the function with probability at least $1-e^{-\Omega(r)}$.
\end{itemize}
Here $\OPT$ is the cost of an optimal fully-adaptive solution and $\alpha$ is the approximation ratio of our algorithm in the permutation model.

The first algorithm above is  for the case where $r$, the number of rounds of adaptivity, is small (say, a constant). In this, we keep the number of rounds the same, but we lose a factor $r$ in the expected cost. 
The second algorithm is for the case that $r$ is large, e.g., more than a constant. Here, the number of set-based rounds increases by a factor $2$, but  we only lose a constant factor in expected cost. We formalize and prove these results in \S\ref{app:set-based}. For the rest of the paper, an $r$-round adaptive algorithm refers to an an $r$-round adaptive algorithm in the permutation model (unless specified otherwise).

\ifICMLVersion
\subsection{Results}
\else
\subsection{Results and Techniques}\label{sec:results}
\fi
Consider a monotone submodular function $f: 2^U \to \ZZ_{\geq 0}$ with $Q := f(U)$. There are $m$ items, where each item $i$ is a random
variable $\rv{X}_i$ having cost $c_i$ and corresponding to a random
element of $U$. (Our results extend to the more general setting where each item realizes to a subset of $U$.) 
  The goal is to select a set of items such that the
union $S$ of their corresponding realizations satisfies $f(S) = Q$, and the
expected cost is minimized. Our first result is when the items have
\emph{independent distributions}.

\begin{theorem}[Independent Items]
  \label{thm:main-ssc}
  For any integer $r \geq 1$, there is an $r$-round adaptive algorithm
  for the stochastic submodular cover problem with cost
  $O(Q^{1/r} \cdot \log Q)$ times the cost of an optimal adaptive
  algorithm.
\end{theorem}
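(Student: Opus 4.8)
The plan is to adapt the classical greedy analysis for (stochastic) submodular cover to the bounded-round setting, trading off coverage progress against the number of rounds via a geometric schedule. First I would recall that a single round of adaptivity corresponds to committing to a permutation of all remaining items and probing along it until a stopping rule fires; the key object to control is the \emph{residual function} $f_{\psi}(S) := f(S \cup \psi) - f(\psi)$, which remains monotone submodular, conditioned on the realization $\psi$ observed so far. In a given round, knowing the realizations from previous rounds, the algorithm should solve a (deterministic-looking) ``outer'' submodular cover instance where the coverage target is only a \emph{fraction} of the remaining requirement, using the known marginal distributions of the yet-unprobed items to reason about expected marginal coverage. Concretely, with $Q$ the total requirement, I would aim in round $j$ to reduce the expected residual requirement from roughly $Q^{1 - (j-1)/r}$ down to $Q^{1 - j/r}$, so that after $r$ rounds the residual is $Q^{0} = 1$, i.e., (essentially) fully covered; the integrality of $f$ then lets us finish.

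The core lemma I would prove (or invoke, in the spirit of the permutation-model guarantee with ratio $\alpha$ referenced in the introduction) is a \emph{one-round} statement: given any conditioning $\psi$ with residual requirement $q$, there is a single-permutation strategy whose expected cost is $O(\log q) \cdot \OPT_\psi$ (where $\OPT_\psi$ is the optimal fully-adaptive cost to cover the residual) and which, with the stopping rule ``stop once the realized coverage reaches $q - q^{1-1/r}$,'' drives the residual requirement down to $q^{1-1/r}$ in expectation. The cost bound here should come from a charging argument against the optimal adaptive policy: a standard trick is to run $\Theta(\log q)$ ``phases'' of a greedy/LP-based selection, each phase guaranteeing a constant-factor multiplicative drop in expected residual coverage at cost $O(\OPT_\psi)$, and to note that $O(\log q^{1/r}) = O(\tfrac1r \log q)$ phases suffice to achieve the gentler target $q \to q^{1-1/r}$ — but summed over $r$ rounds this still gives $O(\log Q)$ total phases, hence total cost $O(\log Q) \cdot \OPT$, while the $Q^{1/r}$ factor enters because within a phase (working with a truncated/scaled target) the per-phase cost relative to $\OPT$ degrades by the ratio between the current requirement and the reduced target, i.e., by a $q^{1/r} \le Q^{1/r}$ factor. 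Combining across rounds by linearity of expectation (and using that $\OPT$ for any residual instance is at most $\OPT$) yields the claimed $O(Q^{1/r} \log Q) \cdot \OPT$ bound.

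The steps, in order, would be: (i) set up the residual-function formalism and the geometric target schedule $q_j = Q^{1 - j/r}$; (ii) prove the one-round lemma: a single permutation plus stopping rule that, conditioned on any history, shrinks the expected residual requirement from $q_{j-1}$ to $q_j$ at expected cost $O(Q^{1/r} \log Q) \cdot \E[\OPT_{\text{residual}}]$ — this is where I'd use an LP relaxation or a direct greedy/submodularity argument for the ``cover a $(1 - q^{-1/r})$-fraction'' subproblem and a Wald-type identity to bound the stopping time; (iii) chain the $r$ rounds together, using the tower property and monotonicity of $\OPT$ under conditioning to bound the total expected cost by $r$ times the per-round bound, and observe that the $\log$ factors telescope so the final bound is $O(Q^{1/r} \log Q) \cdot \OPT$ rather than $O(r \cdot Q^{1/r} \log Q) \cdot \OPT$; (iv) handle the endgame — after round $r$ the expected residual requirement is $O(1)$, and by integrality and a final clean-up argument (or by allowing the constant in $O(\cdot)$ to absorb it) the function is covered.

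The main obstacle I anticipate is step (ii), specifically relating the cost of the one-round strategy to the optimal \emph{adaptive} cost $\OPT$ rather than to a non-adaptive optimum. The subtlety is that a non-adaptive permutation, even one that targets only a fraction of the requirement, can be much more expensive than $\OPT$ in the worst case over histories; the $Q^{1/r}$ loss is precisely the price of this, and making the argument tight requires a careful choice of which fractional target to hit and a charging scheme (likely via a ``dual'' or LP-duality certificate for submodular cover, or via the ``adaptivity gap'' machinery for the truncated function $\min(f, q - q^{1-1/r})$) that shows no more than a $q^{1/r}$ factor is lost. A secondary technical point is ensuring the stopping rules across rounds compose correctly so that the high-probability-vs-probability-one distinction (flagged in the introduction) does not bite — but in the permutation model, where coverage is with probability one, this is handled by letting each round run to the end of its permutation if the intermediate target is not met, paying at most an extra $O(\OPT)$ in the rare tail, which the expectation absorbs.
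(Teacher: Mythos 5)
Your high-level decomposition — reduce the residual requirement by a $Q^{-1/r}$ factor per round, telescope the logarithms across rounds, compose via the tower property and the monotonicity of $\OPT$ under conditioning — matches the paper's outer induction (Theorem~\ref{thm:ssc-alg}), and you correctly identify that the whole weight rests on a one-round partial-cover lemma. But that lemma is precisely what you have not proven, and the sketches you offer for it would not close the gap.

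The paper's actual partial-cover statement (Theorem~\ref{thm:partial-cover}) is that a \emph{non-adaptive} list, probed until the realized value exceeds $Q(1-\delta)$, has expected cost $O\big(\tfrac{1}{\delta}\ln\tfrac{1}{\delta}\big)\cdot\OPT$, where $\OPT$ is the optimal \emph{fully adaptive} cost for full coverage. Your stated one-round bound ``$O(\log q)\cdot\OPT_\psi$'' is not the right quantity — with $\delta = Q^{-1/r}$ the per-round bound must be $O\big(Q^{1/r}\cdot\tfrac{1}{r}\log Q\big)\cdot\OPT$, and the $Q^{1/r}$ is the bulk of the difficulty, not a ratio you can sprinkle in afterward. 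Your suggested routes — LP-duality certificates for submodular cover, or ``adaptivity gap machinery for the truncated function $\min(f,\,q-q^{1-1/r})$'' — do not yield the claim. Truncating the function and applying a submodular-maximization adaptivity gap gives a bound against the \emph{non-adaptive maximization optimum}, not a bound on the \emph{expected covering cost} of the non-adaptive list relative to the adaptive covering optimum, and in particular does not explain how one escapes the fact that a non-adaptive permutation can be badly wasteful on some histories. What the paper actually does is a phase-based potential argument: it compares the probability that the non-adaptive strategy survives past cost $\alpha 2^i$ to the probability that $\OPT$ survives past cost $2^i$, lower-bounding a per-phase ``score'' by decomposing the history space into power-of-two \emph{scales} of the residual $Q-f(S)$, isolating ``good'' scales where $\OPT$ finishes with constant conditional probability, and invoking a sampling lemma that uses the factor-$2$ adaptivity gap for stochastic submodular maximization applied to the \emph{residual} functions $f_S$ (not to a truncation of $f$). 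The upper bound on the score is a harmonic-sum argument over decision paths that gives the $\ln(1/\delta)$ factor. None of this machinery is present in your step (ii), and without it the core claim is unsupported. Your remark about ``paying an extra $O(\OPT)$ in the rare tail'' also misreads the permutation model: the stopping rule fires with probability one, so there is no tail to absorb; the entire expected-cost bound has to be proven directly.
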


This improves over an $O(r\, Q^{1/r} \,\log Q \,\log (mc_{max}))$
bound from \cite{AAK19} by eliminating the dependence on the number of
items $m$ and the item costs (which could be much larger than
$Q$). Moreover, our result nearly matches the lower bound of
$\Omega(\frac{1}{r^3} Q^{1/r})$ from \cite{AAK19}. Setting $r=\log Q$
shows that $O(\log Q)$ adaptive rounds give an $O(\log
Q)$-approximation. By transforming this algorithm into a set-based
solution using \Cref{thm:set-large-r}, we get:
\begin{corollary}[Independent  Items: Set-Based Model]
  \label{cor:indep}
There is an $O(\log Q)$ round  algorithm
  for  stochastic submodular cover in the set-based model with cost
  $O(\log Q)$ times the optimal (fully) adaptive cost.
\end{corollary}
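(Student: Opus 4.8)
The plan is to derive the corollary by instantiating \Cref{thm:main-ssc} at $r = \Theta(\log Q)$ and then passing to the set-based model via \Cref{thm:set-large-r}. First I would invoke \Cref{thm:main-ssc} with $r := \lceil \log Q \rceil$, obtaining an $r$-round adaptive algorithm in the permutation model of expected cost at most $\alpha\cdot\OPT$, where $\OPT$ is the optimal fully-adaptive cost and $\alpha = O(Q^{1/r}\log Q)$. The only elementary fact to check is that $Q^{1/r} = Q^{1/\lceil \log Q\rceil} \le Q^{1/\log Q} = O(1)$ (writing $\log = \log_2$; the base affects only the constants), so that $\alpha = O(\log Q)$. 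The point is that $r = \lceil\log Q\rceil$ is precisely the value at which the $Q^{1/r}$ factor in \Cref{thm:main-ssc} collapses to a constant, so that the round budget and the approximation factor become simultaneously $\Theta(\log Q)$.

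Next I would apply \Cref{thm:set-large-r} to the permutation-model algorithm just obtained. As recorded in the second bullet of \S\ref{sec:intro}, this converts an $r$-round permutation-model algorithm of approximation ratio $\alpha$ into an $O(r)$-round algorithm in the set-based model whose expected cost is $O(\alpha)\cdot\OPT$ and which covers $f$ with probability at least $1 - e^{-\Omega(r)}$. Plugging in $r = \lceil\log Q\rceil$ and $\alpha = O(\log Q)$ yields an $O(\log Q)$-round set-based algorithm of expected cost $O(\log Q)\cdot\OPT$ that covers $f$ with probability $1 - Q^{-\Omega(1)}$, which is the ``with high probability'' guarantee demanded in the set-based model. (If a success probability of $1 - Q^{-c}$ for a prescribed constant $c$ is desired, take $r = \Theta(c\log Q)$ instead; the round count and cost stay $O(\log Q)$ since $Q^{1/\Theta(c\log Q)} = 2^{\Theta(1/c)} = O(1)$.)

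I do not expect any genuine obstacle here: granting \Cref{thm:main-ssc} and \Cref{thm:set-large-r}, the corollary is just the substitution $r = \log Q$ together with the estimate $Q^{1/\log Q} = O(1)$. All of the real content lies in the proofs of those two theorems — in particular the $m$-free and cost-free approximation bound of \Cref{thm:main-ssc}, which is what makes the resulting set-based guarantee depend on $Q$ alone.
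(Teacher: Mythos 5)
Your proposal is correct and matches the paper's own derivation: set $r = \Theta(\log Q)$ in \Cref{thm:main-ssc} so that $Q^{1/r} = O(1)$ and the permutation-model ratio becomes $\alpha = O(\log Q)$, then invoke \Cref{thm:set-large-r} to get a $2r = O(\log Q)$-round set-based algorithm with cost $O(\alpha)\cdot\OPT$ covering $f$ with probability $1 - e^{-\Omega(r)} = 1 - Q^{-\Omega(1)}$. One small caveat on framing: \Cref{thm:set-large-r} as actually proved is not a black box that takes an arbitrary permutation-model algorithm as input --- its proof reruns \textsc{ParCA} directly with $2r$ set-based rounds and relies on that algorithm's specific stopping rule --- but its stated conclusion gives exactly the bound you need, so your calculation and the resulting corollary are correct.
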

This approximation ratio of $O(\log Q)$ is the best possible (unless P=NP) even with an arbitrary number ($r=m$) of
adaptive rounds. 
\ifICMLVersion
Previously, one could only obtain an $O(\log^2 Q
\log(mc_{max}))$-approximation in a logarithmic number of rounds \cite{AAK19}. In the special case of  \emph{unit costs}, \cite{EsfandiariKM19} gave an
$O(\log (mQ))$-approximation for covering ``adaptive submodular'' functions using $O(\log m \, \log(Qm))$ set-based rounds. When costs are arbitrary, their result implies  an $O(\log (mQc_{\max}))$-approximation in $O(\log m \, \log(Qmc_{\max}))$ set-based rounds.
\else
Previously, one could only obtain an $O(\log^2 Q
\log(mc_{max}))$-approximation in a logarithmic number of rounds \cite{AAK19}, or  an $O(\log (mQc_{\max}))$-approximation in $O(\log m \, \log(Qmc_{\max}))$ set-based rounds \cite{EsfandiariKM19}.
\fi

Moreover, \Cref{thm:main-ssc} (with $r=1$) implies an $O(Q\log Q)$ adaptivity
gap for stochastic set cover (a special case of submodular cover),
resolving an open question from~\cite{GV06} up to an $O(\log Q)$
factor, where $Q$ is the number of objects in set cover.

\medskip
Our second set of results is for the case where items have correlated
distributions. Let $s$ denote the support size of the joint
distribution $\D$, i.e., the number of \emph{scenarios}.
\begin{theorem}[Correlated Items]
  \label{thm:scn-main}
For any integer $r \geq 1$, there is an $r$-round adaptive algorithm for scenario submodular cover  with cost $O\left(s^{1/r} (\log s + r\log Q)\right)$ times the cost of an optimal adaptive algorithm.
\end{theorem}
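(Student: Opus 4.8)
The plan is to run an $r$-round scheme, analogous to the one behind \Cref{thm:main-ssc}, in which the quantity that shrinks geometrically across rounds is the size of the \emph{version space} --- the set of scenarios still consistent with the observations so far --- rather than the residual coverage requirement. Write $s_0=s$ and, after round $k$, let $V_k$ be the (random) set of consistent scenarios, $s_k=|V_k|$. The target of round $k$ is to probe items so that, for the realized scenario, either $f$ is already covered or $s_k \le s_{k-1}\cdot s^{-1/r}$. Since $s_0=s$, after $r$ rounds the version space is a singleton and $f$ has been covered, with probability one. We will charge round $k$ a cost of $O\!\big(s^{1/r}\,(\tfrac1r\log s + \log Q)\big)\cdot\OPT$; summing over the $r$ rounds gives $O\!\big(s^{1/r}(\log s + r\log Q)\big)\cdot\OPT$, the claimed bound.

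For the single round, given the current version space $V=V_{k-1}$ I would set up one monotone submodular ``progress'' function on subsets of the remaining items that simultaneously (i) tracks coverage of $f$ under whichever scenario is true and (ii) counts how many scenarios of $V$ have been distinguished from the true one; the standard device is, for each candidate true scenario $j\in V$, a function roughly of the form $g_j(A)=\lambda\cdot\min\{Q,\, f(\text{realizations of }A\text{ under }j)\}+|\{j'\in V: j'\text{ disagrees with }j\text{ on some }i\in A\}|$ for a suitable scale $\lambda$, which is submodular in $A$. Running the greedy algorithm against these $g_j$ (more precisely, against the worst case over $j\in V$, exactly as in scenario-to-submodular reductions in the decision-tree literature) yields an ordering of the remaining items; in round $k$ we probe along this ordering and stop, for the realized scenario, as soon as $f$ is covered or the version space has shrunk by the factor $s^{1/r}$.

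The cost accounting rests on two lemmas. \textbf{(a)} An adaptivity-gap bound: there is a \emph{non-adaptive} set of items that meets the round-$k$ target --- a factor-$s^{1/r}$ reduction of the version space (or full coverage) for every scenario --- of expected cost at most $O(s^{1/r})\cdot\OPT$. This is proved by charging to the optimal fully-adaptive policy, using that a policy which only needs to distinguish a $1/s^{1/r}$ fraction of the scenarios behaves like a decision tree with only $O(s^{1/r})$ relevant root-to-leaf branches, each of cost at most $\OPT$ in expectation over the scenarios it serves. \textbf{(b)} A greedy (Wolsey-type) bound: greedy applied to the submodular function above meets the round's target at cost $O(\tfrac1r\log s + \log Q)$ times the optimal non-adaptive cost for that target; the logarithmic overhead is only $O(\tfrac1r\log s+\log Q)$ (not $O(\log s+\log Q)$) precisely because we ask the round to reduce the number of still-undistinguished scenarios by a factor $s^{1/r}$ and the $Q$-bounded coverage deficit all the way to zero. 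Composing (a) and (b) gives the per-round cost $O\!\big(s^{1/r}(\tfrac1r\log s+\log Q)\big)\cdot\OPT$.

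I expect the main obstacle to be constructing the combined submodular function so that all of the following hold at once: submodularity in the probe set for each fixed scenario; the threshold on $g$ genuinely forcing both the version-space shrinkage and (in the final round, when $V$ is a singleton) full coverage of $f$; and the logarithmic range being small enough to give the telescoping. The second delicate point is proving the adaptivity-gap lemma \textbf{(a)} with the tight $s^{1/r}$ factor while \emph{correctly handling the permutation model's early stopping}, so that it is the expected cost of the prefix actually probed in round $k$ (which depends on the realized scenario) that is bounded, rather than the worst-case prefix length; this ``stop early per realization'' feature is what lets us handle highly non-uniform scenario probabilities.
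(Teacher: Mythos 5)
Your high-level plan matches the paper's: solve a non-adaptive partial-cover subproblem in each round where the progress measure is the size of the version space (set of compatible scenarios), drive it down by a factor $s^{1/r}$ per round (or fully cover $f$, whichever comes first), order items by a greedy score that combines information gain (scenarios eliminated) with relative coverage gain, and stop probing once the round's target is met. The per-round budget $O\big(s^{1/r}(\tfrac1r\log s+\log Q)\big)\cdot\OPT$ and its sum over $r$ rounds are exactly the paper's, and your observation that the last round can never trigger the version-space condition (forcing full coverage of $f$) is the same argument the paper uses in the base case of its induction.

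Where you depart from the paper --- and where the gaps are --- is the analysis of a single round. You propose to factor the per-round guarantee as (a) a ``partial-cover adaptivity gap'' of $O(s^{1/r})$ times a non-adaptive optimum, times (b) a Wolsey-style greedy ratio of $O(\tfrac1r\log s+\log Q)$ for a single combined submodular function with a single threshold. The paper never establishes either piece in isolation. Instead, \Cref{thm:scn-partial-cover} is proved directly against the fully adaptive $\OPT$ by a phase-based amortization: one compares the probability $u_i$ that the non-adaptive list $\NA$ is still running at time $\beta 2^i$ to the probability $u_i^*$ that $\OPT$ is still running at time $2^i$, proving $u_i\le \tfrac14 u_{i-1}+2 u_i^*$ (\Cref{lem:scn-submod-key}) via matching lower and upper bounds on the total score accrued during phase $i$. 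The upper bound is the source of the $\tfrac1r\log s+\log Q$ factor (at most $\log(1/\delta)$ version-space halvings plus a harmonic sum up to $Q$ along any single path), and the lower bound is where the decision-tree structure of $\OPT$ is used (via $\OPT_Z$ and $\Stem_Z$ restricted to cost $2^i$, not via a one-shot truncation of the whole tree).

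Concretely, your lemma (a) is asserted but not proved, and as sketched it does not go through: the ``truncate $\OPT$ at the frontier with $\le s^{(r-1)/r}$ compatible scenarios, observe $O(s^{1/r})$ branches'' argument controls the \emph{number} of branches but not the expected cost of the union under early stopping, and with non-uniform scenario probabilities the frontier nodes need not absorb $\Theta(1)$ probability each, so summing their costs does not obviously yield $O(s^{1/r})\cdot\OPT$. Your lemma (b) is also asserted rather than shown: a single threshold on $g_j=\lambda\min\{Q,f(\cdot)\}+|\text{disagreements}|$ does not encode the disjunctive stopping rule (cover $f$ \emph{or} shrink the version space), the claimed $\tfrac1r\log s$ term relies on a partial-cover Wolsey bound whose applicability to this mixed objective is not established, and a standard Wolsey bound controls the \emph{total} cost of the greedy set, not the \emph{expected} cost under per-scenario early stopping, so it does not compose cleanly with the expected-cost bound in (a). The paper's phase-based argument is precisely what sidesteps all three difficulties at once. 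Your proposal has the right shape and the right parameters, but the core per-round proof is missing and the factored route you propose would need substantially more work --- and a resolution of the expected-vs-worst-case mismatch --- to close.
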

We also obtain a $2r$-round adaptive algorithm with an better cost
guarantee of $O\left(s^{1/r} \log (s Q)\right)$ times the cost of an
optimal adaptive algorithm (see Corollary~\ref{cor:ssp}). Combined with the conversion to a set-based solution (\Cref{thm:set-large-r}) and
setting $r=\log s$, we can then infer:
\begin{corollary}[Correlated Items: Set-Based Model]
  \label{cor:scn} There is an $O(\log s)$ round  algorithm
  for  scenario submodular cover in the set-based model with cost
  $O(\log (s Q))$ times the optimal (fully) adaptive cost.
\end{corollary}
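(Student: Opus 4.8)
The plan is to combine the permutation-model algorithm of \Cref{cor:ssp} with the permutation-to-set-based conversion of \Cref{thm:set-large-r}, choosing the number of rounds so that the $s^{1/r}$ blow-up collapses to a constant.

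Concretely, I would first apply \Cref{cor:ssp} with $r := \lceil \log s\rceil$. Since $s^{1/r}\le s^{1/\log s}=O(1)$, this produces a $2\lceil\log s\rceil$-round adaptive algorithm in the permutation model whose expected cost is $O\!\left(s^{1/r}\log(sQ)\right)=O(\log(sQ))$ times the optimal fully-adaptive cost $\OPT$. (Note that invoking \Cref{thm:scn-main} directly at $r=\log s$ would instead give cost $O(s^{1/r}(\log s + r\log Q))=O(\log s\,\log Q)$, a logarithmic factor worse than the target $O(\log(sQ))=O(\log s+\log Q)$; this is why the $2r$-round refinement in \Cref{cor:ssp} is the right object to start from.) Record the result as a permutation-model algorithm with $r_0 := 2\lceil\log s\rceil = \Theta(\log s)$ rounds and approximation ratio $\alpha = O(\log(sQ))$.

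Next I would feed this algorithm into \Cref{thm:set-large-r}, which converts an $r_0$-round permutation-model algorithm of approximation ratio $\alpha$ into an $O(r_0)$-round set-based algorithm of expected cost $O(\alpha)\cdot\OPT$ that covers $f$ with probability at least $1-e^{-\Omega(r_0)}$. Plugging in $r_0 = \Theta(\log s)$ and $\alpha = O(\log(sQ))$ yields an $O(\log s)$-round set-based algorithm of expected cost $O(\log(sQ))\cdot\OPT$ that covers $f$ with probability at least $1 - e^{-\Omega(\log s)} = 1 - s^{-\Omega(1)}$, which is exactly the high-probability coverage claimed (and is unavoidable, since exact coverage in the set-based model is impossible, as discussed in \S\ref{app:set-based}).

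There is no substantive obstacle here: all the real work is already contained in \Cref{thm:scn-main}/\Cref{cor:ssp} and in the conversion \Cref{thm:set-large-r}. The only points to verify are routine — that $s^{1/\lceil\log s\rceil}=O(1)$, so the scenario-count factor disappears exactly at $r=\log s$; that the constant-factor blow-up in the number of rounds from the conversion, composed with the factor $2$ already present in \Cref{cor:ssp}, still leaves $\Theta(\log s)$ rounds; and that the tail $e^{-\Omega(\log s)}$ is inverse-polynomial in $s$ and hence counts as ``with high probability'' in the sense used throughout.
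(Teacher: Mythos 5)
Your proposal is correct and follows essentially the same route as the paper: set $r=\Theta(\log s)$ so that $s^{1/r}=O(1)$, invoke the improved $2r$-round permutation-model result to get approximation ratio $O(\log(sQ))$ (rather than the weaker $O(\log s\,\log Q)$ that \Cref{thm:scn-main} would give at this $r$), and then pass to the set-based model via \Cref{thm:set-large-r}. One small citation slip: the $2r$-round adaptive algorithm you invoke is \Cref{thm:scn-improved-apx}, which is obtained by iterating the non-adaptive partial-cover guarantee of \Cref{cor:ssp}; \Cref{cor:ssp} itself is a single-round (non-adaptive) statement, and similarly \Cref{thm:set-large-r} is proved by directly iterating \textsc{SParCA} rather than as a literal black-box wrapper around an arbitrary permutation-model algorithm, though the paper's own exposition describes it in the black-box style you use, so the logic goes through unchanged.
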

The above approximation guarantee is nearly the best possible, even with an arbitrary number of adaptive rounds: there is an $\Omega(\log s)$-factor hardness of approximation~\cite{CPRAM11}.  
Scenario submodular cover generalizes the classic optimal decision tree problem
\cite{GG74,HR76,L85,KPB99,D01,AH12,GNR17}. A  fully-adaptive $O(\log (sQ))$-approximation for scenario submodular cover  was obtained in \cite{GHKL16}; see also \cite{NavidiKN20} for a more general result.
In terms of rounds-of-adaptivity, an $O(\log (mQ\frac{c_{\max}}{p_{\min}}))$-approximation in $O(\log m \, \log(Qm \frac{c_{\max}}{p_{\min}}))$ set-based rounds follows from \cite{GHKL16,EsfandiariKM19}. Here $p_{\min}\le \frac 1s$ is the minimum probability of any
scenario.  
\ifICMLVersion
Scenario submodular cover is  not ``adaptive submodular'', and so results from \cite{EsfandiariKM19} cannot be used directly. Still, \cite{GHKL16} showed an equivalent goal function for scenario submodular cover that is indeed 
adaptive-submodular, but with a larger ``$Q$ value'' of $\frac{Q}{p_{min}}$. Then,  the algorithm from \cite{EsfandiariKM19} can be applied to this new goal function. \fi
We note that when the number of rounds is less than logarithmic, our result provides the first  approximation guarantee  even in the well-studied special case of optimal decision tree. 

The results in \Cref{thm:main-ssc}
and \Cref{thm:scn-main} are incomparable: while the independent case
has more structure in the distribution $\D$, its support size is
exponential.  Finally,
the dependence on the
support size $s$ is necessary in the correlated setting, as our next
result shows.
\begin{theorem}[Lower Bound] 
  \label{thm:scn-lb}
For any integer $r \geq 1$, there is an instance of scenario submodular cover with $Q=1$ where the cost of any $r$-round adaptive solution is $\Omega(\frac{1}{r^2\log s} \cdot s^{1/r})$ times the optimal adaptive cost.
\end{theorem}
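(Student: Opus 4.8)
\textbf{Proof proposal for Theorem~\ref{thm:scn-lb}.}
The plan is to build a hard instance on a complete $d$-ary tree of depth $r$, where $d:=\lceil s^{1/r}\rceil$; its $s$ leaves are the scenarios, each with probability $1/s$. Every internal node $v$ carries a \emph{comparison gadget} $I_v=\{x_{v,1},\dots,x_{v,d}\}$ of $d$ unit-cost items. If the scenario leaf $\ell$ has root-to-leaf path $v_0,v_1,\dots,v_r=\ell$ and $a_k\in[d]$ is the index of $v_k$ among the children of $v_{k-1}$, then under $\ell$ the item $x_{v_{k-1},b}$ realizes to an element encoding $(v_{k-1},b,\operatorname{sign}(a_k-b))$ when $v_{k-1}$ lies on $\ell$'s path and to a fixed junk element otherwise, and additionally $x_{v_{r-1},a_r}$ realizes (also) to a target element $t$. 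We set $f(S)=\mathbf 1[t\in S]$, which is monotone submodular with $Q=f(U)=1$. Thus covering $f$ is exactly probing the one item $x_{v_{r-1},a_r}$ of the realized scenario; probing an on-path gadget item acts as a single comparison query against the coordinate $a_k$; and the prefix $(a_1,\dots,a_{k-1})$ must already be known to query the correct gadget $I_{v_{k-1}}$, since an off-path gadget returns only junk.

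Next I would bound $\OPT$. A fully adaptive solution binary-searches $I_{v_0}$ to pin down $a_1$ (hence $v_1$) in $\lceil\log_2 d\rceil$ probes, recurses into $I_{v_1}$, and so on; the binary search at the last level both determines $a_r$ and, on its final probe, probes $x_{v_{r-1},a_r}$, covering $f$. This costs $r\lceil\log_2 d\rceil=O(\log s)$, so $\OPT=O(\log s)$ (it is also $\Omega(\log s)$ by an entropy argument, but only the upper bound is needed).

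The core is to show every $r$-round solution has expected cost $\Omega(r\,s^{1/r})$ — already stronger than the stated bound. View the $r-1$ coordinates $a_1,\dots,a_{r-1}$ (all of which must be pinned down before $x_{v_{r-1},a_r}$ can be probed) as resolved sequentially, since one cannot usefully query $I_{v_{k-1}}$ until $(a_1,\dots,a_{k-1})$ is fixed. Two facts drive the bound: (i) a single round is non-adaptive internally, so narrowing a coordinate's window by a factor $\rho$ costs $\Omega(\rho)$ comparison probes into the relevant gadget(s); and (ii) probing a coordinate ahead of schedule (before the coordinates unlocking its gadget are known) incurs an extra factor of $d$ in cost for each such premature coordinate, because one must hedge over all candidate prefixes. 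Hence, if coordinate $a_k$ is resolved over $\ell_k$ rounds its cost is $\Omega(\ell_k d^{1/\ell_k})$ by AM--GM over those rounds, speculation only increases cost, and since $\sum_k \ell_k\le r$ while $r-1$ coordinates each have $\ell_k\ge 1$, at least $r-2$ of them have $\ell_k=1$ and thus individually cost $\Omega(d)$ — it is exactly this \emph{integrality} (each coordinate needs a whole round) that yields $\Omega(rd)=\Omega(r\,s^{1/r})$ rather than the weaker $\Omega(r\,s^{1/r-1/r^2})$ a purely continuous AM--GM over all $r$ rounds would give. Dividing by $\OPT=O(\log s)$ gives a ratio $\Omega\!\big(\tfrac{r\,s^{1/r}}{\log s}\big)\ge\Omega\!\big(\tfrac{1}{r^2\log s}\,s^{1/r}\big)$.

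The main obstacle is making (i)--(ii) and the ``$\ell_k$ rounds per coordinate'' accounting rigorous against an adversary that interleaves partial progress on several coordinates, probes off-path items, or exploits the stopping rule to end a round early. I would do this with a potential/charging argument: maintain the pair (number of coordinates fully determined, refinement level of the current coordinate), show that a round probing $n$ items raises a suitably weighted potential by at most $\log(O(n))$ while keeping the integer ``fully-determined count'' honest, note that off-path probes never help and that early stopping only shrinks $n$, and then run the discrete convexity minimization over the $r$ rounds described above. The remaining pieces — submodularity and $Q=1$ of $f$, the $\OPT$ upper bound, and robustness of the lower bound to the permutation- vs.\ set-based convention and to covering exactly vs.\ with high probability — are routine.
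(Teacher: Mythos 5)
Your instance is in the same family as the paper's (a depth-$r$ tree, one scenario per leaf, $Q=1$, $\OPT$ follows the root-to-leaf path), but the local gadget and the lower-bound argument are genuinely different — and the latter is where your proposal has a real gap.

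On the instance: the paper places $\ell=\log N$ binary-valued items at each internal node that together spell out the index of the correct child, plus one $0/1$ ``target'' item $\rv{Z}_w$ at each leaf; it has an $N$-ary tree of depth $r$ with $N=2^\ell$ and $s=N^r$. Your gadget instead puts $d=\lceil s^{1/r}\rceil$ unit-cost comparison items at each node. Both designs give $\OPT=O(\log s)$, and both have the property that an on-path item is ``useless'' until the prefix of coordinates unlocking it has been determined. So the constructions are morally equivalent, though the paper's binary encoding is more economical in on-path items ($\ell$ vs.\ $d$ per node), which keeps $\OPT$ clean.

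The real divergence — and the gap — is the lower-bound argument. The paper's is a self-contained inductive counting argument. Assume for contradiction that the expected cost is at most $N/(4r)^2$. Define, for each round $i$, the set of scenarios $\cE_i\cap\F_i$ under which (a) every earlier round probed at most $N/(4r)$ items (by Markov, losing at most $N^r/(4r)$ scenarios per round), and (b) no item in the subtree $\rv{T}_i$ below the depth-$i$ node on the realized path was probed before round $i$. The key lemma is that each round removes at most $N^r/(2r)$ scenarios from $\cE_i\cap\F_i$: within $\cE_{i+1}\cap\F_i$, all scenarios sharing the same depth-$i$ node see identical realizations so far, hence get the \emph{same} round-$i$ list; its $N/(4r)$-item prefix can touch the subtrees of at most $N/(4r)$ of the $N$ children, so at most an $\tfrac{1}{4r}$ fraction of those scenarios can fall out of $\F_{i+1}$. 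After $r$ rounds at least $N^r/2$ scenarios survive, under which the unique covering item $\rv{Z}_{w^*}$ was never probed, contradicting feasibility. This argument needs no potential, no charging, no AM--GM, and it handles speculative/off-path probing automatically because the accounting is purely over which scenarios still agree with everything observed.

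Your proposed argument — charge each coordinate $a_k$ a number $\ell_k$ of rounds, apply AM--GM to get cost $\Omega(\ell_k d^{1/\ell_k})$, argue speculation only costs a factor $d$, and then use integrality of $\ell_k$ to conclude $\Omega(rd)$ — is not a proof. You yourself flag that ``making (i)--(ii) and the $\ell_k$ rounds per coordinate accounting rigorous \ldots is the main obstacle,'' and that is exactly the missing content. In particular: an $r$-round solution can interleave partial progress on many coordinates in the same round, can use an off-path junk result as information (it reveals a node is \emph{not} on the path, which does help eliminate scenarios), and can hedge over a sub-exponential set of candidate prefixes rather than all $d$. None of this is ruled out by the sketch; the asserted ``speculation costs a factor of $d$'' and the integer-valued ``$\ell_k$ rounds per coordinate'' are the theorem, not lemmas. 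The paper's scenario-counting sidesteps all of this cleanly, at the modest price of an extra $1/r^2$ factor (from allowing $N/(4r)$ items per round and losing $N^r/(2r)$ scenarios per round). Your claimed bound $\Omega(r\,s^{1/r})$ on the $r$-round cost is stronger than the paper's $\Omega(s^{1/r}/r^2)$ — if you could actually prove it, it would be an improvement — but as written it is a plan, not an argument. I would recommend either adopting the paper's counting lemma (it adapts to your gadget with minor changes) or actually carrying out the potential/charging argument, keeping in mind that the contradiction must be with \emph{feasibility} (the algorithm must cover $f$ with probability one in the permutation model), which is what the paper's proof ultimately invokes and your sketch never explicitly uses.
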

This lower
bound is information-theoretic and does not depend on computational
assumptions, whereas the upper bound of \Cref{thm:scn-main} is given by a polynomial algorithm. 

Finally, we note that our algorithms are also easy to implement. We tested both algorithms on synthetic and real datasets that validate the practical performance of our algorithms. Specifically, we test our algorithm for the independent case (Theorem~\ref{thm:main-ssc})  on instances of stochastic set cover, and our algorithm for the correlated case (Theorem~\ref{thm:scn-main}) on instances of  optimal decision tree. For stochastic set cover, we use real-world datasets to generate instances with $\approx 1200$ items. We observe a sharp improvement in performance within a few rounds of adaptivity, and that $6$-$7$ rounds of adaptivity are nearly as good as fully adaptive solutions. For  optimal decision tree, we use both real-world data and synthetic data. The real-world data has $\approx 400$ scenarios and the synthetic data has $10,000$ scenarios. Again, we find that about 6 rounds of adaptivity suffice to obtain solutions as good as fully adaptive ones. We also compared our algorithms' cost to  information-theoretic lower bounds for both applications: our costs are typically within 50\% of these lower bounds. 

\ifICMLVersion
\subsection{Techniques} 
\else
\subsubsection{Techniques}
\fi
The algorithms for the independent and correlated cases are similar,
with some crucial differences. In each round of both algorithms, we
iteratively compute a ``score'' for each item and greedily select the
item of maximum score. This results in a non-adaptive list of all
remaining items, and the items are {\em probed} in this order until a
stopping rule. The \textsc{ParCA} stopping rule in the independent case corresponds
to reducing the remaining target (on the function value) by a factor
of $Q^{1/r}$, whereas the \textsc{SParCA} rule involves reducing the number of ``compatible
scenarios'' by an $s^{1/r}$ factor in the correlated case.

The analysis for both Theorems~\ref{thm:main-ssc}
and \ref{thm:scn-main} follows parallel lines at the beginning.
For each $i\ge 0$, we relate the ``non-completion'' probabilities of
the algorithm after cost $\alpha\cdot 2^i$ to the optimal adaptive
solution after cost $2^i$. The ``stretch'' factor $\alpha$ corresponds
to the approximation ratio, which is different for the independent and
correlated cases. In order to relate these non-completion
probabilities, we consider the total score $G$ of items selected by
the algorithm between cost $\alpha 2^{i-1}$ and  $\alpha 2^i$. The
crux of the analysis lies in giving lower and upper bounds on the
total score $G$; the arguments here are different for the independent
and correlated settings.

In the independent case, the score of any item $\rv{X}_e$ is an estimate of its relative marginal gain, where we take an expectation over all previous items as well as $\rv{X}_e$. We also normalize this gain by the item's cost. See Equation~\eqref{eq:greedy-choice} for the definition. In lower bounding the total score $G$, we use a variant of a sampling lemma from \cite{AAK19} as well as the constant-factor adaptivity gap for submodular maximization~\cite{BSZ19}. We also need to partition the outcome space (of all previous realizations) into ``good'' and bad outcomes: conditional on a good outcome, $\OPT$ has a high probability of completing before cost $2^i$. Good outcomes are necessary in our proof of the sampling lemma, but luckily the total probability of bad outcomes is small (and they can be effectively ignored). In upper bounding $G$, we consider the total score as a sum over decision/sample paths and use the fact that the sum of relative gains corresponds to a harmonic series. 

In the correlated case, the score of any item $\rv{X}_e$ is the sum of two terms (i) its expected relative marginal gain as in the independent case, and (ii) an estimate of the probability on ``eliminated'' scenarios. Both terms are needed because the algorithm needs to balance (i) covering the function and (ii) identifying the realized scenario (after which it is trivial to cover $f$). Again, we normalize by the item's cost. See Equation~\eqref{eq:scn-greedy-choice}. 
In lower bounding the total score $G$, we partition the outcome space into good/okay/bad outcomes that correspond to a high conditional probability of $\OPT$ (i) covering function $f$ by cost $2^i$, (ii) eliminating a constant fraction of scenarios by cost $2^i$, or (iii) neither of the two cases. Further, by restricting to outcomes that have a ``large'' number of compatible scenarios (else, the algorithm's stopping rule would apply), we can bound the number of ``relevant'' outcomes by $s^{1/r}$. Then we consider $\OPT$ (up to cost $2^i$) conditional on all good/okay outcomes and show that one of these items has a high score. To upper bound $G$, we again  consider the total score as a sum over decision paths.

\ifICMLVersion
\subsection{Other related work} 
The   role of adaptivity has been extensively studied for various stochastic ``maximization'' problems such as knapsack~\cite{DGV08,BGK11}, matching~\cite{BGLMNR12,BehnezhadDH20}, matroid-constrained probing~\cite{GN13} and submodular-maximization~\cite{AsadpourN16,GuptaNS17,BSZ19}. In all these cases,  constant-factor adaptivity gaps are known. 

Recently, there have been several results examining the role of adaptivity in (deterministic) submodular optimization~\cite{BalkanskiS18,BalkanskiBS18,BalkanskiS18b,BalkanskiRS19,ChekuriQ19}. The motivation in these works was  parallelizing function queries that are often expensive. In many settings, there are now algorithms using (poly)logarithmic number of rounds that nearly match the best sequential (or fully adaptive) approximation algorithms. While our motivation is similar (in parallelizing the expensive probing steps), the techniques used are completely different.
\fi

\subsection{Related Work} 

A $(1+\ln Q)$-approximation algorithm for the basic submodular cover problem was obtained in \cite{W82}. This is also the best possible (unless P=NP) as the set cover problem is a special case~\cite{F98}. In the past several years, there have been many papers on {\em stochastic} variants of submodular cover~\cite{GK11,INZ12,DHK16,GHKL16,NavidiKN20} as this framework captures many different applications.

The stochastic set cover problem was first studied in \cite{GV06}, which showed that the adaptivity gap lies between $\Omega(d)$ and $O(d^2)$ where $d$ is the number of objects to be covered. Recently, \cite{AAK19} improved the upper bound to $O(d\, \log d\, \log(mc_{max}))$. As a corollary of Theorem~\ref{thm:main-ssc}, we obtain a tighter $O(d\log d)$ adaptivity gap. Importantly, we eliminate the dependence of the items $m$ and maximum-cost, which may even be exponential in $d$. 

An $O(\log Q)$ adaptive approximation algorithm for stochastic submodular cover (independent items) follows from \cite{INZ12}. Related results for special cases or with weaker bounds were obtained in \cite{LiuPRY08,GK11,GolovinK-arxiv,DHK16}. As mentioned earlier, upper/lower bounds for algorithms with limited rounds-of-adaptivity were obtained in \cite{AAK19}.  Theorem~\ref{thm:main-ssc} improves on these upper bounds by an $O(r\cdot \log (mc_{max}))$ factor. Moreover, our algorithm is greedy-style and much simpler to implement. We bypass computationally expensive steps in prior work such as solving several instances of stochastic submodular maximization. Our analysis (outlined above) is also very different. We use a variant of a key sampling lemma from \cite{AAK19}, but it is applied in different manner and only affects the analysis.  Our high-level approach of lower/upper bounding the total score is similar to the analysis in \cite{INZ12} for the fully adaptive problem. However, the details are very different because we need to handle  issues of $r$-round-adaptivity gaps. 

The scenario submodular cover problem was introduced in \cite{GHKL16} as a common generalization of several problems including optimal decision tree \cite{GG74,GNR17}, equivalence class determination \cite{CLS14} and decision region determination \cite{JavdCKKBS14}. An $O(\log (sQ))$ fully adaptive algorithm was obtained in \cite{GHKL16}. The same approximation ratio (in a more general setting) was also obtained in \cite{NavidiKN20}. In the correlated setting we are  not aware of any prior work on limited rounds of adaptivity (when the number of rounds $r<\log s$) . Some aspects of our analysis (e.g., good/okay/bad outcomes) are similar to \cite{NavidiKN20}, but  more work is needed as we have to bound the $r$-round-adaptivity gap.

The framework of ``adaptive submodularity'', introduced by \cite{GolovinK-arxiv}, models correlations in stochastic submodular cover in  a different way. Adaptive submodularity is a combined condition on the goal function $f$ (that needs to be covered)  and the distribution $\D$ on items. While  stochastic submodular cover with independent items  satisfies adaptive-submodularity, scenario submodular cover does not.  \cite{GolovinK-arxiv} gave a fully-adaptive $O(\log^2(mQ))$-approximation algorithm for adaptive-submodular cover (\asc) when all costs are unit. Recently,   \cite{EsfandiariKM19} improved this to  an $O(\log (mQ))$-approximation in  $O(\log m \, \log(Qm))$  rounds of adaptivity, still assuming  unit costs.  When costs are arbitrary, the  result in \cite{EsfandiariKM19} implies  an $O(\log (mQc_{\max}))$-approximation in $O(\log m \, \log(Qmc_{\max}))$ rounds: this   result also applies to  stochastic submodular cover and  can be compared to Corollary~\ref{cor:indep} that we obtain.  Although  scenario submodular cover is {\em not} a special case of \asc,   \cite{GHKL16} showed that scenario submodular cover can be re-formulated as \asc with a different goal function that is  
adaptive-submodular. However, this new goal function  has a larger ``$Q$ value'' of $\frac{Q}{p_{min}}$. So, when the algorithm from \cite{EsfandiariKM19} is applied  to this new goal function, it only implies  an  $O(\log (mQ\frac{c_{\max}}{p_{\min}}))$-approximation in $O(\log m \, \log(Qm \frac{c_{\max}}{p_{\min}}))$  rounds (this can be compared to Corollary~\ref{cor:scn}). To the best of our knowledge, there are no algorithms  for \asc using fewer than squared-logarithmic rounds of adaptivity.

The role of adaptivity has been extensively studied for stochastic submodular {\em maximization}. A constant adaptivity gap under matroid constraints (on the probed items) was obtained in \cite{AsadpourN16}. Later, \cite{GuptaNS17} obtained a constant adaptivity gap for a very large class of ``prefix closed'' constraints; the constant factor was subsequently improved to $2$ which is also the best possible~\cite{BSZ19}. We make use of this result in our analysis. More  generally, the role of adaptivity has been extensively studied for various stochastic ``maximization'' problems such as knapsack~\cite{DGV08,BGK11}, matching~\cite{BGLMNR12,BehnezhadDH20}, probing~\cite{GN13} and orienteering~\cite{GuhaM09,GuptaKNR15,BansalN15}.

Recently, there have been several results examining the role of adaptivity in (deterministic) submodular optimization~\cite{BalkanskiS18,BalkanskiBS18,BalkanskiS18b,BalkanskiRS19,ChekuriQ19}. The motivation here was in parallelizing function queries (that are often expensive). In many settings, there are algorithms using (poly)logarithmic number of rounds that nearly match the best sequential (or fully adaptive) approximation algorithms. While our motivation is similar (in parallelizing the expensive probing steps), the techniques used are completely different.

\section{Definitions}\label{sec:prelim}

In the stochastic submodular cover problem, the input is a collection
of $m$ random variables (or \emph{items})
$\rv{X} = \{\rv{X}_1, \ldots, \rv{X}_m\}$. Each item $\rv{X}_i$ has a
cost $c_i\in \R_+$, and realizes to a random element of groundset
${U}$. Let the joint distribution of $\rv{X}$ be denoted by $\D$. The
random variables $\rv{X}_i$ may or may not be independent; we discuss
this issue in more detail below.  The realization of item $\rv{X}_i$
is denoted by $X_i\in U$; this realization is only known when
$\rv{X}_i$ is {\em probed} at a cost of $c_i$. Extending this
notation, given a subset of items $\rv{S}\sse \rv{X}$, its realization
is denoted $S=\{X_i \,:\, \rv{X}_i\in \rv{S}\}\sse U$.

In addition, we are given an integer-valued monotone submodular
function $f:2^U \rightarrow \mathbb{Z}_+$ 
with $f(U)=Q$.
A  realization $S$ of items $\rv{S} \sse \rv{X}$ is \emph{feasible} if
and only if $f(S)=Q$ the maximal  value; in this case, we also say that $\rv{S}$ \emph{covers} $f$. 
The goal is to probe (possibly adaptively) a subset $\rv{S} \sse
\rv{X}$ of items that gets realized to a feasible set. We use the
shorthand $c(\rv{S}) := \sum_{i: \rv{X}_i \in \rv{S}} c_i$ to denote the total cost of items in  $\rv{S}\sse \rv{X}$. The objective is to minimize the expected cost of probed items, where the expectation is taken over the randomness in $\rv{X}$. 
We consider the following types of solutions. 
\begin{definition}\label{def:r-round}
  For an integer $r\ge 1$, an {\bf $r$-round-adaptive} solution 
  proceeds in $r$ rounds of adaptivity. In each round
  $k\in \{1,\ldots, r\}$, the solution specifies an ordering of all
  remaining items and probes them in this order until some stopping
  rule. The decisions in round $k$ can depend on the realizations seen
  in all previous rounds $1,\ldots, k-1$.
\end{definition}

Setting $r=1$ gives us a \emph{non-adaptive} solution, and setting
$r=m$ gives us a \emph{(fully) adaptive} solution. 
Having more rounds leads to a smaller objective value, so  adaptive
solutions have the least objective value. Our performance guarantees
are relative to an optimal fully adaptive solution; 
let $\OPT$ denote this solution and its cost. 
The \emph{$r$-round-adaptivity gap} is defined as follows:
$$\sup_{\text{instance } I} \frac{\E[\text{cost of best $r$-round adaptive solution on }I]}{\E[\text{cost of best adaptive solution on }I]}$$ 
Setting $r=1$ gives the \emph{adaptivity gap}.

\paragraph{Independent and Correlated Distributions}
We first study the case where the random variables $\rv{X}$ are \emph{independent}. 
In keeping with existing literature~\cite{INZ12,DHK16,AAK19}, we refer
to this problem simply as the \emph{stochastic submodular cover}
problem.
We then consider the case when  the random variables $\rv{X}$ are correlated with a joint distribution of polynomial size,
and refer to it as the \emph{scenario submodular cover} problem~\cite{GHKL16}. 

\ifICMLVersion
\vspace{-0.1in}
\else\fi



\section{Stochastic Submodular Cover}\label{sec:submod-cover-r-adaptive}

We now consider the (independent) 
stochastic submodular cover problem and prove Theorem~\ref{thm:main-ssc}. 
For simplicity, we assume that costs $c_i$ are integers. Our results also hold for arbitrary costs (by replacing certain summations in the analysis by integrals).

We find it convenient to solve a \emph{partial cover} version of the
stochastic submodular cover problem. In this partial cover version, we
are given a parameter $\delta \in [0,1]$ and the goal is to probe
some $\rv{R}$ that realizes to a set $R$ achieving value
$f(R) > Q(1-\delta)$. We are interested in a {\em non adaptive} algorithm for this problem. Clearly, if $\delta = 1/Q$, the integrality of
the function $f$ means that $f(R) = Q$, and we solve the original (full coverage)
problem. Moreover, we can use this algorithm with different thresholds
to also get the $r$-round version of the problem. The main result of
this section is:

\begin{theorem}\label{thm:partial-cover}
  There is a non-adaptive algorithm for the partial cover version of
  stochastic submodular cover with cost
  $O\big(\frac{\ln 1/\delta}{\delta}\big)$ times the 
  optimal adaptive cost for the (full) submodular cover.
\end{theorem}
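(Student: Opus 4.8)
The plan is to design a non-adaptive greedy algorithm that, in a single pass, constructs an ordering of items and probes them until the realized set $R$ satisfies $f(R) > Q(1-\delta)$. The greedy score for an item $\rv{X}_e$ should be its expected relative marginal gain normalized by cost: if $\rv{A}$ is the (random) set of items already selected and realized, we want to pick the item maximizing something like $\frac{1}{c_e}\,\E\!\big[\min\{f(A \cup X_e), Q(1-\delta)\} - f(A)\big]$, where the expectation is over the realizations of previously chosen items as well as of $\rv{X}_e$ itself (this is the analogue of Equation~\eqref{eq:greedy-choice}, which the paper references). Truncating the target at $Q(1-\delta)$ makes the truncated function $\min\{f(\cdot),Q(1-\delta)\}$ still monotone submodular, and ensures the greedy process is driving toward the partial-cover goal rather than full coverage.

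The core of the argument is a potential/charging analysis comparing the non-adaptive greedy solution to the optimal \emph{adaptive} solution $\OPT$ for full coverage. First I would fix a cost scale: for each threshold $t$, let $\OPT$ restricted to cost at most $t$ leave a certain "residual gap"; by an averaging/Markov argument, conditioned on a "good" event (of probability bounded below) the adaptive optimum has made substantial progress by cost $O(\OPT)$. The key lemma is a \emph{sampling lemma} of the flavor used in \cite{AAK19}: if an adaptive solution covers a (truncated) submodular function with decent probability by cost $t$, then there exists a single item whose expected marginal gain, when added to the current greedy prefix, is at least a $\Omega(1/t)$ fraction of the remaining gap — combined, as the excerpt notes, with the constant-factor adaptivity gap for stochastic submodular \emph{maximization} from \cite{BSZ19} to convert the adaptive benchmark into a non-adaptive "committed" comparison. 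This guarantees the greedy item's score is large. Summing these per-step guarantees, the remaining gap after the greedy algorithm has spent total expected cost $\lambda \cdot \OPT$ decays geometrically: roughly, after cost proportional to $\OPT$ the gap shrinks by a constant factor, so after $O(\ln(1/\delta))\cdot$ scaling we reach gap below $\delta Q$ — i.e. value exceeding $Q(1-\delta)$. One extra $1/\delta$ factor enters because once the truncated target is within an additive $\delta Q$ of being met, the "fraction of remaining gap" covered per unit cost is $\Omega(\delta)$ rather than $\Omega(1)$ in the relevant normalization; balancing the geometric-decay phase against this final additive phase yields the claimed $O\big(\frac{\ln(1/\delta)}{\delta}\big)$ bound.

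Concretely the steps are: (1) define the truncated function $\bar f := \min\{f,\,Q(1-\delta)\}$ and the cost-normalized expected-marginal-gain score; (2) state and prove (or cite the variant of) the sampling lemma, invoking the adaptivity gap of \cite{BSZ19} and partitioning the outcome space of prior realizations into "good" outcomes (where $\OPT$ likely finishes by the relevant cost) and a small-probability "bad" set that can be charged away; (3) run the per-step gain bound and set up the recursion on the expected residual gap $\E[Q(1-\delta) - \bar f(\text{greedy prefix of cost } \le \lambda\OPT)]$; (4) solve the recursion, tracking the geometric phase and the final additive phase, to obtain the cost bound.

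The main obstacle I expect is step (2) — making the sampling lemma work in the \emph{partial-cover / expected-gain} setting. The subtlety is that the greedy score is an expectation over all prior randomness, so we must carefully condition: restrict to the "good" outcomes of the prefix, apply the sampling lemma of \cite{AAK19} there, and then argue that the bad outcomes contribute negligibly to the total expected gap so that they can be dropped without damaging the recursion. Handling the truncation at $Q(1-\delta)$ cleanly inside the sampling lemma (so that the $\Omega(1/t)$-fraction guarantee survives, and the extra $1/\delta$ loss is isolated to the final phase rather than compounding) is where the care is needed; everything after that is a routine geometric-series computation.
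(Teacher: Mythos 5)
Your high-level architecture matches the paper's: a greedy non-adaptive list built from a cost-normalized expected-marginal-gain score, a sampling lemma leaning on the constant adaptivity gap of~\cite{BSZ19}, and a partition of prior realizations into ``good'' (where $\OPT$ is likely to have finished by the relevant cost) and a small-probability remainder. But two of the specifics you propose diverge in ways that create genuine gaps.

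First, the score. You propose the expected \emph{absolute} marginal gain of a truncated function $\bar{f}=\min\{f,\,Q(1-\delta)\}$. The paper's score in~\eqref{eq:greedy-choice} is the expected \emph{relative} marginal gain of the untruncated $f$, namely $\E\big[\tfrac{f(S\cup X_e)-f(S)}{Q-f(S)}\big]$, with the outer expectation restricted to realizations $S$ that are still below the threshold $\tau=Q(1-\delta)$. This normalization is not cosmetic; it is what drives the upper bound on the total gain $G$. Along any decision path, the sum of relative gains telescopes into a harmonic sum
$\sum_{\ell=\delta Q+1}^{Q}\tfrac{1}{\ell}\le\ln(1/\delta)$,
because the stopping rule guarantees the denominator $Q-f(\cdot)$ remains at least $\delta Q$ while items are still being probed. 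With absolute gains the best path bound you get is $O(Q)$, which would put a $\log Q$ rather than a $\log(1/\delta)$ into the final ratio. In other words, the $\ln(1/\delta)$ does not arise from a ``final additive phase'' as your sketch suggests; in the paper it comes directly from this harmonic-series bound on the per-path sum of relative gains, and your proposed score would not produce it.

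Second, the recursion. You propose to control the \emph{expected residual gap} at multiples of $\OPT$'s cost and then ``solve the recursion to obtain the cost bound.'' That final translation is the missing piece. A small expected gap at some cost threshold does not, by Markov alone, bound the expected stopping cost: the tail of the algorithm's cost distribution is uncontrolled. The paper avoids this entirely by working at dyadic cost scales $2^i$ with the non-completion probabilities $u_i=\pr[\NA\ge\alpha 2^i]$ and $u_i^*=\pr[\OPT\ge 2^i]$ and proving the recursion $u_i\le\tfrac14 u_{i-1}+\tfrac65 u_i^*$ (\Cref{lem:key2}); summing $\sum_i 2^i u_i$ against $\sum_i 2^i u_i^*$ then converts that recursion directly into $\E[\NA]\le O(\alpha)\,\E[\OPT]$. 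Your expected-gap recursion, even if closed, does not obviously imply the expected-cost bound, and this is a real gap rather than a detail.

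One smaller slip worth flagging: driving the truncated gap below $\delta Q$ only certifies $f(R)\ge Q(1-2\delta)$, not $f(R)>Q(1-\delta)$, and even that only in expectation. It is repairable by rescaling $\delta$, but it is symptomatic of the same bookkeeping issue above. The part of your plan that does transfer cleanly is the sampling-lemma strategy (good/bad partition plus the~\cite{BSZ19} adaptivity gap applied to the residual submodular function); the paper implements essentially what you describe there, via its notion of good and bad scales.
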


The algorithm first creates an ordering/list $L$ of the
items non-adaptively; that is, without knowing the realizations of the
items. To do so, at each step we pick a new item that maximizes a
carefully-defined score function (Equation~\eqref{eq:greedy-choice}).
The score of an item cannot depend on the realizations of 
previous items on the list (since we are non-adaptive). However,  it depends on  the \emph{marginal relative increase} for a random draw
from the previously chosen items. Once this ordering $L$ is
specified, the algorithm starts probing and realizing the items in this order, and does so until the realized value exceeds $(1-\delta)Q$.

\ifICMLVersion
\begin{algorithm}[h]
\caption{PARtial Covering Algorithm \textsc{ParCA}$(\rv{X}, f, Q, \delta)$}  \label{alg:parca}
\begin{algorithmic}[1]
\State $\rv{S} \leftarrow \emptyset$, list $L \leftarrow \langle
  \rangle$, $\tau \gets Q(1-\delta)$
\While{$\rv{S}\ne \rv{X}$} \Comment{Building the list non-adaptively}
\State select an item $\rv{X}_e \in \rv{X} \setminus \rv{S}$ that maximizes:
\begin{align}\label{eq:greedy-choice}
\text{score}(\rv{X}_e) :=  \frac1{c_e} \cdot &\sum_{S \sim \rv{S} : f(S)\le \tau } \pr(\rv{S}=S)\cdot \notag \\
    &\E_{X_e \sim \rv{X}_e} \left[  \frac{f(S \cup X_e) - f(S)}{Q - f(S)} 
 \right]
\end{align}
\State $\rv{S} \leftarrow \rv{S} \cup \{\rv{X}_e\}$ and list $L\gets L \circ \rv{X}_e$
\EndWhile
\State  $\rv{R} \gets \emptyset$, $R \gets \emptyset$
\While{$f(R) \leq \tau$} \Comment{Probing items on the list}
\State $\rv{X}_e \gets$ first r.v.\ in list $L$ not in $\rv{R}$
\State $X_e \in U$ be the realization of $\rv{X}_e$
\State $R \gets R \cup \{ X_e \}, \rv{R} \gets \rv{R} \cup \{ \rv{X}_e
\}$
\EndWhile
\State return probed items $\rv{R}$ and their realizations $R$.
\end{algorithmic}
\end{algorithm}

\else
\begin{algorithm}[h]
\caption{PARtial Covering Algorithm \textsc{ParCA}$(\rv{X}, f, Q, \delta)$} \label{alg:parca}
\begin{algorithmic}[1]
  \State $\rv{S} \leftarrow \emptyset$, list $L \leftarrow \langle
  \rangle$, $\tau \gets Q(1-\delta)$
\While{$\rv{S}\ne \rv{X}$} \Comment{Building the list non-adaptively}
\State \label{step:parca-greedy} select an item $\rv{X}_e \in \rv{X} \setminus \rv{S}$ that maximizes:
\begin{equation}\label{eq:greedy-choice}
\text{score}(\rv{X}_e) :=  \frac1{c_e} \cdot \sum_{S \sim \rv{S} : f(S)\le \tau } \pr(\rv{S}=S)\cdot \E_{X_e \sim \rv{X}_e} \left[  \frac{f(S \cup X_e) - f(S)}{Q - f(S)} 
 \right]
\end{equation}
\State $\rv{S} \leftarrow \rv{S} \cup \{\rv{X}_e\}$ and list $L\gets L \circ \rv{X}_e$
\EndWhile
\State  $\rv{R} \gets \emptyset$, $R \gets \emptyset$
\While{$f(R) \leq \tau$} \Comment{Probing items on the list}
\State $\rv{X}_e \gets$ first r.v.\ in list $L$ not in $\rv{R}$, and
let $X_e \in U$ be its realization.
\State $R \gets R \cup \{ X_e \}, \rv{R} \gets \rv{R} \cup \{ \rv{X}_e
\}$
\EndWhile
\State return probed items $\rv{R}$ and their realizations $R$.
\end{algorithmic}
\end{algorithm}
\fi

Given this partial covering algorithm we immediately get an algorithm
for the $r$-round version of the problem, where we are allowed to make
$r$ rounds of adaptive decisions. Indeed, we can first set
$\delta = Q^{-1/r}$ and solve the partial covering problem with this
value of $\delta$. Suppose we probe  variables $\rv{R}$ and their
realizations are given by the set $R \sse U$. Then we can condition on
these values to get the marginal value function $f_R$ (which is submodular), and
inductively get an $r-1$-round solution for this problem. The following algorithm formalizes this.
\begin{algorithm}[h]
\caption{$r$-round adaptive algorithm for stochastic submodular cover \textsc{SSC}$(r, \rv{X}, f)$} \label{alg:r-round-ind}
\begin{algorithmic}[1]
\State run  \textsc{ParCA} $(\rv{X}, f, Q, Q^{-1/r})$ for round \#1. \label{step:round-1}
\State let $\rv{R}$ (resp.\ $R$) denote the probed items (resp.\ their realizations) in \textsc{ParCA}.  
\State define residual submodular function $\fh := f_R$.
\State recursively solve \textsc{SSC}$(r-1, \rv{X}\setminus \rv{R}, \fh)$. \label{step:round-rec} 
\end{algorithmic}
\end{algorithm}

\begin{theorem}\label{thm:ssc-alg}
Algorithm~\ref{alg:r-round-ind} is an $r$-round adaptive algorithm for stochastic submodular
  cover with cost $O(Q^{1/r} \log Q)$ times the  optimal
  adaptive cost.
\end{theorem}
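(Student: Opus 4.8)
The plan is to derive \Cref{thm:ssc-alg} from \Cref{thm:partial-cover} by an inductive argument on the number of rounds $r$, tracking how the partial-cover guarantee composes across rounds. The base case $r=1$ is immediate: setting $\delta = Q^{-1} = Q^{-1/1}$ makes \textsc{ParCA} fully cover $f$ (by integrality), and \Cref{thm:partial-cover} gives cost $O\big(\frac{\ln(1/\delta)}{\delta}\big)\cdot \OPT = O(Q\log Q)\cdot \OPT$, which matches $O(Q^{1/r}\log Q)\cdot \OPT$ for $r=1$. For the inductive step, I would run \textsc{ParCA}$(\rv X, f, Q, Q^{-1/r})$ in round~$1$; by \Cref{thm:partial-cover} this costs $O\big(\frac{\ln(Q^{1/r})}{Q^{-1/r}}\big) = O(Q^{1/r}\cdot \tfrac1r\log Q) = O(Q^{1/r}\log Q)\cdot\OPT$ in expectation, and leaves a residual function $\fh = f_R$ with $\fh(U) \le Q^{1/r}$ (since the realized value already exceeds $Q(1-Q^{-1/r})$, the remaining target is at most $Q^{1/r}$). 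We then recurse with $r-1$ rounds on $\fh$.

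The key subtlety — and what I expect to be the main obstacle — is correctly relating the optimal adaptive cost of the \emph{residual} instance back to $\OPT$ of the original instance, in expectation over the round-$1$ realizations $R$. Conditioned on any realization $R$ of the probed set $\rv R$, the optimal adaptive cost to cover $\fh = f_R$ using the remaining items $\rv X \setminus \rv R$ is at most $\OPT$: indeed, the original optimal adaptive policy, restricted to the event consistent with having seen $R$, still covers $f$ (hence $\fh$) and its conditional expected cost is a valid bound — one has to argue this carefully, perhaps by noting that covering $f$ on the full ground set is at least as hard as covering $f_R$ on the residual items, and that $\OPT$'s conditional cost given any partial realization is dominated by $\OPT$ by an exchange/subadditivity argument (monotonicity of $f$ is what makes the residual ``easier''). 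Thus $\E_R[\OPT(\text{residual instance})] \le \OPT$. By the inductive hypothesis applied to the $(r-1)$-round instance with ground function of value $\le Q^{1/r}$, the expected recursive cost is at most
\[
  O\!\left( \big(Q^{1/r}\big)^{1/(r-1)} \log\!\big(Q^{1/r}\big)\right)\cdot \E_R[\OPT(\text{residual})] \;\le\; O\!\left( Q^{\frac{1}{r(r-1)}} \cdot \tfrac1r \log Q \right)\cdot \OPT.
\]

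Here one needs the elementary bound $Q^{1/(r(r-1))} \le Q^{1/r}$ for $r \ge 2$ (since $\tfrac{1}{r(r-1)} \le \tfrac1r$), so the recursive cost is $O(Q^{1/r}\log Q)\cdot\OPT$ as well. Adding the round-$1$ cost and the recursive cost by linearity of expectation gives a total of $O(Q^{1/r}\log Q)\cdot \OPT$, completing the induction. I would also remark that the recursion bottoms out cleanly: after $r$ rounds the residual target has dropped below $1$, i.e.\ the function is covered, since $Q^{1/r}$ raised through the telescoping thresholds reaches $Q^{0}=1$. The one place to be careful in the writeup is that the inductive hypothesis is stated for integer-valued functions and a possibly non-integer target $Q^{1/r}$; I would handle this by taking $\lceil Q^{1/r}\rceil$ throughout (or by the integrality remark already invoked for \textsc{ParCA}), which changes constants only.
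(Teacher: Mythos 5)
Your overall structure matches the paper's proof exactly: induct on $r$, invoke \Cref{thm:partial-cover} for round $1$ with $\delta = Q^{-1/r}$, and recurse on the residual function. However, there are two genuine slips.

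First, the bound on the residual target is wrong. With $\delta = Q^{-1/r}$, \textsc{ParCA} guarantees $f(R) > Q(1-\delta)$, so $\widehat{Q} = Q - f(R) < \delta Q = Q^{1-1/r} = Q^{(r-1)/r}$, \emph{not} $Q^{1/r}$ as you wrote (these coincide only when $r=2$). You appear to have confused the fraction $\delta$ with the absolute amount $\delta Q$. The proof still closes with the correct value, since $\widehat{Q}^{1/(r-1)} < (Q^{(r-1)/r})^{1/(r-1)} = Q^{1/r}$ and $\log \widehat Q < \tfrac{r-1}{r}\log Q$, so the recursive cost is $O\bigl(\tfrac{r-1}{r}Q^{1/r}\log Q\bigr)\cdot\OPT$, which adds to the round-1 cost $O\bigl(\tfrac1r Q^{1/r}\log Q\bigr)\cdot\OPT$ to give exactly $O(Q^{1/r}\log Q)\cdot\OPT$ with a constant that does not grow with $r$. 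Your incorrect bound gives an estimate that is ``too good'' and so does not wreck the conclusion, but the derivation is wrong and would mislead anyone following the algebra.

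Second, the claim that ``$\OPT$'s conditional cost given any partial realization is dominated by $\OPT$'' is false pointwise: for an unlucky realization $R$, $\E[\OPT \mid \rv R = R]$ can strictly exceed $\E[\OPT]$. What you actually need — and what the paper uses — is the tower property: $\widehat{\OPT}(R) \le \E[\OPT \mid \rv R = R]$ for each $R$ (because the conditioned original policy is feasible for the residual instance), and therefore $\E_R[\widehat{\OPT}(R)] \le \E_R\bigl[\E[\OPT\mid \rv R = R]\bigr] = \OPT$. You do write the correct final inequality $\E_R[\OPT(\text{residual})]\le\OPT$, but the justification you give for it (pointwise domination, subadditivity/exchange) is not right and would not survive scrutiny.
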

\ifICMLVersion
The proofs of \Cref{thm:partial-cover} and \Cref{thm:ssc-alg} can be found in the full version of the paper.
\else
\begin{proof} We proceed by induction on the number of rounds $r$. Let $\OPT$ denote the cost of an optimal adaptive solution. 
The base case is $r=1$, in which case $\delta=Q^{-1/r} = \frac1Q$. By Theorem~\ref{thm:partial-cover}, the partial cover algorithm \textsc{ParCA}$(\rv{X}, f, Q, Q^{-1/r})$ obtains a realization $R$ with $f(R)>(1-\delta)Q = Q-1$. As $f$ is integer-valued, we must have $f(R)=Q$, which means the function is fully covered. So the algorithm's expected cost is $O(Q\log Q)\cdot \OPT$.

We now consider $r>1$ and assume (inductively) that Algorithm~\ref{alg:r-round-ind} finds an $r-1$-round $O(Q^{\frac1{r-1}} \log Q)$-approximation algorithm for any instance of stochastic submodular cover. Let $\delta = Q^{-1/r}$. By Theorem~\ref{thm:partial-cover}, the expected cost in round 1 (step~\ref{step:round-1} in Algorithm~\ref{alg:r-round-ind}) is $O(\frac1r Q^{1/r} \log Q)$.  Let $\Qh:= Q-f(R) = \fh(U)$ denote the maximal value of the residual submodular function $\fh=f_R$. Note that  $\Qh<\delta Q = Q^{(r-1)/r}$, by the definition of the partial covering
  algorithm.  The optimal
  solution $\OPT$ conditioned on the variables in $\rv{R}$ realizing to $R$ gives a feasible adaptive solution to the residual
  problem of covering $\fh$; we denote this conditional solution by
  $\widehat{\OPT}$. We inductively get that the cost of our
  $r-1$-round solution on $\fh$ is at most 
  $$O(\Qh^{\frac1{r-1}} \log \Qh)\cdot  \widehat{\OPT} \le O\left(\frac{r-1}r \; Q^{1/r} \log Q\right)\cdot \widehat{\OPT},$$ where we used  $\Qh < Q^{(r-1)/r}$.  As this holds for every realization $R$,  we
  can take expectations over $R$ to get that the (unconditional)
  expected cost of the last
  $r-1$ rounds is  
$O\big(\frac{r-1}r \; Q^{1/r} \log Q\big) \cdot \OPT$. 
  Adding to this the cost of the first round, which is $ O\big(\frac1r Q^{1/r} \log Q\big)$, completes the proof.
\end{proof}
\fi
\ifICMLVersion
\vspace{-0.1in}
\else\fi
\paragraph{Remark:} Assuming that the scores~\eqref{eq:greedy-choice} can be computed in polynomial time, it is clear that our entire algorithm can be implemented in polynomial time. In particular, if $T$ denotes the time taken to calculate the score of one item, then the overall algorithm runs in time $poly(m,T)$ where $m$ is the number of items. We are not aware of a closed-form expression for the scores (for arbitrary submodular functions $f$). 
\ifICMLVersion
However, as discussed in the full version, we can use sampling to estimate these scores to within a constant factor. 
\else
However, as discussed in \S\ref{app:sampling}, we can use sampling to estimate these scores to within a constant factor. 
\fi
Moreover,  our analysis  works even if we only choose an  approximate maximizer for \eqref{eq:greedy-choice} at each step. It turns out that  $T=poly(m, c_{max})$ many samples suffice to estimate these scores. So,  the final runtime is $poly(m, c_{max})$; note that this does not depend on the number $|U|$ of elements. We note that even the previous algorithms \cite{AAK19,EsfandiariKM19} have a polynomial dependence on $c_{max}$ in their runtime. 
\ifICMLVersion
In the following analysis we will assume that the scores \eqref{eq:greedy-choice} are computed exactly. 
\else
In the following analysis we will assume that the scores \eqref{eq:greedy-choice} are computed exactly; see \S\ref{app:sampling} for the sampling details. 
\fi

\subsection{Analysis for a Call to \textsc{ParCA}} \label{subsec:analysis-PARCA}
We now prove Theorem~\ref{thm:partial-cover}. 

We denote by $\OPT$ an optimal adaptive solution for the covering problem on $f$.  
Now we analyze the cost incurred by following the
non-adaptive strategy (which we call $\NA$): probe variables
according to the order given by the list $L$ generated by
\textsc{ParCA}, and stop when the realized coverage exceeds $\tau := Q(1 - \delta)$ (see \Cref{alg:parca} for details). We consider the expected cost of this strategy, and relate it to the cost
of ${\OPT}$.

We refer to the cumulative cost incurred (either by ${\OPT}$
or by $\NA$) until any point in a solution as
\emph{time} elapsing. We say that ${\OPT}$ is in phase $i$ when it is
in the time interval $[2^i, 2^{i+1})$ for $i \geq 0$. Let
$\alpha := O\left(\frac{\ln 1/\delta}{\delta}\right)$. We say that $\NA$ is in
\emph{phase} $i$ when it is in time interval
$[\alpha\cdot 2^{i-1}, \alpha\cdot 2^{i})$ for $i \geq 1$; phase $0$
refers to the interval $[1, \alpha)$. Define
\ifICMLVersion
\vspace{-0.1in}
\else\fi
\begin{itemize}
    \item $u_{i}$: probability that $\NA$ goes beyond phase $i$; that is, has cost at least $\alpha \cdot 2^i$.
    \item $u_{i}^*$: probability that $\OPT$ goes beyond phase $i-1$; that is, costs at least
      $2^{i}$.
\end{itemize}
Since all costs are integers, $u_0^* = 1$. For ease of notation, we also use $\OPT$ and $\NA$ to denote the \emph{random} cost incurred by $\OPT$ and $\NA$ respectively. The following lemma forms
the crux of the analysis.
\ifICMLVersion
\begin{lemma}\label{lem:key2}
For any phase $i \geq 1$,   $ u_{i} \leq \frac{u_{i-1}}{4} + \frac65 u_{i}^*. $
\end{lemma}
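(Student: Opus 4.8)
The plan is to track what $\NA$ does during phase $i$, i.e., between cumulative cost $\alpha 2^{i-1}$ and $\alpha 2^i$, conditioned on having reached the start of this phase (an event of probability $u_{i-1}$). During this window $\NA$ probes a set of items of total cost roughly $\alpha 2^{i-1}$, and I want to argue that unless $\OPT$ itself has low chance of completing by cost $2^i$ (which would force $u_i^*$ large), the items probed by $\NA$ in this phase collect enough ``score'' that the residual target $Q - f(R)$ shrinks by a multiplicative factor — enough to finish the partial cover with good probability. The quantity to control is the total score $G$ of the items that $\NA$ selects (on list $L$) whose cumulative cost falls in the phase-$i$ window; I will sandwich $G$ between an upper and a lower bound.

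For the upper bound on $G$: since the score of $\rv X_e$ in \eqref{eq:greedy-choice} is (up to the $1/c_e$ normalization) an expected relative marginal gain $\E\big[\tfrac{f(S\cup X_e) - f(S)}{Q - f(S)}\big]$, summing these relative gains along any fixed realization/sample path telescopes like a harmonic series: $\sum_j \tfrac{f(S_j\cup X_{j+1}) - f(S_j)}{Q - f(S_j)} \le O(\ln \tfrac{Q}{Q - f(R)}) = O(\ln 1/\delta)$ because we stop once $f$ exceeds $\tau = Q(1-\delta)$. Weighting by costs $c_e$, the total cost-weighted score over the phase is at most (phase cost) $\times O(\ln 1/\delta) = \alpha 2^{i-1} \cdot O(\ln 1/\delta)$ — wait, more carefully: each unit of cost contributes at most the incremental relative gain, so $G \le O(2^{i-1}\ln(1/\delta))$ after dividing out... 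I will set this up so the $\alpha$'s cancel correctly, using $\alpha = \Theta(\tfrac{\ln 1/\delta}{\delta})$.

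For the lower bound on $G$: this is the crux. Partition the outcome space of ``realizations $R$ seen by the end of phase $i-1$'' into \emph{good} outcomes — those from which $\OPT$ completes before cost $2^i$ with probability $\ge 1/2$ (say) — and \emph{bad} outcomes. The total probability of bad outcomes is at most $2 u_i^*$ by Markov, which is where the $\tfrac65 u_i^*$ term comes from. Conditioned on a good outcome, I use the sampling lemma variant from \cite{AAK19} together with the constant-factor adaptivity gap for submodular maximization \cite{BSZ19}: these let me assert that there is a \emph{non-adaptive} set of items of cost $O(2^i)$ achieving expected relative coverage of the residual target that is $\Omega(1)$, hence some single item has score $\ge \Omega(1/2^i)$ at the moment $\NA$ is choosing it (greedy picks the max-score item, and the greedy choice's score dominates the per-cost average of any candidate set). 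Accumulating this over the $\Omega(\alpha 2^{i-1}/ \max\text{cost})$... rather, over a phase of cost $\Theta(\alpha 2^{i-1})$, the greedy list collects score $G \ge \Omega(\alpha 2^{i-1}) \cdot \Omega(1/2^i) = \Omega(\alpha)$ from the good-outcome mass, i.e., $G \ge \Omega(\alpha)\cdot(u_{i-1} - 2u_i^*)$.

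Combining the two bounds: $\Omega(\alpha)\cdot(u_{i-1} - 2u_i^*) \le G \le O(2^{i-1}\ln(1/\delta))$ cannot be the right dimensional comparison — instead $G$ must be normalized so both sides are probabilities. The correct reading: the expected residual-coverage progress made in the phase is, on one hand, at least $\Omega(1)$ times the good-outcome probability $\ge u_{i-1} - 2u_i^*$ minus the chance $u_i$ of not finishing, and on the other hand bounded via the harmonic/telescoping argument by $O(\ln(1/\delta))$ times (something of order the number of phases)$^{-1}$... The clean final step is: $u_i$ counts outcomes that are good but where $\NA$ still failed to cover despite a phase of cost $\alpha 2^{i-1} = \Theta(\tfrac{\ln 1/\delta}{\delta})\cdot 2^{i-1} \gg 2^i \ln(1/\delta)$ when $\delta$ is small — so by Markov on the per-phase score deficit, the good-outcome failures contribute at most $\tfrac14 u_{i-1}$, while bad outcomes contribute $\le 2u_i^* \le \tfrac65 u_i^*$ after absorbing constants, giving $u_i \le \tfrac{u_{i-1}}{4} + \tfrac65 u_i^*$.

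The main obstacle I anticipate is making the lower bound on $G$ rigorous: one must (i) correctly port the sampling lemma of \cite{AAK19} to the non-adaptive list-building setting where the score is an expectation over \emph{prior} items on the list (not over an adaptive history), (ii) invoke the adaptivity gap of \cite{BSZ19} to replace $\OPT$'s adaptive behavior on the residual function by a non-adaptive witness set of comparable cost, and (iii) handle the ``good outcome'' restriction — it is needed precisely so the sampling lemma applies, and one must check the lost mass $\le 2u_i^*$ is tolerable. Getting all the constants to land on exactly $1/4$ and $6/5$ will require care in choosing the constant hidden in $\alpha = O(\tfrac{\ln 1/\delta}{\delta})$ and the good/bad threshold.
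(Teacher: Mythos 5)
Your high-level roadmap is the same as the paper's --- sandwich the total score $G$ of the phase-$i$ window between a lower bound coming from $\OPT$'s progress and an upper bound coming from a harmonic/telescoping sum, then pick the stretch $\alpha = \Theta(\tfrac{1}{\delta}\ln\tfrac1\delta)$ to close the recursion --- and you have correctly identified the two external ingredients (the sampling lemma from \cite{AAK19} and the factor-$2$ adaptivity gap from \cite{BSZ19}). The upper bound $G \le (1+\ln(1/\delta))\cdot u_{i-1}$ is exactly what the paper proves, even though your dimensional accounting gets tangled: the per-path telescoping gives $G(\Pi) \le 1+\ln(1/\delta)$, and only paths that survive into phase $i$ contribute, hence the factor $u_{i-1}$; there is no leftover $2^{i-1}$ or $\alpha$ because the $1/c_e$ in the score already cancels the cost weighting.

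The genuine gap is in the lower bound on $G$. You partition outcomes into good/bad by the conditional completion probability of $\OPT$, but this alone is not enough: the score \eqref{eq:greedy-choice} is a \emph{single} expectation averaged over all realizations $S$ of $\rv{S}$ at time $t$, so to exhibit one item with high score you need a single fixed witness set $\rv{T}$ whose expected relative gain is large \emph{averaged over the whole distribution of $S$}. The paper achieves this by first bucketing the realizations $S$ into \emph{scales} $\theta$ (powers of two) according to the value of $\tfrac{Q-f(S)}{Q}$, so that within a scale all residual targets are equal up to a factor of $2$; only then does it call a scale ``good'' or not (threshold $1/6$, not $1/2$). The sampling lemma (Lemma~\ref{lem:key-key}) is stated and proved \emph{per scale}: it repeatedly samples $S\sim\calE_\theta$, adds a cost-$2^i$ non-adaptive maximizer $\rv{T}_S$ (where the adaptivity gap of \cite{BSZ19} enters), and argues by a telescoping contradiction that the union $\rv{T}_\theta$ has high aggregate gain; the scale structure is what lets the residual targets $Q-f(S)$ be traded against the normalizer $\tfrac{\theta Q}{2}$. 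Applying the sampling lemma ``conditioned on a good outcome'' as you propose would produce a different witness set for each realization $S$, and there is no way to extract one high-score item from that family. Without the scale bucketing, step (i) of the obstacles you correctly flag cannot be completed.

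A second, smaller issue: the lower bound on $G$ must come out as $G \ge \tfrac{\alpha\delta}{2\beta}\bigl(u_i - \tfrac{6}{5}u_i^*\bigr)$, with $u_i$ (not $u_{i-1}$) in the first term. The $u_i$ arises because, at any fixed time $t$ in the phase, the probability mass sitting in scales $\theta > \delta$ (i.e.\ not yet stopped) is at least $u_i$, by the stopping rule (\Cref{lem:stop-scale}, \Cref{lem:score-per-step}). Your proposed lower bound $\Omega(\alpha)(u_{i-1} - 2u_i^*)$ would, when combined with the upper bound $O(\ln(1/\delta))u_{i-1}$, yield a bound on $u_{i-1}$ in terms of $u_i^*$ rather than a bound on $u_i$ in terms of $u_{i-1}$ and $u_i^*$; the recursion does not close. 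Relatedly, your final reassembly has the impossible inequality $2u_i^* \le \tfrac{6}{5}u_i^*$; the $\tfrac65$ comes cleanly out of the ``not-good scale'' bound $\sum_{\theta\text{ not good}} r^*_{i\theta} < \tfrac15 u_i^*$ under the $1/6$-threshold choice, which is another place where the scale partition matters.
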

\else
\begin{lemma}\label{lem:key2}
For any phase $i \geq 1$, we have $ u_{i} \leq \frac{u_{i-1}}{4} + \frac65 u_{i}^*. $
\end{lemma}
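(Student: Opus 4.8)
Let me think about how to prove this lemma. We need to show $u_i \leq \frac{u_{i-1}}{4} + \frac65 u_i^*$.

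The quantity $u_i$ is the probability that $\NA$ hasn't covered the function (up to threshold $\tau$) by time $\alpha \cdot 2^i$. The quantity $u_{i-1}$ is the probability $\NA$ hasn't covered by time $\alpha \cdot 2^{i-1}$. So we want: conditioned on $\NA$ still being active at the start of phase $i$ (probability $u_{i-1}$), and conditioned on being "not helped" by $\OPT$ finishing early (the $u_i^*$ term handles the bad cases), $\NA$ makes enough progress during phase $i$ to finish with good probability.

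The natural approach: consider the "total score" $G$ accumulated by items that $\NA$ places on the list $L$ in positions corresponding to phase $i$ — i.e., items probed between time $\alpha 2^{i-1}$ and $\alpha 2^i$. Actually, since the list is built non-adaptively but probed adaptively, I need to be careful: "phase $i$ of $\NA$" refers to a prefix of the list whose cumulative cost lies in $[\alpha 2^{i-1}, \alpha 2^i)$. Let me denote by $\rv{S}_{i-1}$ the prefix of $L$ with cost just below $\alpha 2^{i-1}$ and $\rv{S}_i$ the prefix with cost just below $\alpha 2^i$; the phase-$i$ items are $\rv{S}_i \setminus \rv{S}_{i-1}$, with total cost roughly $\alpha 2^{i-1}$.

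**Outline of the plan:**

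First I would set up the two-sided bound on the total score $G := \sum_{\rv{X}_e \in \rv{S}_i \setminus \rv{S}_{i-1}} \text{score}(\rv{X}_e)$, where the score is as in \eqref{eq:greedy-choice} but I should be careful whether the conditioning set in the score is the evolving $\rv{S}$ at the time $\rv{X}_e$ was chosen — yes, it is. So $G$ is a sum of the greedy scores of the phase-$i$ items.

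\emph{Upper bound on $G$.} Here I would argue that $\sum_e \text{score}(\rv{X}_e) \cdot c_e$ telescopes like a harmonic series. For a fixed decision path (a fixed sequence of realizations $S$ of the items on the list), the quantity $\frac{f(S\cup X_e) - f(S)}{Q - f(S)}$ summed over successive items behaves like $\sum \frac{\Delta_j}{\text{remaining}_j}$, which is bounded by $\ln \frac{Q - f(S_{\text{start}})}{Q - f(S_{\text{end}})} + 1$ or so (standard harmonic/integral bound for the greedy submodular cover analysis, as in Wolsey). Since we only care about paths where $f(S) \leq \tau = Q(1-\delta)$ throughout phase $i$ (the score's summation restricts to $f(S) \le \tau$), the denominator $Q - f(S) \geq \delta Q$, so each such sum along a path is $O(\ln \frac1\delta)$. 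Taking expectation over the realization and summing over items gives $\sum_e c_e \cdot \text{score}(\rv{X}_e) \cdot (\text{mass still active}) \lesssim u_{i-1} \cdot O(\ln \frac1\delta)$ — roughly, $G \cdot \alpha 2^{i-1} \lesssim u_{i-1} \cdot O(\ln(1/\delta))$, i.e. $G \lesssim \frac{u_{i-1} \ln(1/\delta)}{\alpha 2^{i-1}}$. Wait — I want an upper bound on $G$ in the right direction; let me restate: since total cost of phase-$i$ items is $\approx \alpha 2^{i-1}$, and $\text{score} = \frac1{c_e}(\ldots)$, we get $G = \sum_e \frac1{c_e}(\cdots)$; the $\sum_e (\cdots)$ part is $\lesssim u_{i-1}\ln(1/\delta)$ per "unit", so this needs the greedy choice / averaging over costs. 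I'd phrase it as: $G \cdot (\text{min cost normalization})$... this is the step where I should be careful, and I'd mirror the harmonic-sum argument in \cite{INZ12}.

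\emph{Lower bound on $G$.} This is the crux and the main obstacle. I want to show that if $u_i$ is large — i.e., a substantial probability mass of realizations has $\NA$ still uncovered at time $\alpha 2^i$ — then the greedy scores during phase $i$ must have been large, because $\OPT$ (which with probability $\geq u_{i-1} - u_i^*$, conditioned on being active, covers $f$ by time $2^i$) gives a "witness" low-cost set achieving high coverage. Concretely I would: (i) partition the outcome space of $\rv{S}_{i-1}$-realizations into "good" outcomes (where $\OPT$, conditioned on that outcome, completes by cost $2^i$ with high conditional probability) and "bad" outcomes, with total bad mass $\leq $ something like $5 u_i^*$ (this is the source of the $\frac65 u_i^*$ term — actually the bad mass is bounded via Markov: if conditional non-completion prob exceeds $\frac16$, call it bad; then bad mass $\leq 6 u_i^*$... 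I'd tune the constant). (ii) On good outcomes, use the sampling lemma (variant of \cite{AAK19}) together with the constant-factor adaptivity gap for submodular maximization \cite{BSZ19} to argue that there exists a low-cost *non-adaptive* set with large expected relative coverage gain, hence some item's greedy score at each step of phase $i$ is at least $\Omega(\frac1{\alpha 2^{i-1}})$ times the residual uncovered mass — summing over phase $i$ gives $G \gtrsim (\text{something}) \cdot (u_{i-1} - \text{bad mass} - u_i)$, because only the mass that is still uncovered at the *end* of phase $i$ failed to be "eaten". Combining with the upper bound $G \lesssim \frac{u_{i-1}\ln(1/\delta)}{\alpha 2^{i-1}} \cdot (\ldots)$ and plugging in $\alpha = \Theta\!\big(\frac{\ln(1/\delta)}{\delta}\big)$ makes the constants work out to $u_i \leq \frac14 u_{i-1} + \frac65 u_i^*$.

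**The main obstacle.** The hard part will be the lower bound on $G$: making the sampling-lemma argument rigorous in the non-adaptive setting — the score in \eqref{eq:greedy-choice} takes an expectation over a random draw from the *previously chosen (non-adaptive) prefix*, whereas $\OPT$ is fully adaptive on the *actual* realizations, so I must relate a fully-adaptive witness to a quantity defined via a non-adaptive expectation. This is exactly where the adaptivity gap for submodular maximization \cite{BSZ19} and the "good outcome" partitioning enter: restricting to good outcomes is what lets the sampling lemma convert $\OPT$'s adaptive coverage into a lower bound on the non-adaptive score, and the probability of bad outcomes must be charged to $u_i^*$. I would carry out the proof in the order: (1) fix phase $i$ and define $\rv{S}_{i-1}, \rv{S}_i, G$; (2) prove the upper bound on $G$ via the harmonic-series argument; (3) set up good/bad outcome partition and bound bad mass by $O(u_i^*)$; (4) prove the lower bound on $G$ on good outcomes via sampling + submodular-maximization adaptivity gap; (5) combine and choose constants in $\alpha$.
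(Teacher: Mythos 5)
Your overall strategy matches the paper's: define the total score $G$ accumulated over phase $i$, upper-bound it via a telescoping/harmonic-series argument over decision paths, and lower-bound it via a good/bad partition combined with a sampling construction and the factor-$2$ adaptivity gap for stochastic submodular maximization \cite{BSZ19}. But there are two concrete gaps in how the lower bound is set up.

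First, the form of your lower bound on $G$ is incorrect. You state $G \gtrsim u_{i-1} - \text{(bad mass)} - u_i$, reasoning that ``only the mass still uncovered at the end of phase $i$ failed to be eaten.'' Combined with $G \lesssim u_{i-1}\ln(1/\delta)$, this yields a \emph{lower} bound on $u_i$ in terms of $u_{i-1}$, which goes the wrong way and cannot close to $u_i \leq \frac{u_{i-1}}{4} + \frac65 u_i^*$. The bound actually needed is $G \gtrsim \frac{\alpha\delta}{\beta}\bigl(u_i - \tfrac65 u_i^*\bigr)$, and it arises from a per-step inequality: at \emph{each} time $t$ in phase $i$, the maximum score is at least $\tfrac{\delta}{\beta 2^i}$ times the quantity $\sum_{\theta > \delta,\,\text{good}} r^*_{i\theta}$, which one lower-bounds as $(1 - u_i^*) - \tfrac{u_i^*}{5} - (1 - u_i) = u_i - \tfrac65 u_i^*$. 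Here $1 - u_i^*$ is the total probability that $\OPT$ finishes by cost $2^i$, $\tfrac{u_i^*}{5}$ charges the bad scales, and the key term $1 - u_i$ bounds the mass already terminated by $\NA$ (i.e.\ already in a scale $\leq \delta$). Note this gives $u_i$ on the right, not $u_{i-1}-u_i$. Your sketch even mentions a per-step bound proportional to ``residual uncovered mass,'' which, if made precise as $u_i - O(u_i^*)$, would be the right quantity — but the combined form you wrote down contradicts that and does not yield the lemma.

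Second, you partition the outcome space of $\rv{S}$-realizations directly into good/bad \emph{outcomes}, whereas the paper partitions into dyadic \emph{scales} $\theta$ of the residual fraction $\frac{Q - f(S)}{Q}$, and classifies scales (not individual outcomes) as good or bad. This discretization is not cosmetic: it is what makes the sampling construction terminate with a bounded-cost witness. To build $\rv{T}_\theta$ one samples $\Theta(1/\theta)$ realizations from a \emph{fixed} conditional distribution $\calE_\theta$ and caps the number of iterations via the contradiction ``total accumulated gain would exceed $Q$''; that contradiction requires all samples to share the same residual up to a factor $2$, which is exactly the scale property. If instead you attach a witness set of cost $O(2^i)$ to each individual good outcome, the union's cost scales with the number of distinct outcomes and is unbounded. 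The scale structure also gives the cost bound $\sum_{\theta > \delta} \frac1\theta \leq \frac1\delta$ (powers of two), which is the source of the $O(1/\delta)$ factor in $\alpha$. Your outline should be amended to discretize by scale before invoking the sampling lemma and to tie the good/bad classification and the Markov-style charge to $u_i^*$ to scales rather than to individual realizations.
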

\fi
\ifICMLVersion
\else
Before we prove this lemma, we use it to prove \Cref{thm:partial-cover}.
\begin{proof}[Proof of \Cref{thm:partial-cover}]
  With probability $(u_{i-1} - u_{i})$, $\NA$ ends in phase $i$
  with cost at most $\alpha \cdot 2^{i}$. As a consequence of this, we
  have
  \ifICMLVersion
  \begin{align}
    \E[\NA] \ &\leq \  \alpha \cdot (1 - u_{0}) +  \sum_{i \geq 1} \alpha \cdot 2^{i} \cdot (u_{i-1} - u_{i}) \notag \\
                      =& \ \alpha + \alpha \cdot \sum_{i \geq 0} 2^i u_{i}. \label{eq:alg-ub}
  \end{align}
  \else
  \begin{align}
    \E[\NA] \ &\leq \  \alpha \cdot (1 - u_{0}) +  \sum_{i \geq 1} \alpha \cdot 2^{i} \cdot (u_{i-1} - u_{i}) 
                      = \ \alpha + \alpha \cdot \sum_{i \geq 0} 2^i u_{i}. \label{eq:alg-ub}
  \end{align}
  \fi
  Similarly, we can bound the cost of the optimal adaptive algorithm
  as
  \ifICMLVersion
  \begin{align}
    \E[{\OPT}] &\geq \sum_{i \geq 0} 2^{i} (u_i^* - u_{i+1}^*) \notag \\
    &\geq u_0^* + \frac{1}{2}\cdot \sum_{i \geq 1} 2^i u_i^* = 1 + \frac{1}{2}\cdot \sum_{i \geq 1} 2^i u_i^*, \label{eq:opt-lb}
  \end{align}
  \else
  \begin{equation}\label{eq:opt-lb}
    \E[{\OPT}] \geq \sum_{i \geq 0} 2^{i} (u_i^* - u_{i+1}^*) \geq u_0^* + \frac{1}{2}\cdot \sum_{i \geq 1} 2^i u_i^* = 1 + \frac{1}{2}\cdot \sum_{i \geq 1} 2^i u_i^*,
  \end{equation}
  \fi
  where the final
  equality uses the fact that $u_{0}^* = 1$. Define $\Gamma := \sum_{i \geq 0} 2^i u_{i}$. We have
  \begin{align*}
    \Gamma := \sum_{i \geq 0} 2^i u_{i}
    &\leq u_{0} + \frac{1}{4}  \sum_{i \geq 1} 2^i \cdot u_{i-1} +  \frac65 \sum_{i \geq 1}2^i \cdot u_i^* \\
    &\leq u_{0} +  \frac{1}{4} \sum_{i \geq 1} 2^i u_{i-1} + \frac{12}5 \cdot (    \E[{\OPT}]  - 1)\\
    &= u_{0} + \frac{1}{2}\Big(\sum_{i\geq 0}2^i u_{i}\Big) +
      \frac{12}5 \cdot (
      \E[{\OPT}]  - 1)\\ 
      &\leq\,\, \frac{1}{2}\Gamma +
      \frac{12}5\, \E[{\OPT}] - 1,
  \end{align*}
  where the first inequality follows from \Cref{lem:key2}, the
  second inequality from \eqref{eq:opt-lb}, and the last
  inequality from the fact that $u_{0} \leq 1$. Thus,
  $\Gamma \leq \frac{24}5\cdot \E[{\OPT}] - 2$. From
  \eqref{eq:alg-ub}, we conclude
  $\E[\NA] \leq \frac{24}5\alpha \cdot \E[{\OPT}]$. Setting $\alpha =O\Big(\frac{\ln(1/\delta)}{\delta}\Big)$ completes the proof.
\end{proof}
\fi

\ifICMLVersion
This lemma implies \Cref{thm:partial-cover}. We include the proofs of \Cref{thm:partial-cover} and \Cref{lem:key2} in the full version of the paper.
\else
\section{Proof of the Key Lemma (\Cref{lem:key2})}

Recall the setting of \Cref{lem:key2} and fix the phase $i \geq 1$. Consider the list $L$ generated
by \textsc{ParCA}$(\rv{X}, f, Q, \delta)$. Let $\NA$ denote both the non-adaptive
strategy of \Cref{alg:parca}, as well as its cost. Note that $\NA$ probes
items in the order given by $L$ until a coverage value greater than
$\tau:=Q(1-\delta)$ is achieved. 

We will assume (without loss of generality) that $\delta$ is a power-of-two, i.e., $\delta=2^{-z}$ for some integer $z\ge 0$. Indeed, if this is not the case, we can use a power-of-two value $\delta'$ where $\frac{\delta}{2}\le \delta'\le \delta$: this only increases the approximation ratio $O(\frac{1}{\delta}\ln\frac{1}{\delta})$ in Theorem~\ref{thm:partial-cover} by a constant factor.

For each time $t\ge 0$,
let $\rv{X}_{e(t)}$ denote the item that  would be selected at time $t$. In other words, this is the item  which causes the cumulative cost of $L$ to exceed $t$ for the first time.   
We define the {\em total gain} as the random variable 
\begin{gather}
  G \,\,:=\,\, \sum_{t = \alpha 2^{i-1}}^{ \alpha 2^{i}}
  \text{score}(\rv{X}_{e(t)}), \label{eq:total-gain}
\end{gather}
which corresponds to the sum of scores over the time interval $[\alpha
\cdot 2^{i-1}, \alpha \cdot 2^{i})$.
The proof of \Cref{lem:key2} will be completed by giving upper- and
lower-bounds on $G$, which we do next. The lower bound views $G$ as a sum over time steps, whereas the upper bound  views $G$ as a sum over decision paths.

\subsection{A Lower Bound for $G$} 
Fix some time $t\in [\alpha 2^{i-1}, \alpha 2^i)$ in phase $i$ of our algorithm.
Let $\rv{S}$ be the random variable denoting the set of chosen items (added to list $L$)
until some time $t$, and let $S$ denote its realization.  We also need the following definitions:
\begin{enumerate}
    \item For any power-of-two $\theta \in \{\frac{1}{2^j} : 0\le j\le \log Q\}$, we say that a realization $S$ of $\rv{S}$ belongs to \emph{scale} $\theta$ iff $ \frac{\theta}{2} <  \frac{{Q} - f(S)}{{Q}} \leq \theta$.
    We use $\calE_{\theta}$ to denote all outcomes of $\rv{S}$ that belong to scale $\theta$. We also use $S\sim \calE_{\theta}$ to denote the conditional distribution of $\rv{S}$ corresponding to $\calE_{\theta}$. 
    
    \item For any scale $\theta$, let  $r_{i\theta}^* := \pr(\text{$\OPT$ covers $f$ with cost at most $2^i$ \bf{and} $S \in \calE_{\theta}$})$.
\item 
Scale $\theta$ is called {\em good} if $\frac{r_{i\theta}^*}{\pr(\calE_{\theta})} \geq \frac{1}{6}$ where $\pr(\calE_{\theta}):=\pr_{S\sim \rv{S}}(S\in \cal\calE_{\theta})$. 
\end{enumerate}

\begin{lemma}\label{lem:stop-scale}
  $\NA$  terminates by  time $t$ if and only if the realization of $\rv{S}$ is in a scale $\theta\le \delta$.
\end{lemma}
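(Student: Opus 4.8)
The plan is to prove \Cref{lem:stop-scale} by directly unwinding the definitions of the stopping rule and of ``scale.'' Recall that $\NA$ probes items in the order given by $L$ and stops as soon as the realized coverage exceeds $\tau = Q(1-\delta)$. So $\NA$ has \emph{not yet} terminated by time $t$ exactly when the realization $S$ of the set $\rv{S}$ of items added to $L$ up to time $t$ satisfies $f(S) \le \tau = Q(1-\delta)$; equivalently $Q - f(S) \ge \delta Q$, i.e. $\tfrac{Q - f(S)}{Q} \ge \delta$. Conversely, $\NA$ \emph{has} terminated by time $t$ iff $f(S) > Q(1-\delta)$, i.e. $\tfrac{Q-f(S)}{Q} < \delta$.

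Next I would translate the quantity $\tfrac{Q-f(S)}{Q} < \delta$ into a statement about scales. By definition, $S$ belongs to scale $\theta = 2^{-j}$ iff $\tfrac{\theta}{2} < \tfrac{Q-f(S)}{Q} \le \theta$; since $f$ is integer-valued and $Q = f(U)$, every realization $S$ with $f(S) < Q$ has $\tfrac{Q-f(S)}{Q} \in (0,1]$ and hence lies in exactly one scale $\theta$ in the set $\{2^{-j} : 0 \le j \le \log Q\}$ (the endpoint value $1$ is covered by $\theta = 1$, and the smallest positive value $1/Q$ lies in scale $\theta = 1/Q = 2^{-\log Q}$). Here I use the assumption, made without loss of generality just before the lemma, that $\delta = 2^{-z}$ is a power of two, so $\delta$ is itself one of the allowed scales. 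Then $\tfrac{Q-f(S)}{Q} < \delta = 2^{-z}$ holds iff the scale $\theta$ containing $S$ satisfies $\theta \le 2^{-z} = \delta$: indeed if $\theta \le \delta$ then $\tfrac{Q-f(S)}{Q} \le \theta \le \delta$ (and in fact strictly less unless $\theta = \delta$ and we are at the upper endpoint, but in that boundary case $\tfrac{Q-f(S)}{Q} = \delta$ means $f(S) = Q(1-\delta) = \tau$, which does \emph{not} exceed $\tau$, so $\NA$ has not terminated — I need to be slightly careful here), while if $\theta > \delta$, i.e. $\theta \ge 2\delta$, then $\tfrac{Q-f(S)}{Q} > \tfrac{\theta}{2} \ge \delta$, so $\NA$ has not terminated.

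So the one subtlety to get right is the boundary: the stopping condition in \Cref{alg:parca} is ``$f(R) \le \tau$'' in the while loop, meaning $\NA$ stops once $f(R) > \tau$, strictly. Correspondingly ``$\NA$ terminated by time $t$'' should mean $f(S) > \tau$, i.e. $\tfrac{Q-f(S)}{Q} < \delta$ strictly, which by the scale bucketing is equivalent to $\theta < \delta$ \emph{or} ($\theta = \delta$ but $\tfrac{Q-f(S)}{Q} < \delta$ strictly). I would either (i) absorb this by noting the statement is used only up to the constant-factor slack already present and the measure-zero-type boundary realizations can be folded into whichever side is convenient, or better (ii) simply observe that the cleanest reading consistent with the rest of the paper is that ``terminates by time $t$'' $\iff f(S) > \tau \iff \theta \le \delta$ after re-bucketing the single boundary value $\tfrac{Q-f(S)}{Q} = \delta$ (which occurs only when $\delta Q$ is an integer and $f(S) = Q - \delta Q$ exactly) into the scale just below; since this does not affect any probability bound used later, I would state it as $\theta \le \delta$ and remark on the boundary in one sentence. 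The main (and really only) obstacle is thus purely bookkeeping around the strict-versus-non-strict inequality at the single scale boundary $\theta = \delta$; there is no combinatorial or probabilistic difficulty, and the proof is a two-line chain of equivalences once the definitions are lined up.
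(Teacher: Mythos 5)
Your proof is correct and takes essentially the same definitional-unwinding approach as the paper: scales $\theta \ge 2\delta$ force $f(S) < \tau$, while scales $\theta \le \delta$ force $f(S) \ge \tau$, and the rest is matching these with the stopping rule. The boundary subtlety you flag is real: the while-loop in Algorithm~\ref{alg:parca} continues while $f(R) \le \tau$, so it stops only once $f(R) > \tau$ strictly, yet scale $\theta = \delta$ admits $\frac{Q-f(S)}{Q} = \delta$ (i.e.\ $f(S) = \tau$, attainable whenever $\delta Q$ is an integer); the paper's own proof establishes only $f(S) \ge \tau$ in that direction and treats it as sufficient to terminate. As you observe, this is harmless --- re-bucketing the single boundary value or tightening the while-condition fixes it without affecting \eqref{eq:exp-func-g} or the probability bounds in Lemma~\ref{lem:score-per-step} by more than a constant --- but you are right that the paper glosses over it.
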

\begin{proof}
Note that if the realization of $\rv{S}$ is in some scale $\theta\ge 2\delta$ then 
$$f(S)<Q-\frac{\theta Q}{2} \le Q-\delta  Q =\tau,$$
and $\NA$ would not   terminate  before time $t$. On the other hand, if the realization of $\rv{S}$ is in a scale $\theta \le \delta$ (note that all scales are powers-of-two) then $f(S)\ge Q-\theta Q \ge \tau,$  
and $\NA$ would   terminate before time $t$. 
\end{proof}

We now define a function
\begin{equation}
\label{eq:exp-func-g}
g({T}) :=  \sum_{\theta > \delta} \pr(\calE_{\theta})\cdot \E_{S\sim \calE_{\theta}} \left[ \frac{f(S \cup T) - f(S)}{{Q} - f(S)} \right]
= \sum_{S \sim \rv{S} : f(S)< \tau } \pr(\rv{S}=S)\cdot \frac{f_{S}(T)}{{Q} - f(S)} 
, \quad \forall \, T\sse U.\end{equation} 
The second equality is by Lemma~\ref{lem:stop-scale}. 
The function $g$ is monotone and submodular, since $f_S$ is  
monotone and submodular for each  $S \sse U$, and $g$ is a non-negative linear combination of such functions. 
Moreover, for any item $\rv{X}_e$, we have $\text{score}(\rv{X}_e) =
\frac1{c_e}\cdot \E_{\rv{X}_e}[ g(X_e)]$.

\paragraph{Constrained stochastic submodular maximization.} Our analysis makes use of some known results for stochastic submodular {\em maximization}. Here, we are given as input a non-negative monotone submodular function $g: 2^U \to \mathbb{R}_{\geq 0}$, independent stochastic items $\rv{X} = \{\rv{X}_1, \ldots, \rv{X}_m\}$ such that each $\rv{X}_i$ realizes to some element of the groundset $U$. There is a cost $c_i$ associated with each item $\rv{X}_i$, and a budget $B$ on the total cost. The goal is to select $\rv{S} \sse \rv{X}$ such that the total cost of $\rv{S}$ is at most $B$ and it maximizes the expected value $\E_{S \sim \rv{S}}[g(S)]$. The adaptivity gap for stochastic submodular maximization is at most $2$; see Theorem~1 in~\cite{BSZ19}.

\begin{lemma}
  \label{lem:key-key}
  For any good scale $\theta$, there exists a subset
  $\rv{T_\theta}\sse \rv{X} \setminus \rv{S}$ with
  $c(\rv{T_\theta})\le \frac{144}{\theta}\cdot 2^i$ such that:
  \begin{equation}\label{eq:key-lem}
     \E_{S \sim \calE_{\theta}} \E_{T_{\theta}\sim \rv{T_\theta}} \Big[f(S \cup  T_{\theta}) - f(S)\Big] \geq \frac{\theta {Q}}{6} \cdot \frac{r^*_{i\theta}}{\pr(\calE_{\theta})}.
  \end{equation} 
\end{lemma}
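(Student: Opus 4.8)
The plan is to prove \Cref{lem:key-key} by exhibiting the desired subset $\rv{T_\theta}$ as (a truncation of) the optimal adaptive solution $\OPT$, suitably conditioned, and then invoking the adaptivity gap for stochastic submodular maximization. First I would fix a good scale $\theta$, so that $r^*_{i\theta}/\pr(\calE_\theta) \ge \tfrac16$ by definition. Condition the probability space on the event $\calE_\theta$ (i.e. on the realization of $\rv{S}$ lying in scale $\theta$); note that since $\rv{S}$ is determined non-adaptively by the list $L$, the event $\calE_\theta$ is independent of the realizations of the items outside $\rv{S}$, which is what lets us reason about an optimal maximization solution on $\rv{X}\setminus\rv{S}$. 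Within this conditioning, consider running $\OPT$ and stopping it as soon as its cumulative cost exceeds $2^i$ — call the (random, adaptive) set of items it probes $\rv{O}$; by Markov's inequality the conditional probability that $\OPT$ both covers $f$ and does so with cost at most $2^i$ is exactly $r^*_{i\theta}/\pr(\calE_\theta)\ge\tfrac16$. On any such realization $S\in\calE_\theta$ where this happens, the probed items $O$ satisfy $f(S\cup O)=Q$, hence $f(S\cup O)-f(S)\ge Q-f(S) > \tfrac{\theta Q}{2}$ by the scale bound; so the conditional expectation of $f(S\cup O)-f(S)$ over the randomness in both $S\sim\calE_\theta$ and $\rv{O}$ is at least $\tfrac{\theta Q}{2}\cdot\tfrac16 = \tfrac{\theta Q}{12}$.

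The next step is to convert this adaptive maximization solution $\rv{O}$ (which has cost at most $2^i$ always, but is adaptive) into a \emph{non-adaptive} set of the form required by the lemma. Here I would apply the adaptivity gap bound of $2$ for budgeted stochastic submodular maximization (Theorem~1 of \cite{BSZ19}, as cited just before the lemma) to the monotone submodular function $T\mapsto \E_{S\sim\calE_\theta}[f(S\cup T)-f(S)]$ (monotone submodular as a non-negative average of the submodular functions $f_S$) with the items $\rv{X}\setminus\rv{S}$, costs $c_i$, and budget $2^i$. Since $\rv{O}$ is a feasible adaptive solution of value $\ge \tfrac{\theta Q}{12}$ for this instance, there is a \emph{non-adaptive} $\rv{T_\theta}\sse\rv{X}\setminus\rv{S}$ with $c(\rv{T_\theta})\le 2^i$ and $\E_{S\sim\calE_\theta}\E_{T_\theta\sim\rv{T_\theta}}[f(S\cup T_\theta)-f(S)]\ge \tfrac12\cdot\tfrac{\theta Q}{12} = \tfrac{\theta Q}{24}$. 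This already gives the claimed inequality up to constants, but the statement asserts the stronger bound $\tfrac{\theta Q}{6}\cdot\tfrac{r^*_{i\theta}}{\pr(\calE_\theta)}$ with cost budget $\tfrac{144}{\theta}2^i$ rather than $2^i$; to recover the extra factor I would take roughly $\Theta(1/\theta)$ disjoint-in-expectation copies / a scaled-up budget version — concretely, repeat the maximization argument with budget $\tfrac{144}{\theta}\cdot 2^i$, which by submodularity and the standard ``run $k$ times'' amplification boosts the fraction covered from a constant to $1-e^{-\Theta(1/\theta)}$, in particular to $\ge\tfrac12$, so that $f(S\cup T_\theta)-f(S)\ge \tfrac12(Q-f(S))\ge\tfrac{\theta Q}{4}$ on the good realizations and averaging against $r^*_{i\theta}/\pr(\calE_\theta)$ gives the stated $\tfrac{\theta Q}{6}$ after adjusting constants. (The exact constants $144$ and $6$ are bookkeeping; the point is the linear-in-$1/\theta$ blowup in budget buys a constant fraction of $Q-f(S)$ rather than a $\theta$-fraction.)

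The main obstacle I anticipate is the \emph{conditional independence} subtlety: we need the optimal maximization solution on $\rv{X}\setminus\rv{S}$ to be well-defined after conditioning on $\calE_\theta$, which requires that conditioning on the event ``$\rv{S}$'s realization is in scale $\theta$'' does not distort the marginal distribution of the remaining items $\rv{X}\setminus\rv{S}$. Since items are independent and $\rv{S}$ is built non-adaptively (the list $L$ depends only on the distributions, not realizations), $\calE_\theta$ is a function of $\{X_i:\rv{X}_i\in\rv{S}\}$ alone, so $\rv{X}\setminus\rv{S}$ retains its product distribution — this is exactly the place where non-adaptivity of the list construction is essential, and I would state it carefully. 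A secondary point is that $\OPT$ is an adaptive solution for \emph{full} coverage of $f$ (not of $f_S$), so I must be careful that truncating $\OPT$ at cost $2^i$ and then measuring $f(S\cup O)-f(S)$ is legitimate: this is fine because $f(S\cup O)=Q$ whenever $O\supseteq$ (the realization of some feasible set), and $\OPT$'s probed set is feasible for $f$ on exactly the event defining $r^*_{i\theta}$. Once these two points are nailed down, the rest is the (routine) submodular-maximization amplification and an averaging over scales, which I would defer.
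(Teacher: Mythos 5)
Your proposal starts along the same line as the paper's proof (per-$S$ conditioning, truncating $\OPT$ at cost $2^i$, invoking the factor-$2$ adaptivity gap of \cite{BSZ19}), but it has two genuine gaps, one of which is the crux of the lemma.

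\textbf{Gap 1: misapplication of the adaptivity gap.} You apply the adaptivity gap of \cite{BSZ19} to the averaged objective $g(T)=\E_{S\sim\calE_\theta}[f(S\cup T)-f(S)]$, treating the truncation $\rv{O}$ of $\OPT$ as a single adaptive benchmark of value $\ge\frac{\theta Q}{12}$. But $\rv{O}$ depends on the realization $S$ of $\rv{S}$ (because $\OPT$ may branch on items in $\rv{S}$), whereas an adaptive solution to the maximization problem on $\rv{X}\setminus\rv{S}$ with objective $g$ does not observe $S$. So $\rv{O}$ is \emph{not} a feasible adaptive solution for $g$, and the value $\frac{\theta Q}{12}$ is an average of per-$S$ adaptive optima, not the value of any single adaptive policy for $g$; the adaptivity gap theorem gives you nothing against it. The paper handles this by fixing $S$, taking $\AD$ to be $\OPT$ restricted to $\rv{X}\setminus\rv{S}$ and to cost $\le 2^i$ \emph{conditioned on $\rv{S}=S$} (which is a bona fide adaptive solution for the function $f_S$), and then applying the adaptivity gap per $S$ to obtain an $S$-dependent non-adaptive set $\rv{T}_S$ with $\E[f(S\cup T_S)-f(S)]\ge\frac{\theta Q}{4}\,w^*_{i,S}$. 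This leaves a family $\{\rv{T}_S\}$, not the single set $\rv{T}_\theta$ the lemma demands.

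\textbf{Gap 2: the combination step is not ``standard amplification''.} Turning the family $\{\rv{T}_S\}$ into a single $\rv{T}_\theta$ that works in expectation over $S$ is the main content of the lemma, and your appeal to a ``standard `run $k$ times' amplification'' does not do it. You claim $1-e^{-\Theta(1/\theta)}$ of ``the fraction covered'', but it is unclear what fixed function and what fixed benchmark the fraction refers to, precisely because both the target $Q-f(S)$ and the good non-adaptive set $\rv{T}_S$ vary with $S$. The paper's construction is to iterate $k=\frac{144}{\theta}$ times: sample $S\sim\calE_\theta$, add the optimal non-adaptive budget-$2^i$ set $\rv{T}_S$ for $f_S$, and take $\rv{T}_\theta$ to be the union. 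It then argues by contradiction: if $\E_{S\sim\calE_\theta}\E_{T_\theta}[f(S\cup T_\theta)-f(S)]<\frac{\theta Q}{6}\cdot\frac{r^*_{i\theta}}{\pr(\calE_\theta)}$, then by monotonicity and submodularity each of the $k$ iterations contributes marginal gain $>\frac{\theta Q}{12}\cdot\frac{r^*_{i\theta}}{\pr(\calE_\theta)}$ to the telescoping sum for $\E[f(T_\theta)]$ (taken \emph{without} the $S\cup$), so that $\E[f(T_\theta)]>2Q$, contradicting $f\le Q$. This two-sided telescoping/contradiction argument---not a black-box coverage-amplification bound---is what buys the $\frac{1}{\theta}$ budget blowup. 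Without it (or a genuinely equivalent argument), your proof does not establish the stated inequality, and indeed it is not even clear that a budget-$2^i$ set $\rv{T}_\theta$ with $\E_S\E_{T_\theta}[f(S\cup T_\theta)-f(S)]\ge\frac{\theta Q}{24}$ exists, since that was derived from Gap~1's invalid adaptivity-gap application.

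As a minor note, ``by Markov's inequality'' in the first paragraph should be ``by definition of $r^*_{i\theta}$ and of a good scale''---Markov plays no role there.
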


\begin{proof}
  We construct set $\rv{T_\theta}$ as follows.  Initially,
  $\rv{T}_{\theta} \leftarrow \emptyset$.  For each
  $k = 1, 2, \ldots, \frac{144}\theta$:
  \begin{enumerate}
  \item Sample realization $S$ of $\rv{S}$ from $\calE_{\theta}$.
  \item Let $\rv{T}_k\sse \rv{X}\setminus \rv{S}$ be an optimal {\em
      non-adaptive} solution to the stochastic submodular maximization
    instance with function $f_{S}$ and cost budget $2^i$.
  \item Set $\rv{T}_{\theta} \leftarrow \rv{T}_{\theta} \cup \rv{T}_k$.
  \end{enumerate}

  By construction, 
  $c(\rv{T_\theta})\le \frac{144}{\theta}\cdot 2^i$. So we focus on
  the expected function value.  Consider any 
  $S\in \calE_{\theta}$ as the realization of $\rv{S}$. Let
  $\rv{T}_{S}$ denote the non-adaptive solution obtained in step~2
  above for realization
  $S$. 
  Define $w^*_{i,S}$ as the probability that $\OPT$ covers $f$ with
  cost at most $2^i$ given that $\rv{S}$ realizes to $S$, i.e.,
$$ w^*_{i,S} := \pr(\text{ $\OPT$ covers $f$ with cost at most $2^i$ $\mid$ $\rv{S} = S$  }). $$ Note that $\sum_{S \in \calE_{\theta}} w^*_{i,S} \cdot \pr(\rv{S}=S ) = r^*_{i\theta}$. Let $\AD$ denote $\OPT$ until time $2^i$ and restricted to items $\rv{X}\setminus \rv{S}$. Note that $\AD$ is a  feasible adaptive solution to the stochastic submodular maximization instance with function $f_{S}$: every decision path has total cost at most $2^i$.  The expected value of $\AD$ is at least $({Q} -f(S))\cdot w^*_{i,S}$ by definition of $w^*_{i,S}$. As $S \in \calE_{\theta}$, we have ${Q} - f(S) \geq \frac{\theta {Q}}{2}$, which implies $\AD$ has value at least $\frac{\theta {Q}}{2}\cdot w^*_{i,S}$. Now, using the factor $2$ adaptivity gap for stochastic submodular {\em maximization} (discussed above), it follows that the non-adaptive solution $\rv{T}_S$ has expected value: 
\begin{equation}\notag
  \E_{T_{S}\sim \rv{T}_S}[{f({S} \cup T_{{S}}) - f({S})} ] \geq \frac{{\theta {Q}}}{4} \cdot  w^*_{i,S},\qquad \forall S\in \calE_{\theta}.
\end{equation}
Taking expectations,
\begin{align*}
  \E_{{S}} \E_{T_{S}}[{f({S} \cup T_{{S}}) - f({S})} \mid S \in \calE_{\theta}] &\geq \sum_{S \in \calE_{\theta}} \frac{{\theta {Q}}}{4} \cdot  w^*_{i,S} \cdot \pr(\rv{S} = S \mid S \in \calE_{\theta}) \\
                                                                                &= \frac{{\theta {Q}}}{4} \cdot \sum_{S \in \calE_{\theta}} w^*_{i,S} \cdot \frac{\pr(\rv{S} =S)}{\pr(\calE_{\theta})} 
                                                                                  = \frac{{\theta {Q}}}{4} \cdot \frac{r^*_{i \theta}}{\pr(\calE_{\theta})}.
\end{align*}
The left-hand-side of the above relation can be rewritten to give
\begin{equation}\label{eq:key-lemma-cont2}
  \E_{S \sim \calE_{\theta}} \E_{T_{S}}[{f({S} \cup T_{{S}}) - f({S})}] \geq \frac{{\theta {Q}}}{4} \cdot \frac{r^*_{i \theta}}{\pr(\calE_{\theta})}. 
\end{equation}
For a contradiction to \eqref{eq:key-lem}, suppose that
\begin{equation}\label{eq:key-lemma-cont}
  \E_{S \sim \calE_{\theta}} \E_{T_{\theta}} [f(S \cup  T_{\theta}) - f(S)] < \frac{\theta {Q}}{6} \cdot \frac{r^*_{i\theta}}{\pr(\calE_{\theta})}. 
\end{equation}
Subtracting Equation \eqref{eq:key-lemma-cont} from Equation
\eqref{eq:key-lemma-cont2} gives the following:
\begin{align}
  \frac{\theta {Q}}{12} \cdot \frac{r^*_{i \theta}}{\pr(\calE_{\theta})} &< \E_{S \sim \calE_{\theta}} \E_{T_{S}} \E_{T_{\theta}} [f(S \cup T_{S}) - f(S \cup T_\theta)]\notag \\
                                                                         &\leq \E_{S \sim \calE_{\theta}} \E_{T_{S}} \E_{T_{\theta}} [f(S \cup T_{\theta} \cup T_{{S}}) - f(S \cup T_\theta)]\notag \\
                                                                         &\leq \E_{S \sim \calE_{\theta}}\E_{T_{{S}}} \E_{T_{\theta}} [f(T_{\theta} \cup T_{{S}}) - f(T_\theta)] \label{eq:key-lemma-cont3}
\end{align}
where the second inequality uses monotonicity of $f$, and the
last one its submodularity.

Let $\rv{T}^{(k)} = \rv{T}_1 \cup \rv{T}_2 \ldots \cup \rv{T}_k$
denote the items selected until iteration $k$.
We write a telescoping sum as follows
$$ \E_{T_{\theta}}[f(T_{\theta})] = \sum_{k} \E_{T^{(k)}} \Big[f(T^{(k)})
- f(T^{(k-1)})\Big] $$ where we define $f(T^{(0)}) := 0$. Note that in
each iteration $k$, the sample $S$ is drawn independently and
identically from $\calE_{\theta}$, and items $\rv{T}_k =\rv{T}_{S}$
are added.
\begin{align*}
  \E_{T^{(k)}} \left[f(T^{(k)}) - f(T^{(k-1)})\right]   &=   \E_{S \sim \calE_{\theta}}\E_{T_{{S}}} \E_{T^{k-1}} \left[f(T^{(k-1)} \cup T_{{S}}) - f(T^{(k-1)})\right] \notag \\
                                                        &\geq    \E_{S \sim \calE_{\theta}}\E_{T_{{S}}} \E_{T_{\theta}} \left[f(T_{\theta} \cup T_{{S}}) - f(T_\theta)\right] \,\,>\,\,      \frac{\theta {Q}}{12} \cdot \frac{r^*_{i \theta}}{\pr(\calE_{\theta})} .
\end{align*}
The first inequality above uses $\rv{T}^{(k-1)}\sse \rv{T}_\theta$ and
submodularity, and the second inequality uses
\eqref{eq:key-lemma-cont3}.  Adding over all iterations $k$,
$$     \E_{T_{\theta}}[f(T_{\theta})] \,\,>\,\,  \sum_{k} \frac{\theta {Q}}{12} \cdot \frac{r^*_{i \theta}}{\pr(\calE_{\theta})} \,\, = \,\, \frac{144}{\theta} \cdot \frac{\theta {Q}}{12} \cdot \frac{r^*_{i \theta}}{\pr(\calE_{\theta})} \,\,\geq \,\,2{Q},$$
where the last inequality uses the fact that $\theta$ is a good
scale. This is a contradiction since the maximum function value is
${Q}$. This completes the proof of \eqref{eq:key-lem}.
\end{proof}

\begin{lemma}\label{lem:key-LB}
For any $\rv{S} \sse \rv{X}$, there exists a subset $\rv{T} \sse \rv{X} \setminus \rv{S}$  of total cost at most $\frac{144}\delta  \cdot  2^i$ with 
$$\E_{T\sim \rv{T}} [ g(T)] \quad \ge \quad \frac{1}{6} \cdot \sum_{\theta > \delta,\, \text{good}} r^*_{i\theta}.$$
\end{lemma}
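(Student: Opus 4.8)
The plan is to build the set $\rv{T}$ by taking a union of the sets $\rv{T}_\theta$ furnished by Lemma~\ref{lem:key-key}, one for each good scale $\theta > \delta$, and then to argue that the expected value of $g$ on this union exceeds the expected value of $g$ on any single $\rv{T}_\theta$ by monotonicity. First I would recall that the only scales $\theta$ that appear in the definition~\eqref{eq:exp-func-g} of $g$ are those with $\theta > \delta$ (by Lemma~\ref{lem:stop-scale}), and these form a set of size at most $\log Q + 1$ since all scales are powers of two in $\{1/2^j : 0 \le j \le \log Q\}$. For each good scale $\theta$, Lemma~\ref{lem:key-key} gives a subset $\rv{T}_\theta \sse \rv{X} \setminus \rv{S}$ with $c(\rv{T}_\theta) \le \frac{144}{\theta} 2^i \le \frac{144}{\delta} 2^i$ (using $\theta > \delta$) satisfying~\eqref{eq:key-lem}. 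Set $\rv{T} := \bigcup_{\theta > \delta,\, \text{good}} \rv{T}_\theta$. Since there is at most one good scale of each power-of-two value and each one has cost at most $\frac{144}{\delta} 2^i$ --- but more carefully, since the $\theta$'s are distinct powers of two all exceeding $\delta$, the costs $\frac{144}{\theta} 2^i$ form a geometric series summing to at most $\frac{288}{\delta} 2^i$; I should double-check whether the intended bound is $\frac{144}{\delta} 2^i$ exactly or up to a constant, and if the former, note that $\sum_{\theta > \delta} \frac{1}{\theta} = \sum_{\theta = 2\delta, 4\delta, \ldots, 1} \frac{1}{\theta} < \frac{1}{\delta}$, so in fact $c(\rv{T}) \le \frac{144}{\delta} 2^i$ as claimed.

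Next I would lower-bound $\E_{T \sim \rv{T}}[g(T)]$. Using the expression $g(T) = \sum_{\theta' > \delta} \pr(\calE_{\theta'}) \cdot \E_{S \sim \calE_{\theta'}}\big[\frac{f(S \cup T) - f(S)}{Q - f(S)}\big]$ and keeping only the term $\theta' = \theta$ for a fixed good scale $\theta$, together with the bound $Q - f(S) \le \theta Q$ for $S \in \calE_\theta$, I get
\begin{align*}
\E_{T \sim \rv{T}}[g(T)] \ \ge\ \pr(\calE_\theta)\cdot \E_{S \sim \calE_\theta} \E_{T \sim \rv{T}}\left[\frac{f(S \cup T) - f(S)}{Q - f(S)}\right] \ \ge\ \frac{\pr(\calE_\theta)}{\theta Q} \cdot \E_{S \sim \calE_\theta}\E_{T \sim \rv{T}}\big[f(S \cup T) - f(S)\big].
\end{align*}
Since $\rv{T}_\theta \sse \rv{T}$, monotonicity of $f$ gives $\E[f(S \cup T) - f(S)] \ge \E[f(S \cup T_\theta) - f(S)] \ge \frac{\theta Q}{6} \cdot \frac{r^*_{i\theta}}{\pr(\calE_\theta)}$ by~\eqref{eq:key-lem}. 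Substituting, the $\pr(\calE_\theta)$ and $\theta Q$ factors cancel and I obtain $\E_{T \sim \rv{T}}[g(T)] \ge \frac{r^*_{i\theta}}{6}$ for every good scale $\theta$.

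Finally, since this holds for each good scale individually, I would take the maximum --- but the lemma asks for the \emph{sum} $\frac{1}{6}\sum_{\theta > \delta,\text{good}} r^*_{i\theta}$, so a single-scale bound is not enough and I need to be more careful. The fix is to repeat the argument retaining \emph{all} good-scale terms in the sum defining $g(T)$ simultaneously: $\E_{T \sim \rv{T}}[g(T)] \ge \sum_{\theta > \delta,\text{good}} \pr(\calE_\theta) \E_{S \sim \calE_\theta}\E_{T \sim \rv{T}}\big[\frac{f(S\cup T)-f(S)}{Q-f(S)}\big] \ge \sum_{\theta > \delta,\text{good}} \frac{r^*_{i\theta}}{6}$, where each summand is bounded exactly as above using $\rv{T}_\theta \sse \rv{T}$ and~\eqref{eq:key-lem}; dropping the non-good-scale terms only decreases the sum since $g \ge 0$ termwise. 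This yields the claimed bound. The main obstacle I anticipate is the cost accounting for $\rv{T}$: one must verify that summing the costs $\frac{144}{\theta} 2^i$ over all good $\theta > \delta$ (a geometric series) still comes in at $O(\frac{1}{\delta}) 2^i$ and matches the constant claimed in the statement --- if the paper wants exactly $\frac{144}{\delta} 2^i$ this requires the observation $\sum_{\theta \in \{2\delta,\ldots,1\}} \theta^{-1} < \delta^{-1}$, which holds since the dominant term is $\theta = 2\delta$ contributing $\frac{1}{2\delta}$ and the geometric tail doubles it to just under $\frac{1}{\delta}$.
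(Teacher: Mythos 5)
Your proposal is correct and follows essentially the same route as the paper: take $\rv{T}$ to be the union of the $\rv{T}_\theta$ over good scales, bound the cost by the geometric-series estimate $\sum_{\theta>\delta}\theta^{-1}\le\delta^{-1}$, use monotonicity with $\rv{T}\supseteq\rv{T}_\theta$ and $Q-f(S)\le\theta Q$ to convert \eqref{eq:key-lem} into a bound on each good-scale term of $g$, and sum over good scales. The momentary detour toward taking a maximum over $\theta$ is caught and repaired in-line, and the final argument matches the paper's.
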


\begin{proof}
Let ${\cal B}$ denote the set of {\em good} scales $\theta$ with
$\theta > \delta$. Note that $\sum_{\theta\in {\cal B}} \frac{1}{\theta}
\le \frac1\delta$ as the scales are powers of
two. From \Cref{lem:key-key}, let $\rv{T_\theta}$ denote the items
satisfying \eqref{eq:key-lem} for each scale $\theta\in {\cal
  B}$. Define $\rv{T}=\cup_{\theta\in {\cal B}} \rv{T_\theta}$. As
claimed, the
total cost is
\[ c(\rv{T})\le \sum_{\theta\in {\cal B}} c(\rv{T_\theta})\le 2^i \sum_{\theta\in {\cal B}} \frac{144}{\theta}\le \frac{144}\delta \cdot 2^i.\]
Next, we bound the expected increase in the value of $f$. By \eqref{eq:key-lem} and $\rv{T}\supseteq \rv{T_\theta}$, it follows that
$$\E_{S \sim \calE_{\theta}} \E_{T\sim \rv{T}}  [f(S \cup T) - f(S)] \geq \frac{\theta {Q}}{6} \cdot \frac{r^*_{i\theta}}{\pr(\calE_{\theta})}$$
for each $\theta\in {\cal B}$.
Using the fact that $\frac{\theta {Q}}2 < {Q}-f(S)\le \theta {Q}$ for all $S\in \calE_{\theta}$, we get:
$$\E_{S \sim \calE_{\theta}} \E_{T\sim \rv{T}}  \left[\frac{f(S \cup T) - f(S)}{{Q}-f(S)}\right] \geq \frac{1}{6} \cdot \frac{r^*_{i\theta}}{\pr(\calE_{\theta})}, \qquad \forall \theta\in {\cal B}.$$
By definition of function $g$ (see \eqref{eq:exp-func-g}),  
\begin{eqnarray*}
\E_{T\sim \rv{T}}[g(T)] &= &\sum_{\theta > \delta} \pr(\calE_{\theta})\cdot \E_{S\sim \calE_{\theta}} \E_{T\sim \rv{T}} \left[ \frac{f(S \cup T) - f(S)}{{Q} - f(S)} \right] \\
&\ge &\sum_{\theta\in {\cal B}} \pr(\calE_{\theta}) \cdot \E_{S \sim \calE_{\theta}} \, \E_{T\sim \rv{T}} \left[ \frac{f(S \cup T) - f(S)}{{Q}-f(S)}\right]  \, \geq \,  \frac{1}{6} \cdot \sum_{\theta \in {\cal B}} r^*_{i\theta},
\end{eqnarray*}
which completes the proof of the lemma.
\end{proof}

Using \Cref{lem:key-LB} and averaging,
\begin{eqnarray}
\max_{\rv{X}_e \in \rv{X}\setminus \rv{S}} \text{score}(\rv{X}_e) &\ge& \max_{\rv{X}_e \in \rv{T}} \text{score}(\rv{X}_e) \ge \frac{\sum_{\rv{X}_e\in \rv{T}} \E [g(X_e)]}{c(\rv{T})} = \frac{\E_{T\sim \rv{T}}[\sum_{X_e\in T} g(X_e)]}{c(\rv{T})}\notag \\
&\ge& \frac{\E_{T\sim \rv{T}}[g(T)]}{c(\rv{T})} \quad \ge \quad \frac{\delta}{\beta\cdot 2^i}\cdot \sum_{\theta > \delta,\, \text{good}} r^*_{i\theta},\label{eq:score:LB}
\end{eqnarray}
where $\beta=O(1)$.

\begin{lemma}\label{lem:score-per-step} We have $\sum_{\theta > \delta,\, \text{good}} r^*_{i\theta}\ge u_{i} - \frac{6u^*_i}{5}$.
\end{lemma}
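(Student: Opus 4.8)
\textbf{Proof plan for Lemma~\ref{lem:score-per-step}.}

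The plan is to lower-bound $\sum_{\theta>\delta,\,\text{good}} r^*_{i\theta}$ by first including \emph{all} scales $\theta>\delta$ (good or not) and then subtracting a correction term for the bad scales. First I would observe the clean identity
\[
\sum_{\theta>\delta} \pr(\calE_\theta) \;=\; \pr\!\big(\rv{S}\text{ realizes to a scale }\theta>\delta\big) \;=\; \pr(\NA\text{ does not terminate by time }t)\;=\;u_i,
\]
using Lemma~\ref{lem:stop-scale} and the fact that $t$ is an arbitrary time in phase $i$ (so this probability is exactly $u_i$, the probability $\NA$ goes beyond phase $i$; here one should fix $t = \alpha 2^i$, or note the bound is monotone in $t$). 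Next, for the $\OPT$ side, I would write $r^*_{i\theta} = \pr(\OPT\text{ covers }f\text{ by cost }2^i \ \text{and}\ S\in\calE_\theta)$, so that
\[
\sum_{\theta>\delta} r^*_{i\theta} \;=\; \pr\!\big(\OPT\text{ covers }f\text{ by cost }2^i \ \text{and}\ \rv{S}\in\textstyle\bigcup_{\theta>\delta}\calE_\theta\big) \;\ge\; \pr\!\big(\OPT\text{ covers }f\text{ by cost }2^i\big) - \pr\!\big(\rv{S}\in\textstyle\bigcup_{\theta\le\delta}\calE_\theta\big).
\]
The first term is $1-u^*_{i+1}$... but more directly, $\pr(\OPT$ covers by cost $2^i) \ge \pr(\NA$ not done by $t)$ is false in general; instead I want $\sum_\theta r^*_{i\theta} \ge u_i - (\text{something})$. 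The cleanest route: $\sum_{\theta>\delta} r^*_{i\theta} \ge \sum_{\theta>\delta}\pr(\calE_\theta) - \sum_{\theta>\delta}\big(\pr(\calE_\theta) - r^*_{i\theta}\big) = u_i - \sum_{\theta>\delta}\pr(\calE_\theta)(1 - w_{i\theta})$, where $\pr(\calE_\theta) - r^*_{i\theta} = \pr(S\in\calE_\theta \ \text{and}\ \OPT$ not done by $2^i) \le \pr(\OPT$ not done by $2^i$, $S$ in some scale $>\delta) \le u^*_{i+1} \le u^*_i$ summed appropriately — I need to be careful to only sum this once, giving $\sum_{\theta>\delta} r^*_{i\theta} \ge u_i - u^*_{i+1} \ge u_i - u^*_i/2$...

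Actually the factor $6/5$ signals the split into good vs.\ bad scales. So the key step is: for a \emph{bad} scale $\theta$, by definition $r^*_{i\theta} < \frac16\pr(\calE_\theta)$, hence
\[
\sum_{\theta>\delta,\,\text{good}} r^*_{i\theta} \;=\; \sum_{\theta>\delta} r^*_{i\theta} \;-\; \sum_{\theta>\delta,\,\text{bad}} r^*_{i\theta} \;\ge\; \Big(u_i - \textstyle\sum_{\theta>\delta}(\pr(\calE_\theta)-r^*_{i\theta})\Big) - \tfrac16\sum_{\theta>\delta,\,\text{bad}}\pr(\calE_\theta).
\]
I would then bound $\sum_{\theta>\delta}(\pr(\calE_\theta)-r^*_{i\theta}) = \pr(S$ in scale $>\delta$ and $\OPT$ not done by $2^i) \le u^*_i$, and $\sum_{\theta>\delta,\,\text{bad}}\pr(\calE_\theta) \le \sum_{\theta>\delta}\pr(\calE_\theta) = u_i \le 1$; combining and optimizing the constants should yield $\sum_{\theta>\delta,\,\text{good}} r^*_{i\theta} \ge u_i - u^*_i - \tfrac16 u_i$, and then using $u_i \le 1$ on the $\tfrac16 u_i$ term against a sharper bound $u^*_i \le u^*_1 \le 1$ doesn't immediately give $6/5$; the precise arithmetic (likely bounding $\sum_{\theta>\delta,\text{bad}}\pr(\calE_\theta)$ more carefully against $u^*_i$, since a bad scale with large $\pr(\calE_\theta)$ forces $\OPT$ to often \emph{not} be done, contributing to $u^*_{i+1}$) is where I'd slow down.

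\textbf{Main obstacle.} The delicate point is getting the constant exactly $\tfrac65$ rather than something like $2$: this requires not throwing away the correlation between "$\theta$ is a bad scale" and "$\OPT$ fails to cover by cost $2^i$." Concretely, on a bad scale $\frac{r^*_{i\theta}}{\pr(\calE_\theta)}<\frac16$ means a $\ge\frac56$ fraction of that scale's probability mass is on outcomes where $\OPT$ is \emph{not} done by cost $2^i$; summing, $\sum_{\theta>\delta,\text{bad}}\pr(\calE_\theta) \le \frac{6}{5}\sum_{\theta>\delta,\text{bad}}\big(\pr(\calE_\theta)-r^*_{i\theta}\big) \le \frac65 u^*_i$ (using that $\{S\in\calE_\theta,\ \OPT$ not done by $2^i\}$ for distinct scales are disjoint and each lies inside $\{\OPT$ cost $\ge 2^i\}$, whose probability is $u^*_i$). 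Then $\sum_{\theta>\delta,\text{good}} r^*_{i\theta} = \sum_{\theta>\delta}\pr(\calE_\theta) - \sum_{\theta>\delta}\big(\pr(\calE_\theta)-r^*_{i\theta}\big) - \sum_{\theta>\delta,\text{bad}} r^*_{i\theta} \ge u_i - u^*_i - \frac16\cdot\frac65 u^*_i = u_i - \frac65 u^*_i$, which is exactly the claim. So the plan reduces to: (i) identify $\sum_{\theta>\delta}\pr(\calE_\theta)=u_i$ via Lemma~\ref{lem:stop-scale}; (ii) bound the total "$\OPT$-fails" mass over scales $>\delta$ by $u^*_i$; (iii) use the bad-scale definition to control $\sum_{\text{bad}} r^*_{i\theta}$ by $\frac16\cdot\frac65 u^*_i$. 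Step (ii), ensuring the disjointness/containment bookkeeping is airtight so no probability is double-counted, is the one to write out carefully.
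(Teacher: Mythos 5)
Your proof is correct and takes essentially the same approach as the paper. The final computation in your ``Main obstacle'' paragraph is exactly the paper's argument in slightly rearranged form: both use Lemma~\ref{lem:stop-scale} to bound the probability mass of scales $\theta\le\delta$ by $1-u_i$ (equivalently, scales $>\delta$ carry mass at least $u_i$), and both exploit the not-good-scale definition $\frac{r^*_{i\theta}}{\pr(\calE_\theta)}<\frac16$ together with disjointness of the $\calE_\theta$-events (each intersected with $\{\OPT\ge 2^i\}$, whose total mass is $u^*_i$) to obtain the $\frac15 u^*_i$ correction; the paper just organizes the bookkeeping as $\sum_\theta r^*_{i\theta}=1-u^*_i$ minus $\sum_{\theta\le\delta}r^*_{i\theta}$ minus $\sum_{\text{not good}}r^*_{i\theta}$, while you split $\sum_{\theta>\delta}r^*_{i\theta}$ into a $\pr(\calE_\theta)$ part and an ``$\OPT$-fails'' part.
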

\begin{proof}
First, we upper bound $\sum_{\theta \text{ not good}} r^*_{i\theta}$.  
Consider any scale $\theta$ that is not good. Then, $\frac{{r}^*_{i \theta}}{\pr(\calE_{\theta})} < \frac{1}{6}$, i.e.,  $\pr(\text{$\OPT < 2^i$} \mid S \sim \calE_{\theta}) < 1/6$, which implies  $\pr(\text{$\OPT  \geq 2^i$} \mid  S \sim \calE_{\theta}) > \frac56 > 5\frac{ r_{i\theta}^*}{\pr(\calE_{\theta})}$. So, 
\begin{align} \label{eq:bad-theta-prob}
   \sum_{\theta \text{ not good}} r^*_{i\theta} &< \frac{1}{5} \sum_{\theta\text{ not good}} \pr(\calE_{\theta})\cdot \pr(\OPT \geq 2^i \mid  S \sim \calE_{\theta})  \leq \frac{1}{5} \sum_{\theta } \pr(\OPT \geq 2^i \text{ \emph{and} } S \sim \calE_{\theta}) \leq \frac{u_i^*}{5} 
\end{align}
where the  last inequality uses  the fact that $\sum_{\theta }\pr(\OPT \geq 2^i \text{ \emph{and} } S \sim \calE_{\theta}) = u_i^*$.

We now upper bound $\sum_{\theta \le \delta} r^*_{i\theta}$. By Lemma~\ref{lem:stop-scale}, if the realization $S$ is in scale $\theta \le \delta$   then $\NA$ ends before time $t\le \alpha 2^i$, i.e., it does not go beyond phase $i$. Hence, 
\begin{equation}\label{eq:low-theta-prob}
\sum_{\theta \le \delta} r^*_{i\theta}\le \sum_{\theta \le \delta} \pr(S\sim \calE_{\theta})\le 1-u_{i}.
\end{equation}

We now use the fact that $1-u^*_i = \sum_\theta r^*_{i \theta}$ where we sum over all scales. So,
$$\sum_{\theta > \delta,\, \text{good}} r^*_{i\theta} \,\, \ge \,\,  \sum_{\theta} r^*_{i\theta} - \sum_{\theta \text{ not good}} r^*_{i\theta} -\sum_{\theta \le \delta} r^*_{i\theta}  \,\, \ge  \,\, (1-u^*_i) - \frac{u^*_i}{5} - (1-u_{i})  \,\, = \,\,  u_{i} - \frac{6u^*_i}{5},$$
where we used \eqref{eq:bad-theta-prob} and \eqref{eq:low-theta-prob}.
\end{proof}

Combining \eqref{eq:score:LB} and \Cref{lem:score-per-step}, and using the greedy choice in step~\ref{step:parca-greedy} (Algorithm~\ref{alg:parca}), the score at time $t$,
$$\text{score}(\rv{X}_{e(t)}) \quad \ge\quad  \frac{\delta}{\beta 2^i}\left(u_i - \frac65 u^*_i\right).$$
We note that this inequality continues to hold  (with a larger constant $\beta$) even if we choose an item that only maximizes the score~\eqref{eq:greedy-choice} within a constant factor.  

Using the above inequality  for each time $t$ during phase $i$, we have
\begin{equation}\label{eq:G-LB}
G\quad \ge\quad  \alpha 2^{i-1}\cdot \frac{\delta}{\beta 2^i}\left(u_i - \frac65 u^*_i\right)\quad =\quad \frac{\alpha \delta}{2\beta}\cdot \left(u_i - \frac65 u^*_i\right).
\end{equation}
We will use this lower bound for $G$ in conjunction with an upper
bound, which we prove next.

\subsection{An Upper Bound for $G$}
We now consider the implementation of the non-adaptive list $L$ and
calculate $G$ as a sum of contributions over the observed decision
path. Let $\Pi$ denote the (random) decision path followed by the non-adaptive strategy $\NA$: this consists of a prefix of $L$ along with their realizations. Denote by
$\langle X_1, X_2,\ldots,  \rangle$ the sequence of realizations (each
in $U$) observed on $\Pi$. So item $\rv{X}_j$ is selected between time
$\sum_{\ell=1}^{j-1} c_\ell$ and $\sum_{\ell=1}^{j} c_\ell$. Let $h$
(resp.\ $p$) index the first (resp.\ last) item in $\Pi$ (if any) that
is selected (even partially) during phase $i$, i.e., between time
$\alpha 2^{i-1}$ and $\alpha 2^{i}$. For each index $h\le j\le p$, let
$t_j$ denote the duration of time that item $\rv{X}_j$ is selected
during in phase $i$; so $t_j$ is the width of interval
$[\sum_{\ell=1}^{j-1} c_\ell,\, \sum_{\ell=1}^{j} c_\ell] \bigcap
[\alpha 2^{i-1},\, \alpha 2^{i}]$. It follows that $t_j \leq c_j$.

Define $G(\Pi):=0$ if index $h$ is undefined (i.e., $\Pi$ terminates before phase $i$), and otherwise:
\begin{equation}\label{eq:G-path-ub}
G(\Pi) :=\sum_{j=h}^p \frac{t_j}{c_j} \cdot \frac{f(\{X_1,\ldots,  X_j\}) - f(\{X_1,\ldots,  X_{j-1}\})}{{Q} - f(\{X_1,\ldots,  X_{j-1}\})}\le \sum_{j=h}^p  \frac{f(\{X_1,\ldots,  X_j\}) - f(\{X_1,\ldots,  X_{j-1}\})}{{Q} - f(\{X_1,\ldots,  X_{j-1}\})} .
\end{equation}
By the stopping criterion for $L$, the $f$ value {\em before} the end of $\Pi$ remains at most  $\tau={Q}(1-\delta)$. So the denominator above, i.e., ${Q} - f(\{X_1,\ldots,  X_{j-1}\})$ is at least $\delta Q$ for all $j$.

\begin{lemma}
  \label{lem:per-path}
For any decision path $\Pi$, we have $G(\Pi)\le 1+   \ln (1/\delta)$.
\end{lemma}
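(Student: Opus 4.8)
The plan is to bound the sum in the definition of $G(\Pi)$, Equation~\eqref{eq:G-path-ub}, by comparing it to a harmonic-type series. Write $a_j := f(\{X_1,\ldots,X_j\})$ for $j \ge 0$, so $a_0 = 0$ and $a_j$ is nondecreasing with $a_j \le Q$ for all $j$ on the path; by the stopping rule for $L$, all but possibly the last term have $a_{j-1} \le \tau = Q(1-\delta)$, so $Q - a_{j-1} \ge \delta Q \ge 1$. I want to show
\[
G(\Pi) \;\le\; \sum_{j=h}^{p} \frac{a_j - a_{j-1}}{Q - a_{j-1}} \;\le\; 1 + \ln(1/\delta).
\]

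The key step is the elementary inequality that for integers $0 \le a \le b$ with $b \ge 1$,
\[
\frac{b-a}{b} \;\le\; \sum_{k=a+1}^{b} \frac1k \;=\; H_b - H_a,
\]
which follows since each of the $b-a$ terms $\frac1k$ in the sum is at least $\frac1b$. Applying this termwise with $a = a_{j-1}$ and $b = Q - a_{j-1} + a_j - a_j$... more precisely, I'll set it up so that the relevant denominator is $Q - a_{j-1}$: note $\frac{a_j - a_{j-1}}{Q - a_{j-1}}$ has numerator $= (Q - a_{j-1}) - (Q - a_j)$, so with $b_j := Q - a_{j-1}$ and $b_j' := Q - a_j$ we get $\frac{b_j - b_j'}{b_j} \le H_{b_j} - H_{b_j'}$ (valid since $b_j \ge b_j' \ge 0$ and $b_j \ge 1$). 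Summing over $j = h,\ldots,p$ telescopes to $H_{Q - a_{h-1}} - H_{Q - a_p} \le H_{Q - a_{h-1}} \le H_Q$. That gives $G(\Pi) \le H_Q \le 1 + \ln Q$, which is too weak — I need $1 + \ln(1/\delta)$, not $1 + \ln Q$.

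The fix, and the main subtlety, is that the quantity being reduced along the path during phase $i$ only needs to be tracked down to the stopping threshold: every denominator $Q - a_{j-1}$ stays $\ge \delta Q$, so I should telescope a sum of $\frac1k$ only over $k$ ranging in $[\delta Q,\ Q]$ rather than $[1, Q]$. Concretely, since $Q - a_{j-1} \ge \delta Q$ for every $j$ in the range (the value before the last item is at most $\tau$), I can replace the termwise bound by $\frac{b_j - b_j'}{b_j} \le \sum_{k = b_j' + 1}^{b_j} \frac1k$ where now all indices $k$ satisfy $k \ge \delta Q$ — except possibly for the very last term where $b_p'$ could drop below $\delta Q$, but $b_p = Q - a_{p-1} \ge \delta Q$ still holds, and I can crudely bound that last term by $\frac{b_p - b_p'}{b_p} \le 1$. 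For $j = h,\ldots,p-1$, summing the telescoping bound over $k$ in the window $[\delta Q, Q]$ gives at most $\sum_{k = \lceil \delta Q\rceil}^{Q} \frac1k \le \ln\!\big(Q/(\delta Q)\big) = \ln(1/\delta)$ (using $\sum_{k=A}^{B}\frac1k \le \ln(B/A) + $ a controlled correction, or more simply $\sum_{k=A+1}^B \frac1k \le \int_A^B \frac{dx}{x} = \ln(B/A)$). Adding the $\le 1$ from the final term yields $G(\Pi) \le 1 + \ln(1/\delta)$, as claimed. I expect the only real care needed is handling the boundary term $j = p$ (where the coverage may jump past $\tau$) separately from the interior terms, and being slightly careful whether one uses $\sum_{k=A+1}^B \frac1k \le \ln(B/A)$ directly or absorbs an off-by-one into the additive constant $1$.
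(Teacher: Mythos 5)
Your proposal is correct and follows essentially the same route as the paper: you split off the final term $j=p$ (bounding it by $1$ since $V_p\le Q$), and for $j\le p-1$ you expand each ratio into a run of harmonic terms $\frac1k$ that telescope over a disjoint union of intervals contained in $[\delta Q+1,\,Q]$, giving $\ln(1/\delta)$. The detour through $H_Q$ before tightening to the window $[\delta Q, Q]$ is extra exposition, but the key inequality ($\frac{b-b'}{b}\le H_b - H_{b'}$), the use of integrality, and the boundary handling are exactly what the paper does.
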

\begin{proof}
For each $h\le j\le p$, let $V_j:=f(\{X_1,\ldots,  X_j\})$; also let $V_0=0$.  For $j\le p-1$, as noted above, $V_j\le  \tau$; as $f$ is  integer-valued,    $V_j\in\{0,1,\cdots, \lfloor \tau\rfloor\}$. We have:
$$\sum_{j=h}^{p-1} \frac{V_j-V_{j-1}}{{Q}-V_{j-1}}\,\,\le\,\,\sum_{j=h}^{p-1}\,\,\, \sum_{y=0}^{V_j-V_{j-1}-1}\frac{1}{{Q}-V_{j-1} - y} \le \sum_{\ell=\delta Q+1}^{{Q}} \frac{1}{\ell} \le \ln\left(\frac{{Q}}{\delta {Q}}\right)= \ln (1/\delta).$$
The second inequality above uses $Q-V_{j-1}-y\ge Q-V_j+1\ge Q-\tau+1=\delta Q+1$. 
The right-hand-side of \eqref{eq:G-path-ub} is then:
$$\sum_{j=h}^{p-1} \frac{V_j-V_{j-1}}{{Q}-V_{j-1}}  \,\,+\,\,   \frac{V_p - V_{p-1}}{{Q} - V_{p-1}} \quad \le\quad 1+ \ln (1/\delta),$$
where we used $V_p\le {Q}$. This  completes the proof.
\end{proof}

Taking expectations over the various decision paths means
$G=\E_\Pi[G(\Pi)]$, so~\Cref{lem:per-path} gives
\begin{equation}
\label{eq:G-UB} G\,\, \le \,\,(1+\ln (1/\delta))\cdot \pr(\Pi \text{ doesn't terminate before phase }i) \,\,= \,\, \left(1+ \ln (1/\delta) \right)\cdot u_{i-1} .
\end{equation}

\subsection{Wrapping Up}

To complete the proof of \Cref{lem:key2}, we set
$\alpha := \frac{8\beta}{\delta}(1+\ln (1/\delta)) = O \big( \frac{\ln
  (1/\delta)}{\delta}\big)$, and combine \eqref{eq:G-LB} and
\eqref{eq:G-UB} to get
$$u_{i-1} \ge \frac{G}{1+\ln (1/\delta)} \ge \frac{\alpha \delta}{2 \beta (1+\ln (1/\delta))} \cdot \left(u_{i} - \frac65 u^*_i\right) = 4 \left(u_{i} - \frac65 u^*_i\right).$$
This completes the proof of \Cref{lem:key2} and hence Theorem~\ref{thm:main-ssc}. 
\fi


\newcommand{\scn}{\mathcal{\omega}}

\section{Scenario Submodular Cover}\label{sec:scn-submod-cover-r-adaptive}
In this section, we describe an $r$-round adaptive algorithm for the \emph{scenario submodular cover} problem. 
\ifICMLVersion
\else
As before, we have  a collection of $m$ stochastic items $\rv{X} = \{\rv{X}_1, ..., \rv{X}_m\}$ with costs $c_i$. 
\fi
In contrast to the independent case, the stochastic items here are correlated, and their joint distribution $\D$ is given as input. 
\ifICMLVersion
\else
The goal is to minimize the expected cost of a solution $\rv{S} \sse \rv{X}$ that realizes to a feasible set (i.e., $f(S)=Q$). 
\fi

The joint distribution $\D$ specifies the (joint) probability that $\rv{X}$ realizes to any outcome $X\in U^m$. We refer to the realizations $X\in U^m$ that have a non-zero probability of occurrence as \emph{scenarios}. Let $s = |\D|$ denote the number of scenarios in $\D$. 
The set of scenarios is denoted $M=\{1,\cdots, s\}$ and $p_\scn$ denotes the probability of each scenario $\scn\in M$. Note that $\sum_{\scn=1}^s p_\scn=1$. For each scenario $\scn\in M$ and item $\rv{X}_e$, we denote by $\rl{X_e}{\scn}\in U$ the realization of $\rv{X}_e$ in scenario $\scn$. The distribution $\D$ can be viewed as selecting a random {\em realized scenario} $\scn^*\in M$ according to the probabilities $\{p_\scn\}$, after which the item realizations are deterministically set to $\langle \rl{X_1}{\scn^*} , \cdots, \rl{X_m}{\scn^*}\rangle$.  However, an algorithm does not know the realized scenario $\scn^*$: it only knows the realizations of the probed items (using which it can infer a posterior distribution for $\scn^*$). 
As stated in \S\ref{sec:prelim}, our performance guarantee in this case depends on the support-size $s$. We will also show  that such a dependence is necessary (even when $Q$ is small). 

For any subset $\rv{S} \sse \rv{X}$ of items, we denote by $\rl{S}{\scn}$, the realizations for items in $\rv{S}$ under scenario $\scn$. We say that scenario $\scn$ is \emph{compatible} with a realization  of $\rv{S} \sse \rv{X}$ if and only if, $\rv{X}_e$ realizes to $\rl{{X}_e}{\scn}$ for all items $\rv{X}_e \in \rv{S}$.

\subsection{The Algorithm}\label{subsec:overview-scn-submod-cover}
\def\wOPT{\ensuremath{{\widehat{\OPT}}}\xspace}
Similar to the algorithm for the independent case, it is convenient to solve a \emph{partial cover} version of the
scenario submodular cover problem. However, the notion of partial progress is different: we will use the {\em number of compatible scenarios} instead of function value. Formally, in the partial version, we
are given a parameter $\delta \in [0,1]$ and the goal is to probe
some items $\rv{R}$ that realize to a set $R$ such that either (i) the number of compatible scenarios is less than $\delta s$ or (ii) the function $f$ is fully covered. 
 Clearly, if $\delta = 1/s$ then case (i) cannot happen (it corresponds to zero compatible scenarios), so the function $f$ must be fully covered. We will use this algorithm with different parameters $\delta$
to solve the $r$-round version of the problem. The main result of
this section is:

\begin{theorem}\label{thm:scn-partial-cover}
  There is a non-adaptive algorithm for the partial cover version of
scenario submodular cover with cost
  $O\left(\frac{1}{\delta} (\ln \frac1{\delta} + \log Q)\right)$ times the 
  optimal adaptive cost  for the (full) submodular cover.
\end{theorem}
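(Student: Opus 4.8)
The plan is to mirror the structure of the independent case (the analysis of \textsc{ParCA}) but with the ``potential'' being the number of compatible scenarios rather than the residual function value $Q-f(S)$, while also carrying along the function-value progress as a secondary term. Concretely, I would design a non-adaptive list $L$ by greedily picking, at each step, an item $\rv{X}_e$ maximizing a score of the form
\begin{equation*}
\mathrm{score}(\rv{X}_e) \;:=\; \frac{1}{c_e}\Bigl( \E\bigl[\text{fraction of currently-compatible scenarios eliminated by }\rv{X}_e\bigr] \;+\; \E\bigl[\text{relative $f$-gain of }\rv{X}_e\bigr]\Bigr),
\end{equation*}
where the expectations are over a random draw of the already-chosen prefix and of $\rv{X}_e$, restricted to prefixes whose realization has not yet triggered the stopping rule (i.e.\ still has at least $\delta s$ compatible scenarios and $f$ not yet covered). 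The algorithm then probes items in the order of $L$ and stops once the compatible-scenario count drops below $\delta s$ or $f$ is covered. Plugging $\delta = s^{-1/r}$ into this partial-cover routine and recursing on the residual instance (conditioned on the observed realizations, which both shrinks the scenario set and updates $f$) gives the $r$-round algorithm; the claimed $O(s^{1/r}(\log s + r\log Q))$ bound in Theorem~\ref{thm:scn-main} follows by the same inductive telescoping as in the proof of Theorem~\ref{thm:ssc-alg}, with the extra $r\log Q$ coming from the fact that each of the $r$ rounds contributes an additive $\log Q$ term (whereas the $\log s$ splits geometrically).

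For the core analysis of one \textsc{SParCA} call, I would set up the same phase machinery: let $u_i$ be the probability the algorithm runs past cost $\alpha 2^i$ (for $\alpha = O(\frac1\delta(\log\frac1\delta + \log Q))$), let $u_i^*$ be the probability $\OPT$ runs past cost $2^i$, and prove the recursion $u_i \le \tfrac14 u_{i-1} + O(1)\cdot u_i^*$, from which the theorem follows by exactly the summation in the proof of Theorem~\ref{thm:partial-cover}. As before this reduces to lower- and upper-bounding the ``total score'' $G$ accumulated during phase $i$. The upper bound $G \le O(\log\frac1\delta + \log Q)\cdot u_{i-1}$ is the easy direction: summing the eliminated-scenario fractions along any decision path telescopes to a harmonic sum bounded by $\ln(1/\delta)$ (the count never drops below $\delta s$ before termination), and summing the relative $f$-gains along the path telescopes to $\ln Q$ just as in Lemma~\ref{lem:per-path}.

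The hard direction — and I expect this to be the main obstacle — is the lower bound on $G$, i.e.\ showing that at every step of phase $i$ some available item has score at least roughly $\frac{1}{\alpha 2^i}(u_i - O(u_i^*))$. The difficulty is that $\OPT$, conditioned on a prefix realization $S$, may be making progress in two genuinely different ways: it may be driving $f$ toward $Q$, or it may be ``learning'' which scenario is realized so that afterward covering $f$ is free. Following the technique outline, I would partition the conditioning outcomes (those with still-many compatible scenarios) into \emph{good} outcomes where $\OPT$ has constant probability of covering $f$ by cost $2^i$, \emph{okay} outcomes where $\OPT$ has constant probability of eliminating a constant fraction of the remaining scenarios by cost $2^i$, and \emph{bad} outcomes (total probability $O(u_i^*)$, hence negligible). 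On good outcomes I would invoke the $2$-adaptivity gap for stochastic submodular \emph{maximization} (\cite{BSZ19}), exactly as in Lemma~\ref{lem:key-key}, to extract a cheap non-adaptive set with large expected $f$-gain. On okay outcomes I need a parallel argument for the ``elimination'' objective: here the key leverage is that, by restricting to prefixes with at least $\delta s = s^{1-1/r}$ compatible scenarios, the number of distinct relevant conditioning outcomes at this step is at most $s^{1/r}$, so there exists a single item whose per-outcome elimination contribution, averaged appropriately, is $\Omega(1/s^{1/r})$ of the total — this is where the $s^{1/r}$ factor and the separate handling of the two score terms both enter. Combining the contributions over good and okay outcomes and averaging over the $O(1/\delta)$ scales gives the needed per-step score bound, and then integrating over phase $i$ yields the lower bound on $G$; equating with the upper bound closes the recursion.
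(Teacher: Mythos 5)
Your proposal matches the high-level skeleton of the paper's proof of \Cref{thm:scn-partial-cover} — the phase machinery, the target recursion $u_i\le \tfrac14 u_{i-1}+O(1)\cdot u_i^*$, a good/okay/bad trichotomy, the harmonic telescoping for the upper bound on $G$, and the $1/\delta = s^{1/r}$ counting of relevant parts — but there is a genuine gap in your lower bound on $G$. You propose to handle the ``good'' case by invoking the $2$-adaptivity gap for stochastic submodular maximization, ``exactly as in \Cref{lem:key-key}.'' That result (\cite{BSZ19}) is proved for \emph{independent} stochastic items, and the proof of \Cref{lem:key-key} exploits independence in an essential way (the iterated resampling of $S\sim\calE_\theta$ and the coupling argument on $f(T^{(k)})$). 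In the scenario setting the items are deliberately correlated through $\D$, and conditioning on a part $Z\in\mathcal Z$ or on a prefix realization does not restore independence across the remaining items, so this route is closed.

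The paper's actual lower-bound argument sidesteps adaptivity-gap machinery entirely. For each large part $Z\in\mathcal Z$ it restricts $\OPT$ to time $2^i$ and to the scenarios in $Z$ (call this $\OPT_Z$), and then defines the \emph{stem} $\Stem_Z$: the unique root path in $\OPT_Z$ that at every node follows the branch compatible with the majority of the remaining scenarios. The available items $\rv T_Z=\Stem_Z\setminus\rv S$ then have cost at most $2^i$ deterministically, no sampling needed. The good/okay/bad classification is made at the level of \emph{scenarios within $Z$} according to what happens on $\Stem_Z$ (covered and still compatible, uncovered and still compatible, or diverged), and a part is good/okay/bad by majority mass. \Cref{lem:scn-good} and \Cref{lem:scn-okay} then give, purely from monotonicity, submodularity, and the definition of $\Stem_Z$, that good parts yield large relative $f$-gain on $\rv T_Z$ and okay parts yield large information gain on $\rv T_Z$. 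The final averaging is over the single multiset $\rv T=\bigsqcup_{Z\in\mathcal Z}\rv T_Z$ of total cost at most $|\mathcal Z|\cdot 2^i\le \tfrac1\delta\cdot 2^i$. Without this stem construction, you have no deterministic cost-$2^i$ candidate set per part, and your appeal to \cite{BSZ19} is not available in the correlated model — so your lower-bound step does not go through as written.
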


The algorithm first creates an ordering/list $L$ of the
items non-adaptively; that is, without knowing the realizations of the
items. To do so, at each step we pick a new item that maximizes a
carefully-defined score function (Equation~\eqref{eq:scn-greedy-choice}).
The score of an item depends
on an estimate of progress towards (i) eliminating scenarios and (ii) covering function ${f}$. Before we state this score formally, we need some definitions. 

\def\H{{\cal H}}
\begin{definition} \label{defn:partn-scn}
For any $\rv{S}\sse \rv{X}$ let $\H(\rv{S})$ denote the partition $\{ Y_1, \cdots, Y_{\ell}\}$ of the scenarios $M$ where
all scenarios in a part have the same realization for items in $\rv{S}$. Let $\mathcal{Z} := \{ Y \in \H(\rv{S}) : |Y| \geq \delta s\}$ 
be the set of ``large'' parts having size at least $\delta s$.  
\end{definition}
In other words,  scenarios $\omega$ and $\sigma$ lie in the same part of $\H(\rv{S})$ if and only if $S(\omega)=S(\sigma)$. 
Note that partition $\H(\rv{S})$ does not depend on the realization of $\rv{S}$.  Moreover, after probing and realizing  items $\rv{S}$,  the set of compatible scenarios  must be one of the parts in $\H(\rv{S})$.  
Also, the number of ``large'' parts $|{\cal Z}|\le  \frac{s}{\delta s}= \frac{1}{\delta }$ as the number of scenarios $|M|=s$. See Figure~\ref{fig:parts} for an example.

\ifICMLVersion
\begin{figure}
     \centering
     \begin{subfigure}[b]{0.225\textwidth}
         \centering
         \includegraphics[width=\textwidth]{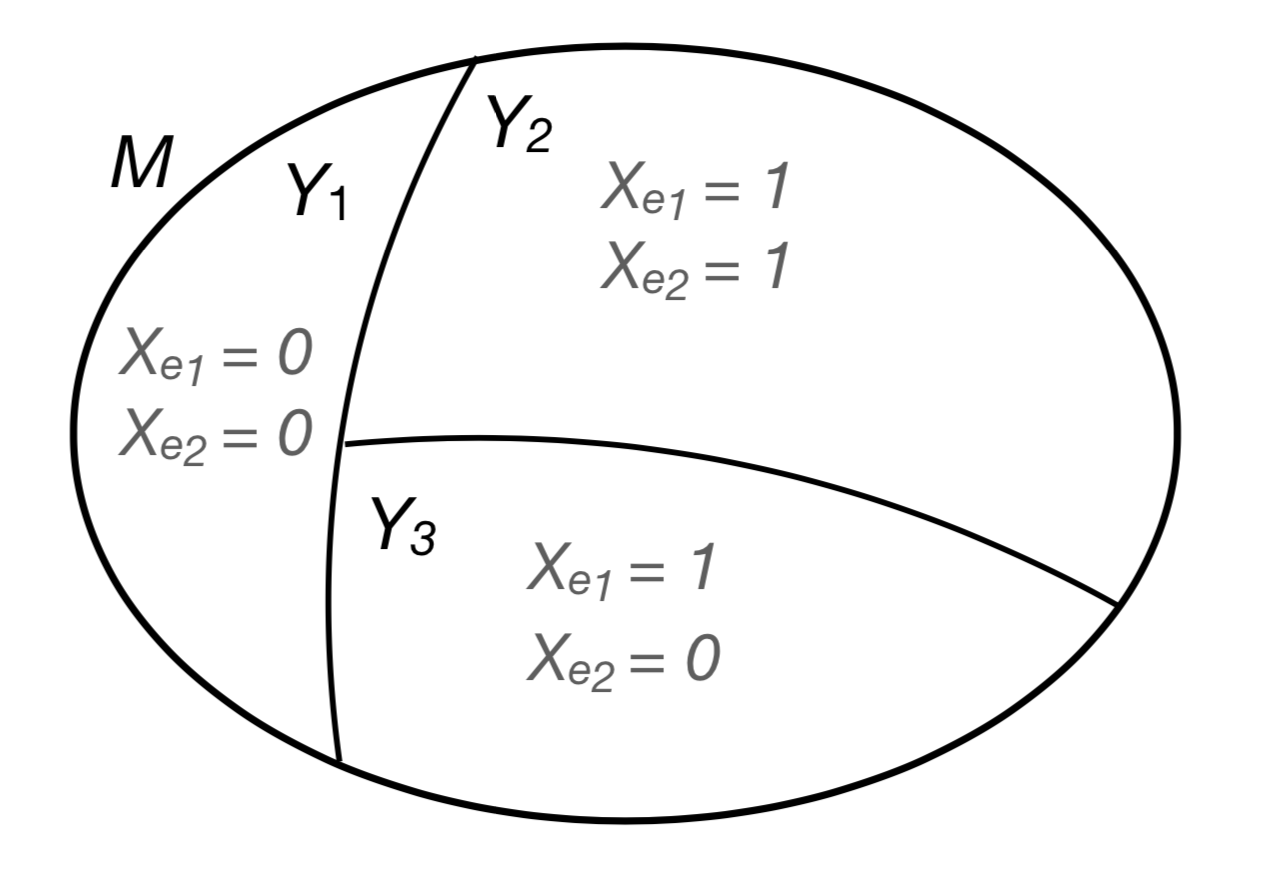}
         \caption{We have $\rv{S} = \{\rv{X}_{e_1}, \rv{X}_{e_2}\}$, and we partition the set of scenarios $M$ based on outcomes of $\rv{S}$ to get $\mathcal{H}(S) = \{Y_1, Y_2, Y_3\}$. }
         \label{fig:parts}
     \end{subfigure}
     \quad
     \begin{subfigure}[b]{0.225\textwidth}
         \centering
         \includegraphics[width=\textwidth]{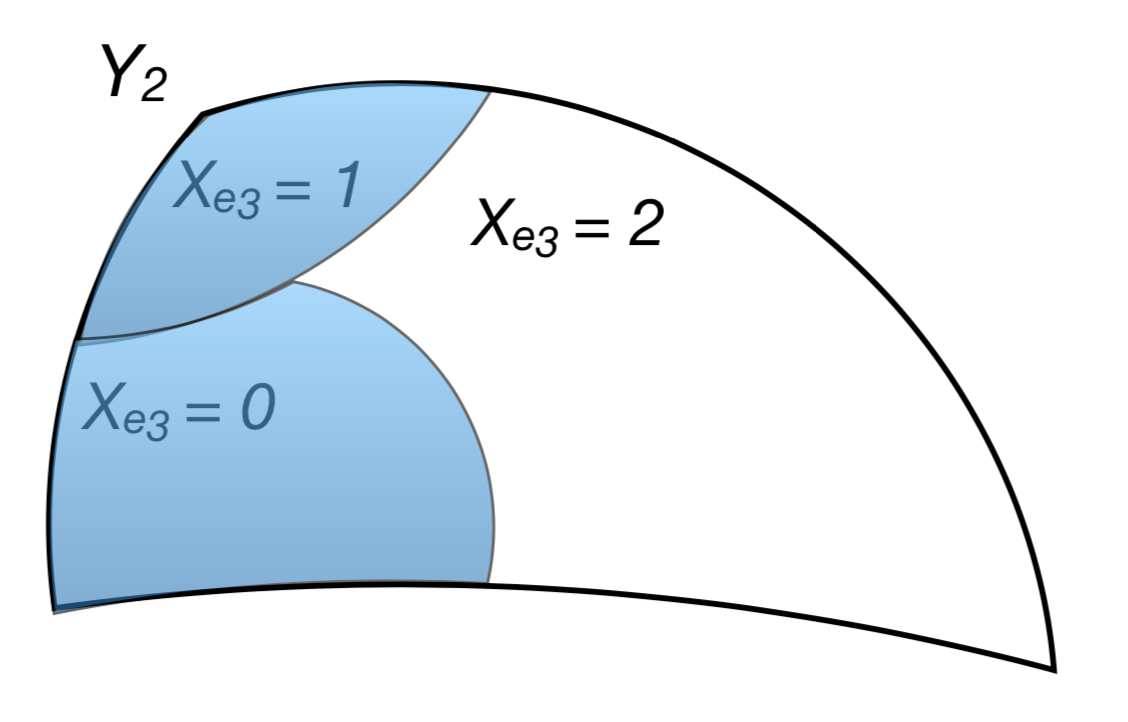}
         \caption{We further partition scenarios $Y_2$ based on realizations of $\rv{X}_{e_3}$.The part of $Y_2$ compatible with outcome $X_{e_3} = 2$ is the largest cardinality part, that is, $B_{e_3}(Y_2)$. The shaded region represents $L_{e_3}(Y_2)$.}
         \label{fig:L_e}
     \end{subfigure}
     \caption{Illustrations of Key Definitions}
        \label{fig:partition}
\end{figure}

\else
\begin{figure}
     \centering
     \begin{subfigure}[b]{0.45\textwidth}
         \centering
         \includegraphics[width=\textwidth]{parts.png}
         \caption{We have $\rv{S} = \{\rv{X}_{e_1}, \rv{X}_{e_2}\}$, and we partition the set of scenarios $M$ based on outcomes of $\rv{S}$ to get $\mathcal{H}(S) = \{Y_1, Y_2, Y_3\}$. }
         \label{fig:parts}
     \end{subfigure}
     \quad
     \begin{subfigure}[b]{0.45\textwidth}
         \centering
         \includegraphics[width=\textwidth]{L_e.png}
         \caption{We further partition scenarios $Y_2$ based on realizations of $\rv{X}_{e_3}$.The part of $Y_2$ compatible with outcome $X_{e_3} = 2$ is the largest cardinality part, that is, $B_{e_3}(Y_2)$. The shaded region represents $L_{e_3}(Y_2)$.}
         \label{fig:L_e}
     \end{subfigure}
     \caption{Illustrations of Key Definitions}
        \label{fig:partition}
\end{figure}
\fi

\begin{definition}\label{defn:Le-scn}
For any $\rv{X}_e\in {\rv{X}}$ and subset $Z\sse M$ of scenarios, consider the partition of $Z$ based on the realization of $\rv{X}_e$. Let $B_e(Z)\sse Z$ be the largest cardinality part, and define $L_e(Z) := Z \setminus B_e(Z)$. 
\end{definition}
Note that $L_e(Z)$ is comprised of several parts of the above partition of $Z$. If the realized scenario  $\scn^*\in L_e(Z)$ and $\rv{X}_e$ is selected then, at least half the scenarios in $Z$ will be eliminated (as being incompatible with $X_e$).
Figure~\ref{fig:L_e} illustrates these definitions. 

For any $Z\in \H(\rv{S})$, note that the realizations $S(\scn)$ are identical for all $\scn\in Z$: we use $\rl{S}{Z}\sse U$ to denote the realization of $\rv{S}$ under each scenario in $Z$. 

If $\rv{S}$ denotes the previously added  items in list $L$, the score \eqref{eq:scn-greedy-choice} involves a term for each part $Z \in \mathcal{Z}$, which itself comes from two sources:
\ifICMLVersion
\vspace{-0.1in}
\else\fi
\begin{itemize}
    \item \emph{Information gain} $\sum_{\scn\in L_e(Z)} p_\scn$, the total probability of scenarios in $L_e(Z)$.
    
\ifICMLVersion
\item \emph{Relative function gain} 
$$\sum_{\scn \in Z} p_\scn \cdot \frac{{f}(\rl{S}{Z} \cup \rl{X_e}{\scn}) -  {f}(\rl{S}{Z})}{{Q} - {f}(\rl{S}{Z})},$$ the expected relative gain obtained by including element $X_e$, where the expectation is over   scenarios   $Z$.
\else
\item \emph{Relative function gain} $\sum_{\scn \in Z} p_\scn \cdot \frac{{f}(\rl{S}{Z} \cup \rl{X_e}{\scn}) -  {f}(\rl{S}{Z})}{{Q} - {f}(\rl{S}{Z})} $, the expected relative gain obtained by including element $X_e$, where the expectation is over the scenarios in part $Z$.
\fi
\end{itemize}
The overall score of item $\rv{X}_e$ is the sum of these terms (over all parts in $\mathcal{Z}$) normalized by the cost $c_e$ of item $\rv{X}_e$. In defining the score, we only focus on the ``large'' parts ${\cal Z}$. If the realization of $\rv{S}$ corresponds to any other part then the number of compatible scenarios would be less than $\delta s$ (and the partial-cover algorithm would have  terminated).

Once the list $L$ is
specified, the algorithm starts probing and realizing the items in this order, and
does so until either (i) the number of compatible scenarios drops below $\delta s$, or (ii) the realized function value equals $Q$. Note that in case (ii), the function is fully covered. See Algorithm~\ref{alg:scn-parca} for a formal description of the non-adaptive partial-cover algorithm.

\begin{algorithm}[h]
\caption{Scenario PARtial Covering Algorithm \textsc{SParCA}$(\rv{X}, M, f, Q, \delta)$} \label{alg:scn-parca}
\begin{algorithmic}[1]
  \State $\rv{S} \leftarrow \emptyset$ and list $L \leftarrow \langle
  \rangle$.
\While{$\rv{S}\ne \rv{X}$} \Comment{Building the list non-adaptively}
\State define ${\cal Z}$ and $L_e(Z)$ as in Definitions~\ref{defn:partn-scn} and \ref{defn:Le-scn}.
\State select an item $\rv{X}_e \in \rv{X} \setminus \rv{S}$ that maximizes:

\ifICMLVersion
\begin{align}
\text{score}&(\rv{X}_e) =  \frac1{c_e} \cdot \sum_{Z \in \mathcal{Z}} \Bigg( \sum_{\scn \in L_e(Z)} {p}_{\scn} + \notag \\ 
        &\sum_{\scn \in Z} {p}_\scn \cdot \frac{{f}(\rl{S}{Z} \cup \rl{X_e}{\scn}) -  {f}(\rl{S}{Z})}{{Q} - {f}(\rl{S}{Z})}\Bigg) \label{eq:scn-greedy-choice}
\end{align}

\else
\begin{equation}\label{eq:scn-greedy-choice}
\text{score}(\rv{X}_e) =  \frac1{c_e} \cdot \sum_{Z \in \mathcal{Z}} \Bigg( \sum_{\scn \in L_e(Z)} {p}_{\scn} + \sum_{\scn \in Z} {p}_\scn \cdot \frac{{f}(\rl{S}{Z} \cup \rl{X_e}{\scn}) -  {f}(\rl{S}{Z})}{{Q} - {f}(\rl{S}{Z})}\Bigg)  
\end{equation}
\fi

\State $\rv{S} \leftarrow \rv{S} \cup \{\rv{X}_e\}$ and list $L\gets L \circ \rv{X}_e$
\EndWhile
\State  $\rv{R} \gets \emptyset$, $R \gets \emptyset$, $H\gets M$.
\While{$|H|\ge \delta |M|$ and $f(R)<Q$} \label{step:sparca-probe} \ifICMLVersion \else\Comment{Probing items on the list}\fi
\State $\rv{X}_e \gets$ first r.v.\ in list $L$ not in $\rv{R}$
\State $X_e = v\in U$ be the realization of $\rv{X}_e$.
\State $R \gets R \cup \{ v\}, \rv{R} \gets \rv{R} \cup \{ \rv{X}_e
\}$
\State $H\gets \{\scn\in H : X_e(\scn) = v\}$
\EndWhile
\State return probed items $\rv{R}$, realizations $R$ and compatible scenarios $H$.
\end{algorithmic}
\end{algorithm}

Given this partial covering algorithm we immediately get an algorithm
for the $r$-round version of the problem, where we are allowed to make
$r$ rounds of adaptive decisions. Indeed, we can first set
$\delta = s^{-1/r}$ and solve the partial covering problem. Suppose we probe the items $\rv{R}$ (with 
realizations $R \sse U$) and are left with compatible scenarios $H\sse M$. Then we can {\em condition} on
scenarios $H$ and the marginal value function $f_R$ (which is submodular), and
inductively get an $r-1$-round solution for this problem. The following algorithm and result formalizes this.
\begin{algorithm}[h]
\caption{$r$-round adaptive algorithm for scenario submodular cover \textsc{NSC}$(r, \rv{X}, M, f)$} \label{alg:r-round-scn}
\begin{algorithmic}[1]
  \State\label{step:scn-round-1} run  \textsc{SParCA}$(\rv{X}, M, f, Q, |M|^{-1/r})$ for round one.
  Let $\rv{R}$ denote the probed items, $R$ their realizations, and $H\sse M$ the compatible scenarios returned by \textsc{SParCA}.  
  \State define residual submodular function $\fh := f_R$.
  \State \label{step:scn-round-rec}recursively solve \textsc{NSC}$(r-1, \rv{X}\setminus \rv{R}, H, \fh)$.
\end{algorithmic}
\end{algorithm}

\begin{theorem}\label{thm:r-round-scn}
Algorithm~\ref{alg:r-round-scn} is an $r$-round adaptive algorithm for scenario submodular
  cover with cost $O\left(s^{1/r} (\log s + r\log Q)\right)$ times the optimal
  adaptive cost, where $s$ is the number of scenarios.
\end{theorem}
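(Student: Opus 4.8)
The plan is to prove Theorem~\ref{thm:r-round-scn} by induction on the number of rounds $r$, exactly mirroring the argument already used for Theorem~\ref{thm:ssc-alg}, but tracking the number of \emph{scenarios} as the shrinking parameter instead of the function value $Q$. Let $\OPT$ denote the optimal adaptive cost for covering $f$ on the given scenario distribution. The key quantitative input is Theorem~\ref{thm:scn-partial-cover}: a single call to \textsc{SParCA}$(\rv{X}, M, f, Q, \delta)$ is a non-adaptive algorithm whose expected cost is $O\big(\frac1\delta(\ln\frac1\delta + \log Q)\big)\cdot\OPT$, and which either fully covers $f$ or leaves a set $H\subseteq M$ of compatible scenarios with $|H| < \delta|M|$.

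\textbf{Base case ($r=1$).} Here $\delta = s^{-1/r} = 1/s$, so the only way \textsc{SParCA} can terminate with $|H| < \delta|M| = 1$ is $|H| = 0$, which is impossible since the realized scenario $\scn^*$ is always compatible with the probed realizations. Hence \textsc{SParCA} must terminate with $f(R) = Q$, i.e.\ it fully covers $f$. By Theorem~\ref{thm:scn-partial-cover} with $\delta = 1/s$ and $Q$ fixed, its expected cost is $O(s(\log s + \log Q)) = O(s^{1/1}(\log s + 1\cdot\log Q))\cdot\OPT$, matching the claimed bound.

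\textbf{Inductive step.} Assume Algorithm~\ref{alg:r-round-scn} achieves cost $O\big((s')^{1/(r-1)}(\log s' + (r-1)\log Q)\big)$ times optimal on any scenario submodular cover instance with $s'$ scenarios. Run \textsc{SParCA}$(\rv{X}, M, f, Q, \delta)$ with $\delta = s^{-1/r}$ for round one. By Theorem~\ref{thm:scn-partial-cover}, the expected cost of round one is
\[
O\Big(\tfrac1\delta\big(\ln\tfrac1\delta + \log Q\big)\Big)\cdot\OPT
= O\Big(s^{1/r}\big(\tfrac1r\log s + \log Q\big)\Big)\cdot\OPT .
\]
If \textsc{SParCA} fully covers $f$ we are already done; otherwise we are left with compatible scenarios $H$ with $\widehat{s} := |H| < \delta s = s^{(r-1)/r}$, residual function $\fh = f_R$ which is monotone submodular, and the conditional optimal adaptive solution $\wOPT$ (the restriction of $\OPT$ to the event $\{\scn^* \in H\}$ and the realizations $R$) is a feasible adaptive solution for the residual instance. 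The inductive hypothesis applied to the residual instance (with $\widehat s$ scenarios and function value at most $Q$) gives that the cost of rounds $2,\dots,r$, conditioned on the outcome of round one, is at most
\[
O\Big(\widehat{s}^{\,1/(r-1)}\big(\log\widehat{s} + (r-1)\log Q\big)\Big)\cdot\wOPT
\le O\Big(s^{1/r}\big(\tfrac{r-1}{r}\log s + (r-1)\log Q\big)\Big)\cdot\wOPT,
\]
where we used $\widehat s < s^{(r-1)/r}$ so that $\widehat s^{1/(r-1)} < s^{1/r}$ and $\log\widehat s < \tfrac{r-1}{r}\log s$. Taking expectations over the round-one outcome (the realizations $R$ and the compatible set $H$), and using that conditional expectations of the various $\wOPT$ terms aggregate to at most $\OPT$, the expected cost of rounds $2,\dots,r$ is $O\big(s^{1/r}(\log s + r\log Q)\big)\cdot\OPT$. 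Adding the round-one cost gives the claimed bound, since $s^{1/r}\big(\tfrac1r\log s + \log Q\big) + s^{1/r}\big(\tfrac{r-1}{r}\log s + (r-1)\log Q\big) = O\big(s^{1/r}(\log s + r\log Q)\big)$.

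\textbf{Main obstacle.} The proof itself is a routine induction; the real content lives entirely in Theorem~\ref{thm:scn-partial-cover}, whose proof (analogous to that of \Cref{lem:key2} but for the correlated setting) is where the good/okay/bad outcome partition, the $s^{1/r}$ bound on the number of relevant large parts, and the balancing of information gain versus relative function gain all come into play. One subtlety to be careful about here is making sure the number of rounds is exactly $r$: each recursive call strictly decreases $r$ and consumes exactly one round via one \textsc{SParCA} call, and \textsc{SParCA} is non-adaptive, so the total is $r$ rounds of adaptivity as required by Definition~\ref{def:r-round}. Another point worth a sentence is that the $\log Q$ (not $\log\widehat Q$) bound is used uniformly at every level of the recursion, since the residual function value can only decrease, so no refinement in the $Q$ parameter is needed — only the scenario count shrinks geometrically.
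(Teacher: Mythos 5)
Your proposal is correct and takes essentially the same approach as the paper's proof: induction on $r$, using Theorem~\ref{thm:scn-partial-cover} for the single-round cost, conditioning $\OPT$ on the round-one outcome to get $\wOPT$, and bounding $\widehat s^{1/(r-1)} < s^{1/r}$ via $\widehat s < s^{(r-1)/r}$ before aggregating by the law of total expectation. Your remark that $\log Q$ can be used uniformly rather than $\log\widehat Q$ is exactly the paper's use of $\widehat Q \le Q$, just phrased as an observation about the induction hypothesis rather than a bound inside the calculation.
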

\ifICMLVersion
\else
\begin{proof} We proceed by induction on the number of rounds $r$. Let $\OPT$ denote the cost of an optimal fully adaptive solution. 
The base case is $r=1$, in which case $\delta=s^{-1/r} = \frac1s$. By Theorem~\ref{thm:scn-partial-cover}, the partial cover algorithm \textsc{SParCA}$(\rv{X}, M, f, Q, s^{-1/r})$ obtains a realization $R$ and compatible scenarios $H\sse M$ where either (i) $|H|<\delta s = 1$  or (ii) $f(R)=Q$. Clearly, we cannot have case (i) as there is always at least one compatible scenario. So $f$ is fully covered.
Moreover, the algorithm's expected cost is $O(s(\log s + \log Q))\cdot \OPT$, as claimed.

\def\sh{\widehat{s}}
We now consider $r>1$ and assume (inductively) that Algorithm~\ref{alg:r-round-scn} finds an $r-1$-round $O\left(s^{1/r} (\log s + r\log Q)\right)$-approximation algorithm for any instance of scenario submodular cover. Let $\delta = s^{-1/r}$. By Theorem~\ref{thm:scn-partial-cover}, the expected cost in round 1 (step~\ref{step:scn-round-1} in Algorithm~\ref{alg:r-round-scn}) is $O(s^{1/r}(\frac1r \log s + \log Q))\cdot \OPT$.  Let $\sh=|H|$ denote the number of scenarios in the residual instance. Let
 $\Qh:= Q-f(R) = \fh(U)$ denote the maximal value of the residual submodular function $\fh=f_R$. By definition of the partial covering problem,  we have either $\sh <\delta s$ or $\Qh= 0$. If $\Qh=0$ then our algorithm incurs no further cost and the inductive statement follows. We now assume $\sh<\delta s = s^{\frac{r-1}{r}}$.  The optimal
  solution $\OPT$ conditioned on the scenarios $H$ gives a feasible adaptive solution to the residual instance  of covering $\fh$; we denote this conditional solution by
  $\wOPT$. We inductively get that the cost of our
  $r-1$-round solution on $\fh$ is at most 
  \ifICMLVersion
  \begin{align*}
      O&(\sh^{1/(r-1)}  (\log \sh + (r-1)\log \Qh))\cdot  \widehat{\OPT} \\ &\le O\left(s^{1/r} \left(\frac{r-1}r \log s + (r-1) \log Q\right)\right)\cdot \widehat{\OPT},
  \end{align*}
  \else
  $$O(\sh^{\frac1{r-1}} (\log \sh + (r-1)\log \Qh))\cdot  \widehat{\OPT} \le O\left(s^{1/r} \left(\frac{r-1}r \log s + (r-1) \log Q\right)\right)\cdot \widehat{\OPT},$$
  \fi
  where we used  $\sh < s^{(r-1)/r}$ and $\Qh \le Q$.  As this holds for every subset of scenarios $H$,  we
  can take expectations over $H$ to get that the (unconditional)
  expected cost of the last
  $r-1$ rounds is  
$O\left(s^{1/r} \left(\frac{r-1}r \log s + (r-1) \log Q\right)\right)\cdot \OPT$. 
  Adding to this the cost of the first round, which is $O(s^{1/r}(\frac1r \log s + \log Q))\cdot \OPT$, completes the proof.
\end{proof}
\fi

\ifICMLVersion
We defer the proofs of Theorems~\ref{thm:scn-partial-cover} and \ref{thm:r-round-scn} to the full version of the paper.
\else
\subsection{Analysis for the partial covering algorithm}
We now prove Theorem~\ref{thm:scn-partial-cover}. 
Consider any call to \textsc{SParCA}. Let $s=|M|$ denote the number of scenarios in the instance. Recall that the goal is to probe items $\rv{R} \sse \rv{X}$ with some realization $R$ and compatible scenarios $H\sse M$ such that (i) $|H|<\delta s$ or (ii) $f(R) = Q$. We denote by $\OPT$ an optimal fully adaptive solution for the covering problem on $f$.  Now we analyze the cost incurred by our algorithm's
non-adaptive strategy (which we call $\NA$). Note that $\NA$ probes items in  the order given by the list $L$ (generated by
\textsc{SParCA}) and stops when either condition (i) or (ii) above occurs. 
We consider the expected cost of this strategy, and relate it to the cost
of ${\OPT}$.
  The high-level approach is similar to that for the independent case: but the details are quite different.

We refer to the cumulative cost incurred until any point in a solution as \emph{time}. We say that $\OPT$ is in phase $i$ in the time interval $[2^i, 2^{i+1})$ for $i \geq 0$. We say that $\NA $ is in phase $i$ in the time interval $[\beta\cdot 2^{i-1}, \beta\cdot 2^{i})$ for $i \geq 1$. We use phase $0$ to refer to the interval $[1, \beta)$. We set $\beta := \frac{16}{\delta}\log(Q/\delta)$; this choice will become clear in the proof of Lemma~\ref{lem:scn-submod-key}. The following notation is associated with any phase $i \geq 0$:
\begin{itemize}
    \item $u_{i}$: probability that $\NA $ goes beyond phase $i$, i.e., costs at least $\beta\cdot 2^i$.
    \item $u_{i}^*$: probability that $\OPT$ goes beyond phase $i-1$, i.e., costs at least $2^{i}$.
\end{itemize}
Since all costs are integers, $u_0^* = 1$. For ease of notation, we also use $\OPT$ and $\NA$ to denote the \emph{random} cost incurred by $\OPT$ and $\NA$ respectively. 
The main result here is that $\E[\NA ]\le  O(\beta)\cdot \E[\OPT]$. As in the independent case, it suffices to show:
\begin{lemma}\label{lem:scn-submod-key}
For any phase $i \geq 1$, we have $ u_{i} \leq \frac{u_{i-1}}{4} + 2u_{i}^*. $
\end{lemma}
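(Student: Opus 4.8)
The plan is to mirror the structure of the proof of Lemma~\ref{lem:key2} from the independent case: fix a phase $i\ge 1$, define a \emph{total score} $G$ accumulated by \textsc{SParCA} during phase $i$, i.e., $G:=\sum_{t=\beta 2^{i-1}}^{\beta 2^i}\text{score}(\rv{X}_{e(t)})$ where $\rv{X}_{e(t)}$ is the item straddling time $t$ on the list $L$, and then squeeze $u_i$ between a lower bound and an upper bound on $G$. The upper bound should be the easier direction: along any decision path $\Pi$, the information-gain terms $\sum_{\scn\in L_e(Z)}p_\scn$ telescope (each scenario can be ``halved out'' of its current large part at most $\log s$ times before the part shrinks below $\delta s$, contributing a geometric sum bounded by $O(\log\frac1\delta)$ in probability mass), while the relative-function-gain terms telescope into a harmonic series bounded by $O(\log Q)$ exactly as in Lemma~\ref{lem:per-path}, using that the denominator $Q-f(\rl{S}{Z})$ stays positive until $f$ is covered. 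Hence $G(\Pi)=O(\log(Q/\delta))$ for every path that survives into phase $i$, giving $G\le O(\log(Q/\delta))\cdot u_{i-1}$; the constant here is what forces the choice $\beta=\frac{16}{\delta}\log(Q/\delta)$.

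The lower bound is where the real work lies, and it is where the correlated case genuinely diverges from the independent one. Fix a time $t$ in phase $i$ and let $\rv{S}$ be the (deterministic, non-adaptive) set of items added so far, with $\H(\rv{S})=\{Y_1,\dots,Y_\ell\}$ the scenario partition and $\mathcal{Z}$ the large parts. I would classify each large part $Z\in\mathcal{Z}$ as \emph{good}, \emph{okay}, or \emph{bad} according to the conditional behaviour of $\OPT$ given that the realized scenario $\scn^*\in Z$: \emph{good} if $\OPT$ covers $f$ with cost $\le 2^i$ with conditional probability $\ge\frac16$; \emph{okay} if instead $\OPT$ (within cost $2^i$) eliminates at least half the scenarios of $Z$ with conditional probability $\ge\frac16$; \emph{bad} otherwise. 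For good parts, I would run essentially the argument of Lemma~\ref{lem:key-key}/Lemma~\ref{lem:key-LB}: sample $O(1/\text{(relevant prob)})$ i.i.d.\ copies of $\OPT|_{\scn^*\in Z}$ restricted to $\rv{X}\setminus\rv{S}$, use the factor-$2$ adaptivity gap for stochastic submodular maximization (Theorem~1 of~\cite{BSZ19}) to replace each by a non-adaptive solution, and conclude that some item of $\rv{X}\setminus\rv{S}$ has large \emph{relative function gain} aggregated over $Z$. For okay parts the analogous statement is about \emph{information gain}: if $\OPT$ within cost $2^i$ eliminates half of $Z$ on a constant fraction of scenarios in $Z$, then some single item $\rv{X}_e$ on the optimal tree must put a constant fraction of $Z$'s probability mass into $L_e(Z)$ — this is a counting/averaging argument over the at-most-$2^i$-cost decision tree, and because $|\mathcal{Z}|\le\frac1\delta$ and (after restricting to surviving outcomes) the number of ``relevant'' parts is small, we can charge it to the score of that item. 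Summing the good and okay contributions over all relevant large parts and dividing by the total cost budget $O(\frac1\delta\cdot 2^i)$ of the union of these witness sets gives $\max_e\text{score}(\rv{X}_e)\ge\frac{\delta}{O(1)\cdot 2^i}\cdot\big(\sum_{\text{good/okay }Z}(\text{contribution})\big)$.

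Next I would run the analogue of Lemma~\ref{lem:score-per-step}: bound the total probability mass of \emph{bad} large parts, and of \emph{small} parts (realizations with fewer than $\delta s$ compatible scenarios, for which \textsc{SParCA} has already stopped by time $t\le\beta 2^i$, exactly as in Lemma~\ref{lem:stop-scale}). A part $Z$ is bad only if, conditioned on $\scn^*\in Z$, $\OPT$ both fails to cover $f$ by cost $2^i$ and fails to eliminate half of $Z$ by cost $2^i$, each with probability $>\frac56$; so with conditional probability $>\frac23$ (union bound) $\OPT$ costs at least $2^i$, letting me charge bad parts to $u_i^*$ up to a small constant. Combining: $\sum_{\text{good/okay}}(\text{contribution})\ge u_i - O(1)\,u_i^*$, and feeding this into the per-step lower bound, summing over the $\Theta(\beta 2^{i-1})$ time steps of phase $i$ gives $G\ge\frac{\beta\delta}{O(1)}\big(u_i-O(1)u_i^*\big)$. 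Setting $\beta=\frac{16}{\delta}\log(Q/\delta)$ and comparing with the upper bound $G\le O(\log(Q/\delta))u_{i-1}$ yields $u_{i-1}\ge 4(u_i-2u_i^*)$, which is the claim. The main obstacle I anticipate is making the ``okay'' case airtight: unlike the function-coverage case, I cannot invoke a clean submodular-maximization adaptivity gap, so I must argue directly that a cost-$2^i$ adaptive decision tree which eliminates half of $Z$ on a constant fraction of $Z$'s mass contains a single probe whose $L_e(Z)$ captures a constant fraction of $Z$'s mass — handling the interaction between different parts $Z\in\mathcal{Z}$ sharing the same optimal tree, and ensuring the witness item sets across parts have total cost only $O(\frac1\delta 2^i)$, is the delicate part.
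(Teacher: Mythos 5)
Your upper bound (telescoping the information-gain terms to $O(\log\frac1\delta)$ and the function-gain terms to $O(\log Q)$ along any decision path, giving $G\le O(\log(Q/\delta))u_{i-1}$) is correct and matches the paper. The lower bound, however, has a genuine gap in the good-part case. You propose to sample i.i.d.\ copies of $\OPT$ conditioned on $\scn^*\in Z$ and replace each by a non-adaptive set using the factor-$2$ adaptivity gap of~\cite{BSZ19}. But that theorem is stated and proved for \emph{independent} stochastic items, and in the scenario model the items $\rv{X}_e$ remain jointly determined by a single latent scenario even after conditioning on $\scn^*\in Z$ — they are highly correlated, so the adaptivity gap for stochastic submodular maximization simply does not apply. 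The paper sidesteps this entirely by building, for each large part $Z$, a single deterministic path $\Stem_Z$ inside $\OPT_Z$ (always follow the branch compatible with the majority subset $B_e(Z)$). The resulting witness set $T_Z=\Stem_Z\setminus\rv{S}$ is non-adaptive by construction and costs at most $2^i$, and for every $\scn\in Z_{\text{good}}$ the realization $\Stem_Z(\scn)$ fully covers $f$, so the aggregated relative function gain over $T_Z$ is $\ge\wp(Z)/2$ directly — no sampling, no adaptivity gap needed.

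Your handling of the okay case is also aimed at the wrong target: you try to find ``a single probe whose $L_e(Z)$ captures a constant fraction of $Z$'s mass,'' which is false in general (a balanced binary-search $\OPT$ eliminates half of $Z$ over $\Theta(\log|Z|)$ probes, each capturing only an $O(1/\log|Z|)$ fraction). What the argument actually needs is the weaker fact that $\wp\bigl(\bigcup_{\rv{X}_e\in T_Z}L_e(Z)\bigr)\ge\wp(Z)/2$, which the stem gives you for free since $Z_{\text{okay}}=\bigcup_{\rv{X}_e\in\Stem_Z}L_e(Z)$ by definition; submodularity of the $\wp$-weighted coverage function then turns this union bound into the sum bound $\sum_{\rv{X}_e\in T_Z}\wp(L_e(Z))\ge\wp(Z)/2$, which is exactly what the averaging argument over the multiset $\rv{T}=\bigsqcup_Z T_Z$ (total cost $\le\frac1\delta\cdot 2^i$) requires. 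Your classification of $Z$ via conditional probabilities $\ge1/6$ (mirroring the ``good scale'' notion from the independent case) is workable in spirit, but the paper's split into good/okay/bad based on which of $Z_{\text{good}},Z_{\text{bad}},Z_{\text{okay}}$ carries at least half the mass is cleaner here: a bad scenario $\scn\in Z_{\text{bad}}$ \emph{deterministically} has $\OPT\ge 2^i$ (it is still compatible at the end of $\Stem_Z$ without $f$ covered), so the bound $\sum_{Z\text{ bad}}\wp(Z)\le 2u_i^*$ comes with no union-bound slack.
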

Using this lemma, we can immediately prove Theorem~\ref{thm:scn-partial-cover}. This part of the analysis is identical to the one presented in \S\ref{sec:submod-cover-r-adaptive} for Theorem~\ref{thm:partial-cover}.

\section{Proof of the key lemma for Scenario Submodular Cover }
We now prove \Cref{lem:scn-submod-key}. 
Let $L$ be the list returned by \textsc{SParCA}. Recall that $\NA $ is the cost incurred by non-adaptively realizing items in the order of $L$ until (i) the number of compatible scenarios $|H| < \delta s$ or (ii) $f$ gets fully covered. For each time $t\ge 0$, let $\rv{X}_{e(t)}$ denote the item that  would be selected at time $t$. In other words, this is the item  which causes the cumulative cost of $L$ to exceed $t$ for the first time. We define the {\em total gain} as the random variable 
$$ G \,\,:=\,\, \sum_{t = \beta 2^{i-1}}^{ \beta 2^{i}} \text{score}(\rv{X}_{e(t)}), $$ 
which corresponds to the sum of scores over the time interval $[\beta \cdot 2^{i-1}, \beta \cdot 2^{i})$.  The proof will be completed by upper and lower bounding $G$, which we do next. The lower bound views $G$ as a sum over time steps, whereas the upper bound  views $G$ as a sum over decision paths.

\subsection{Lower bounding $G$}
For the analysis, it will be convenient to view $\OPT$ as a decision tree where nodes  correspond to probed items and branches correspond to item realizations. 
Fix some time $t \in [\beta\cdot 2^{i-1}, \beta\cdot 2^i)$ in phase $i$. Let $\rv{S}$ denote the  items that have already been added to the list $L$: so ${\rv{X}} \setminus \rv{S}$ are the available items. The partition $\H(\rv{S})$ of scenarios $M$ and the large parts ${\cal Z}$ are as in Definition~\ref{defn:partn-scn}.  For any $Z\in {\cal Z}$, we define:
\begin{itemize}
\item $\rl{S}{Z}\sse U$ is the realization from items $\rv{S}$ compatible with all scenarios in $Z$. 
\item ${Q}_{Z} := {Q} - {f}(\rl{S}{Z})$ is the residual target (after selecting $\rv{S}$) under scenarios $Z$.
\item residual submodular function ${f}_{Z} := {f}_{\rl{S}{Z}}$ under scenarios $Z$.
\item for any item $\rv{X}_e\in \rv{X}\setminus \rv{S}$, scenarios $L_e(Z)\sse Z$ are as in Definition~\ref{defn:Le-scn}. This is the complement of the largest part of $Z$ (based on the realization of $\rv{X}_e$).
\item $\OPT_{Z}$ is the subtree of $\OPT$ until time $2^i$, restricted to paths traced by scenarios in $Z$; this only includes items that are \emph{completely} selected by time $2^i$. 
\item $\Stem_Z$ is the path in $\OPT_{Z}$ that at each node $\rv{X}_e$ follows the branch corresponding to the realization compatible with scenarios $B_e(Z)=Z\setminus L_e(Z)$. We use $\Stem_Z$  to also denote the set of items on this path.
\end{itemize}
See Figure~\ref{fig:opt-stem} for an illustration of $\OPT_Z$ and $\Stem_Z$.

\begin{figure}
    \centering
    \includegraphics[width=6in]{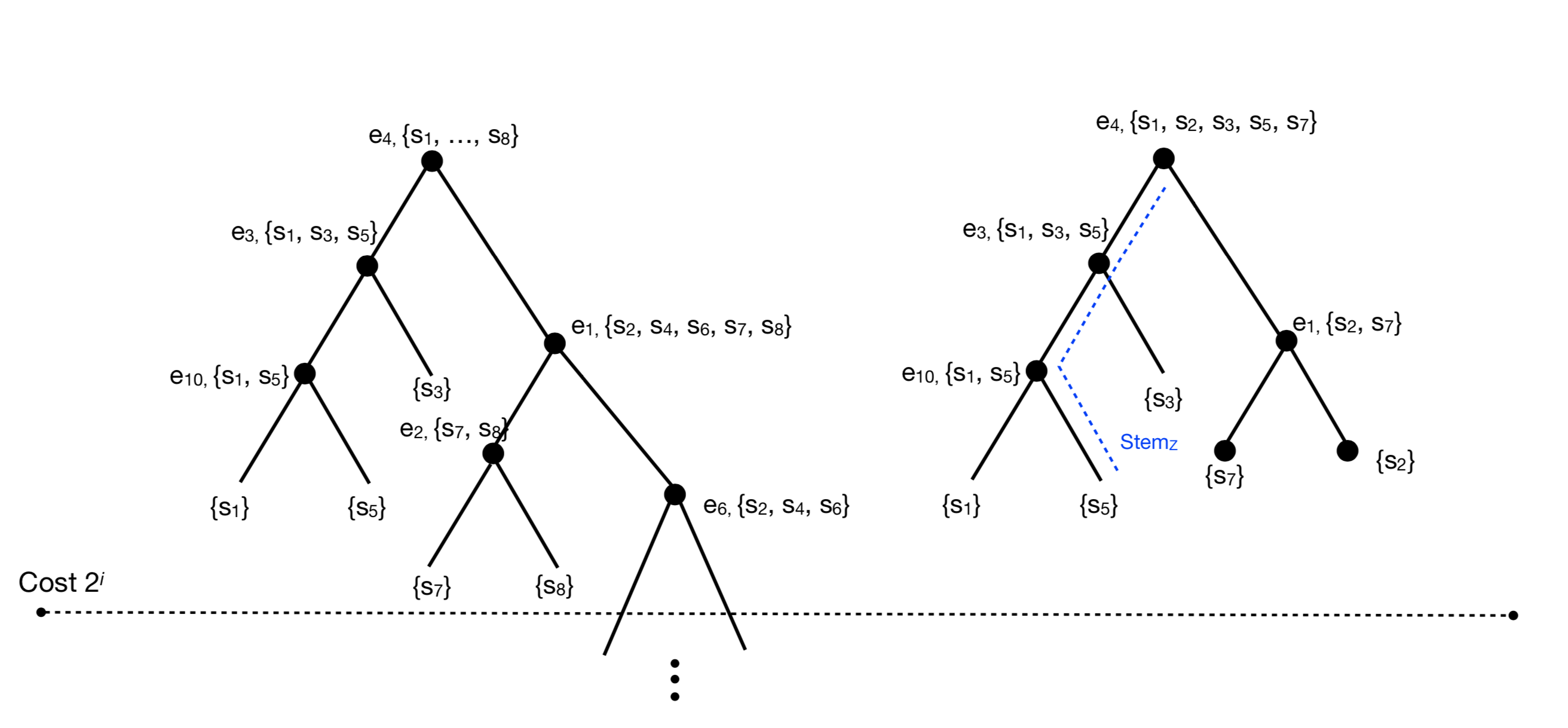}
    \caption{The   tree on the left represents the decision tree  $\OPT$ for scenarios $\{s_1, ..., s_8\}$. Each node represents the element chosen at the node, and the set of scenarios compatible until the node (before element at node is chosen). We redraw the decision tree for $Z = \{s_1, s_2, s_3, s_5, s_7\}$ and restrict attention to scenarios in $Z$. The decision tree on the right represents $\OPT_Z$. Note that both trees are restricted to time $2^i$. Furthermore, we add $\Stem_Z$ (using dotted lines) in $\OPT_Z$. Note that scenario $s_5$ is a \emph{good} scenario since it is covered by the end of $\Stem_Z$, and scenarios $s_1, s_2, s_3, s_7$ are all \emph{okay} scenarios. Thus $Z_{\text{good}} = \{s_5\}$ and $Z_{\text{okay}} = \{s_1, s_2, s_3, s_7\}$. If $s_5$ was \emph{not} covered by time $2^i$, it would be a \emph{bad} scenario.}
    \label{fig:opt-stem}
\end{figure}

For each $Z\in {\cal Z}$, we partition scenarios $Z$ based on the realizations on $\Stem_Z$:
\begin{itemize}
    \item $Z_{\text{good}} = \{ \scn \in Z : \scn \text{ compatible at the end of } \Stem_Z \text{ and } {f} \text{ covered}\}$,
    \item $Z_{\text{bad}} = \{ \scn \in Z : \scn \text{ compatible at the end of } \Stem_Z \text{ and } {f} \text{ uncovered}\}$,
    \item $Z_{\text{ok}} = \{ \scn \in Z : \scn \in L_e(Z) \text{ for some } \rv{X}_e \in \Stem_Z \}$; these are the scenarios that diverge from $\Stem_Z$ at some item $\rv{X}_e$. 
\end{itemize}
See Figure~\ref{fig:opt-stem} for an illustration.
 
 \def\wp{p}
\begin{definition}
We define part $Z\in {\cal Z}$ as \textbf{good}  if $\wp(Z_{\text{good}}) \geq \wp(Z)/2$, \textbf{bad}  if $\wp(Z_{\text{bad}}) \geq \wp(Z)/2$, or \textbf{okay}  if $\wp(Z_{\text{okay}}) \geq \wp(Z)/2$. \end{definition}
If there are ties, they can be broken arbitrarily. 
\begin{lemma} Each $Z\in {\cal Z}$ is either good, bad or okay.
\end{lemma}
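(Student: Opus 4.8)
\emph{Proof plan.} The goal is to show that the three sets $Z_{\text{good}}$, $Z_{\text{bad}}$, $Z_{\text{ok}}$ form a partition of the scenarios in $Z$; once this is established, the classification is immediate by pigeonhole: since $\wp(Z_{\text{good}}) + \wp(Z_{\text{bad}}) + \wp(Z_{\text{ok}}) = \wp(Z)$, at least one of the three quantities is at least $\frac{1}{3}\wp(Z)$, which is the threshold under which $Z$ is declared good, bad, or okay respectively (ties broken arbitrarily).

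To see that the three sets cover $Z$, fix any $\scn \in Z$ and trace it down $\Stem_Z$ from the root, at each node comparing the branch chosen by $\Stem_Z$ with the realization $X_e(\scn)$ of the corresponding item $\rv{X}_e$. Two cases arise. (i) If $\scn$ agrees with the branch chosen by $\Stem_Z$ at every node it reaches — equivalently, $\scn$ stays compatible with $\Stem_Z$ to its end — then $\scn$ lands in $Z_{\text{good}}$ or $Z_{\text{bad}}$ according to whether the realizations seen along $\Stem_Z$ (under $\scn$) cover $f$ or not. (ii) Otherwise there is a first node $\rv{X}_e$ of $\Stem_Z$ at which $\scn$ takes a different branch; recall $\Stem_Z$ follows the branch carrying the realization common to $B_e(Z) = Z \setminus L_e(Z)$, so $X_e(\scn)$ differs from that realization, whence $\scn \notin B_e(Z)$, i.e.\ $\scn \in L_e(Z)$ with $\rv{X}_e \in \Stem_Z$, so $\scn \in Z_{\text{ok}}$. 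Disjointness follows from the same dichotomy: a scenario compatible with all of $\Stem_Z$ lies in $B_e(Z)$ at every stem node and hence in no $L_e(Z)$.

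The step I expect to need the most care is the bookkeeping for ``short'' scenarios — those for which $f$ becomes covered before $\Stem_Z$ ends, or whose $\OPT$-path within cost $2^i$ terminates before the end of $\Stem_Z$. One must fix the convention that such a scenario is tested for compatibility only along the prefix of $\Stem_Z$ it actually traverses, so that a scenario covered early is placed in $Z_{\text{good}}$ and is not spuriously diverted into $Z_{\text{ok}}$ by a disagreement at a deeper stem node it never visits; one also has to keep in mind that the stem's branch at $\rv{X}_e$ is chosen according to the global majority $B_e(Z)$ over all of $Z$ (matching the definition of $L_e(Z)$), which is exactly what makes ``diverges at $\rv{X}_e$'' equivalent to ``$\scn \in L_e(Z)$''. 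With these conventions the dichotomy above stays exhaustive and the parts stay disjoint, and the pigeonhole conclusion goes through verbatim.
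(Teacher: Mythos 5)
Your argument establishes (carefully) that $Z_{\text{good}}$, $Z_{\text{bad}}$, and $Z_{\text{ok}}$ partition $Z$, which is indeed the first step of the paper's proof, but then you apply pigeonhole with the wrong threshold: you conclude ``at least one of the three quantities is at least $\frac{1}{3}\wp(Z)$'' and assert that $\frac{1}{3}$ is the cutoff in the definition. It is not — the paper defines $Z$ to be good/bad/okay when the corresponding mass is at least $\wp(Z)/2$, so a three-way pigeonhole yielding $\frac{1}{3}\wp(Z)$ falls short of what must be shown.

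The missing observation, which the paper uses and which closes the gap, is that one of $Z_{\text{good}}$ and $Z_{\text{bad}}$ is \emph{always empty}. Every scenario in $Z$ has the same realization $S(Z)$ on $\rv{S}$, and every scenario compatible at the end of $\Stem_Z$ traverses the same branches of $\Stem_Z$, hence sees exactly the same set of realized elements. So $f$ evaluated at the end of $\Stem_Z$ takes a single common value across all of $Z_{\text{good}} \cup Z_{\text{bad}}$; either that value is $Q$ (so $Z_{\text{bad}} = \emptyset$) or it is $<Q$ (so $Z_{\text{good}} = \emptyset$). The partition of $Z$ therefore collapses to a union of just two sets, and a two-way pigeonhole gives $\max\{\wp(Z_{\text{good}}),\wp(Z_{\text{bad}}),\wp(Z_{\text{ok}})\} \ge \wp(Z)/2$, matching the definition. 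Without this extra step your proof does not reach the required bound.
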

\begin{proof} Observe that one of $Z_{\text{good}}$ or $Z_{\text{bad}}$ is always empty and the other equals all scenarios compatible at the end of $\Stem_Z$. Moreover,  $Z_{\text{okay}}$ consists of all scenarios that diverge from $\Stem_Z$. So $Z_{\text{good}} \cup Z_{\text{bad}} \cup Z_{\text{okay}} = Z$. 
It follows that $\max\{ \wp(Z_{\text{good}}) , \wp(Z_{\text{bad}}), \wp(Z_{\text{okay}}) \} \ge \wp(Z)/2$, which proves the lemma.
\end{proof}

\def\nz{\rv{T}_Z}
For each $Z\in {\cal Z}$, let $\nz =\Stem_Z\setminus \rv{S}\sse {\rv{X}}\setminus \rv{S}$ denote the {\em available} items on $\Stem_Z$. 
In the next two lemmas, we lower bound the total score from items in $\nz$. We consider separately the terms corresponding   to function gain and information gain. 
For any scenario $\scn\in Z$, we use $\rl{Stem_Z}{\scn}\sse U$ denotes the realization of  items $\Stem_Z$ under scenario $\scn$.

\begin{lemma}\label{lem:scn-good}
If $Z$ is good, then $\sum_{\scn \in Z} {p}_\scn \cdot \frac{{f}_{Z}(\rl{T_Z}{\scn})}{{Q}_{Z}} \geq \frac{\wp(Z)}{2}$.
\end{lemma}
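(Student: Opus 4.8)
The plan is to unpack the definition of a good part $Z$ and exploit the fact that $\Stem_Z$ is a path \emph{inside} $\OPT_Z$, so for every scenario $\scn$ that stays compatible to the end of $\Stem_Z$ the realization $\Stem_Z(\scn)$ is exactly what $\OPT$ probes along that path; by definition of $Z_{\text{good}}$, these are precisely the scenarios where $f$ gets fully covered by the end of $\Stem_Z$. Formally, for each $\scn \in Z_{\text{good}}$ the realized set $\Stem_Z(\scn)$ satisfies $f(\rl{S}{Z} \cup \rl{Stem_Z}{\scn}) = Q$, hence $f_Z(\rl{T_Z}{\scn}) = Q - f(\rl{S}{Z}) = Q_Z$ (using that $\nz = \Stem_Z \setminus \rv{S}$ and the residual function $f_Z = f_{S(Z)}$). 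Therefore each such scenario contributes its full relative gain $f_Z(\rl{T_Z}{\scn})/Q_Z = 1$.

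First I would write
\[
\sum_{\scn \in Z} p_\scn \cdot \frac{f_Z(\rl{T_Z}{\scn})}{Q_Z} \;\ge\; \sum_{\scn \in Z_{\text{good}}} p_\scn \cdot \frac{f_Z(\rl{T_Z}{\scn})}{Q_Z},
\]
which is valid because $f_Z \ge 0$ (monotonicity of $f$ together with $f(\rl{S}{Z}) \le Q$, so $Q_Z > 0$ whenever the part is still "active"; if $Q_Z = 0$ then $\rv{S}$ already covers $f$ under all of $Z$ and the claim is vacuous or trivial). Then I would substitute $f_Z(\rl{T_Z}{\scn}) = Q_Z$ for $\scn \in Z_{\text{good}}$, obtaining $\sum_{\scn \in Z_{\text{good}}} p_\scn = \wp(Z_{\text{good}})$, and finally invoke the defining inequality of a good part, $\wp(Z_{\text{good}}) \ge \wp(Z)/2$, to conclude.

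The only real subtlety — and the step I'd be most careful about — is making sure the residual-function bookkeeping lines up: namely that "scenario $\scn$ compatible at the end of $\Stem_Z$ with $f$ covered" genuinely translates into $f_Z(\rl{T_Z}{\scn}) = Q_Z$ rather than something weaker. This requires that $\rl{Stem_Z}{\scn}$, restricted to the \emph{available} items $\nz$, together with $\rl{S}{Z}$ recovers the entire realized set that $\OPT$ saw along that path; items of $\Stem_Z$ that lie in $\rv{S}$ already have their realizations recorded in $\rl{S}{Z}$ (since $\scn \in Z$ and all of $Z$ agrees on $\rv{S}$), so no coverage is lost, and monotonicity of $f$ handles any items $\OPT$ might probe beyond $\Stem_Z$. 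Once that identification is pinned down the rest is a two-line computation, so I expect no genuine obstacle beyond stating these compatibility facts precisely.
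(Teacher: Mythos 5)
Your proof is correct and follows the same route as the paper: restrict the sum to $\scn \in Z_{\text{good}}$ (using non-negativity of $f_Z$ for the other terms), note that for each such scenario $f(\rl{S}{Z} \cup \rl{Stem_Z}{\scn}) = Q$ and hence $f_Z(\rl{T_Z}{\scn}) = Q_Z$ because $T_Z(\scn) \cup S(Z) = Stem_Z(\scn) \cup S(Z)$, and then invoke $\wp(Z_{\text{good}}) \ge \wp(Z)/2$. The "subtlety" you flag about the bookkeeping between $\nz$, $\Stem_Z$, and $\rv{S}$ is exactly the step the paper's proof handles via the identity $T_Z(\scn) = Stem_Z(\scn) \setminus S(Z)$.
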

\begin{proof}
Consider any scenario $\scn \in Z_{\text{good}}$. If $\scn$ is realized then ${f}$ is covered by the realization $\rl{Stem_Z}{\scn}$ of $\Stem_Z$. Thus, ${f}(\rl{Stem_Z}{\scn}) = {Q}$. We have
$${f}_{Z}(T_Z(\scn))={f}(T_Z(\scn) \cup S(Z)) - {f}( S(Z)) ={f}(Stem_Z(\scn) \cup S(Z)) - {f}( S(Z)) = {Q}- {f}( S(Z)) ={Q}_Z,$$
where the second equality uses that $T_Z(\scn) = Stem_Z(\scn) \setminus S(Z)$. 
As $Z$ is good, $\wp(Z_{\text{good}}) \geq \wp(Z)/2$, and the lemma follows by summing over $\scn\in Z_{\text{good}}$. 
\end{proof}

\begin{lemma}\label{lem:scn-okay}
If $Z$ is okay, then $\wp\left(\bigcup_{\rv{X}_e \in \nz} L_e(Z)\right) \geq \frac{\wp(Z)}{2}$.  
\end{lemma}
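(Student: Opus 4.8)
\emph{Proof plan.} The statement will follow by unwinding the definitions of $Z_{\text{okay}}$, $\Stem_Z$ and $\nz$, together with the single structural fact that $Z$ is one of the parts of $\H(\rv{S})$. No submodularity or decision-tree reasoning is needed here; it is pure bookkeeping.

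The first step will be to observe that $L_e(Z)=\emptyset$ whenever $\rv{X}_e\in\rv{S}$. Since $Z\in\mathcal{Z}\subseteq\H(\rv{S})$, Definition~\ref{defn:partn-scn} guarantees that every scenario in $Z$ realizes $\rv{X}_e$ to the same element of $U$; hence partitioning $Z$ by the realization of $\rv{X}_e$ yields the single part $Z$, so $B_e(Z)=Z$ and $L_e(Z)=Z\setminus B_e(Z)=\emptyset$. The second step will be to rewrite $Z_{\text{okay}}$: by its definition (and using $L_e(Z)\subseteq Z$), $Z_{\text{okay}}=\{\scn\in Z:\scn\in L_e(Z)\text{ for some }\rv{X}_e\in\Stem_Z\}=\bigcup_{\rv{X}_e\in\Stem_Z}L_e(Z)$. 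Writing $\Stem_Z=\nz\cup(\Stem_Z\cap\rv{S})$ and using the first step, the items of $\Stem_Z\cap\rv{S}$ contribute only empty sets to this union, so $Z_{\text{okay}}=\bigcup_{\rv{X}_e\in\nz}L_e(Z)$.

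The final step is immediate: by hypothesis $Z$ is okay, so by definition $\wp(Z_{\text{okay}})\ge\wp(Z)/2$, and combining this with the identity from the second step yields $\wp\big(\bigcup_{\rv{X}_e\in\nz}L_e(Z)\big)=\wp(Z_{\text{okay}})\ge\wp(Z)/2$, as claimed. There is no genuine obstacle in this lemma; the one point worth stating carefully is the first step, which is exactly the reason the score~\eqref{eq:scn-greedy-choice} and this lemma may be restricted to the available stem items $\nz$ rather than all of $\Stem_Z$ without any loss.
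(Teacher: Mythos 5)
Your proof is correct and is essentially identical to the paper's: both observe that $Z_{\text{okay}}=\bigcup_{\rv{X}_e\in\Stem_Z}L_e(Z)$, that $L_e(Z)=\emptyset$ for every $\rv{X}_e\in\rv{S}$ because $Z$ is a single part of $\H(\rv{S})$, and then drop the $\Stem_Z\cap\rv{S}$ items from the union to reduce to $\nz$ before invoking the definition of an okay part. The only cosmetic difference is that you phrase the reduction as a decomposition $\Stem_Z=\nz\cup(\Stem_Z\cap\rv{S})$ yielding an equality, whereas the paper phrases it as a set subtraction and a containment; both are sound and amount to the same observation.
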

\begin{proof}
Note that $Z_{\text{okay}} = \bigcup_{\rv{X}_e \in \Stem_Z} L_e(Z)$. As $Z$ is okay, we have $\wp(Z_{\text{okay}}) \geq \wp(Z)/2$. Recall that $Z$ is a part in $\H(\rv{S})$, which implies that realizations $S(\scn)$ of $\rv{S}$ are identical under all scenarios $\scn\in Z$. Hence, $B_e(Z)=Z$ for each $\rv{X}_e\in \rv{S}$, i.e.,  $\cup_{\rv{X}_e \in \rv{S}} L_e(Z)=\emptyset$. So,
$$\bigcup_{\rv{X}_e \in \nz} L_e(Z) \quad \supseteq \quad  \left(\bigcup_{\rv{X}_e \in \Stem_Z} L_e(Z) \right) \setminus \left(\bigcup_{\rv{X}_e \in \rv{S}} L_e(Z) \right) \quad = \quad \bigcup_{\rv{X}_e \in \Stem_Z} L_e(Z) \quad = \quad Z_{\text{okay}}. $$
Hence, $\wp\left(\bigcup_{\rv{X}_e \in \nz} L_e(Z)\right) \geq \frac{\wp(Z)}{2}$, which completes the proof.
\end{proof}

We now prove a bound on the \emph{overall} probability of parts that are either good or okay.

\begin{lemma}\label{lem:scn-bad}
We have $\sum_{Z: \text{okay or good}} \wp(Z) \geq (u_{i} - 2u_i^*)$.
\end{lemma}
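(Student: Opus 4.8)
The plan is to bound the probability of parts in $\mathcal{Z}$ that are \emph{bad}, and then use the fact that the parts in $\mathcal{Z}$ together cover nearly all the probability mass (all but the parts of size $<\delta s$, which contribute total probability at most... actually we need to be careful here). First I would recall that if $\NA$ goes beyond phase $i$, then at time $t$ (just before the end of phase $i$) the number of compatible scenarios is still at least $\delta s$, so the realized scenario $\scn^*$ lies in one of the large parts $Z \in \mathcal{Z}$; the probability of this event is exactly $\sum_{Z \in \mathcal{Z}} \wp(Z) \geq u_i$ (in fact this is where the relationship to $u_i$ comes from, since $\{\NA \text{ goes beyond phase } i\} \subseteq \{\scn^* \in \bigcup_{Z \in \mathcal{Z}} Z\}$ — the former also requires $f$ uncovered, but for a lower bound on $\sum_{Z\in\mathcal{Z}}\wp(Z)$ the containment suffices).

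The key step is to show $\sum_{Z \text{ bad}} \wp(Z) \leq 2 u_i^*$. The idea: if $Z$ is bad, then $\wp(Z_{\text{bad}}) \geq \wp(Z)/2$, i.e., at least half the probability mass of $Z$ consists of scenarios that stay compatible all the way to the end of $\Stem_Z$ \emph{and} still have $f$ uncovered at that point. But $\Stem_Z$ is a path inside $\OPT_Z$, which is $\OPT$ truncated at time $2^i$ and restricted to scenarios in $Z$. So for a scenario $\scn \in Z_{\text{bad}}$, the realization of $\OPT$ on the path it traces — which coincides with $\Stem_Z(\scn)$ up through time $2^i$ since $\scn$ never diverges from $\Stem_Z$ — does not cover $f$ by cost $2^i$. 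In other words, every scenario in $Z_{\text{bad}}$ is one for which $\OPT$ costs at least $2^i$, i.e., $\scn \in \{\OPT \geq 2^i\}$. Hence $\sum_{Z \text{ bad}} \wp(Z_{\text{bad}}) \leq \pr(\OPT \geq 2^i) = u_i^*$, and since $\wp(Z_{\text{bad}}) \geq \wp(Z)/2$ for bad $Z$, we get $\sum_{Z \text{ bad}} \wp(Z) \leq 2\sum_{Z\text{ bad}}\wp(Z_{\text{bad}}) \leq 2u_i^*$. Combining, $\sum_{Z \text{ okay or good}} \wp(Z) = \sum_{Z \in \mathcal{Z}} \wp(Z) - \sum_{Z \text{ bad}} \wp(Z) \geq u_i - 2u_i^*$, using the trichotomy lemma that every $Z \in \mathcal{Z}$ is good, bad, or okay.

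The main obstacle I anticipate is making precise the claim that scenarios in $Z_{\text{bad}}$ force $\OPT \geq 2^i$. One has to argue carefully that the items on $\Stem_Z$ are exactly the items $\OPT$ probes along the path traced by any scenario $\scn$ that stays on $\Stem_Z$, \emph{up to the time-$2^i$ truncation}; this uses the definition of $\OPT_Z$ (items completely selected by time $2^i$) and $\Stem_Z$ (follows the $B_e(Z)$ branch at each node). So for $\scn \in Z_{\text{bad}}$, by definition $f$ is uncovered at the end of $\Stem_Z$, meaning $\OPT$ has not covered $f$ by time $2^i$ on this scenario's path — exactly the event $\{\OPT \geq 2^i\}$ conditioned on $\scn^* = \scn$. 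A subtle point: different bad parts $Z, Z'$ have disjoint scenario sets (they are parts of a partition, well, large parts of partitions over different prefixes $\rv{S}$ — but here $\rv{S}$ is fixed at time $t$, so the $Z$'s are genuinely disjoint), so the sum $\sum_{Z \text{ bad}} \wp(Z_{\text{bad}})$ is a sum over disjoint events and is legitimately at most $\pr(\OPT \geq 2^i)$. I would state this disjointness explicitly to avoid any double-counting concern.
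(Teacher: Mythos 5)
Your proof is correct and follows essentially the same approach as the paper: bound the bad parts by $2u_i^*$ via the observation that scenarios in $Z_{\text{bad}}$ force $\OPT$ to cost at least $2^i$, then combine with a bound tied to $u_i$ and the trichotomy. The only cosmetic difference is the bookkeeping in the last step — you directly lower-bound $\sum_{Z\in\mathcal{Z}}\wp(Z)\geq u_i$ and subtract the bad mass, whereas the paper equivalently writes $1 - \sum_{Z\,\text{bad}}\wp(Z) - \sum_{Y\notin\mathcal{Z}}\wp(Y)$ using $\sum_{Y\notin\mathcal{Z}}\wp(Y)\leq 1-u_i$.
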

\begin{proof}
We first show that $\sum_{Z: \text{ bad }} \wp(Z)\le 2 \cdot u^*_i$. For any $Z\in {\cal Z}$, 
if scenario $\scn \in Z_{\text{bad}}$ is realized then we know that $\OPT$ does not cover ${f}$ at the end of $\Stem_Z$, which implies that $\OPT$ costs at least $2^i$. Thus $u_i^* \geq \sum_{Z \in \mathcal{Z}}\sum_{\scn \in Z_{\text{bad}}} {p}_\scn$. Moreover, $\wp(Z_{\text{bad}})\ge\wp(Z)/2$ if $Z$ is bad. So, 
\begin{equation}\label{eq:scn-bad-prob}
 u_i^* \geq \sum_{Z \in \mathcal{Z}}\sum_{\scn \in Z_{ \text{bad}}} {p}_\scn \geq \sum_{Z: \text{bad}} \,\,\sum_{\scn \in Z_{\text{bad}}} {p}_\scn \geq \sum_{Z: \text{ bad}} \wp(Z)/2.    
\end{equation}
Thus, $\sum_{Z: \text{bad}} \wp(Z) \leq 2u_i^*$ as claimed. 

Consider any  part $Y\in \H(\rv{S})\setminus {\cal Z}$; recall Definition~\ref{defn:partn-scn}. Note that $|Y|<\delta s$. So, if $\rv{S}$ realizes to the set $\rl{S}{Y}$  then $\NA $ would   terminate  by time $t\le \beta 2^i$, which implies that $\NA $ does not go beyond phase $i$. Thus, 
\begin{equation}\label{eq:scn-low-prob}
\sum_{Y\in \H(\rv{S})\setminus {\cal Z} } \wp(Y) \leq \Pr[\NA \mbox{ terminates by phase }i] = 1 - u_{i}.    
\end{equation}
 Finally, we have 
$$\sum_{Z: \text{okay or good}} \wp(Z)  = 1 -  \sum_{Z: \text{ bad }} \wp(Z) - \sum_{Y\in \H(\rv{S})\setminus {\cal Z} } \wp(Y) \geq 1 - 2u_i^* - (1 - u_{i})= u_{i} - 2u_i^*,$$
where we used \eqref{eq:scn-bad-prob} and \eqref{eq:scn-low-prob}. This  completes the proof.
\end{proof}
 
 We are now ready to prove the lower bound on $\text{score}(\rv{X}_{e(t)})$. 
\begin{lemma}\label{lem:scn-item-lb}
For any time $t$ in phase $i$, we have \emph{score}$(\rv{X}_{e(t)}) \geq \delta\cdot\frac{1}{2^{i+1}} \cdot (u_i - 2u_i^*)$.
\end{lemma}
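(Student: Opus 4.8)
The plan is to exhibit a low-cost family $\rv{T}$ of available items whose cost-weighted total score is large, and then average. For each part $Z\in {\cal Z}$ that is good or okay, recall $\nz=\Stem_Z\setminus \rv{S}$ is the set of available items on the stem, and set $\rv{T}:=\bigcup_{Z\in {\cal Z}:\,Z\text{ good or okay}}\nz$. First I would bound $c(\rv{T})$: since $\Stem_Z$ is a root-to-leaf path in $\OPT_Z$, which by definition contains only items completely probed by time $2^i$, we have $c(\Stem_Z)\le 2^i$ and hence $c(\nz)\le 2^i$; combined with $|{\cal Z}|\le 1/\delta$ (parts of size $\ge \delta s$ out of $s$ scenarios) this gives $c(\rv{T})\le 2^i/\delta$.

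Next I would show that, for every good or okay part $Z$, the items of $\nz$ alone already contribute score-mass at least $\wp(Z)/2$ from the $Z$-term in \eqref{eq:scn-greedy-choice}. If $Z$ is okay, the information-gain terms suffice: by a union bound over the (possibly overlapping) sets $L_e(Z)$ and then Lemma~\ref{lem:scn-okay},
\[
\sum_{\rv{X}_e\in \nz}\sum_{\scn\in L_e(Z)}\wp_\scn\ \ge\ \wp\!\left(\bigcup_{\rv{X}_e\in \nz}L_e(Z)\right)\ \ge\ \frac{\wp(Z)}{2}.
\]
If $Z$ is good, I would invoke subadditivity of submodular marginals: for each $\scn\in Z$, with $\rl{T_Z}{\scn}$ the realization of $\nz$ under $\scn$,
\[
f_Z(\rl{T_Z}{\scn})\ =\ f(\rl{S}{Z}\cup \rl{T_Z}{\scn})-f(\rl{S}{Z})\ \le\ \sum_{\rv{X}_e\in \nz}\Bigl(f(\rl{S}{Z}\cup \rl{X_e}{\scn})-f(\rl{S}{Z})\Bigr),
\]
so dividing by $Q_Z$, weighting by $\wp_\scn$, summing over $\scn\in Z$, and applying Lemma~\ref{lem:scn-good} gives $\sum_{\rv{X}_e\in \nz}\sum_{\scn\in Z}\wp_\scn\,\frac{f(\rl{S}{Z}\cup \rl{X_e}{\scn})-f(\rl{S}{Z})}{Q_Z}\ \ge\ \frac{\wp(Z)}{2}$.

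Then I would assemble these bounds. Writing $c_e\cdot\text{score}(\rv{X}_e)$ via \eqref{eq:scn-greedy-choice} as a sum over $Z\in {\cal Z}$ of nonnegative information-gain and function-gain terms, summing over $\rv{X}_e\in \rv{T}$, swapping the two summations, dropping every part that is neither good nor okay, and for each surviving part $Z$ keeping only the items in $\nz\subseteq \rv{T}$ and only its relevant term (information gain if $Z$ is okay, function gain if $Z$ is good)---all valid since every summand is nonnegative---I would get, using Lemma~\ref{lem:scn-bad},
\[
\sum_{\rv{X}_e\in \rv{T}}c_e\cdot\text{score}(\rv{X}_e)\ \ge\ \sum_{Z:\ \text{good or okay}}\frac{\wp(Z)}{2}\ \ge\ \frac{u_i-2u_i^*}{2}.
\]
Averaging and using $c(\rv{T})\le 2^i/\delta$ gives $\max_{\rv{X}_e\in \rv{T}}\text{score}(\rv{X}_e)\ge \frac{(u_i-2u_i^*)/2}{c(\rv{T})}\ge \frac{\delta}{2^{i+1}}(u_i-2u_i^*)$. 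Since $\rv{T}\sse \rv{X}\setminus \rv{S}$ and \textsc{SParCA} picks a score maximizer at each step, $\text{score}(\rv{X}_{e(t)})$ is at least this quantity; and if $u_i-2u_i^*<0$ the bound is trivial since scores are nonnegative. This proves the lemma.

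The main obstacle I expect is the bookkeeping in the assembly step: one must check that it is legitimate to charge each part $Z$ only against items on its own stem and only against one of its two score terms. This is exactly where nonnegativity of all summands in \eqref{eq:scn-greedy-choice} and the containment $\nz\subseteq \rv{T}$ are used, guaranteeing that no item is double-counted across parts and nothing is over-charged. The only other delicate point is the cost accounting---$c(\Stem_Z)\le 2^i$ because $\OPT_Z$ is truncated at time $2^i$ and contains only fully probed items, and $|{\cal Z}|\le 1/\delta$---after which the rest is just submodular subadditivity and a union bound.
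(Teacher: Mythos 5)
Your proof is correct and follows essentially the same route as the paper's: build the low-cost family of stem items $\rv{T}_Z$ for each large part $Z$, bound their total cost by $2^i/\delta$, lower-bound the aggregate score-mass using subadditivity of submodular marginals (for good parts) and the union bound on the $L_e(Z)$'s (for okay parts) together with Lemmas~\ref{lem:scn-good}, \ref{lem:scn-okay}, and \ref{lem:scn-bad}, and finish by averaging. The only cosmetic difference is that you take $\rv{T}$ as a set union restricted to good-or-okay parts, whereas the paper uses the multiset $\bigsqcup_{Z\in\mathcal{Z}}\rv{T}_Z$ over all large parts; both work, and your bookkeeping step correctly justifies that each item can be charged against every part $Z$ whose stem it sits on, by nonnegativity of the individual summands in \eqref{eq:scn-greedy-choice}.
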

\begin{proof} Recall the subsets $\nz \sse {\rv{X}}\setminus \rv{S}$ for $Z\in {\cal Z}$. Let $\rv{T} = \bigsqcup_{Z\in {\cal Z}} \nz$ denote the {\em multiset} that contains each item as many times as it occurs. By definition of $\Stem_Z$, the cost $c(\nz)\le 2^i$ for each $Z$. So, $c(\rv{T}) = \sum_{Z\in {\cal Z}} c(\nz) \le |{\cal Z}| \cdot 2^i \le \frac1\delta\cdot 2^i$. Moreover, $\rv{T}\sse \rv{X}\setminus \rv{S}$: so  each item in $\rv{T}$ is available to be added to  $L$ at time $t$.   We can lower bound $\text{score}(\rv{X}_{e(t)})$ by averaging over the multiset $\rv{T}$: 
\begin{align}
\text{score}(\rv{X}_{e(t)}) &\ge  \max_{\rv{X}_e \in \rv{T}} \text{score}(\rv{X}_{e}) \ge \frac{1}{c(\rv{T})}\cdot  \sum_{Z \in \mathcal{Z}}\sum_{\rv{X}_e \in \nz} \sum_{Y \in \mathcal{Z}} \Bigg( \wp(L_e(Y)) + \sum_{\scn \in Y} {p}_\scn \cdot \frac{{f}_{Y}(\rl{X_e}{\scn})}{{Q}_{Y}}\Bigg)\notag \\
&\ge \frac{1}{c(\rv{T})}\cdot  \sum_{Z \in \mathcal{Z}}\sum_{\rv{X}_e \in \nz} \Bigg( \wp(L_e(Z)) + \sum_{\scn \in Z} {p}_\scn \cdot \frac{{f}_{Z}(\rl{X_e}{\scn})}{{Q}_{Z}}\Bigg).\label{eq:scn-LB-score} 
\end{align} 
We now consider the two terms above (information and function gain) separately.  
 
 \paragraph{Bounding the information gain.} We only consider \emph{okay} sets $Z$.
$$ \sum_{Z \in \mathcal{Z}} \sum_{\rv{X}_e \in \nz} \wp(L_e(Z)) \ge \sum_{Z \in \mathcal{Z}} \wp\left(\bigcup_{\rv{X}_e \in \nz} L_e(Z)\right) \ge \sum_{Z : \text{ okay}}\frac{\wp(Z)}{2},$$
where the first inequality is by submodularity of the $\wp$-weighted coverage function and the second inequality is by Lemma~\ref{lem:scn-okay}.

 \paragraph{Bounding the function gain.} Consider any \emph{good} set $Z$.
$$ \sum_{\rv{X}_e \in \nz} \, \sum_{\scn \in Z} {p}_\scn \cdot \frac{{f}_{Z}(\rl{X_e}{\scn})}{{Q}_{Z}} = \sum_{\scn \in Z} {p}_\scn \cdot \frac{\sum_{\rv{X}_e \in \nz}{f}_{Z}(\rl{X_e}{\scn})}{{Q}_{Z}}   \ge \sum_{\scn \in Z} {p}_\scn \cdot \frac{{f}_{Z}(\rl{T_Z}{\scn})}{{Q}_{Z}} \ge \frac{\wp(Z)}{2},$$
where the first inequality is by submodularity of ${f}_Z$ and the second inequality is by Lemma~\ref{lem:scn-good}. Adding over all good sets $Z$,
$$\sum_{Z\in {\cal Z}} \sum_{\rv{X}_e \in \nz} \, \sum_{\scn \in Z} {p}_\scn \cdot \frac{{f}_{Z}(\rl{X_e}{\scn})}{{Q}_{Z}} \ge \sum_{Z : \text{ good}}\frac{\wp(Z)}{2}.$$

Combining the two bounds above,
$$  \sum_{Z \in \mathcal{Z}}\sum_{\rv{X}_e \in \nz} \Bigg( \wp(L_e(Z)) + \sum_{\scn \in Z} {p}_\scn \cdot \frac{{f}_{Z}(\rl{X_e}{\scn})}{{Q}_{Z}}\Bigg) \ge \sum_{Z : \text{okay or good}}\frac{\wp(Z)}{2}\ge \frac{u_i-2u_i^*}{2},$$
where the last inequality is by Lemma~\ref{lem:scn-bad}. Combined with \eqref{eq:scn-LB-score} and $c(\rv{T})\le  \frac1\delta\cdot 2^i$, this completes the proof.
\end{proof}

As Lemma~\ref{lem:scn-item-lb} holds for each time $t$ in phase $i$,
\begin{equation}
\label{eq:scn-LB}
G = \sum_{t=\beta 2^{i-1}}^{\beta 2^i}  \text{score}(\rv{X}_{e(t)}) \ge  \beta  2^{i-1} \cdot \delta\cdot  \frac{1}{2^{i+1}} \cdot (u_{i} - 2u_i^*) \geq \frac{\beta\delta }{4} \cdot (u_{i} - 2u_i^*)\end{equation}

\subsection{Upper bounding $G$}
We now consider the implementation of the non-adaptive list $L$ and calculate $G$ as a sum of contributions over the decision path in the non-adaptive solution $\NA$. Let ${\scn} \in M$ denote the realized scenario, and consider the decision path under scenario $\scn$. Let $\langle X_1, X_2,\ldots \rangle$ be the sequence of realizations (each in $U$) of items in $L$ under scenario ${\scn}$. So item $\rv{X}_j$ is selected between time $\sum_{\ell=1}^{j-1} c_\ell$ and $\sum_{\ell=1}^{j} c_\ell$. 
Let $h$ (resp. $p$) index the first (resp. last) item  (if any) that is selected (even partially) during phase $i$, i.e., between time $\beta 2^{i-1}$ and $\beta 2^{i}$. For each index $h\le j\le p$, let $t_j$ denote the duration of time that item $\rv{X}_j$ is selected during in phase $i$; so $t_j$ is the width of interval $[\sum_{\ell=1}^{j-1} c_\ell,\, \sum_{\ell=1}^{j} c_\ell] \bigcap  [\beta \cdot 2^{i-1},\, \beta \cdot 2^{i}]$. 
Furthermore, for each index $ j$, let $Z_j\sse M$ be the set  of scenarios compatible with outcomes $\langle X_1,\cdots, X_{j-1}\rangle$. Note that  $M = Z_1 \supseteq Z_2 \supseteq \cdots \supseteq Z_p\supseteq Z_{p+1}$. 
Define $G({\scn}):=0$ if index $h$ is undefined (i.e., $\NA $ ends before phase $i$ under scenario $\scn$) and otherwise:
\begin{align}
   G({\scn}) := &\sum_{j=h}^{p} \frac{t_j}{c_j} \cdot \Bigg( \mathbbm{1}[\scn  \in L_j(Z_j)] +  \frac{{f}(\{X_1, ..., X_j\}) -  {f}(\{X_1, ..., X_{j-1}\})}{{Q} - {f}(\{X_1, ..., X_{j-1}\})}\Bigg)  \notag \\
   &\leq \sum_{j=h}^{p}  \Bigg( \mathbbm{1}[\scn  \in L_j(Z_j)] +  \frac{{f}(\{X_1, ..., X_j\}) -  {f}(\{X_1, ..., X_{j-1}\})}{{Q} - {f}(\{X_1, ..., X_{j-1}\})}\Bigg)\label{eq:scn-path-UB}
\end{align}
By the stopping criterion for $L$, {\em before} the end of  $\NA $ we must have (i) the number of compatible scenarios remains at least $\delta s$ and (ii) the ${f}$ value remains less than  ${Q}$. In particular, we must have $|Z_{p}|\ge \delta s$.

We analyze the two quantities in the above expression separately. We begin by analyzing $ \sum_{j=h}^{p}  \mathbbm{1}[\scn  \in L_j(Z_j)]$.  Fix any $h\le j \le p$. If the realized scenario $\scn  \in L_j(Z_j)$, then the number of compatible scenarios drops by a factor of at least $2$ upon observing $X_j$; that is, $|Z_{j+1}| \leq \frac{1}{2} |Z_{j}|$. Using the fact that $|Z_1|=|M|={s}$ and $|Z_{p}|\ge \delta s$, it follows that   
$$ \sum_{j=h}^{p}  \mathbbm{1}[\scn  \in L_j(Z_j)] \leq \log_2 \frac{s}{\delta s} = \log_2(1/\delta).$$ 

Next, we analyze the second term in \eqref{eq:scn-path-UB}. Using the fact that ${f}$ is integral, monotone and takes values in the range $[0,{Q}]$, we have:
$$ \sum_{j=h}^{p} \frac{{f}(\{X_1, ..., X_j\}) -  {f}(\{X_1, ..., X_{j-1}\})}{{Q} - {f}(\{X_1, ..., X_{j-1}\})} \ \leq \ \sum_{\ell=1}^{{Q}} \frac{1}{\ell} \ \leq \ \ln Q.$$ 
Thus, we have $ G(\scn ) \ \leq \  \log \frac1\delta + \ln Q \ \leq  \ \log (Q/\delta)$.

Note that $G=\E_{\scn  \sim M}[G(\scn )]$. It now follows that 
\begin{equation}
 G\,\, \le \,\,\log (Q/\delta)\cdot \pr(\NA  \text{ doesn't terminate before phase }i) \,\,
 = \,\, \log (Q/\delta)\cdot u_{i-1}. \label{eq:scn-G-ub}
\end{equation}

\subsection{Completing proof of \Cref{lem:scn-submod-key}} Using \eqref{eq:scn-LB}, \eqref{eq:scn-G-ub} and setting $\beta = \frac{16}\delta\log(Q/\delta)$, we get 
$$ \frac{16\log(Q/\delta)  }{4} \cdot (u_i - 2u_i^*) \ \leq \ \log(Q/\delta) \cdot u_{i-1}$$ which on rearranging gives $ u_i \ \leq \ \frac{u_{i-1}}{4} + 2u_i^*$, as desired.

This completes the proof of Theorem~\ref{thm:scn-main}.

\subsection{Tight  approximation using more rounds}  

We now show that a better (and tight) approximation is achievable if
we use $2r$ (instead of $r$) adaptive rounds. The main idea is to use
the following variant of the partial covering problem, called
\emph{scenario submodular partial cover} (\ssp). The input is the same as scenario submodular cover: items $\rv{X}$, scenarios $M$ with $|M|=s$ and submodular function $f$ with maximal value $Q$. Given parameters $\delta, \varepsilon \in [0,1]$, the goal in \ssp is to  probe
some items $\rv{R}$ that realize to set $R\sse U$ such that either (i) the number of compatible scenarios is less than $\delta s$ or (ii) the function value $f(R)>Q(1-\varepsilon)$. Unlike the partial version studied previously, we do not require $f$ to be fully covered in case (ii). Note that setting $\varepsilon=\frac1Q$, we recover  the previous partial covering problem; so \ssp is more general.  
\begin{corollary}\label{cor:ssp}
  There is a non-adaptive algorithm for \ssp  with cost
  $O\left(\frac{1}{\delta} (\ln \frac1{\delta} + \ln \frac1\varepsilon )\right)$ times the 
  optimal adaptive cost  for the (full) submodular cover.
\end{corollary}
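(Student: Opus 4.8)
The plan is to re-use, almost verbatim, the machinery developed for \Cref{thm:scn-partial-cover}, changing only the stopping rule and one step of the analysis. Set $\tau := Q(1-\varepsilon)$. The algorithm builds a list $L$ exactly as \textsc{SParCA} does, except that in the score \eqref{eq:scn-greedy-choice} we sum only over the ``large'' parts $Z\in\mathcal Z$ with $f(S(Z))\le\tau$ --- call this restricted collection $\mathcal Z^\tau$ --- and in the probing phase we stop as soon as either (i) the number of compatible scenarios drops below $\delta s$, or (ii) $f(R)>\tau$; in case (ii) the \ssp requirement is met. Writing $\NA$ for this non-adaptive strategy (and its random cost) and $\OPT$ for an optimal fully adaptive solution of the full cover problem, I would set up phases as in \S\ref{sec:scn-submod-cover-r-adaptive} with $\beta:=\Theta\!\big(\tfrac1\delta(\log\tfrac1\delta+\log\tfrac1\varepsilon)\big)$ and let $u_i$ (resp.\ $u_i^*$) be the probability that $\NA$ costs at least $\beta 2^i$ (resp.\ $\OPT$ costs at least $2^i$). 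Exactly as \Cref{lem:scn-submod-key} implies \Cref{thm:scn-partial-cover}, it then suffices to prove $u_i\le\tfrac14 u_{i-1}+2u_i^*$ for every phase $i\ge1$; the wrap-up from this recursion to $\E[\NA]\le O(\beta)\E[\OPT]$ is identical to the proof of \Cref{thm:partial-cover}.

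To prove that inequality I would, as before, fix a phase $i$ and bound the total gain $G=\sum_{t\in\text{phase }i}\text{score}(\rv X_{e(t)})$ from above and below. The lower bound is a line-by-line transcription of \Cref{lem:scn-good,lem:scn-okay,lem:scn-bad,lem:scn-item-lb} with $\mathcal Z^\tau$ in place of $\mathcal Z$: one still has $|\mathcal Z^\tau|\le 1/\delta$, and for $Z\in\mathcal Z^\tau$ the residual target $Q_Z=Q-f(S(Z))\ge\varepsilon Q>0$, so every function-gain manipulation (division by $Q_Z$) remains valid. The only step that needs comment is the analogue of \Cref{lem:scn-bad}: the parts \emph{not} in $\mathcal Z^\tau$ are precisely the small ones ($|Y|<\delta s$) together with those already past the threshold ($f(S(Y))>\tau$), and on either kind of outcome $\NA$ has already terminated by time $t\le\beta 2^i$, so their total probability is at most $1-u_i$ --- which is exactly what the argument needs. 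This gives $G\ge\tfrac{\beta\delta}{4}(u_i-2u_i^*)$, mirroring \eqref{eq:scn-LB}. For the upper bound I would write $G=\E_{\scn\sim M}[G(\scn)]$ with $G(\scn)$ as in \eqref{eq:scn-path-UB}; the information-gain part $\sum_j\mathbbm{1}[\scn\in L_j(Z_j)]\le\log_2(1/\delta)$ is unchanged, and for the function-gain part the new stopping rule guarantees $f(\{X_1,\dots,X_{j-1}\})\le\tau=Q(1-\varepsilon)$ for every $j$ before $\NA$ stops, so every denominator is at least $\varepsilon Q$; the telescoping/harmonic estimate (cf.\ \Cref{lem:per-path} and the corresponding step in the proof of \Cref{thm:scn-partial-cover}) then yields $\sum_j\frac{f(\{X_1,\dots,X_j\})-f(\{X_1,\dots,X_{j-1}\})}{Q-f(\{X_1,\dots,X_{j-1}\})}\le 1+\ln(1/\varepsilon)$. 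Hence $G(\scn)\le\log_2(1/\delta)+\ln(1/\varepsilon)+1=O(\log\tfrac1\delta+\log\tfrac1\varepsilon)$, so $G\le O(\log\tfrac1\delta+\log\tfrac1\varepsilon)\cdot u_{i-1}$. Combining the two bounds and taking $\beta$ a large enough constant times $\tfrac1\delta(\log\tfrac1\delta+\log\tfrac1\varepsilon)$ gives $u_i\le\tfrac14 u_{i-1}+2u_i^*$, and the cost bound $O\!\big(\tfrac1\delta(\ln\tfrac1\delta+\ln\tfrac1\varepsilon)\big)\cdot\OPT$ follows.

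I do not expect a genuine obstacle here: the statement is essentially ``\Cref{thm:scn-partial-cover} with the coverage requirement relaxed from $Q$ to $Q(1-\varepsilon)$'', and the only real content is the observation that relaxing it shortens the harmonic sum in the upper bound on $G$ (turning $\log Q$ into $\ln(1/\varepsilon)$) while costing nothing elsewhere. The one place that needs care --- the closest thing to a subtlety --- is confirming that restricting the scores to $\mathcal Z^\tau$ does not weaken the lower bound on $G$; this works precisely because the discarded parts are exactly the configurations on which $\NA$ has already stopped, so they count towards $1-u_i$ rather than towards the quantities being lower-bounded. Setting $\varepsilon=1/Q$ recovers \Cref{thm:scn-partial-cover}, a useful consistency check.
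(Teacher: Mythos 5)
Your proposal is correct and matches the paper's proof essentially line for line: the paper modifies $\mathcal Z$ to include only large parts with $f(S(Y))\le Q(1-\varepsilon)$, adjusts the stopping rule to terminate when $f(R)>Q(1-\varepsilon)$, keeps the lower bound on $G$ unchanged, and replaces the $\ln Q$ harmonic-sum bound on the function-gain term with $1+\ln(1/\varepsilon)$ using exactly the telescoping you describe, then chooses $\beta=\frac{16}{\delta}(1+\log\frac1\delta+\log\frac1\varepsilon)$. The subtlety you flag about the analogue of \Cref{lem:scn-bad} is handled implicitly in the paper by the remark that scenarios outside $\mathcal Z$ are exactly those on which the \ssp stopping rule has already applied.
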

\begin{proof}
The algorithm is nearly identical to \textsc{SParCA} (Algorithm~\ref{alg:scn-parca}). In Definition~\ref{defn:partn-scn}, we change 
$$\mathcal{Z} := \{ Y \in \H(\rv{S}) : |Y| \geq \delta s \,\,\mathbf{ and } \,\,f(S(Y))\le Q(1-\varepsilon)\}.$$
In other words $\cal Z$ contains parts having size at least $\delta s$
and for which the realized function value is at most the target
$Q(1-\varepsilon)$.  Note that scenarios in ${\cal Z}$ correspond to those under which the \ssp stopping rule does not already apply. After this,  we use the same steps to produce list $L$ in  Algorithm~\ref{alg:scn-parca}. The only difference is in the stopping rule (when probing items from $L$) which reflects the  definition of \ssp. In particular, the condition in the while-loop (step~\ref{step:sparca-probe} of Algorithm~\ref{alg:scn-parca}) is now replaced by:
\begin{quote}
    While $|H|\ge \delta |M|$ and $f(R)\le Q(1-\varepsilon)$.
\end{quote}
The analysis is also nearly identical: we prove Lemma~\ref{lem:scn-submod-key} by   lower/upper bounding the total gain $G$. The lower bound~\eqref{eq:scn-LB} remains the same. For the upper bound, the analysis for the first term in \eqref{eq:scn-path-UB} remains the same, but its second term is now bounded as:
$$ \sum_{j=h}^{p} \frac{V_j -  V_{j-1}}{{Q} - V_{j-1}} \ \leq 1 + \sum_{j=h}^{p-1} \frac{V_j -  V_{j-1}}{{Q} - V_{j-1}} \le 1+ \sum_{\ell=\varepsilon Q}^{{Q}} \frac{1}{\ell} \ \leq \ 1+\ln \frac{1}{\varepsilon},$$ 
 where $V_j:=f(\{X_1, ..., X_j\})$. The second inequality uses $V_1\le V_2\le \cdots V_{p-1}\le Q(1-\varepsilon)$. This implies (just as before) that 
 $$
 G\,\, \le \, \left(1+  \log \frac1\delta + \log \frac1\varepsilon\right)\cdot u_{i-1}.$$ 
 Finally, choosing $\beta= \frac{16}\delta\left(1+ \log\frac1\delta+\log\frac1\varepsilon\right)$ and simplifying, we get   $ u_i \ \leq \ \frac{u_{i-1}}{4} + 2u_i^*$ (Lemma~\ref{lem:scn-submod-key}). This completes the proof. 
\end{proof}

Next, we show how \ssp can be used iteratively to obtain a $2r$-round algorithm.
\begin{theorem}\label{thm:scn-improved-apx}
There is a $2r$-round adaptive algorithm for   scenario submodular
  cover with cost $O\left(s^{1/r} \log (s Q)\right)$ times the optimal
  adaptive cost, where $s$ is the number of scenarios.
\end{theorem}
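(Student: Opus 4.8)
The plan is to reduce the problem to a short adaptive loop of calls to the \ssp algorithm of Corollary~\ref{cor:ssp}, using the \emph{same} parameters $\delta := s^{-1/r}$ and $\varepsilon := Q^{-1/r}$ in every call. We maintain a residual instance: the still‑unprobed items $\rv{X}^{(k)}$, the set $M^{(k)}$ of scenarios compatible with everything seen so far, and the residual submodular function $f^{(k)}$ (with maximal value $Q^{(k)}$) obtained by conditioning on the observed realizations. In round $k$, if $Q^{(k)}=0$ we stop (the function is covered); otherwise we run \ssp on $(\rv{X}^{(k)}, M^{(k)}, f^{(k)}, Q^{(k)})$ with these $\delta,\varepsilon$, observe the realizations it probes, update the residual instance, and continue. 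Since each \ssp call is a single non‑adaptive list with a stopping rule and round $k$ uses only information from rounds $1,\dots,k-1$, this is a valid multi‑round adaptive algorithm; all the content is in bounding the number of rounds and the expected cost.

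For the round bound, classify round $k$ as a \emph{scenario round} if the \ssp call halts because of its condition~(i) (fewer than $\delta\,|M^{(k)}| = |M^{(k)}|\,s^{-1/r}$ compatible scenarios remain) and a \emph{coverage round} if it halts because of condition~(ii) ($f^{(k)}$ exceeds $Q^{(k)}(1-\varepsilon)=Q^{(k)}(1-Q^{-1/r})$, so $Q^{(k+1)}<Q^{(k)}\,Q^{-1/r}$). Each scenario round multiplies $|M^{(k)}|$ by a factor $<s^{-1/r}$ (coverage rounds leave it unchanged), so after $a$ scenario rounds $|M^{(k)}|<s^{1-a/r}$; an $r$‑th scenario round would need the number of compatible scenarios to drop below $s^{-1/r}\cdot s^{1/r}=1$, which is impossible, so there are at most $r-1$ scenario rounds. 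Similarly, after $b$ coverage rounds $Q^{(k)}<Q^{1-b/r}$, so after the $r$‑th coverage round $Q^{(k+1)}<Q^{0}=1$, hence $Q^{(k+1)}=0$ by integrality of $f$ and the loop terminates. Thus the algorithm uses at most $(r-1)+r=2r-1\le 2r$ rounds.

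For the cost, fix a round $k$ and condition on all realizations observed before it. An optimal adaptive solution $\OPT$, restricted to the as‑yet‑unprobed items, is a feasible adaptive strategy for covering $f^{(k)}$; hence, writing $\OPT^{(k)}$ for the optimal adaptive cost of covering $f^{(k)}$ on this conditional instance, we get $\E[\OPT^{(k)}]\le \OPT$. Since $|M^{(k)}|\le s$ and $Q^{(k)}\le Q$, and since $\tfrac1\delta=s^{1/r}$, $\ln\tfrac1\delta=\tfrac1r\ln s$, and $\ln\tfrac1\varepsilon=\tfrac1r\ln Q$, Corollary~\ref{cor:ssp} bounds the expected cost of round $k$, conditioned on the history, by $O\!\big(\tfrac1r s^{1/r}\log(sQ)\big)\cdot\OPT^{(k)}$. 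Taking total expectations and summing over the at most $2r-1$ rounds yields expected cost $O\!\big(s^{1/r}\log(sQ)\big)\cdot\OPT$, as claimed.

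The crux is really just the choice of $\delta$ and $\varepsilon$: keeping both fixed at $s^{-1/r}$ and $Q^{-1/r}$ makes condition~(i) unusable after $r-1$ rounds (because at least one scenario always survives) and condition~(ii) force full coverage after $r$ rounds (because $f$ is integer‑valued), so the two termination rules together cap the round count at $2r$ while each individual call costs only $O\!\big(\tfrac1r s^{1/r}\log(sQ)\big)\cdot\OPT$; the remaining cost accounting is the same telescoping argument used for Theorems~\ref{thm:ssc-alg} and~\ref{thm:r-round-scn}. (One could instead stop the loop once a single scenario survives and finish with one deterministic round of greedy submodular cover, but the fixed‑parameter version keeps both the round count and the cost bookkeeping transparent.)
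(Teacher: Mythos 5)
Your proposal is correct and follows essentially the same route as the paper's proof: run \ssp (Corollary~\ref{cor:ssp}) with the \emph{fixed} parameters $\delta=s^{-1/r}$ and $\varepsilon=Q^{-1/r}$ repeatedly, bound the per-round conditional cost by $O\big(\tfrac1r s^{1/r}\log(sQ)\big)\cdot\OPT^{(k)}$, average over the history to remove the conditioning, and bound the number of rounds by classifying each as a scenario or coverage round and observing that scenario rounds cannot recur more than (roughly) $r$ times while $r$ coverage rounds already kill the residual target. Your count of at most $r-1$ scenario rounds is a small sharpening over the paper's stated ``at most $r$'' (your observation that a further scenario round would force zero compatible scenarios is the cleaner way to see it), and the parenthetical ``coverage rounds leave $|M|$ unchanged'' is not quite literal --- they may still shrink $|M|$, just not by the $s^{1/r}$ factor --- but this only helps and does not affect the argument.
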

\begin{proof} We use \ssp with $\delta=s^{-1/r}$ and $\varepsilon=Q^{-1/r}$ repeatedly for $2r$ adaptive rounds. Let $\rho = O(\frac1r s^{1/r}\log(sQ))$ denote the approximation ratio from Corollary~\ref{cor:ssp} for these $\delta,\varepsilon$ values. 

The {\em state} at any round  consists of previously probed items and their realizations. For the analysis, we view the iterative algorithm  as  a $2r$ depth tree, where the nodes at depth $i$ are the   states   in round $i$ and the branches out of each node  represent the observed realizations in that round. For any $i\in [2r]$, let $\Omega_i$ denote all the states in round $i$: note that these form a partition of the outcome space. 

For any state $\omega\in \Omega_i$, let $\OPT(\omega)$ denote the expected cost of the optimal adaptive policy conditioned on the realizations in $\omega$.
By Corollary~\ref{cor:ssp}, conditioned on $\omega$, the expected cost in round $i$
 is at most $\rho\cdot \OPT(\omega)$.
Hence, the (unconditional) expected cost in round $i$ is at most $\rho \sum_{\omega\in \Omega_i} \Pr[\omega]\cdot\OPT(\omega) =\rho\cdot \OPT$ 
where we used that $\Omega_i$ is a partition of all outcomes. So  the  expected cost of the $2r$-round algorithm is at most $2r\cdot \rho\cdot \OPT$ as claimed. 

It remains to show that $f$ is fully covered after $2r$ rounds. Note that after each round, we have one of the following (i)  the number of compatible scenarios drops by factor $s^{1/r}$, or (ii) the residual target drops by factor $Q^{1/r}$.  Clearly, case (i) can happen at most $r$ times (as there is always some compatible scenario). So case (ii) occurs at least $r$ times, which implies that the residual target is less than $1$, i.e., $f$ is fully covered. 
\end{proof}

Setting $r=\log s$,  we   achieve a tight $O(\log(sQ))$ approximation in $O(\log s)$ rounds. Combined with the conversion to set-based rounds (Theorem~\ref{thm:set-large-r}), this proves Corollary~\ref{cor:scn}.

\fi

\ifICMLVersion
\else
\section{Lower Bound for Scenario Submodular Cover}
Recall that we obtained an $r$-round $O(s^{1/r} \cdot (\log s + r\log Q))$-approximation algorithm for scenario submodular cover. It is natural to ask if the dependence on the support-size $s$ is necessary, given that the bound for the independent case only depends on $Q$. The main result of this section is Theorem~\ref{thm:scn-lb}, which is a nearly matching lower bound on the $r$-round-adaptivity gap. In particular, it shows that the $s^{1/r}$ dependence is necessary even when $Q=1$.

\subsection{Hard Instances for Scenario Submodular Cover}\label{subsec:scn-lb-overview}
 
As a warm-up we first consider the case when $r=1$. We design an instance to prove a $1$-round adaptivity gap. The instances that we use to prove the adaptivity gap for $r > 1$ will build on this. The instance has groundset $U=\{0,1,\sym,\nil\}$ and function $f(S)=|S\cap \{\sym\}|$. Clearly, $f$ is a $0-1$ valued monotone submodular function; and $Q=1$. Let $\ell$ be an integer parameter and $N=2^\ell$. We have two types of items: $\rv{Y}_0, ..., \rv{Y}_{\ell-1}$ each of which realizes to either $0$ or $1$, and 
$N$ items $\rv{Z}_0, ..., \rv{Z}_{N-1}$ each of which realizes a symbol $\sym$ or $\nil$. All items have  cost $1$. The joint distribution $\D$ of these items has $s=N$ scenarios, each of uniform probability. For each $B\in \{0,1,\cdots, N-1\}$, scenario $B$ has the following realizations (below, $b_0 b_1 \cdots b_{\ell-1}$ is the binary representation of $B$).
$${Y}_i(B) = b_i,\,\, \forall i=0,\cdots, \ell-1 \quad \mbox{ and } \quad  {Z}_j(B) = \nil, \,\, \forall j \in \{0,\cdots, N-1\}\setminus \{B\} \mbox{ and }   {Z}_B(B) = \sym.$$
 We say that item $\rv{Z}_k$ is \emph{active} if, and only if, $\rv{Z}_k$ realizes to $\sym$. Note that there is a unique active item in each scenario, and this item needs to be probed to cover $f$. 

An adaptive solution is as follows. First, probe items $\rv{Y}_0, ..., \rv{Y}_{\ell-1}$, and based on their realizations calculate the number $B$ represented by $Y_0, \cdots, Y_{\ell-1}$. Then, probe item $\rv{Z}_B$ (the active one) to cover $f$. This policy has cost $\ell+1$. On the other hand, since each $\rv{Z}_{k}$ is equally likely to be active, any non-adaptive policy needs to pick $\Omega(2^{\ell})$ items, in expectation, to cover $f$. In conclusion, this proves an $\Omega(2^{\ell}/\ell) = \Omega(\frac{s}{\log s})$ lower bound on the $1$-round adaptivity gap for scenario submodular cover. 

\paragraph{The $r$-round instance.}
This is based on recursively applying the above instance structure. Recall that $\ell$ is some parameter  and $N = 2^{\ell}$. Moreover, the realizations of $\rv{Y}$-items above point to the unique $\rv{Z}$-item that is active. To extend this idea, we consider an $N$-ary tree $\T$ of depth $r$. For each internal node,   we index its $N$ children by $\{0,\cdots, N-1\}$ in an arbitrary order. We define a collection of items, each of unit cost, using $\T$ as follows.
\begin{itemize}
    \item For each depth $i = 0, \cdots, r-1$ and node $v\in \T$ at depth $i$, there are $\ell$ items $\rv{Y}_0(v), \cdots, \rv{Y}_{\ell-1}(v)$, each of which realizes $0$ or $1$.
    
    \item For each {\em leaf} node $w$ of $\T$ (i.e., at depth $r$),  there is an item $\rv{Z}_w$ which realizes to one of the symbols $\sym$ or $\nil$.
\end{itemize}
For any non-leaf node $v$, we use the shorthand $\rv{Y}(v) = \langle\rv{Y}_0(v), \cdots, \rv{Y}_{\ell-1}(v)\rangle$ to denote all its $\rv{Y}$-items. The function $f$ is as defined earlier: $f(S)=|S\cap \{\sym\}|$. So $f(S) = 1$ if $\sym \in S$ (else $f(S) = 0$). 
The distribution $\D$ involves $s=2^{r\ell}$ scenarios with uniform probability. Each scenario is associated with a unique leaf node (i.e. $\rv{Z}$-item); note that the number of leaves is exactly $s$. 

For each leaf $w$, we define its corresponding scenario as follows. Let $v_0, \cdots, v_{r-1}, v_r =w$ denote the path from the root of $\T$ to $w$. For each depth $i=0,\cdots, r-1$, let node $v_{i+1}$ be the $B_i^{th}$ child of node $v_{i}$. Then, scenario $w$'s realizations are:
\begin{itemize}
\item The realization $Y_0(v_i), \cdots, Y_{\ell-1}(v_i)$ of $\rv{Y}(v_i)$ equals the binary representation of $B_i$, for each depth $0\le i\le r-1$. The  realization of all other $\rv{Y}$-items is $0$. 
\item The realization $Z_w=\sym$ and $Z_u=\nil$ for all leaf $u\ne w$.
\end{itemize}
In this way, the realizations at the depth $i$ node $v_i$ point to the ``correct'' depth $i+1$ node $v_{i+1}$ that should be probed next. This  completes the description of the instance.

\paragraph{Adaptive solution.} We now describe a good  $(r+1)$-round adaptive solution $\OPT$ of cost $r\ell + 1$. It is a natural top-down strategy. It starts by probing the $\ell$ items $\rv{Y}(v_0)$ at the root $v_0$, and based on their realization calculates the number $B_0$ represented by $Y_0(v_0),\cdots, Y_{\ell-1}(v_0)$; then it probes the  $\rv{Y}$-items at the $B_0^{th}$ child of $v_0$; and so on. After $r$ rounds, it is able to discern the unique active item $\rv{Z}_w$ and probes it (in the final round) to cover $f$. 

\subsection{Lower bound proof}

Consider any $r$-round solution $\A$ for the above instance. By Yao's minimax principle, we can assume that $\A$ is deterministic. We denote by random variables $\rv{L}_0, ..., \rv{L}_{r-1}$, the ordered lists used by $\A$ in each of its $r$ rounds. Moreover, 
$\rv{S}_0, ..., \rv{S}_{r-1}$ denote the sets of items probed by $\A$ in each round; note that each $\rv{S}_i$ is a prefix of $\rv{L}_i$. Also, $\rv{L}_i$ is deterministic conditioned on the realizations of $\rv{S}_0,\cdots, \rv{S}_{i-1}$.
The following proof strategy is similar to that in \cite{AAK19} for the independent setting; however, note that the instance itself is very different.

We use $w^*$ to denote the realized scenario (leaf node) and $\rv{u}_0, \cdots,  \rv{u}_r=w^*$ to denote the (random) nodes of $\T$ on the path from the root to $w^*$. 
For any depth $0\le i\le r$,  random variable $\rv{T}_i$ denotes the set of all items at all descendant nodes of $\rv{u}_i$ (including itself).  
Note that conditioned on $w^*$, all other random variables ($\rv{L}_i$, $\rv{S}_i$, $\rv{u}_i$, $\rv{T}_i$ etc.) are deterministic. For this reason, each random  event can be seen as a subset of the scenarios.

Roughly speaking, we show that one of the following events occurs: (i) either the size of probed set $\rv{S}_i$ is large for some round $i$, or (ii) if sets $\rv{S}_0, ..., \rv{S}_{i-1}$ are small, then with high probability $\rv{S}_i$ does not contain any item from $\rv{T}_{i+1}$.  
In the first case, we are immediately done. If the second case keeps occuring then the problem eventually reduces to an instance with  $r=1$ rounds (for which we already know a lower bound). 
For a contradiction, assume:
\begin{assumption}\label{asm:LB}
The expected cost of $\A$ is at most $\frac{N}{(4r)^2}$.
\end{assumption}

\def\F{{\cal F}}
\def\G{{\cal G}}
\def\cE{{\cal E}}

For each  $i = 0, ..., r$, define the following events.
\begin{itemize}
\item  $\F_i$ is the event that 
$$\left(\rv{S}_0\cup \cdots \rv{S}_{i-1}\right) \bigcap \rv{T}_{i}=\emptyset ,$$ i.e.,  no item from $\rv{T}_{i}$ is probed {\em before} round $i$ (i.e. in rounds $0,\cdots, i-1$). 
\item $\cE_i$ is the event that
$$|\rv{S}_j|\le \frac{N}{4r},\quad \mbox{ for all }j=0,\cdots, i-1,$$
i.e., at most $\frac{N}{4r}$ items are probed in each round $0,\cdots, i-1$.
\end{itemize}
Note that $|\F_0|=|\cE_0|=N^r$ as they include  all scenarios. 

\begin{lemma}
For any $i \in \{0, \cdots, r-1\}$, we have  
${|\cE_{i+1}\cap \F_{i+1}|} \ge {|\cE_{i }\cap \F_{i }|} - \frac{N^r}{2r}.$
\end{lemma}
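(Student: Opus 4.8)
The plan is to condition on the realized scenario/leaf $w^*$ and argue node-by-node along the root-to-leaf path, bounding the probability that the "small-round" event $\cE_i$ holds but the path-subtree $\rv{T}_{i+1}$ gets touched in round $i$. First I would observe that on the event $\cE_{i}\cap \F_{i}$, conditioned on the realizations of $\rv{S}_0,\dots,\rv{S}_{i-1}$, the adversary $\A$ has so far probed no item of $\rv{T}_i$, hence has learned nothing that distinguishes among the $N$ children of $\rv{u}_i$: by the construction of the instance, the only items whose realizations depend on which child is $\rv{u}_{i+1}$ are the $\rv{Y}$-items at $\rv{u}_i$ and the items inside $\rv{T}_i$. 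Since (on $\F_i$) none of the $\rv{T}_i$-items were probed, the posterior on $\rv{u}_{i+1}$ is uniform over the $N$ children of $\rv{u}_i$. The list $\rv{L}_i$ used in round $i$ is deterministic given this history, so it is independent of which child $\rv{u}_{i+1}$ is; and on $\cE_i$ at most $N/(4r)$ of its items are actually probed in round $i$.

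The key step is then: on $\cE_{i+1}$ (which adds the constraint $|\rv{S}_i|\le N/(4r)$), the set $\rv{S}_i$ probed in round $i$ is a fixed prefix of size $\le N/(4r)$ of the history-determined list $\rv{L}_i$. The event "$\rv{S}_i \cap \rv{T}_{i+1}\neq\emptyset$" (given $\cE_i\cap\F_i$ and given $\rv{S}_i$) asks whether any of these $\le N/(4r)$ probed items lies in the subtree hanging off the $\rv{u}_{i+1}$-child. Since $\rv{u}_{i+1}$ is uniform over the $N$ children and the subtrees $\rv{T}_{i+1}$ associated with distinct children are disjoint, each probed item lies in at most one such subtree, so it "kills" $\rv{T}_{i+1}$ for at most one of the $N$ equally likely choices of $\rv{u}_{i+1}$. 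Hence $\Pr[\rv{S}_i\cap\rv{T}_{i+1}\neq\emptyset \mid \cE_i\cap\F_i,\ \rv{S}_i] \le |\rv{S}_i|/N \le \frac{1}{4r}$. One subtlety to handle carefully: $\F_{i+1}$ requires $(\rv{S}_0\cup\cdots\cup\rv{S}_i)\cap\rv{T}_{i+1}=\emptyset$, and $\rv{T}_{i+1}\subseteq\rv{T}_i$, so the rounds $0,\dots,i-1$ part is already guaranteed by $\F_i$; only the round-$i$ part is new, which is exactly what the bound above controls. Therefore
\[
\Pr[\cE_{i+1}\cap\F_{i+1}] \ \ge\ \Pr[\cE_{i+1}\cap\F_i] - \Pr\big[\cE_i\cap\F_i\ \text{and}\ \rv{S}_i\cap\rv{T}_{i+1}\neq\emptyset\big] \ \ge\ \Pr[\cE_{i+1}\cap\F_i] - \tfrac{1}{4r},
\]
and since $\cE_{i+1}\subseteq\cE_i$ we get $\Pr[\cE_{i+1}\cap\F_i]\ge \Pr[\cE_i\cap\F_i] - \Pr[\cE_i\setminus\cE_{i+1}]$. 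It remains to bound $\Pr[\cE_i\setminus\cE_{i+1}] = \Pr[\cE_i \text{ and } |\rv{S}_i| > N/(4r)]$: by Assumption~\ref{asm:LB} the expected total cost of $\A$ is at most $N/(4r)^2$, hence the expected number of items probed in round $i$ is at most $N/(4r)^2$, and Markov's inequality gives $\Pr[|\rv{S}_i|>N/(4r)]\le \frac{N/(4r)^2}{N/(4r)} = \frac{1}{4r}$. Combining, $\Pr[\cE_{i+1}\cap\F_{i+1}] \ge \Pr[\cE_i\cap\F_i] - \frac{1}{4r} - \frac{1}{4r} = \Pr[\cE_i\cap\F_i] - \frac{1}{2r}$. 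Multiplying by $N^r$ (the total number of scenarios, each of probability $N^{-r}$) converts these probabilities into cardinalities and yields $|\cE_{i+1}\cap\F_{i+1}| \ge |\cE_i\cap\F_i| - \frac{N^r}{2r}$, as claimed.

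The main obstacle I expect is making the "uniform posterior on $\rv{u}_{i+1}$" claim fully rigorous: one has to check that conditioning on the entire observed transcript of rounds $0,\dots,i-1$ (not just on the membership events $\cE_i,\F_i$) still leaves $\rv{u}_{i+1}$ uniform over the $N$ children of $\rv{u}_i$, and that the round-$i$ list $\rv{L}_i$ — being a deterministic function of that transcript — is therefore independent of $\rv{u}_{i+1}$. This is where the precise structure of the instance is used: away from $\rv{T}_i$, all $\rv{Y}$-realizations are the fixed string $0$ regardless of $w^*$, and the $\rv{Y}(\rv{u}_0),\dots,\rv{Y}(\rv{u}_{i-1})$ that were probed only encode $B_0,\dots,B_{i-1}$, i.e. the path down to $\rv{u}_i$, and carry no information about the choice of $\rv{u}_{i+1}$. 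Once this independence is in hand, the counting argument (each probed item charges to at most one child-subtree) is routine.
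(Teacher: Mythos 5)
Your overall strategy is the same as the paper's: both split the loss $|\cE_i\cap\F_i| - |\cE_{i+1}\cap\F_{i+1}|$ into two pieces — scenarios lost because round $i$ is ``large'' (handled by Markov's inequality, contributing $N^r/(4r)$), and scenarios lost because a round-$i$ probe enters $\rv{T}_{i+1}$ (handled by the ``uniform child plus each probed item charges to at most one child-subtree'' count, contributing another $N^r/(4r)$). The Markov step and the observation that, on $\F_i$, the round-$\le i-1$ transcript is uninformative about which of the $N$ children of $\rv{u}_i$ is $\rv{u}_{i+1}$, are exactly the paper's argument (the paper phrases it by partitioning scenarios into $\G(u)$, you phrase it in probability language).

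There is one genuine misstep, though. The bound you write,
\[
\Pr\bigl[\rv{S}_i\cap\rv{T}_{i+1}\neq\emptyset \ \big|\ \cE_i\cap\F_i,\ \rv{S}_i\bigr]\ \le\ \frac{|\rv{S}_i|}{N},
\]
does not follow from the uniform-posterior claim, because conditioning on $\rv{S}_i$ breaks that uniformity: $\rv{S}_i$ is a prefix of $\rv{L}_i$ whose \emph{length} is determined by $\A$'s stopping rule applied to the realizations seen during round $i$, and those realizations do depend on $\rv{u}_{i+1}$. For example, if $\rv{L}_i$ begins with a leaf item $\rv{Z}_{w_1}$ and $\A$ halts upon observing $\sym$, then $\{\rv{S}_i=\{\rv{Z}_{w_1}\}\}$ forces $w^*=w_1$, hence pins down $\rv{u}_{i+1}$ and makes the conditional probability equal to $1$, not $1/N$. (Also, $|\rv{S}_i|\le N/(4r)$ holds only on $\cE_{i+1}$, not on $\cE_i$, so the intermediate event should be $\cE_{i+1}\cap\F_i$.)

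The fix is exactly what the paper does and what your phrase ``fixed prefix of the history-determined list $\rv{L}_i$'' already suggests: condition only on the round-$\le i-1$ transcript, which fixes $\rv{L}_i$ and hence its deterministic first-$N/(4r)$ prefix $\rv{P}$; on $\cE_{i+1}$ you have $\rv{S}_i\subseteq\rv{P}$; and since $\rv{P}$ is independent of $\rv{u}_{i+1}$ given the transcript, your counting gives $\Pr[\rv{P}\cap\rv{T}_{i+1}\neq\emptyset\mid\text{transcript}]\le |\rv{P}|/N\le 1/(4r)$. Therefore $\Pr[\cE_{i+1}\cap\F_i\cap\{\rv{S}_i\cap\rv{T}_{i+1}\neq\emptyset\}]\le 1/(4r)$, and the rest of your chain goes through. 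With these two corrections your proof becomes a probabilistic restatement of the paper's scenario-counting proof.
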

\begin{proof}
Note that $\E[|\rv{S}_{i }|]\le N/(4r)^2$ by Assumption~\ref{asm:LB}. So,
$$\Pr[\cE_{i}\setminus \cE_{i+1}] \le \Pr\left[|\rv{S}_{i }| >  \frac{N}{4r}\right]\le \frac{1}{4r},$$  
where the first inequality is by definition of events $\cE_i$s and the second is Markov's inequality. As all scenarios have uniform probability, $|\cE_{i}\setminus \cE_{i+1}| = N^r\cdot \Pr[\cE_{i}\setminus \cE_{i+1}] \le   N^r/(4r)$.

Let $\G:=\cE_{i+1}\cap \F_{i }$. Note that 
\begin{equation}
\label{eq:LB:scn-step}
|\G|\ge |\cE_{i }\cap \F_{i }| - |\cE_i \setminus \cE_{i+1}| \ge |\cE_{i }\cap \F_{i }| -  N^r/(4r).
\end{equation} 
We now partition scenarios in $\G$ according to the depth $i$ nodes in tree $\T$. For any node $u$ at depth $i$, let $\G(u)\sse \G$ denote the scenarios that appear in the subtree below $u$. Note $|\G| = \sum_u |\G(u)|$.

Fix any depth $i$ node $u$. Note that under any scenario of $ \G(u)$, set $\rv{T}_i$ consists of all items $\T(u)$ in the subtree below $u$.  
 We first claim that 
\begin{quote}
 The list $\rv{L}_i$ in round $i$ is the {\em same} for all scenarios $w \in \G(u)$. 
\end{quote} 
  As $w\in \F_{i}$, we know that no item in $\rv{T}_{i}=\T(u)$ is probed  in rounds $0,\cdots, i-1$  (under scenario $w$). Moreover, by construction of our instance, each item $\rv{I} \in \rv{X}\setminus \rv{T}_i=\rv{X}\setminus \T(u)$ has the same realization $I$ under all scenarios in $\G(u)$. So the realizations seen before round $i$ are identical under all scenarios in $\G(u)$, which implies that the list $\rv{L}_i$ in round $i$ is also the same (for all scenarios in $\G(u)$). 
  
  Let $\rv{P}_u$ denote the first $N/(4r)$ items in list $\rv{L}_i$ under scenarios $\G(u)$. Note that $\rv{S}_i\sse \rv{P}_i$ under all scenarios in $\G(u)\sse \cE_{i+1}$: this follows from the definition of $\cE_{i+1}$. 

Let $\G'(u) = \G(u)\cap \F_{i+1}$. We now claim that 
\begin{equation}
\label{eq:LB-scn-count}
|\G(u)\setminus \G'(u)| \le   N^{r-i}/(4r).
\end{equation}
 To see this, let $N_u$ denote all children nodes of $u$ and $C_u\sse N_u$ denote those children  that do not contain any $\rv{P}_u$-item in in their subtree. As $u$ has $N$ children, $|C_u|\ge N - |\rv{P}_u|$. Now consider any depth $i+1$ node $v\in C_u$ and scenario $w\in \G(u)$ that lies in the subtree below $v$. Note that, under scenario $w$, $\rv{T}_{i+1}=\T(v)$ the items in subtree below $v$. We now claim that  
$w\in \F_{i+1}$ (and hence $w\in \G'(u)$) because:
\begin{itemize}
\item no item probed in rounds before $i$ is in  $\rv{T}_i \supseteq \rv{T}_{i+1}$ as $w\in \G(u)\sse \F_i$.
\item no item probed in round $i$ is in $\rv{T}_{i+1}=\T(v)$; recall that  $\rv{S}_i\sse \rv{P}(u)$ (as $w\in \G(u)\sse \cE_{i+1}$) and $\rv{P}(u)\cap \T(v)=\emptyset$.
\end{itemize}
It follows that every scenario in $\G(u)\setminus \G'(u)$ appears in a subtree below a node of $N_u\setminus C_u$. Hence, 
 $|\G(u)\setminus \G'(u)| \le |N_u\setminus C_u| \cdot N^{r-i-1} \le |\rv{P}(u)|\cdot N^{r-i-1}$ as the number of scenarios under each such subtree is $N^{r-i-1}$. Equation \eqref{eq:LB-scn-count} follows by using $|\rv{P}(u)|\le N/(4r)$. 

Finally, let $\G' = \cup_u \G'(u)$ over all depth-$i$ nodes $u$.  Note that $\G'=\G\cap \F_{i+1}\sse \cE_{i+1}\cap \F_{i+1}$.
We have 
$$|\G'| = \sum_u |\G'(u)| \ge  \sum_u |\G(u)|  - N^i \frac{N^{r-i}}{4r} = |\G| - \frac{N^r}{4r} \ge |\cE_i\cap \F_i| - \frac{N^r}{2r}.$$
Above, the first inequality uses \eqref{eq:LB-scn-count} and the fact that the number of depth-$i$ nodes is $N^i$. The last inequality uses \eqref{eq:LB:scn-step}. This completes the proof.
\end{proof}

Applying this lemma repeatedly, we obtain $|\cE_{r }\cap \F_{r }|\ge |\cE_{0}\cap \F_{0}|-\frac{rN^r}{2r}\ge N^r/2$. Consider any scenario $w\in \cE_r\cap \F_r$. Note that $w\in \F_{r}$ means that, under scenario $w$, no $\rv{T}_r$ item is probed in rounds $0,\cdots, r-1$, i.e., all $r$ rounds. However, we need to probe  item $\rv{Z}_w$ in $\rv{T}_r$ in order to cover $f$ (under scenario $w$).  This is a contradiction to solution $\A$ being feasible and hence Assumption~\ref{asm:LB}. 

Thus, the cost of any $r$-round-adaptive solution must be at least $\frac{N}{(4r)^2}=\Omega(2^\ell / r^2)$. Using the fact that there is an adaptive solution of cost at most $r\ell+1$, this implies an adaptivity gap of $\Omega(\frac{1}{r^3\ell}2^\ell)=\Omega(\frac{s^{1/r}}{r^2\log s})$ as 
$s = 2^{r\ell}$. This completes the proof of  Theorem~\ref{thm:scn-lb}. Note that the above instance can also be viewed as  one for optimal decision tree: by just ignoring the $\rv{Z}$-items, and with the goal of identifying the realized scenario.  
\fi


\section{Applications}

\subsection{Stochastic Set Cover} 
The stochastic set cover problem is a special case of   stochastic
submodular cover. It abstracts the unreliable sensor deployment
problem mentioned in \S\ref{sec:intro}. The input is a universe $E$ of
$d$ objects and a collection $\{\rv{X}_1 ,\ldots, \rv{X}_m \}$ of $m$
items. Each item $\rv{X}_i$ has a cost $c_i \in \R_+$ and corresponds
to a random  subset of objects (with a known explicit
distribution). Different items are independent of each other. The goal
is to select a set of items such that the realized subsets cover $E$
and the expected cost is minimized. This problem was first studied by
\cite{GV06}, where it was shown that the {\em adaptivity gap}
(relative gap between the best non-adaptive and adaptive solutions) is
between $\Omega(d)$ and $O(d^2)$. The correct adaptivity gap for
stochastic set cover was posed as an open question by
\cite{GV06}. Subsequently, \cite{AAK19} made significant progress, by
showing that the adaptivity gap is $O(d\log d \cdot \log
(mc_{max}))$. However, as a function of the natural parameter $d$
(number of objects), the best adaptivity gap  remained $O(d^2)$
because the number of stochastic sets $m$ and maximum cost $c_{max}$
may be arbitrary larger than $d$.

As a corollary of our result for  (independent) stochastic submodular cover, we obtain a nearly-tight $O(d\log d)$ adaptivity gap for this problem. In fact, for any $r\ge 1$, we obtain an $r$-round adaptive algorithm that costs at most $O(d^{1/r} \log d)$ times the optimal (fully) adaptive cost. This nearly matches the $\Omega(\frac{1}{r^3} d^{1/r})$ gap for $r$-round adaptive algorithms shown in \cite{AAK19}. We note that when $r=\log d$, we obtain an $O(\log d)$-approximation algorithm with a logarithmic number of adaptive rounds;
this approximation ratio is the best possible even for deterministic set cover. 

\subsection{Optimal Decision Tree} 
The optimal decision tree problem captures problems from a variety of fields including, but not limited to, learning theory, medical diagnosis, pattern recognition, and boolean logic (see \cite{Moret82} and \cite{Murthy97} for a comprehensive survey on the use of decision trees). In an instance of optimal decision tree ($\ODT$) we are given $s$ hypotheses from which an unknown random hypothesis $y^*$ is drawn. There is a collection of $m$ tests, where test $e$ costs $c_e\in \R_+$ and returns a positive result if $y^*$ lies in some subset $T_e\sse [s]$ (and a negative result if $y^*\not\in T_e$). The goal is to identify $y^*$ using tests of minimum expected cost. 

Note that $\ODT$ is a special case of the scenario submodular cover problem. We can transform an instance of $\ODT$ into an instance for scenario submodular cover. We associate scenarios with hypotheses. For each test $e$, we have an item $\rv{X}_e$ of cost $c_e$. Let $X_e(y)$ denote the realization of $\rv{X}_e$ under scenario $y$. We set $$ {X}_e(y) = \begin{cases} [s] \setminus T_e, \qquad \text{ if } y \in T_e \\ T_e, \qquad \qquad \text{ otherwise.} \end{cases} $$ The joint distribution $\D$ realizes to $\langle X_1(y), ..., X_m(y) \rangle$ with probability $p_y$. We define the submodular function $g(S) = |\bigcup_{X_e \in S} X_e |$ where $X_e$ corresponds to the realization of item $\rv{X}_e$. Observe that $g(S)$ computes the number of eliminated scenarios. Finally, we define $f(S) = \min\left( g(S), s-1 \right)$; that is, we truncate full coverage at $s-1$. This is because we can identify the realized scenario once $s-1$ scenarios have been eliminated. Noting the $f$ is an integer-valued monotone submodular function completes the description of the instance for scenario submodular cover. 
For any $r\ge 1$, we obtain an $r$-round adaptive algorithm that costs at most $O(r s^{1/r}\log s)$ times the optimal (fully) adaptive cost. We also obtain a nearly matching lower bound: any $r$-round adaptive algorithm must incur approximation ratio $\Omega(\frac{1}{r^3} s^{1/r})$.

\ignore{\subsection{Influence Maximization}

Recall the viral marketing scenario from \S\ref{sec:intro}. We want to advertise and generate demand for a new product among users in a social network. A common strategy is to give away the product to a ``seed'' set of users, with the hope that these users influence their friends to buy. A classic model for influence is  the independent cascade model~\cite{GLM01}, where each newly-influenced node influences each of its neighbors with some probability; this process repeats until no further node gets influenced. Formally, let $G = (V, E)$ be the underlying directed social network, along with weights $p_e$ for edge $e$ where $0 \leq p_e \leq 1$. Here $p_e$ denotes the probability that an edge $e = (u, v)$ is \emph{active}. This means that if $u$ is influenced, then $u$ influences $v$. Let $\sigma(S)$ denote the expected number of influenced nodes when $S$ is the seed set. Let $\mathcal{G}$ be a distribution on subgraphs of $G$ that are realized according to probabilities $p_e$. Then, we can write $\sigma(S) = \E_{H \sim \mathcal{G}}[\sigma_H(S)]$ where $\sigma_H(S)$ denotes the number of nodes reachable from $S$ in $H$. Note that $\sigma_H(S)$ is a monotone submodular function. Thus $\sigma(S)$, which is a linear combination of such functions, is also monotone and submodular. This problem has a natural adaptive setting. Instead of selecting a fixed set of people in advance, we may first select a small group and observe the set of people that are \emph{directly} influenced. Based on these observations, we can select the next group of people to offer the product to, and so on.

We cast this as an instance of the stochastic submodular cover problem. We associate item $\rv{X}_v$ with node $v$. Let $\Gamma(v)$ denote the set of edges leaving $v$. Then $\rv{X}_v$ realizes to some subset of $\Gamma(v)$ according to probabilities $p_e$, ensuring that the stochastic items are independent. Formally, we set $U = \{ (v; \gamma(v)) \mid v \in V, \gamma(v) \sse \Gamma(v)\}$.

For $S \sse V$, $F \sse E$, we define function $f(S; F) = \E_{H \sim \mathcal{G}_F}[\sigma_H(S)]$ where $\mathcal{G}_F$ is a distribution over graphs $H$ that contain the set of edges $F$.

}

\subsection{Shared Filter Evaluation}
This problem was introduced by \cite{MunagalaSW07} in the context of executing multiple queries on a dataset with shared (boolean) filters. 
There are $n$ independent   ``filters'' $\rv{X}_1,\cdots, \rv{X}_n$, where  each filter $i$  has cost $c_i$ and evaluates to {\em true} with probability $p_i$ (and {\em false} otherwise). There are also $m$ conjunctive queries, where each query $j\in [m]$ is the ``and'' of some subset $Q_i\sse [n]$ of the filters. So query $j$ is true iff $X_i=true$ for all $i\in Q_j$. The goal is to evaluate all queries at the minimum expected cost. This can be modeled as an instance of stochastic submodular cover. The items correspond to filters. The groundset $U=\{T_i,F_i \}_{i=1}^n$ where $T_i$ (resp. $F_i$) corresponds to filter $i$ evaluating to {\em true} (resp. {\em false}). The submodular function is:
$$f(S) :=  \sum_{j=1}^m \min\left\{ |Q_j|\, ,\, |Q_j|\cdot |S\cap \{F_i : i\in Q_j\}| +  |S\cap \{T_i : i\in Q_j\}|\right\}.$$
Note that the term for query $Q_j$  is $|Q_j|$ iff the query's value is determined: it is {\em false} when a single false filter is seen and it's {\em true} when all its filters are true.  The maximal value of the function is $Q=\sum_{j=1}^m |Q_j| \le mn$. So, for any $r\ge 1$, we obtain an $r$-round adaptive algorithm with cost at most $O((mn)^{1/r}\cdot \log(mn))$ times the optimal adaptive cost. 

\subsection{Correlated Knapsack Cover}
There are $n$ items, each of cost $c_i$ and random (integer) reward $\rv{X}_i$. The rewards may be correlated and are given by a joint distribution $\D$ with $s$ scenarios. The exact reward of an item is only known when it is probed. Given a target $Q$, the goal is to probe some items so that the total realized reward is at least $Q$ and we  minimize the expected cost. This is a special case of scenario submodular cover. The groundset  $U=\{(i,v) \,:\, i\in[n], 0\le v\le Q\}$ where element $(i,v)$ represents item $\rv{X}_i$ realizing to $v$. Any realization of value at least $Q$ is treated as equal to $Q$: this is because  the target is itself $Q$. For each element $e=(i,v)\in U$ let $a_e=v$ denote its value. Then, the submodular function is $f(S)=\min\{\sum_{e\in S} a_e\, ,\, Q\}$ for $S\sse U$. For any $r\ge 1$, we obtain an $r$-round adaptive algorithm with cost at most $O(s^{1/r}(\log s+r \log Q))$ times the optimal adaptive cost. 

\ignore{ 
\subsection{Stochastic Score Classification}
The stochastic score classification ($\mathtt{SSClass}$) problem was introduced by \cite{GkenosisGHK18}; applications of $\mathtt{SSClass}$ include, but are not limited to, assessing disease risk of a patient, assigning letter grades to students, and giving a quality rating to a product. Formally, an input to $\mathtt{SSClass}$ consists of $n$ input variables $\rv{X}_1, ..., \rv{X}_n$ such that every $\rv{X}_i$ realizes to $1$ independently with probability $p_i$, and a \emph{score} $r({X}) = \sum_{i=1}^n a_i X_i$. The realization $X_i$ of variable $\rv{X}_i$ can be determined by performing a query of cost $c_i \in \R_+$. Additionally, the input consists of $B+1$ integers $\alpha_1, ..., \alpha_{B+1}$ such that \emph{class} $j$ corresponds to the interval $\{\alpha_j, ..., \alpha_{j+1} - 1\}$. The $\alpha_j$ define a \emph{score classification function} $h: \{0, 1\}^n \to \{1, ..., B\}$; we can view the $\alpha_j$ values as ``cutoffs'' for the classes. An \emph{evaluation strategy} queries the input variables in some order (which may be adaptive) so as to determine $h(X)$ where $X$ is a realization of $\rv{X}$. The goal is to design an evaluation strategy for $h$ with minimum expected cost.

For example, consider a situation where a doctor wants to classify a patient into one of three risk classes: Low, Medium, or High. Suppose there are $5$ tests available to the doctor. A \emph{positive} test corresponds to the presence of some condition or disease. The score assigned to the patient is the sum of all the tests that show up positive; that is, $a_i = 1$ for $i = 1, ..., 5$. Furthermore, let $\alpha_1 = 0$, $\alpha_2 = 1$, $\alpha_3 = 5$, $\alpha_4 = 6$. Thus, the Medium risk class corresponds to scores in $\{2, 3, 4\}$. Suppose that after performing $3$ tests, the score for patient $A$ is $2$, and the score for patient $B$ is $3$. Patient $A$ will be classified as Medium risk irrespective of the result of the remaining two tests, so testing can be stopped. However, the class of patient $B$ cannot be determined with certainty, and hence testing must continue. 

An instance of stochastic score classification can be cast as an instance of stochastic submodular cover. We have $U = \{0,1\}$, and each $\rv{X}_1, ..., \rv{X}_n$ correspond to the collection of random variables; each $X_i = \{1\}$ independently with probability $p_i$, and $c_i$ corresponds to the cost of probing $\rv{X}_i$.
}

\section{Computational Results} \label{sec:comp-results}

We provide a summary of computational results of our $r$-round adaptive algorithms for the stochastic set cover and optimal decision tree problems. We conducted all of our computational experiments using Python $3.8$ and Gurobi $8.1$ with a $2.3$ Ghz Intel Core $i5$ processor and $16$ GB $2133$ MHz LPDDR3 memory.

\subsection{Stochastic Set Cover}

\paragraph{Instances.} We use the Epinions network\footnote{http://snap.stanford.edu/} and a Facebook messages dataset described in \cite{nr} to generate instances of the stochastic set cover problem. The Epinions network consists of $75 \ 879$ nodes and $508 \ 837$ directed edges. We compute the subgraph induced by the top $1000$ nodes with the highest out-degree (this subgraph has $57 \ 092$ directed edges) and use this to generate the stochastic set cover instance. The Facebook messages dataset consists $1 \ 266$ nodes and $6 \ 451$ directed edges. Given an underlying directed graph, we generate an instance of the stochastic set cover problem as follows. We associate the ground set $U$ with the set of nodes of the underlying graph. We associate an item with each node. Let $\Gamma(u)$ denote the out-neighbors of $u$. We sample a subset of $\Gamma(u)$ using the binomial distribution with $p = 0.1$ for $500$ times. Let $S \sse \Gamma(u)$ be sampled $\alpha_S$ times: then $\rv{X}_u$ realizes to $S \cup \{u\}$ with probability $\alpha_S / 500$. This ensures that $\rv{X}_u$ always covers $u$. We set $f$ to be the coverage function and set $Q = \delta n$ where $n$ represents the number of nodes in the underlying network. We use $\delta= 0.5$ for the Epinions network. However, since the Facebook messages network is sparse, we set $\delta = 0.25$ in the second instance.

\ifICMLVersion
\begin{figure}
     \centering
     \begin{subfigure}[b]{0.4\textwidth}
         \centering
         \includegraphics[width=\textwidth]{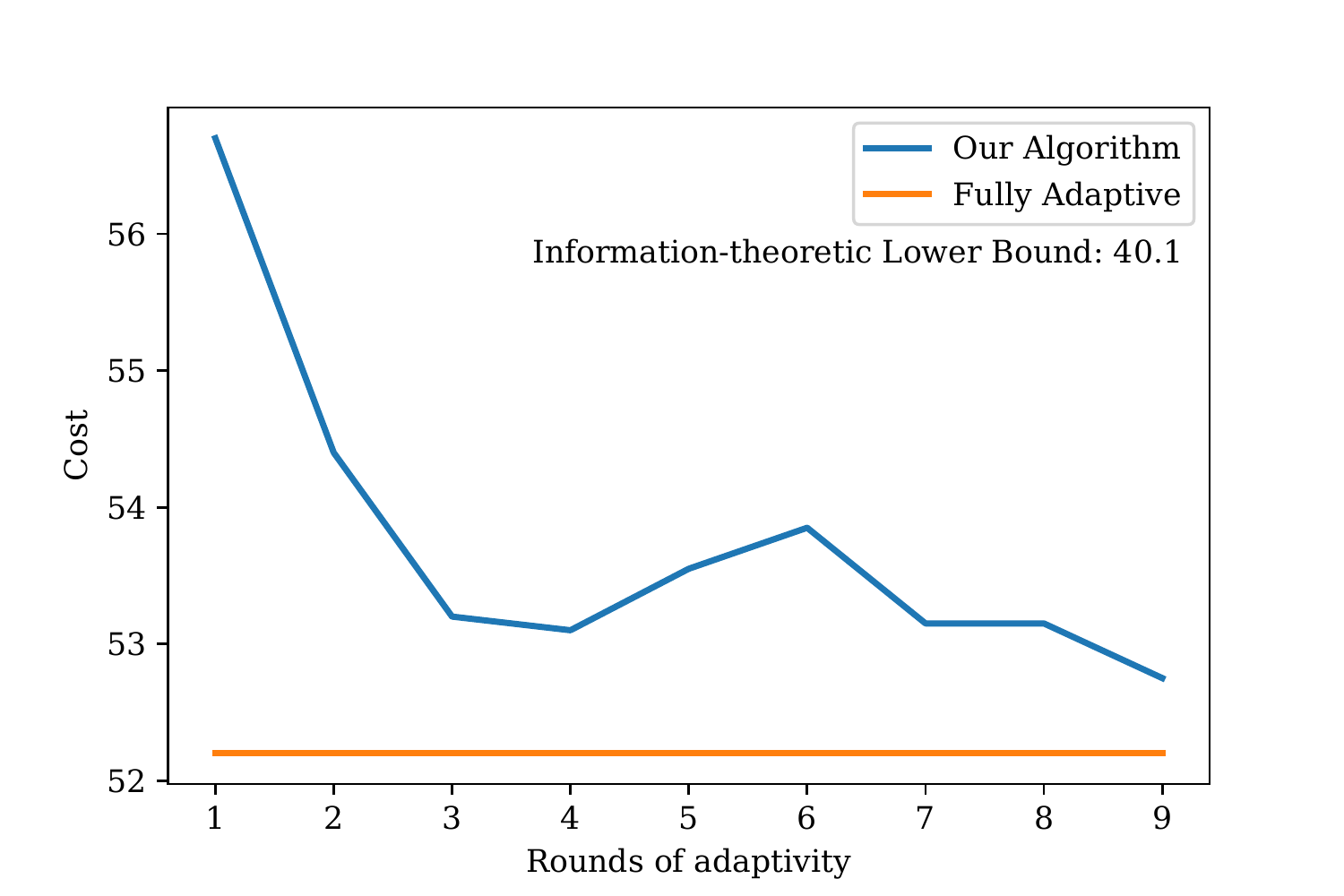}
         \caption{Epinions network}
         \label{fig:comp-ssc-epinions-rounds}
     \end{subfigure}
     \qquad 
     \begin{subfigure}[b]{0.4\textwidth}
         \centering
         \includegraphics[width=\textwidth]{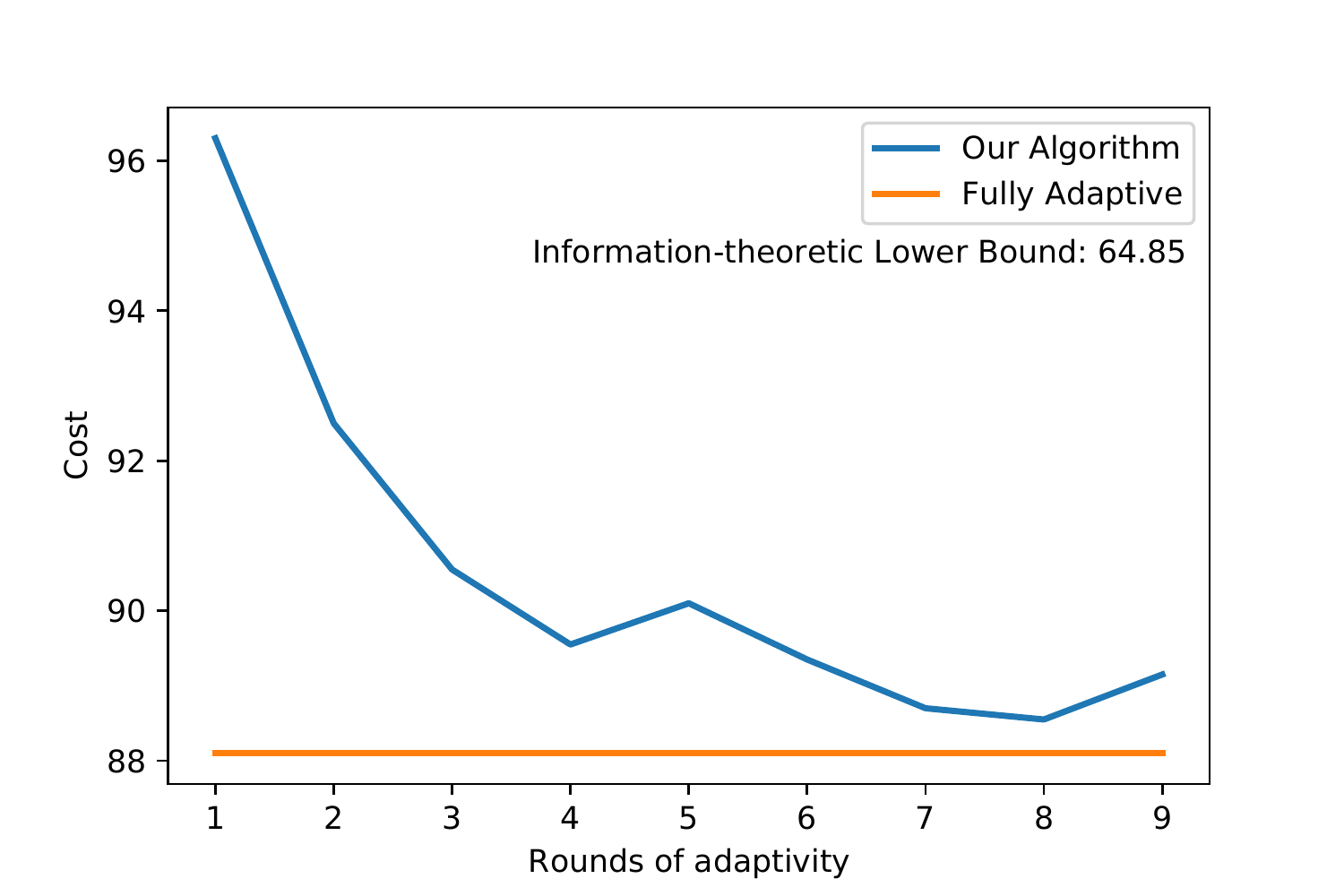}
         \caption{Facebook messages network}
         \label{fig:comp-odt-fb-rounds}
     \end{subfigure}
        \caption{Computational results for Stochastic Set Cover}
        \label{fig:comp-ssc-rounds}
\end{figure}
\else
\begin{figure}[h!]
     \centering
     \begin{subfigure}[b]{0.45\textwidth}
         \centering
         \includegraphics[width=\textwidth]{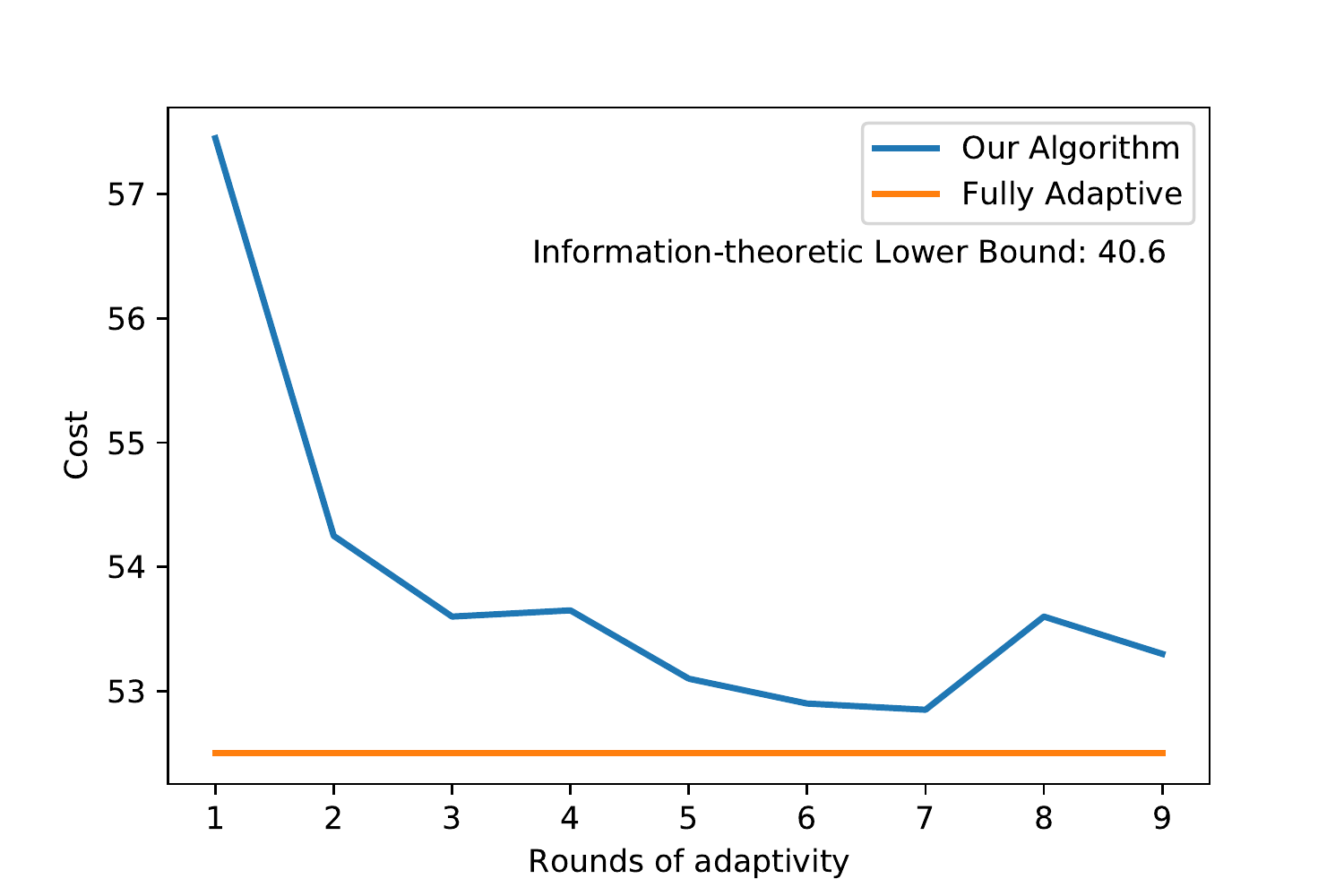}
         \caption{Epinions network}
         \label{fig:comp-ssc-epinions-rounds}
     \end{subfigure}
     \qquad 
     \begin{subfigure}[b]{0.45\textwidth}
         \centering
         \includegraphics[width=\textwidth]{figures_ssc/ia-fb-messages-final.pdf}
         \caption{Facebook messages network}
         \label{fig:comp-odt-fb-rounds}
     \end{subfigure}
        \caption{Computational results for Stochastic Set Cover}
        \label{fig:comp-ssc-rounds}
\end{figure}
\fi
\ifICMLVersion
\vspace{-0.2in}
\fi
\paragraph{Results.} We test our $r$-round adaptive algorithm on the two kinds of instances described above. We vary $r$ over integers in the interval $[1, \log n]$, where $n \approx 1000$. To compute an estimate of  the expected cost, we sample new realizations $20$ times and take the average cost incurred by the algorithm over these trials. In each trial, we also solve an  integer linear program (using the Gurobi solver) to determine the optimal \emph{offline} cost to cover $Q$ elements for the given realization: we use the average cost over the trials as an estimate on an information-theoretic lower bound: no adaptive policy can do better than this lower bound. In fact, the gap between this information-theoretic lower bound and  an optimal adaptive solution may be as large as $\Omega(Q)$ on worst-case instances. We observe that in both cases, the expected cost of our solution after only a few rounds of adaptivity is within $50\%$ of the information-theoretic lower bound. Moreover, in $6-7$ rounds of adaptivity we notice a decrease of $\approx 8\%$ in the expected  cost and these solutions are nearly as good as fully adaptive solutions. We plot this trend in Figure~\ref{fig:comp-ssc-rounds}. Finally, note that the increase in expected cost with rounds of adaptivity (see \Cref{fig:comp-ssc-epinions-rounds}) can be attributed to the the probabilistic nature of our algorithm (and the experimental setup). We also notice this in the next batch of experiments.

\subsection{Optimal Decision Tree}

\ifICMLVersion
\begin{figure*}
    \begin{subfigure}[b]{0.35\textwidth}
        \centering
        \includegraphics[width=\textwidth]{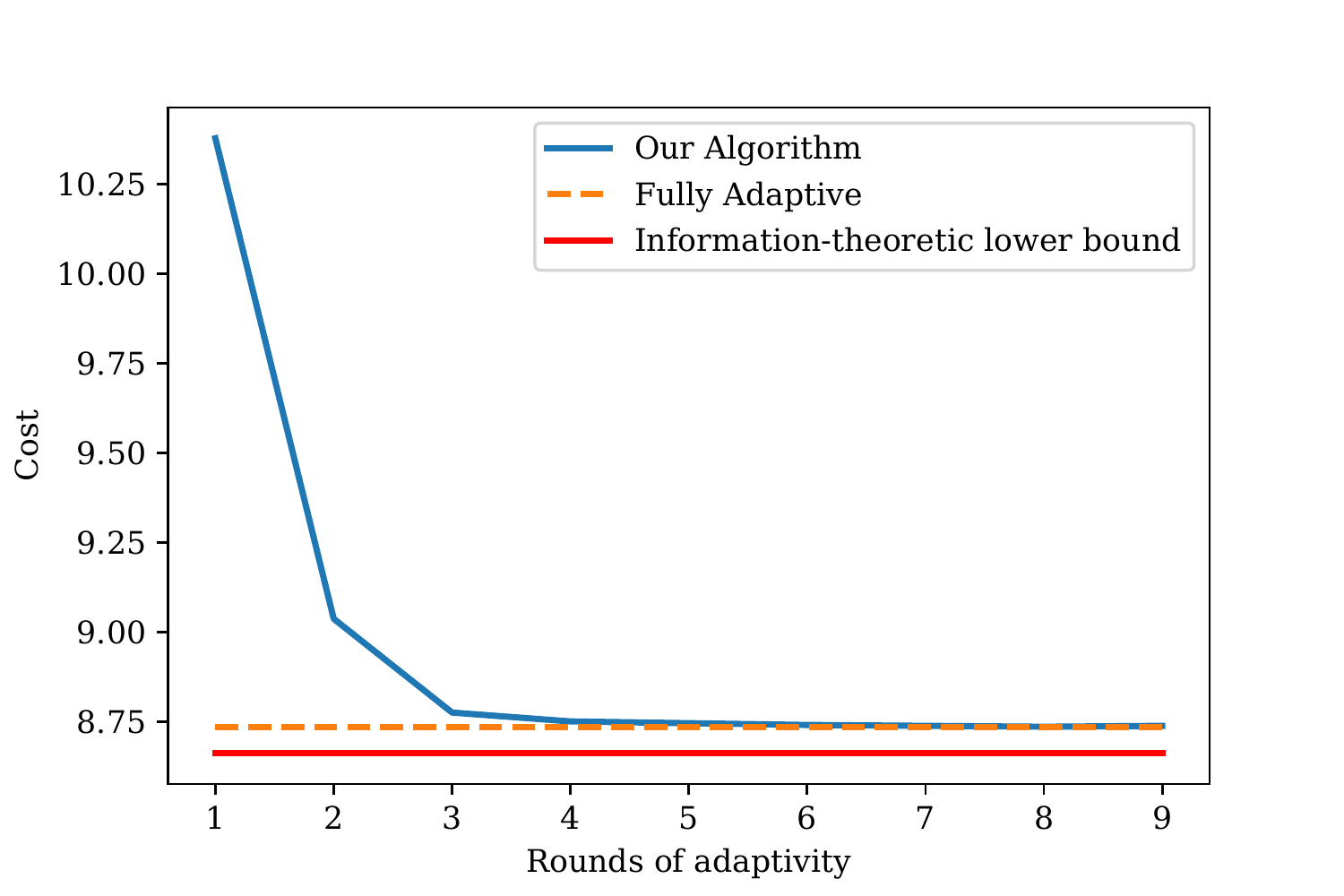}
        \caption{$\WISERu$}
        \label{fig:comp-odt-unit}
    \end{subfigure}
    \begin{subfigure}[b]{0.35\textwidth}
        \centering
        \includegraphics[width=\textwidth]{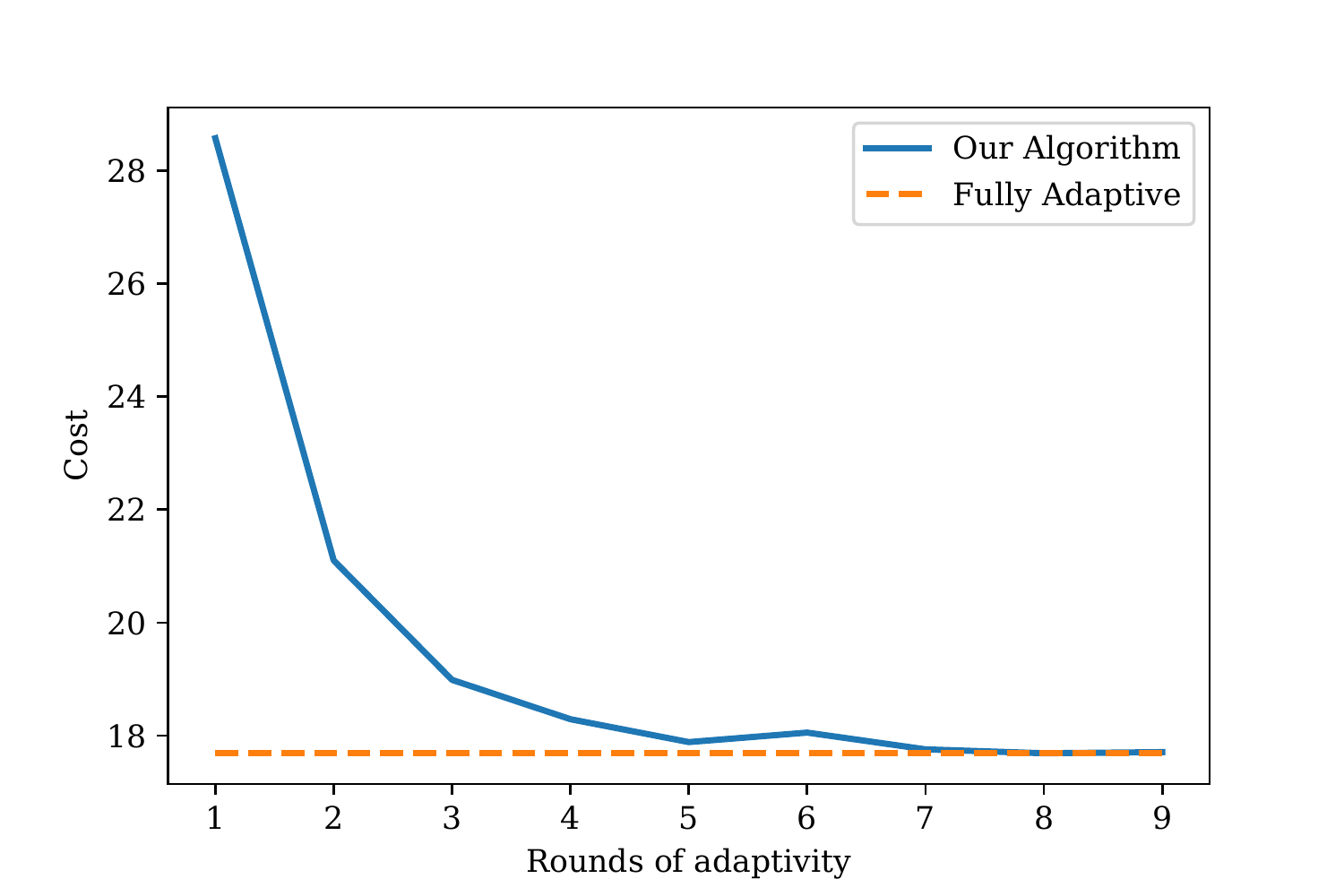}
        \caption{$\WISERr$}
        \label{fig:comp-odt-random}
    \end{subfigure}
    \begin{subfigure}[b]{0.35\textwidth}
         \centering
         \includegraphics[width=\textwidth]{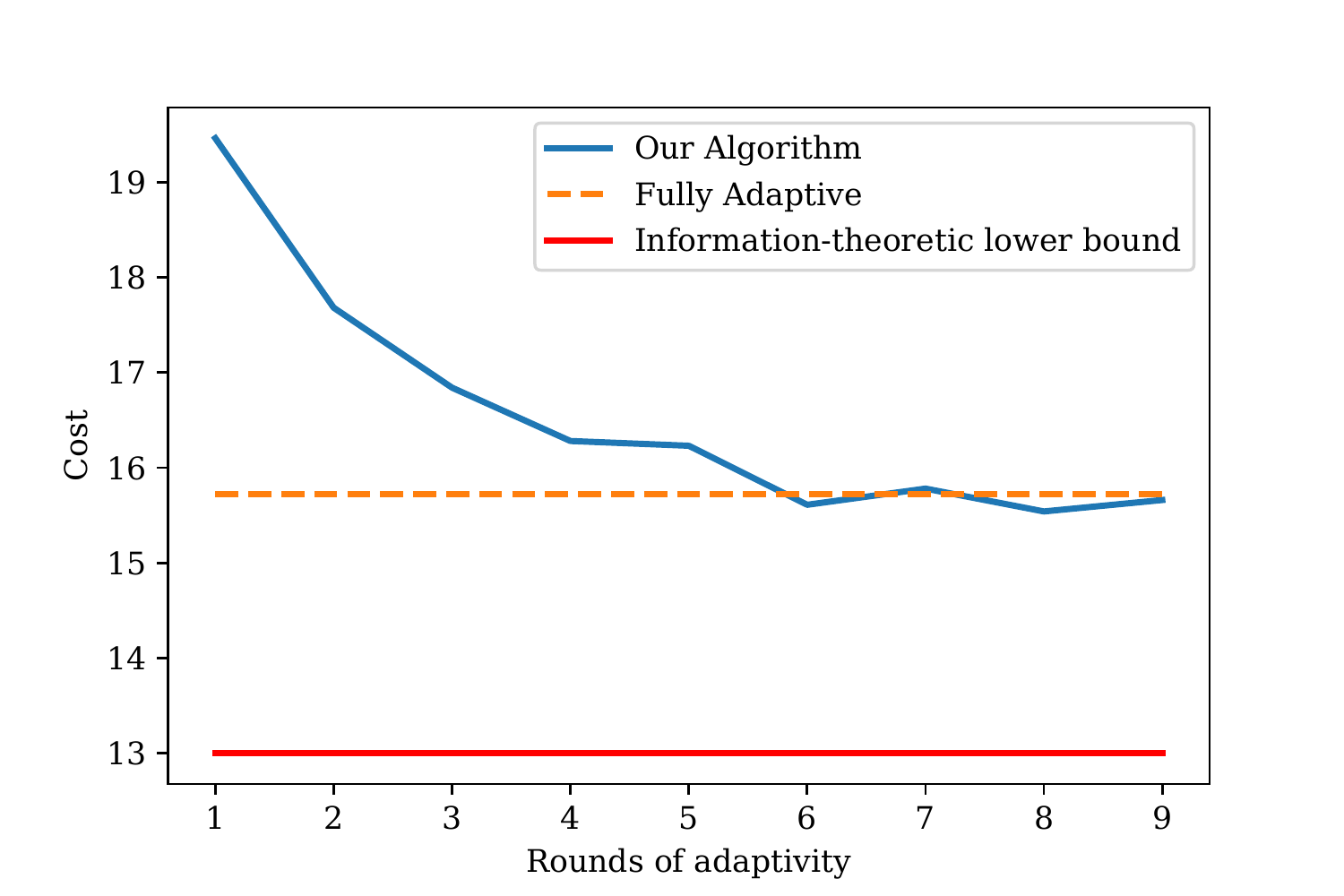}
         \caption{$\SYNu-0.2$}
         \label{fig:comp-odt-syn-1}
     \end{subfigure}
     \begin{subfigure}[b]{0.35\textwidth}
         \centering
         \includegraphics[width=\textwidth]{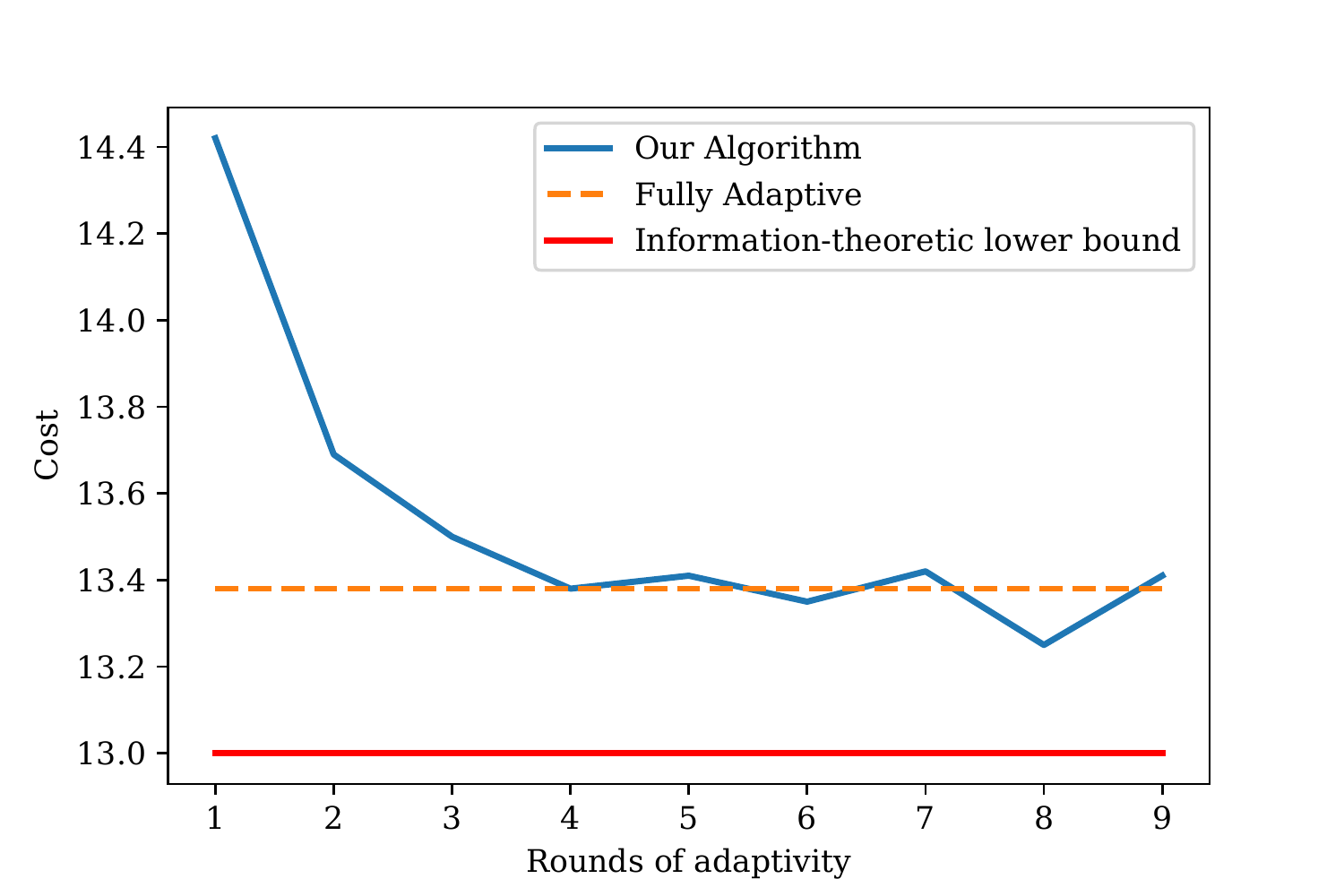}
         \caption{$\SYNu-0.5$}
     \end{subfigure}
    \begin{subfigure}[b]{0.35\textwidth}
         \centering
         \includegraphics[width=\textwidth]{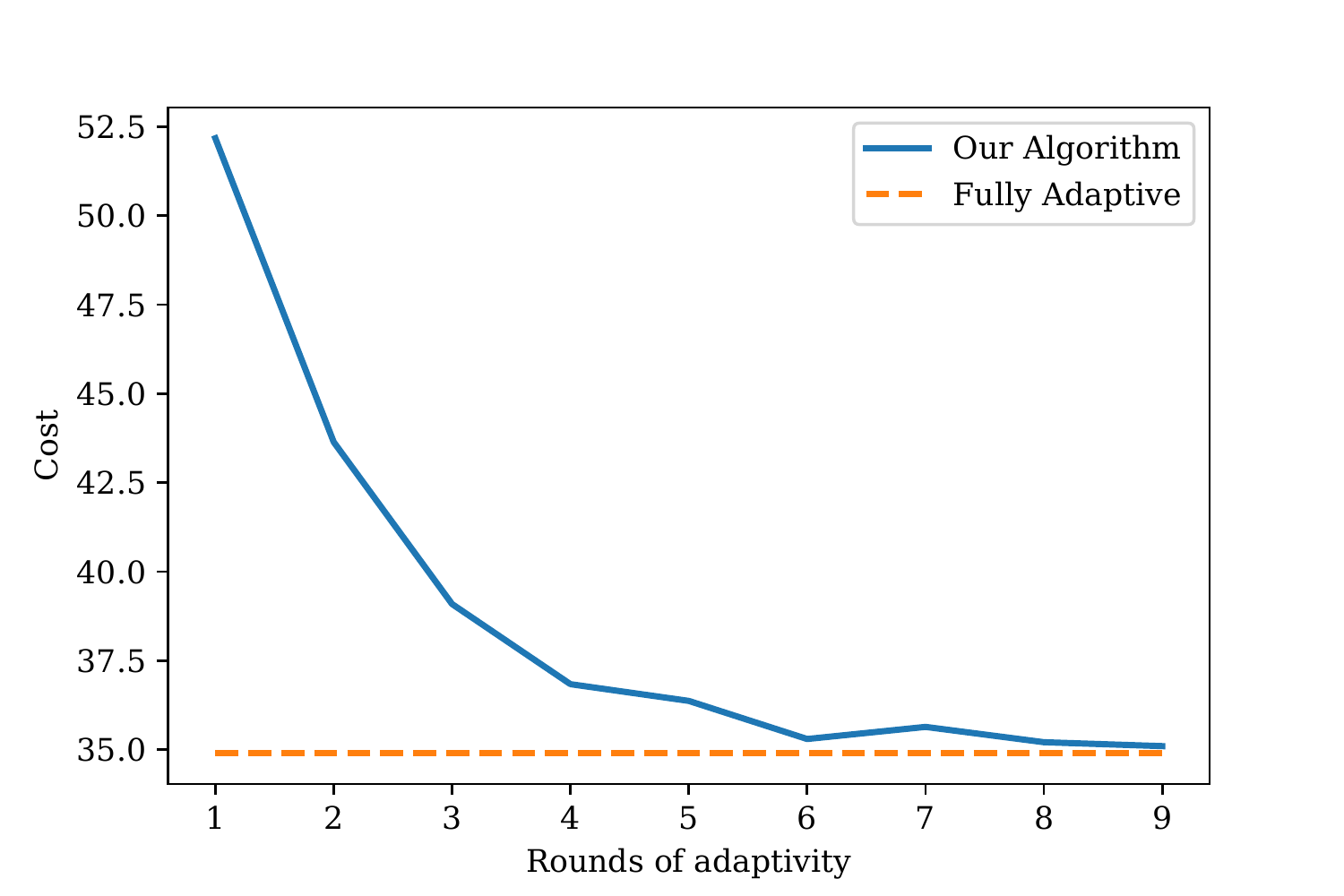}
         \caption{$\SYNr-0.2$}
     \end{subfigure}
     \begin{subfigure}[b]{0.35\textwidth}
         \centering
         \includegraphics[width=\textwidth]{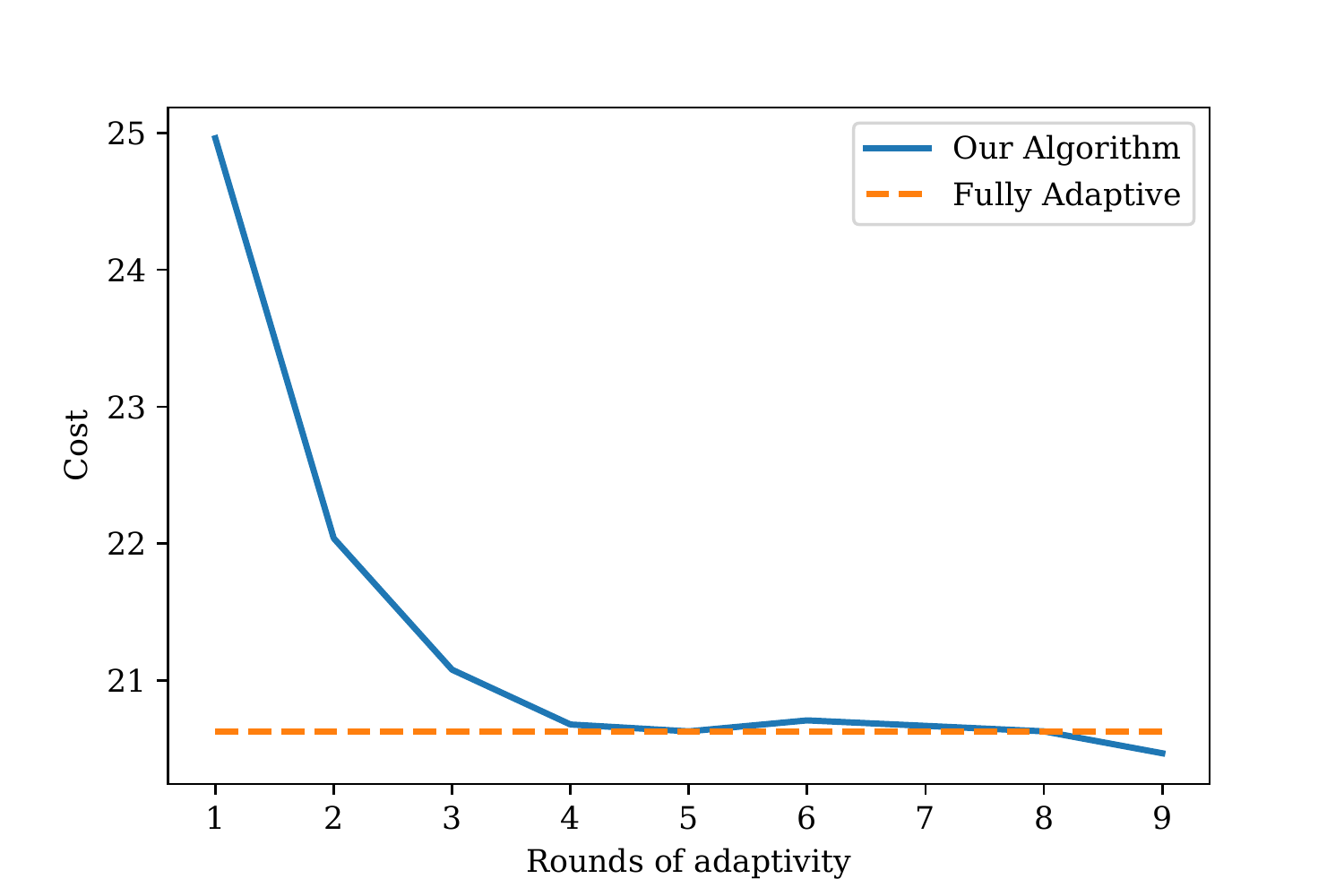}
         \caption{$\SYNr-0.5$}
         \label{fig:comp-odt-syn-4}
     \end{subfigure}
        \caption{Performance of our algorithm for the Optimal Decision Tree problem on the WISER and synthetic datasets.}
        \label{fig:comp-odt}
\end{figure*}
\else
\fi

\paragraph{Instances.} We use a real-world dataset called WISER \footnote{http://wiser.nlm.nih.gov/} for our experiments. The WISER dataset describes symptoms that one may suffer from after being exposed to certain chemicals. It contains data corresponding to $415$ chemicals (scenarios for $\ODT$) and $79$ symptoms (elements with binary outcomes). Given a patient exhibiting certain symptoms, the goal is to identify the chemical that the patient has been exposed to (by testing  as few symptoms as possible).  This dataset has been used for evaluating algorithms for similar problems in other papers \cite{BhavnaniAD+07, BBS11, BellalaBS12, NavidiKN20}. For each symptom-chemical pair, the data specifies whether or not someone exposed to the chemical exhibits the given symptom. However, the WISER data has ‘unknown’ entries for some pairs. In order to obtain instances for $\ODT$  from  this, we generate 10 different datasets by assigning random binary values to the ‘unknown’ entries. Then we remove all identical scenarios: to ensure that the $\ODT$ instance is feasible. We use the uniform probability distribution for the scenarios. Given these 10 datasets, we consider two cases. The first case considered is one where all tests have unit cost. We refer to this as the $\WISERu$ case. For the second case, we generate costs randomly for each test from $\{1, 4, 7, 10\}$ according to the weight vector $[0.1, 0.2, 0.4, 0.3]$; for example, with probability $0.4$, a test is assigned cost $7$. Varying cost captures the setting where tests may have different costs, and we may not want to schedule an expensive test without sufficient information. We refer to this as $\WISERr$ case.

We also test our algorithm on synthetic data which we generate as follows. We set $s = 10000$ and $m = 100$. For each $y \in [s]$, we randomly generate a binary sequence of outcomes which corresponds to how $y$ reacts to the tests. We do this in two ways: for test $e$, we set $y \in T_e$ with probability $p \in \{0.2, 0.5\}$. If a sequence of outcomes is repeated, we discard the scenario to ensure feasibility of the $\ODT$ instance. We assign equal probability to each scenario. We generate instances using unit costs or by assigning a random cost from $\{1, 4, 7, 10\}$ to each test according to the weight vector $[0.1, 0.2, 0.4, 0.3]$ (as described above). Thus, we generate $4$ types of instances with synthetic data. We refer to the instance generated with $p=0.2$ and unit costs as $\SYNu-0.2$. We similarly name the other instances.

\paragraph{Results.} We test our $r$-round adaptive algorithm on all of the above mentioned datasets. We vary $r$ over integers in the interval $[1, \log s]$. For the $\WISERu$ datasets, we compute the expected cost of our algorithm over \emph{all} scenarios. We plot the expected costs over all rounds, and compare it to $\log(s)$ which is an information-theoretic lower bound for the $\ODT$ problem with unit costs and uniform probabilities over the scenarios. We observe that our algorithm gets very close to this lower bound in only $3$ rounds of adaptivity. See Figure~\ref{fig:comp-odt-unit}. In the case of $\WISERr$, we compute the expected cost incurred by our $r$-round adaptive algorithm for $r$ varying over integers in the interval $[1, \log s]$ (expectation taken over \emph{all} scenarios). We observe a sharp decrease in costs within $4$ rounds of adaptivity. We plot the results in Figure~\ref{fig:comp-odt-random}.

\ifICMLVersion
\else
\begin{figure}
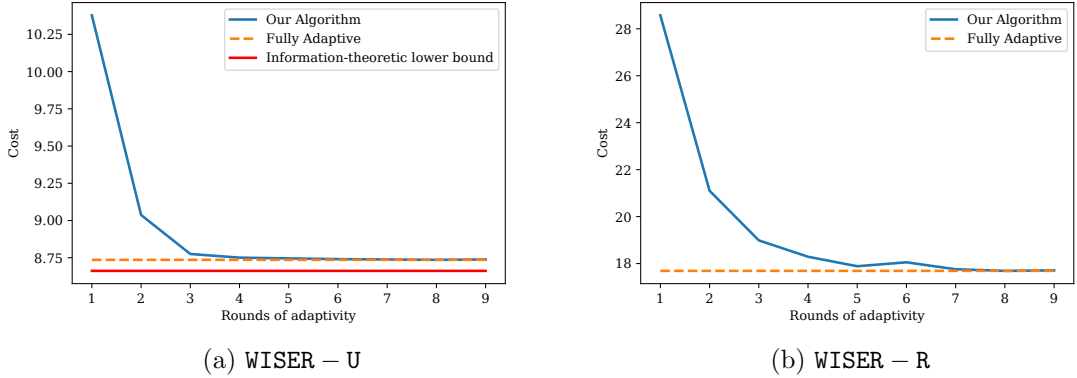

     \centering
     \begin{subfigure}[b]{0.45\textwidth}
         \centering
         \includegraphics[width=\textwidth]{figures_odt/wiser-0-unit-costs-all.pdf}
         \caption{$\WISERu$}
         \label{fig:comp-odt-unit}
     \end{subfigure}
     \begin{subfigure}[b]{0.45\textwidth}
         \centering
         \includegraphics[width=\textwidth]{figures_odt/wiser-0-rand-costs.pdf}
         \caption{$\WISERr$}
         \label{fig:comp-odt-random}
     \end{subfigure}
        \caption{Computational results for $\ODT$ on WISER dataset}
        \label{fig:comp-odt}
\end{figure}
\fi

For the synthetic data, we sample scenarios over $100$ trials (since $s=10000$, computing expectation over all $s$ would be very slow). As in the previous case, for each trial, we compute the cost incurred by our $r$-round adaptive algorithm for $r$ varying over integers in the interval $[1, \log s]$. Then, we average all these costs and use it as an estimate for the expected cost. For $\SYNu-0.2$ and $\SYNu-0.5$, we compare the results to $\log(s)$ which is an information-theoretic lower bound for the instances (since scenarios are sampled uniformly at random). We observe a improvement in the costs within $6$ rounds of adaptivity. Beyond this however, the costs do not improve (and we exclude it from our plot). We observe a similar trend for the case of $\SYNr-0.2$ and $\SYNr-0.5$. 
\ifICMLVersion
We plot the results in Figures~\ref{fig:comp-odt-syn-1}-\ref{fig:comp-odt-syn-4}.
\else
We plot the results in Figures~\ref{fig:comp-odt-syn-u} and \ref{fig:comp-odt-syn-r}.
\fi

\ifICMLVersion

\else
\begin{figure}
     \centering
     \begin{subfigure}[b]{0.45\textwidth}
         \centering
         \includegraphics[width=\textwidth]{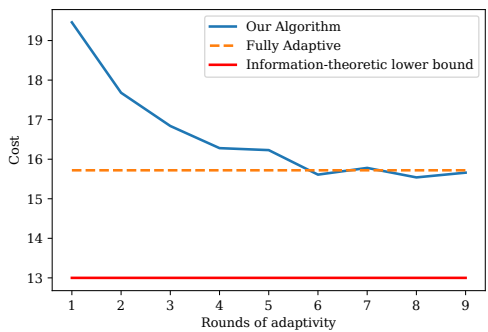}
         \caption{$\SYNu-0.2$}
     \end{subfigure}
     \begin{subfigure}[b]{0.45\textwidth}
         \centering
         \includegraphics[width=\textwidth]{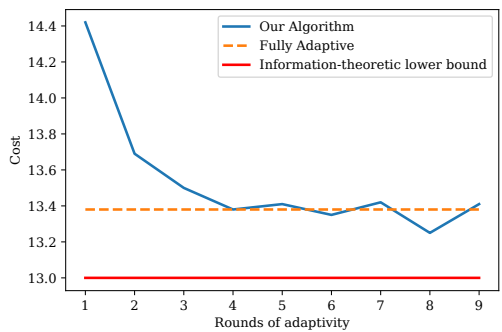}
         \caption{$\SYNu-0.5$}
     \end{subfigure}
        \caption{Computational results for $\ODT$ on synthetic data with unit costs}
        \label{fig:comp-odt-syn-u}
\end{figure}

\begin{figure}[h!]
     \centering
     \begin{subfigure}[b]{0.45\textwidth}
         \centering
         \includegraphics[width=\textwidth]{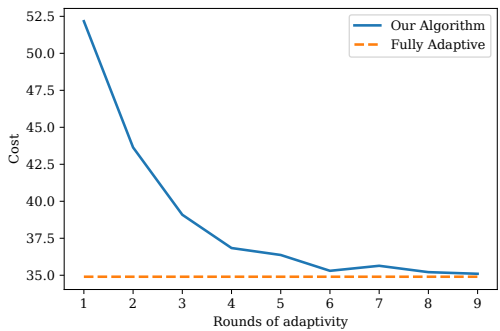}
         \caption{$\SYNr-0.2$}
     \end{subfigure}
     \begin{subfigure}[b]{0.45\textwidth}
         \centering
         \includegraphics[width=\textwidth]{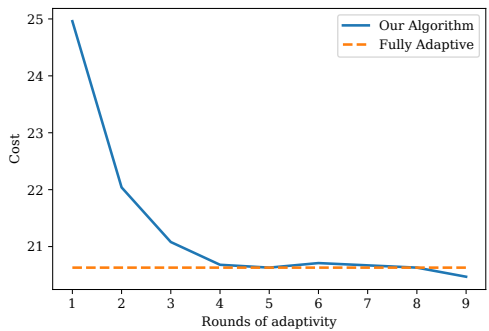}
         \caption{$\SYNr-0.5$}
     \end{subfigure}
        \caption{Computational results for $\ODT$ on synthetic data with non-uniform costs}
        \label{fig:comp-odt-syn-r}
\end{figure}
\fi

\ifICMLVersion
Note that we only plot results related to the first dataset for the $\WISERu$ and $\WISERr$ cases. We include plots for all $10$ datasets in the full version of the paper.

\else
Note that we only plot results related to the first dataset for the $\WISERu$ and $\WISERr$ cases. We include plots for all $10$ datasets in Appendix~\ref{app:computations}.
\fi

{\small
\bibliographystyle{alpha}
\bibliography{main}
}

\appendix

\section{Set-based Model for Rounds}\label{app:set-based}

Here we discuss the ``set based'' model for adaptive rounds, where each round probes a fixed subset of items (and incurs their total cost). In this model, as noted in \cite{AAK19}, we can no longer require
the function $f$ to be covered with probability one. We also provide an example below where the $r$-round-adaptivity gap is very large if we require coverage with probability one.  Therefore, we consider solutions that are only required to cover the function with high probability. We will still compare to the (fully) adaptive optimum $\OPT$ that always covers the function. 

Formally, an $r$-round adaptive solution in the set-based model proceeds as follows. For each round $i=1,\ldots,  r$, it specifies a subset $\rv{S}_i$ of items (that depends on all realizations in rounds $1,\ldots,  i-1$) that are probed in parallel. The cost incurred in round $i$ is $c(\rv{S}_i)$ the total cost of all probed items in that round. 

\paragraph{Example:} Consider an instance with $|U|=1$, and function $f$ with  $f(U)=1$ and $f(\emptyset)=0$. So parameter $Q=1$. There are $m$  items, where each item $i\in \{1,\ldots,  m-1\}$ has  cost $2^{i}$ and  instantiates to $U$ with probability $\frac12$. The last item $m$ costs $2^m$  and instantiates to $U$ with probability one. In the permutation model, there is a non-adaptive solution of cost $O(m)$, which just  probes   items in the order $1,2\ldots,  m$.  On the other hand,  in the set-based model with $r<m$ rounds we claim that the optimal  cost is $\Omega(2^{m/r})$ if we require $f$ to be covered with probability one. Note that any solution ${\cal B}$ in the set-based model with $r$ rounds is given by indices $0=i(0)\le i(1)\le \ldots,  i(r-1)\le i(r)=m$ where the $j^{th}$ round probes items $i(j-1)+1,\ldots,  i(j)$ (if $f$ is not already covered). The probability that round $j$ is needed in ${\cal B}$ is $2^{-i(j-1)}$. So the expected cost of ${\cal B}$ is at least $\sum_{j=1}^r 2^{-i(j-1)}\cdot 2^{i(j)}\ge 2^{m/r}$ where we used  $\sum_{j=1}^r (i(j)-i(j-1)) = i(r)-i(0)=m$. It follows that the $r$-round adaptivity gap in the set-based model is exponential in $m$  (while the adaptivity gap is constant for the permutation model). This is the reason our   set-based algorithms   only cover $f$ with high probability (rather than probability one).

\subsection{Conversion Theorems}
\label{sec:conversion-theorems}

While we give our main technical results in the permutation model to
make the analysis easier, 
we can
translate our $r$-round algorithm in the permutation model into one in
the set-based model, as we show next. Below, $\OPT$ is the cost of an optimal fully-adaptive solution.

\begin{theorem}\label{thm:set-small-r}
For any $r\ge 1$ and $\eta>0$, there is a set-based $r$-round adaptive
solution for stochastic submodular cover (resp. scenario submodular
cover) with  expected cost at most $O(\frac{r}{\eta} Q^{1/r} \log
Q)\cdot \OPT$ (resp. $O\big(\frac{r}{\eta} (s^{1/r} \log s+ r\log
Q)\cdot \OPT\big)$) and covers the function with probability at least
$1-\eta$. 
\end{theorem}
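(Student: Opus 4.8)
The plan is to give a black-box reduction from the permutation model to the set-based model that loses only an $O(r/\eta)$ factor in cost while still covering $f$ with probability at least $1-\eta$. Let $\ALG$ denote the $r$-round permutation-model algorithm of \Cref{thm:main-ssc} (resp.\ \Cref{thm:r-round-scn}); recall it has expected cost $\alpha\cdot\OPT$ with $\alpha=O(Q^{1/r}\log Q)$ (resp.\ $\alpha=O(s^{1/r}(\log s+r\log Q))$) and, crucially, it covers $f$ with probability one after its $r$ rounds. In its round $k$, $\ALG$ uses the realizations observed in rounds $1,\dots,k-1$ to fix an ordered list $L_k$ of the remaining items, and then probes a prefix of $L_k$ until its stopping rule fires. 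Conditioned on the history $\mathcal H_{k-1}$ of rounds $1,\dots,k-1$, the list $L_k$ is deterministic while the cost $T_k$ incurred in round $k$ is a random variable with conditional mean $C_k:=\E[T_k\mid\mathcal H_{k-1}]$. The set-based algorithm will, in round $k$, discard the adaptive stopping rule and instead probe the deterministic longest prefix of $L_k$ of total cost at most $B_k:=\tfrac r\eta\,C_k$; if at some point a round ``fails'' (defined below) it simply probes nothing in the remaining rounds.

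First I would bound the cost: the probe set in set-based round $k$ has cost at most $B_k$ (and items already probed in earlier rounds are reused for free), so summing over rounds and taking expectations, $\E[\text{set-based cost}]\le\tfrac r\eta\sum_{k=1}^r\E[C_k]=\tfrac r\eta\sum_{k=1}^r\E[T_k]=\tfrac r\eta\cdot\E[\text{cost of }\ALG]\le\tfrac r\eta\cdot\alpha\cdot\OPT$, using $\E[C_k]=\E[T_k]$ and the guarantee for $\ALG$; this is the claimed bound.

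Next I would handle coverage. Call round $k$ a \emph{failure} if $\ALG$'s stopping rule does not fire within the prefix the set-based algorithm probes, i.e.\ $T_k>B_k$. Conditioned on $\mathcal H_{k-1}$, Markov's inequality gives $\Pr[T_k>B_k\mid\mathcal H_{k-1}]\le C_k/B_k=\eta/r$, so each round fails with probability at most $\eta/r$ and, by a union bound, some round fails with probability at most $\eta$. On the complementary event, in every round $k$ the set-based algorithm has probed a superset of the items probed by $\ALG$ in that round (and has observed exactly the realizations $\ALG$ sees), so it faithfully simulates all $r$ rounds of $\ALG$ and its set of probed items contains $\ALG$'s; since $\ALG$ covers $f$ with probability one and $f$ is monotone, $f$ is covered. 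Hence the set-based algorithm covers $f$ with probability at least $1-\eta$.

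The main obstacle here is conceptual rather than technical: because a set-based round must commit to a fixed probe set, the adaptive stopping rule has to be replaced by a deterministic budget, and one must choose this budget --- $\tfrac r\eta$ times the conditional expected round cost --- so that it is simultaneously (a) large enough for Markov's inequality to make each round fail with probability only $\eta/r$, giving total failure probability $\le\eta$, and (b) small enough that the $r$ budgets sum to $O(r/\eta)$ times the permutation-model cost. Two minor points to record: re-probing is free, so the cost is genuinely $\le\sum_k B_k$ even though the simulated $\ALG$ may ``want'' to re-examine items already probed as part of an earlier round's prefix; and computing $C_k$ exactly is unnecessary --- any efficiently computable upper bound $\widehat C_k\ge C_k$ works verbatim (Markov still yields $\eta/r$, and the cost becomes $\tfrac r\eta\sum_k\E[\widehat C_k]$), which is what a polynomial-time implementation would use.
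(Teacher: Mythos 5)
Your proposal is correct and matches the paper's proof essentially line for line: you simulate the permutation-model algorithm round by round, in each round probing the longest prefix of the round's permutation whose cost is at most $\tfrac{r}{\eta}$ times that round's conditional expected cost, then bound the cost by linearity/expected-round-cost preservation and bound the failure probability by Markov plus a union bound over the $r$ rounds. Your note about halting after a failed round (and the remark that any efficient upper bound on $C_k$ works) is a small clarification beyond what the paper writes out, but the argument and the decomposition into cost bound versus coverage bound are the same.
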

\begin{proof}
We will show the following ``black box'' reduction. 
Suppose there is an $r$-round adaptive algorithm in the permutation model with approximation ratio $\alpha$. Then, there is an algorithm in the set-based model that for any $\eta \in (0,1)$, finds an $r$-round adaptive solution that has expected cost at most $\frac{r\alpha}{\eta }\cdot \OPT$ and  covers the function with  probability at least $1-\eta$. The theorem would then follow from Theorems~\ref{thm:main-ssc} and \ref{thm:scn-main}.

Consider any  solution ${\cal A}$ in the permutation-model that always covers the function. For each round $i=1,\ldots,  r$ in ${\cal A}$, let $\rv{L}_i$ be the permutation to probe items and let $\mu_i$ be the expected cost of probed items (after applying the stopping rule). Note that $\rv{L}_i$  and $\mu_i$ depend on realizations in all previous rounds. 
Let $\rv{S}_i$ denote the maximal prefix of $\rv{L}_i$ that has total cost at most $\frac{r}{\eta}\mu_i$. We define the set-based solution ${\cal B}$ as follows. In round $i$, we probe all items $\rv{S}_i$ in parallel. Note that the cost of solution ${\cal B}$ in round $i$ is at most $\frac{r}{\eta}\mu_i$. 
Taking expectations and adding over all rounds, the expected cost of ${\cal B}$ is at most $\frac{r}{\eta}$ times that of ${\cal A}$. So the cost is at most $\frac{r\alpha}{\eta }\cdot \OPT$.

We refer to round-$i$ as a {\em failure} if  the round-$i$ stopping rule (in ${\cal A}$) {\em does not} apply by the end of $\rv{S}_i$.  By Markov's inequality, the probability that round-$i$ is a failure is at most $\frac{\eta}{r}$. 
Using a union bound over all $r$ rounds, the probability of a failure in any round is at most $\eta$. Note that, if there is no failure then solution ${\cal B}$ indeed covers the function fully: as the items probed by ${\cal B}$ is a superset of those probed by ${\cal A}$. Hence, solution ${\cal B}$ covers the function with probability at least $1-\eta$. 
\end{proof}

Note that the  approximation ratios for the set-based model are only a factor $r$ more than in the permutation model (assuming that  the ``failure probability'' $\eta$ is any constant). So this is a good result for small (constant) $r$. When the number of rounds $r$ is larger, we have a different approach that does not lose this factor $r$ in the approximation ratio, but uses $O(r)$ set-based rounds. 
\begin{theorem}\label{thm:set-large-r}
For any $r\ge 1$, in the  set-based model: 
\begin{itemize}
    \item  there is a 
$2r$-round adaptive  algorithm for stochastic submodular cover  with  expected cost at most $O(Q^{1/r} \log Q)\cdot \OPT$ 
\item  there is a 
$4r$-round adaptive  algorithm for scenario submodular cover  with  expected cost at most  $O(s^{1/r} \log (s Q))\cdot \OPT$  
\end{itemize}
that covers the function with probability at least $1-e^{-\Omega(r)}$. 
\end{theorem}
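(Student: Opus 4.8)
The plan is to run the iterative partial-cover algorithm from the permutation model (\textsc{SSC}, Algorithm~\ref{alg:r-round-ind}, and its \ssp-based variant), but to simulate each of its rounds in the set-based model by probing only a \emph{constant-factor-of-expected-cost} prefix of the non-adaptive list; the price for this is a constant per-round failure probability, which I will absorb by doubling (resp. quadrupling) the number of rounds and applying a Chernoff bound. This avoids the extra factor $r$ lost in Theorem~\ref{thm:set-small-r}, at the cost of a constant-factor increase in the number of rounds.

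\textbf{Stochastic submodular cover.} Maintain a residual function $\fh$ (initially $f$) with residual target $\Qh:=\fh(U)$ (initially $Q$). For $t=1,\dots,2r$: if $\Qh>0$, invoke $\textsc{ParCA}(\cdot,\fh,\Qh,Q^{-1/r})$ to obtain a list $L_t$ with (computable) expected probing cost $\mu_t$; probe in parallel the maximal prefix of $L_t$ of total cost at most $4\mu_t$, and update $\fh$ by conditioning on the observed realizations. Call round $t$ a \emph{success} if the \textsc{ParCA} stopping rule is reached inside this prefix, equivalently if $\Qh$ shrinks by a factor at least $Q^{1/r}$. By Markov's inequality, conditioned on all previous rounds, round $t$ succeeds with probability at least $3/4$. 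By Theorem~\ref{thm:partial-cover} the pointwise bound $\mu_t\le O\big(\tfrac1r Q^{1/r}\log Q\big)\cdot\widehat{\OPT}_t$ holds, where $\widehat{\OPT}_t$ is the optimal adaptive cost of the current residual instance; combined with the standard inequality $\E[\widehat{\OPT}_t]\le\OPT$ (see below) this makes the total expected cost $\E\big[\sum_{t\le 2r}4\mu_t\big]=O(Q^{1/r}\log Q)\cdot\OPT$. Since $r$ successes already force $\Qh<Q\cdot Q^{-1}=1$, hence $\Qh=0$, irrespective of what the failed rounds do (they only decrease $\Qh$), coverage fails only if fewer than $r$ of the $2r$ rounds succeed; as the number of successes stochastically dominates $\mathrm{Binomial}(2r,\tfrac34)$, a multiplicative Chernoff bound gives failure probability $e^{-\Omega(r)}$.

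\textbf{Scenario submodular cover.} The construction is identical but uses \ssp (Corollary~\ref{cor:ssp}) with $\delta=s^{-1/r}$ and $\varepsilon=Q^{-1/r}$ for $4r$ rounds. Each call costs $O\big(\tfrac1r s^{1/r}\log(sQ)\big)\cdot\widehat{\OPT}_t$ pointwise, so $4r$ rounds cost $O(s^{1/r}\log(sQ))\cdot\OPT$ in expectation. As in the proof of Theorem~\ref{thm:scn-improved-apx}, a successful \ssp round either shrinks the compatible-scenario set by a factor $s^{1/r}$ (which can happen at most $r$ times, as there is always a compatible scenario) or shrinks the residual target by $Q^{1/r}$; hence $2r$ successes force at least $r$ target-reductions and thus full coverage. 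With per-round success probability at least $3/4$ over $4r$ rounds, a Chernoff bound on $\mathrm{Binomial}(4r,\tfrac34)$ again gives $\Pr[\text{fewer than }2r\text{ successes}]\le e^{-\Omega(r)}$.

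\textbf{Where the work is.} The main obstacle is the probabilistic accounting across \emph{dependent} rounds: the per-round success probability is only $\ge 3/4$ conditioned on the arbitrary history, so the rounds are not i.i.d. I would dispatch this with the standard coupling lemma (a $\{0,1\}$-sequence with $\Pr[\text{next}=1\mid\text{history}]\ge p$ stochastically dominates $\mathrm{Binomial}(N,p)$) followed by the usual Chernoff bound. A secondary point is verifying $\E[\widehat{\OPT}_t]\le\OPT$ through failed rounds: conditioned on our algorithm's observations so far, the policy ``$\OPT$, with already-probed items treated as free'' is feasible for the residual and no more costly than $\OPT$ conditioned on those observations, and the tower rule telescopes this to $\OPT$; items probed in failed rounds are simply absorbed into the conditioning. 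Finally I would dispose of the easy edge case in which $\Qh$ (or the compatible-scenario count) is already small enough that a single successful call finishes, which only weakens the number of successes required.
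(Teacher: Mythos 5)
Your proof follows essentially the same approach as the paper's: run the permutation-model partial-cover algorithm iteratively for $2r$ (resp.\ $4r$) set-based rounds, truncate each round at cost $4\mu_t$, bound the expected per-round cost by $O(\alpha)\cdot\OPT$ via the partition/tower-rule identity, count successes, and apply a Chernoff bound to get failure probability $e^{-\Omega(r)}$. Your explicit stochastic-dominance/coupling treatment of the dependent per-round success indicators is, if anything, slightly more careful than the paper's, which simply asserts that the success events across rounds ``are independent'' without elaboration.
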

\begin{proof}
We first consider stochastic submodular cover. We make use of the partial cover algorithm   \textsc{ParCA}  (Theorem~\ref{thm:partial-cover}) in an iterative manner (similar to the $r$-round algorithm  in the permutation model). The difference is that we will run    \textsc{ParCA} for $2r$ set-based rounds (instead of $r$). 
The items probed in each set-based round is a prefix of the permutation-based round. Algorithm~\ref{alg:2r-round-set} gives a formal description. The {\em state} at any round is the tuple $(\rv{T},T)$ consisting of previously probed items and their realizations.  For the analysis, we view the iterative algorithm  as  a $2r$ depth tree, where the nodes at depth $i$ are the   states   in round $i$ and the branches out of each node  represent the observed realizations in that round. For any $i\in [2r]$, let $\Omega_i$ denote all the states in round $i$: note that these form a partition of the outcome space.
\begin{algorithm}[h]
\caption{$2r$-round set-based  algorithm for stochastic submodular cover} \label{alg:2r-round-set}
\begin{algorithmic}[1]
\State let $\rv{T}\gets \emptyset$ denote the probed items and $T\gets \emptyset$ their realization.
\For{$i=1,2,\ldots,  2r$}
\State run  \textsc{ParCA} $(\rv{X}\setminus \rv{T}, f_T, Q-f(T), Q^{-1/r})$. \label{step:set-round}
\State let $\rv{L}_i$ denote the permutation on items $\rv{X}\setminus \rv{T}$ from   \textsc{ParCA}.
\State let  $\mu_i$ be the expected cost of probing $\rv{L}_i$ (in the permutation model).
\State let $\rv{R}_i$ denote the maximal prefix of $\rv{L}_i$ of cost at most $4\mu_i$.
\State probe items $\rv{R}_i$ in the set-based solution and let $R_i$ denote their realizations.  
\State update $\rv{T}\gets \rv{T}\cup \rv{R}_i$ and $T\gets T\cup R_i$.
\EndFor
\end{algorithmic}
\end{algorithm}
\paragraph{Expected cost.} For any state $\omega\in \Omega_i$, let $\OPT(\omega)$ denote the expected cost of the optimal adaptive policy conditioned on the realizations in $\omega$.
By Theorem~\ref{thm:partial-cover}, conditioned on $\omega$, the expected cost $\mu_i(\omega)$ of the permutation $\rv{L}_i(\omega)$ is at most $\alpha\cdot \OPT(\omega)$ where $\alpha=O(\frac1r Q^{1/r} \log Q)$. Note that the cost of the set-based round is at most $4\cdot \mu_i (\omega)$.
Hence, the (unconditional) expected cost in round $i$ is at most  
$$4\sum_{\omega\in \Omega_i} \Pr[\omega]\cdot \mu_i(\omega) \le 4\alpha \sum_{\omega\in \Omega_i} \Pr[\omega]\cdot\OPT(\omega) = 4\alpha\cdot \OPT,$$
where we used that $\Omega_i$ is a partition of all outcomes. It follows that the expected cost of the set-based algorithm is at most $2r\cdot 4\alpha\cdot \OPT=8r\alpha\cdot \OPT$ as claimed.

\paragraph{Completion probability.} Consider any round $i$ and state $\omega\in \Omega_i$. Recall that $T$ denotes the realizations from prior rounds. The stopping rule for the permutation  $\rv{L}_i(\omega)$ in \textsc{ParCA} is that the residual target  drops by a factor $\delta=Q^{-1/r}$. In other words, if $S$ denotes the  realizations observed in permutation $\rv{L}_i(\omega)$ then the stopping rule is $Q-f(T\cup S)< \delta\cdot (Q-f(T))$. Recall that the set-based   round-$i$ involves probing the prefix $\rv{R}_i(\omega)$ of $\rv{L}_i(\omega)$ having cost $4\cdot \mu_i(\omega)$. This round
is said to be a {\em success} if  the stopping rule applies by the end
of $\rv{R}_i(\omega)$.  By the choice of $\rv{R}_i$ and Markov's
inequality, the probability that this round  is a success is at least
$\frac34$. So the expected number of successful rounds is at least
$2r\cdot \frac34=\frac32 r$. Moreover, the success events in different
rounds $i=1,\ldots,  2r$ are independent. Hence, by a Chernoff bound, the probability of seeing less than $r$ successes out of $2r$ rounds is at most $e^{-\Omega(r)}$. Finally, note that if there are at least $r$ successes then the function $f$ gets covered completely as each success reduces the residual target by a factor $Q^{-1/r}$. It follows that our algorithm covers $f$ with probability    $1-e^{-\Omega(r)}$ as claimed.

The proof for scenario submodular cover is identical, except for the use of \textsc{SParCA} (Corollary~\ref{cor:ssp}) instead of \textsc{ParCA}. 
\end{proof}

\section{Items realizing to subsets}\label{app:set-items}
Here, we consider a seemingly more general model for stochastic submodular cover, where each item realizes to a {\em subset} of elements. As before, there is  a monotone submodular function $f: 2^U \to \ZZ_{\geq 0}$ with $Q := f(U)$. There are $m$ items, where each item $\rv{X}_i$ has  cost $c_i$ and realizes  to a random
subset of $U$. (In the basic model, each item realizes to a single element.)  
  The goal is to select a set of items such that the
union $S$ of their corresponding realizations satisfies $f(S) = Q$, and the
expected cost is minimized. We can reduce this model to the usual one (considered in the paper) as follows. Let $W:=2^U$ denote an expanded groundset consisting of all subsets of $U$. Let function $g:2^W\rightarrow \ZZ_{\ge 0}$ be defined as $g(R)=f(\cup_{r\in R} r)$ for any $R\sse W$. Note that $g$ is monotone submodular as $f$ is. Note that each item $\rv{X}_i$ realizes to a single element of the expanded set $W$. Moreover, covering function $g$ is equivalent to covering $f$, and  both have maximal value $Q$.  So it suffices to solve the usual model with single-element realizations. However, the  difficulty with this reduction is that the number of elements $|W|$ is exponential. Using the fact that our algorithm's runtime is independent of the size of the groundset (see remark after Algorithm~\ref{alg:r-round-ind}), it follows that we obtain a polynomial (in $m,Q,c_{\max}$) time algorithm for this more general model as well.

\section{Estimating Scores in Algorithm  \textsc{ParCA}} \label{app:sampling}
The  partial covering algorithm \textsc{ParCA} (Algorithm~\ref{alg:parca}) relies on computing the maximum score according to \eqref{eq:greedy-choice}. We are not aware of a closed-form expression to calculate the score for general functions $f$. Instead, we show that sampling can be used to estimate these scores, which suffices for the overall algorithm. At any point in algorithm  \textsc{ParCA}, let $\rv{S}\sse \rv{X}$ denote the items already added to the non-adaptive list $L$.  Then, we need to compute the maximum $\text{score}(\rv{X}_e)=\frac{g_e}{c_e}$ over $\rv{X}_e\in \rv{X} \setminus \rv{S}$, where
$$g_e  :=   \sum_{S \sim \rv{S} : f(S)\le \tau } \pr(\rv{S}=S)\cdot \E_{X_e \sim \rv{X}_e} \left[  \frac{f(S \cup X_e) - f(S)}{Q - f(S)}\right]
 = \E_{\rv{S},\rv{X}_e} \left[ \mathbf{1}_{f(S)\le \tau} \cdot \frac{f(S \cup X_e) - f(S)}{Q - f(S)}\right].$$ 
To this end, we use $K=O(m^2 c_{\max} \log(mc_{\max}))$ independent samples to obtain an estimate $\bar{g_e}$ for each item $e$. Then, we choose the item that maximizes $\frac{\bar{g_e}}{c_e}$. Note that the time taken in each step is $O(m\cdot K)$, so the overall time taken by algorithm \textsc{ParCA} is $O(m^2K)=\text{poly}(m,c_{\max})$. 

Let $L$ denote the list produced by the (sampling based) \textsc{ParCA} algorithm, and let $\NA$ be the resulting non-adaptive solution. We now show that  the expected cost of  $\NA$ is $O(\frac1\delta \log \frac{1}{\delta})\cdot \OPT$. As before, $\OPT$ is the cost of an optimal fully-adaptive solution. This would prove Theorem~\ref{thm:partial-cover} even for the sampling-based \textsc{ParCA} algorithm.  

\begin{lemma}\label{lem:low-score}
At any step, with already added items $\rv{S}$, we have $\Pr[f(S)\le \tau] \le \sum_{\rv{X}_e\in \rv{X} \setminus \rv{S}} g_e$.
\end{lemma}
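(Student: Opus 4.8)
The plan is to unfold the definition of $g_e$ and reduce the inequality to a pointwise statement about each realization of $\rv{S}$. Recall that
\[
g_e \;=\; \sum_{S'\,:\, f(S')\le \tau} \pr(\rv{S}=S')\cdot \E_{X_e\sim\rv{X}_e}\!\left[\frac{f(S'\cup X_e)-f(S')}{Q-f(S')}\right],
\]
where the outer sum ranges over realizations $S'$ of $\rv{S}$ (the expectation is over the marginal of $\rv{X}_e$, which equals its conditional law given $\rv{S}=S'$ since the items are independent). Summing over $\rv{X}_e\in\rv{X}\setminus\rv{S}$ and swapping the order of summation,
\[
\sum_{\rv{X}_e\in \rv{X}\setminus \rv{S}} g_e \;=\; \sum_{S'\,:\, f(S')\le \tau} \pr(\rv{S}=S')\cdot \Bigg( \sum_{\rv{X}_e\in \rv{X}\setminus \rv{S}} \E_{X_e\sim\rv{X}_e}\!\left[\frac{f(S'\cup X_e)-f(S')}{Q-f(S')}\right]\Bigg).
\]
Since $\sum_{S'\,:\,f(S')\le\tau}\pr(\rv{S}=S')=\Pr[f(S)\le\tau]$, it suffices to show that the parenthesized inner sum is at least $1$ for every fixed realization $S'$ of $\rv{S}$ with $f(S')\le\tau$.

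To prove this inner bound, fix such an $S'$. Because $\delta>0$ we have $f(S')\le\tau=Q(1-\delta)<Q$, so $Q-f(S')>0$. Let $R=\{X_e : \rv{X}_e\in\rv{X}\setminus\rv{S}\}$ denote the (random) realization of the remaining items. By the standing feasibility assumption for submodular cover---probing all items covers $f$ with probability one---together with independence (the support of $\rv{X}$ is the product of the supports of $\rv{S}$ and of $\rv{X}\setminus\rv{S}$), we get $f(S'\cup R)=Q$ for every realization $R$. By monotonicity and submodularity of $f$, the marginals of the unselected items are subadditive over $S'$, so for every such realization $R$,
\[
Q-f(S')\;=\;f(S'\cup R)-f(S')\;\le\;\sum_{\rv{X}_e\in\rv{X}\setminus\rv{S}}\big(f(S'\cup X_e)-f(S')\big).
\]
Taking expectations over $R$ and using that each term on the right depends only on $X_e$ (linearity of expectation) yields $Q-f(S')\le\sum_{\rv{X}_e\in\rv{X}\setminus\rv{S}}\E_{X_e\sim\rv{X}_e}[f(S'\cup X_e)-f(S')]$; dividing by $Q-f(S')>0$ gives the desired inner bound.

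Combining the two parts, $\sum_{\rv{X}_e\in\rv{X}\setminus\rv{S}}g_e\ge\sum_{S'\,:\,f(S')\le\tau}\pr(\rv{S}=S')=\Pr[f(S)\le\tau]$, which is the claim. The only step that is not purely mechanical is the appeal to feasibility (that the union of all items' realizations covers $f$): this is the standard assumption under which the problem is well posed, and it is exactly what allows us to lower-bound the total one-step marginal gain still available from the not-yet-selected items. Everything else is a reordering of sums, submodular subadditivity of marginals, and linearity of expectation, so I do not anticipate a real obstacle; the one thing to be careful about is ensuring $Q-f(S')>0$ throughout, which holds since $f(S')\le\tau<Q$.
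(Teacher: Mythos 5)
Your proof is correct and takes essentially the same route as the paper's: swap the sum over items with the expectation over realizations, lower-bound $\sum_e (f(S\cup X_e)-f(S))$ by $f(X)-f(S)=Q-f(S)$ via submodularity and the feasibility assumption $f(X)=Q$ with probability one, and divide by $Q-f(S)>0$. Your version simply makes the conditioning on the realization $S'$ and the use of independence more explicit, while the paper keeps the whole computation inside a single expectation $\E_{\rv{X}}[\cdot]$ with an indicator $\mathbf{1}_{f(S)\le\tau}$.
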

\begin{proof} We can write:
\begin{align*}
    \sum_{\rv{X}_e\in \rv{X} \setminus \rv{S}} g_e & = \sum_{\rv{X}_e\in \rv{X} \setminus \rv{S}} \E_{\rv{S},\rv{X}_e} \left[ \mathbf{1}_{f(S)\le \tau} \cdot \frac{f(S \cup X_e) - f(S)}{Q - f(S)}\right] \, =\, \E_{\rv{X}}\left[ \mathbf{1}_{f(S)\le \tau} \cdot\sum_{\rv{X}_e\in \rv{X} \setminus \rv{S}} \frac{f(S \cup X_e) - f(S)}{Q - f(S)} \right] \\
    &\ge \E_{\rv{X}}\left[ \mathbf{1}_{f(S)\le \tau} \cdot \frac{f(X) - f(S)}{Q - f(S)} \right]\, =\, \E_{\rv{X}}\left[ \mathbf{1}_{f(S)\le \tau} \right] \,=\, \Pr[f(S)\le \tau].
\end{align*}
Above, the inequality is by submodularity and the second-last equality uses the fact that $f(X)=Q$ with probability one. 
\end{proof}

Let $L'$ denote the maximal prefix of list $L$ where $\max_e g_e \ge \varepsilon := \frac{1}{m^2c_{\max}}$ at each step.  Let $\NA'$ denote the cost incurred by the 
non-adaptive solution on items in $L'$ and $\NA''$ be the cost incurred on items $L\setminus L'$. Clearly, $\E[\NA] = \E[\NA']+\E[\NA'']$.

By Lemma~\ref{lem:low-score}, it follows that after $L'$ ends, we have 
$\Pr[f(S)\le \tau] \le m\cdot \max_e g_e \le m\varepsilon \le \frac{1}{mc_{\max}}$. In other words, $\Pr[\NA \mbox{ continues after }L']\le \frac{1}{mc_{\max}}$ and hence  $$\E[\NA'']\le \Pr[\NA \mbox{ continues after }L']\cdot mc_{\max} =O(1).$$ 
It now remains to bound $\E[\NA']$. 

Let $\bar{\NA}$ denote any non-adaptive solution obtained by algorithm \textsc{ParCA} assuming that it always selects an item $\rv{X}_e$ with $\text{score}(\rv{X}_e)\ge \frac14 \max_i  \text{score}(\rv{X}_i)$. In  words, $\bar{\NA}$ is built using items  having $\frac14$-approximately maximum score. It can be easily verified that the analysis in \S\ref{subsec:analysis-PARCA} can be extended to prove $\E[\bar{\NA}]\le O(\frac1\delta \log \frac{1}{\delta})\cdot \OPT$. (The only change is an additional factor of $\frac14$ in the right-hand-side of \eqref{eq:G-LB}.)
We now bound $\E[\NA']$ using $\E[\bar{\NA}]$.
\begin{lemma}\label{lem:sampling}
Consider any item $\rv{X}_e$ in $L'$ (with $\rv{S}$ denoting all previous items). Then, we have $\text{score}(\rv{X}_e)\ge \frac14 \max_{\rv{X}_i\in \rv{X}\setminus \rv{S}}  \text{score}(\rv{X}_i)$ with  probability at least $1-\frac{1}{m^2c_{\max}}$. 
\end{lemma}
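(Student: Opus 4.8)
The plan is to prove \Cref{lem:sampling} by a standard concentration-plus-union-bound argument, followed by a short case analysis that turns accurate per-item estimates into a $\tfrac14$-approximate greedy choice. Recall that for a fixed prefix $\rv S$ of the list, each estimate $\bar g_i$ is the average of $K$ i.i.d.\ copies of $Y_i:=\mathbf 1_{f(S)\le\tau}\cdot\frac{f(S\cup X_i)-f(S)}{Q-f(S)}\in[0,1]$, with $\E[Y_i]=g_i$. First I would split the items into a ``heavy'' group $\{i: g_i\ge \varepsilon\}$ and a ``light'' group $\{i: g_i<\varepsilon\}$, where $\varepsilon=\frac1{m^2c_{\max}}$. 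For a heavy item, the multiplicative Chernoff bound gives $\bar g_i\in[\tfrac12 g_i,\,2g_i]$ except with probability $\exp(-\Omega(g_iK))\le\exp(-\Omega(\varepsilon K))$; for a light item, the upper-tail Chernoff bound gives $\bar g_i\le c_0\varepsilon$ (for a suitable constant $c_0$) except with probability $\exp(-\Omega(\varepsilon K))$. With $K=\Theta(\varepsilon^{-1}\log(mc_{\max}))$ both failure probabilities are at most $\frac1{m^3c_{\max}}$, so a union bound over the at most $m$ items yields a global ``good event'' $\mathcal G$ of probability at least $1-\frac1{m^2c_{\max}}$ on which: (i) every heavy $i$ has $\bar g_i\in[\tfrac12 g_i,2g_i]$, and (ii) every light $i$ has $\bar g_i\le c_0\varepsilon$.

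Next I condition on $\mathcal G$ and on being inside $L'$, so $\max_i g_i\ge\varepsilon$; fix an item $e^\sharp$ with $g_{e^\sharp}\ge\varepsilon$, let $e^\ast=\arg\max_i g_i/c_i$ be the true best, and let $e$ be the item the algorithm selects, so $\bar g_e/c_e\ge \bar g_i/c_i$ for all $i$. Using $c_i\ge1$ and $c_i\le c_{\max}$, item $e^\sharp$ being heavy gives the two anchoring bounds $\mathrm{score}(e^\ast)\ge\mathrm{score}(e^\sharp)=g_{e^\sharp}/c_{e^\sharp}\ge\varepsilon/c_{\max}$ and $\bar g_{e^\sharp}/c_{e^\sharp}\ge\tfrac12\,\mathrm{score}(e^\sharp)$. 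From the latter and the greedy rule, $\bar g_e/c_e\ge \tfrac{\varepsilon}{2c_{\max}}$, hence $\bar g_e\ge\tfrac{\varepsilon}{2c_{\max}}>c_0\varepsilon$ is impossible for a light $e$ only if $c_{\max}$ is tiny — so in general one must keep the light case alive. The clean case is when both $e$ and $e^\ast$ are heavy: then $\bar g_{e^\ast}\ge\tfrac12 g_{e^\ast}$ and $\bar g_e\le 2g_e$, so $\mathrm{score}(e)=g_e/c_e\ge\tfrac12\bar g_e/c_e\ge\tfrac12\bar g_{e^\ast}/c_{e^\ast}\ge\tfrac14\mathrm{score}(e^\ast)$, as desired. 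The remaining cases — where $e^\ast$ is light (so $\mathrm{score}(e^\ast)<\varepsilon$, yet still $\ge\varepsilon/c_{\max}$), or $e$ is light — I would handle by using the additive bound $\bar g_i\le c_0\varepsilon$ together with the anchoring inequality $\mathrm{score}(e^\ast)\ge\varepsilon/c_{\max}$: an item with gain below $\varepsilon$ cannot have empirical score large enough to beat $e^\sharp$ unless its cost is comparably small, in which case its true score is already within a constant factor of $\mathrm{score}(e^\ast)$.

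I expect the last step — the light cases — to be the main obstacle, precisely because the score normalizes the estimable quantity $g_i$ by $c_i$: an item with a tiny gain and a tiny cost can still have a large score, and its gain (which may be as small as $\varepsilon/c_{\max}$) is exactly the regime where a relative concentration guarantee fails. The crux of the bookkeeping is to choose the heavy/light cutoff and the additive accuracy demanded of the light group so that (a) $K=\mathrm{poly}(m,c_{\max})$ samples suffice for all the Chernoff bounds, and (b) on $\mathcal G$ no item whose true score is below $\tfrac14\max_i g_i/c_i$ can have the largest empirical score; the fact that within $L'$ some item has gain at least $\varepsilon$ is what pins the true maximum score into the window $[\varepsilon/c_{\max},\,1]$ and makes (b) achievable. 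Finally, once \Cref{lem:sampling} is established, it plugs into the comparison with $\bar\NA$ (the run of \textsc{ParCA} that always picks a $\tfrac14$-approximately maximum-score item): a union bound over the at most $m$ selection steps of $L'$ shows that with probability $1-\frac1{mc_{\max}}$ all selections in $L'$ are $\tfrac14$-approximately optimal, so $\NA'$ can be coupled to $\bar\NA$; on the remaining event $\NA'\le mc_{\max}$ contributes $O(1)$ in expectation, giving $\E[\NA']\le\E[\bar\NA]+O(1)=O\big(\tfrac1\delta\log\tfrac1\delta\big)\cdot\OPT$ and hence \Cref{thm:partial-cover} for the sampling-based algorithm.
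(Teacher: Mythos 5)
Your overall plan --- split items into heavy/light by the size of $g_i$, get a good event via Chernoff plus a union bound over the at most $m$ items, and run a case analysis comparing the selected item to the true score maximizer --- is the same route the paper takes. You also correctly flag a subtlety that the paper's own writeup glosses over: the paper uses the cutoff $\varepsilon/4$ and concludes only $g_e\ge\tfrac14\max_i g_i$, a comparison of \emph{gains} rather than of \emph{scores} $g_i/c_i$, which delivers the stated lemma only when costs are unit. So the ``light cases'' you worry about represent a real gap in your write-up, not a sign you went down the wrong path.

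The gap closes once you rescale the heavy/light cutoff by $c_{\max}$, exactly as you anticipate in your last paragraph. Take $\varepsilon':=\varepsilon/(8c_{\max})$, call $i$ heavy if $g_i\ge\varepsilon'$ and light otherwise, and let $\mathcal G$ be the event that $\bar g_i\in[\tfrac12 g_i,\,2g_i]$ for every heavy $i$ and $\bar g_j\le 2\varepsilon'$ for every light $j$; a Chernoff bound then needs $K=\Theta\bigl(\tfrac1{\varepsilon'}\log(mc_{\max})\bigr)=\Theta(m^2c_{\max}^2\log(mc_{\max}))$ samples, one extra factor of $c_{\max}$ compared with the coarser cutoff, but still polynomial. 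On $\mathcal G$, the anchoring $g_{e^\sharp}\ge\varepsilon$ combined with $1\le c_i\le c_{\max}$ forces both relevant items to be heavy: since $g_{e^*}\ge c_{e^*}\cdot\mathrm{score}(e^*)\ge\mathrm{score}(e^*)\ge\mathrm{score}(e^\sharp)\ge\varepsilon/c_{\max}=8\varepsilon'$, the true maximizer $e^*$ is heavy; and since $e$ maximizes the empirical score while $e^\sharp$ is heavy, $\bar g_e/c_e\ge\bar g_{e^\sharp}/c_{e^\sharp}\ge\tfrac12\mathrm{score}(e^\sharp)\ge4\varepsilon'$, hence $\bar g_e\ge4\varepsilon'c_e\ge4\varepsilon'$, which is incompatible with $e$ being light (that would give $\bar g_e\le2\varepsilon'$). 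Your ``clean case'' ($\bar g_{e^*}\ge\tfrac12 g_{e^*}$, $\bar g_e\le2g_e$, plus the selection inequality) then immediately yields $\mathrm{score}(e)\ge\tfrac14\mathrm{score}(e^*)$. In short: same approach as the paper, the obstacle you identified is genuine, and it is resolved by choosing the heavy/light threshold at scale $\Theta(\varepsilon/c_{\max})$, paying an extra $c_{\max}$ in the sample budget.
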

\begin{proof} We partition the remaining items $\rv{X}\setminus \rv{S}$ into  $I^+=\{i : g_i\ge \varepsilon/4\}$ and $I^- = \{j : g_j < \varepsilon/4\}$.

Consider any item $i\in I^+$: so  $g_i\ge \varepsilon/4$. As
$\bar{g_i}$ is the average of $K$ independent samples each of mean
$g_i$, using a Chernoff bound we obtain  $\Pr[\frac12 g_i \le \bar{g_i} \le 2g_i] \ge 1-e^{-\Omega(\varepsilon K)} \ge 1-\frac{1}{m^3 c_{\max}}$.
Now consider an item $j\in I^-$: so  $g_j < \varepsilon/4$. By a Chernoff bound again, $\Pr[\bar{g_j} > \varepsilon/2] \le e^{-\Omega(\varepsilon K)}\le \frac{1}{m^3 c_{\max}}$.
So, with probability at least $1-\frac{1}{m^2 c_{\max}}$, 
$$\frac12 g_i \le \bar{g_i} \le 2g_i \mbox{ for all }i\in I^+, \mbox{ and }\bar{g_j}\le \frac{\varepsilon}{2} \mbox{ for all }j\in I^-.$$ 
Below, we condition on this event. 

As  item $e$ is in $L'$, we know that $\max_i g_i \ge \varepsilon$. So $I^+\ne \emptyset$. Recall that $\bar{g_e} = \max_{i} \bar{g_i}$. 
Combined with the above,  it follows that $e\in I^+$. Moreover, 
$$g_e\ge \frac12 \bar{g_e} = \frac12 \max_{i} \bar{g_i} = \frac12  \max_{i\in I^+}  \bar{g_i} \ge \frac14  \max_{i\in I^+}   g_i = \frac14  \max_{i}   g_i.$$
This completes the proof.
\end{proof}
We say that the sampling was successful if the event in  Lemma~\ref{lem:sampling} occurs for every item in $L'$. As there are at most $m$ items, sampling is successful  
with probability at least $1-\frac{1}{mc_{\max}}$. Conditioned on the sampling being successful,  $L'$ is a prefix of some non-adaptive solution  $\bar{\NA}$ that 
always  selects an item   having $\frac14$-approximately maximum score. Hence,
$$\E[\NA'  \, | \mbox{sampling successful}] \le \E[\bar{\NA}] \le O(\frac1\delta \log \frac{1}{\delta})\cdot \OPT.$$
If sampling is not successful then $\NA'$ costs at most $mc_{\max}$. So,
$$\E[\NA'] \le \E[\NA'  \, | \mbox{sampling successful}] + \Pr(\mbox{sampling not successful})\cdot m c_{\max} \le O\bigg(\frac1\delta \log \frac{1}{\delta}\bigg)\cdot \OPT.$$
It now follows that $\E[\NA]=\E[\NA']+\E[\NA'']\le O(\frac1\delta \log \frac{1}{\delta})\cdot \OPT$ as desired.

\section{Additional Plots}\label{app:computations}
In the main body of the paper, we only include plots related to the first dataset for the $\WISERu$ and $\WISERr$ cases. The trends observed in the other datasets are similar. We include plots for all $10$ datasets here for completeness.

\begin{figure}
     \centering
     \begin{subfigure}[b]{0.4\textwidth}
         \centering
         \includegraphics[width=\textwidth]{figures_odt/wiser-0-unit-costs-all.pdf}
     \end{subfigure}
     \begin{subfigure}[b]{0.4\textwidth}
         \centering
         \includegraphics[width=\textwidth]{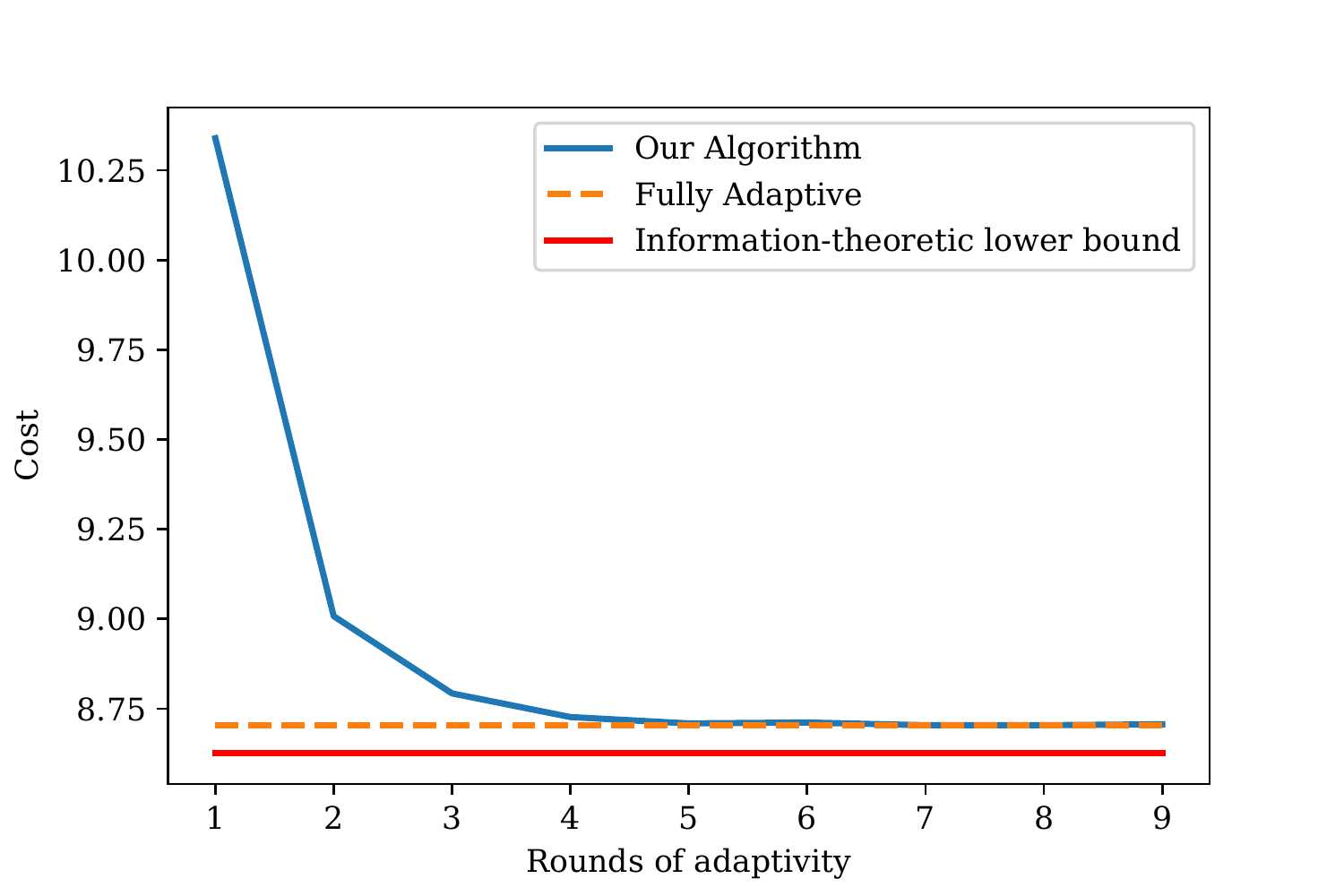}
     \end{subfigure}
     \begin{subfigure}[b]{0.4\textwidth}
         \centering
         \includegraphics[width=\textwidth]{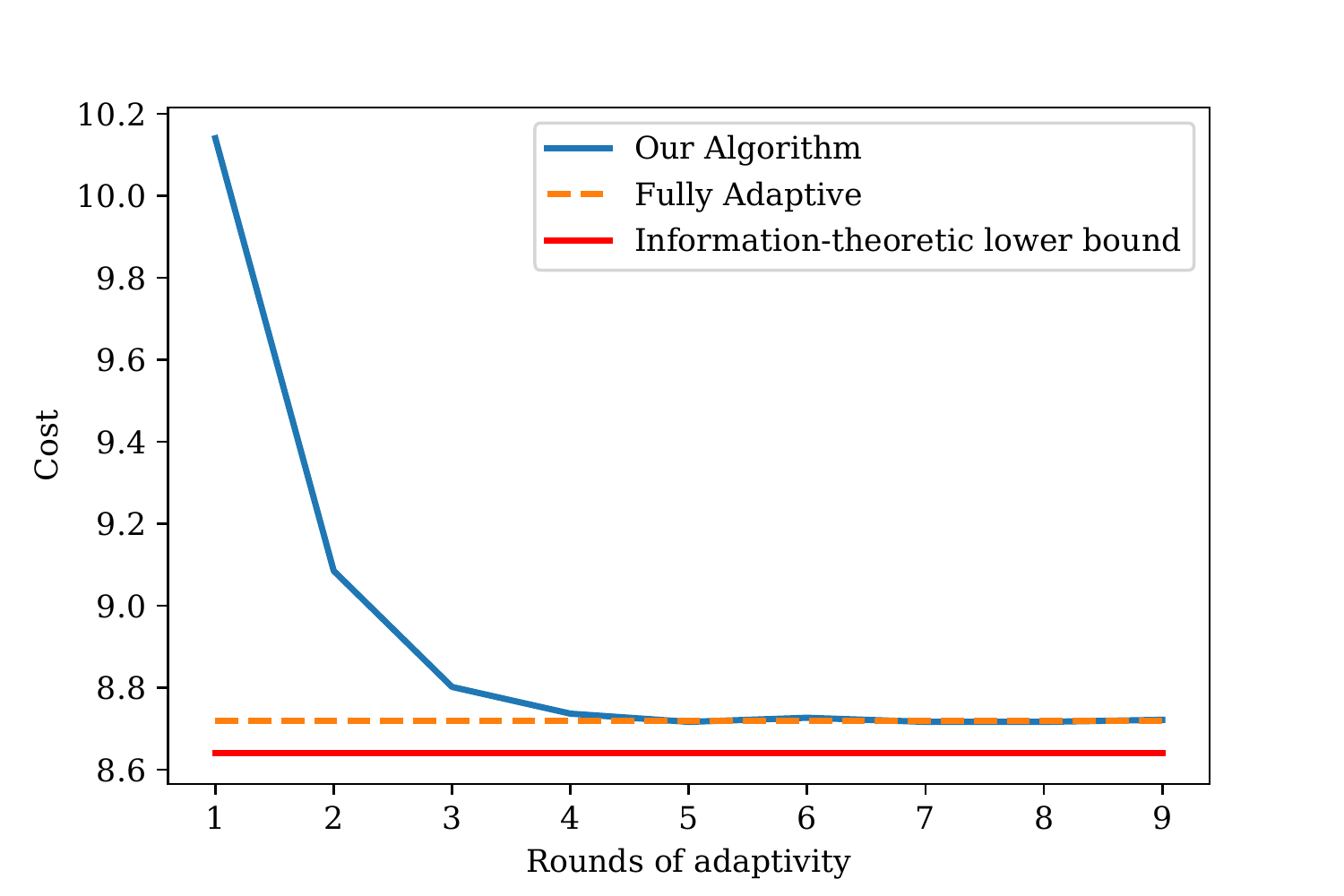}
     \end{subfigure}
     \begin{subfigure}[b]{0.4\textwidth}
         \centering
         \includegraphics[width=\textwidth]{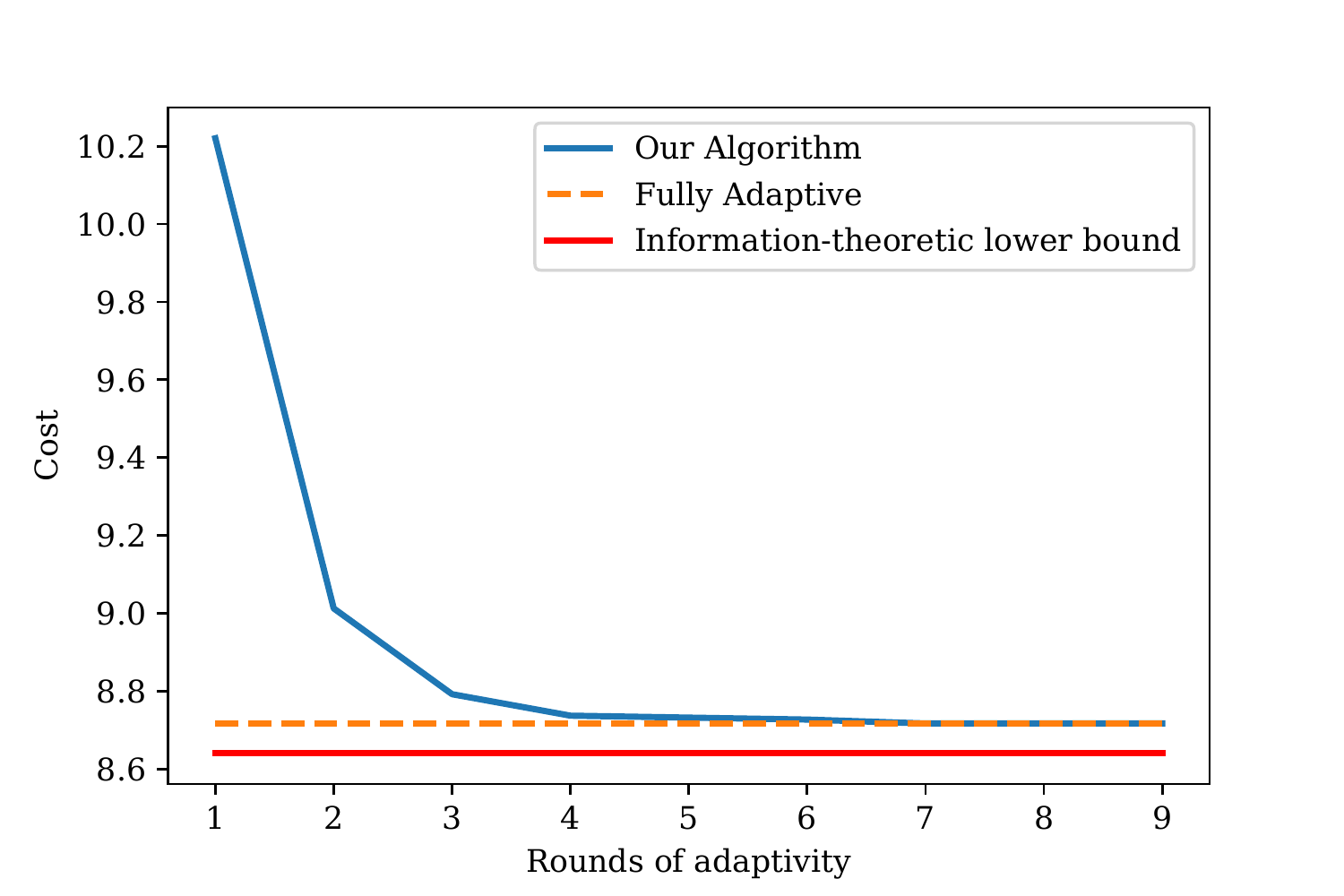}
     \end{subfigure}
     \begin{subfigure}[b]{0.4\textwidth}
         \centering
         \includegraphics[width=\textwidth]{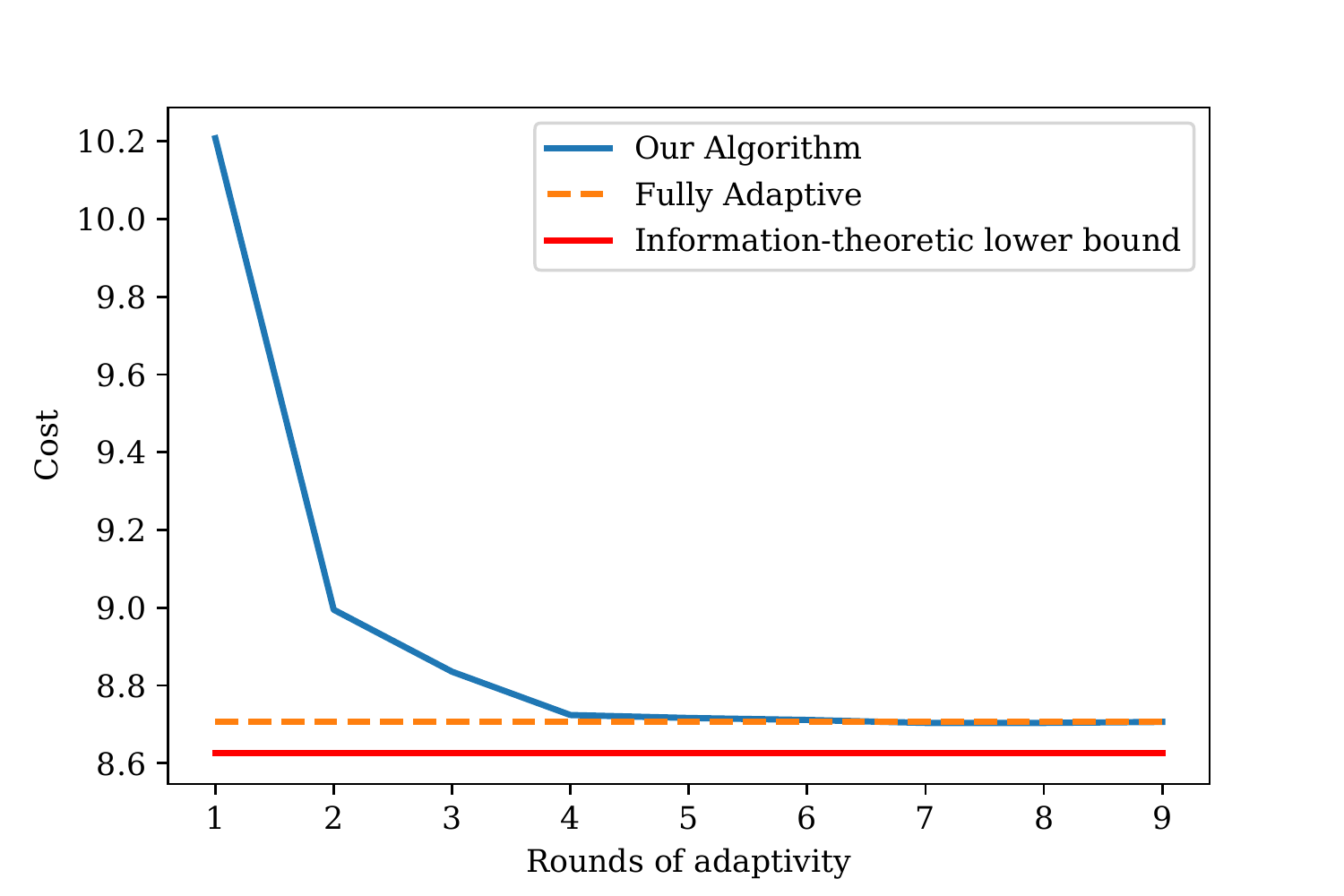}
     \end{subfigure}
     \begin{subfigure}[b]{0.4\textwidth}
         \centering
         \includegraphics[width=\textwidth]{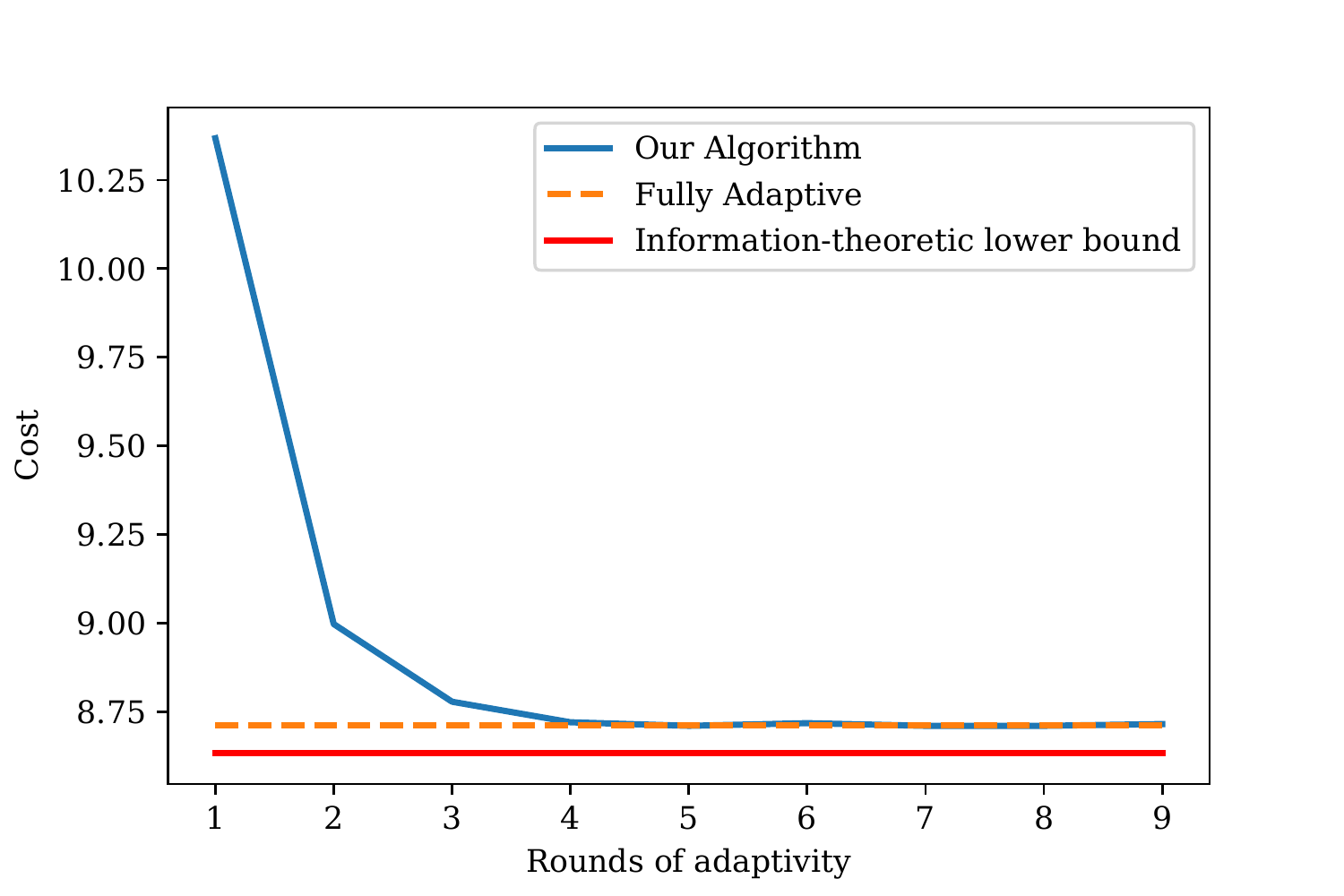}
     \end{subfigure}
     \begin{subfigure}[b]{0.4\textwidth}
         \centering
         \includegraphics[width=\textwidth]{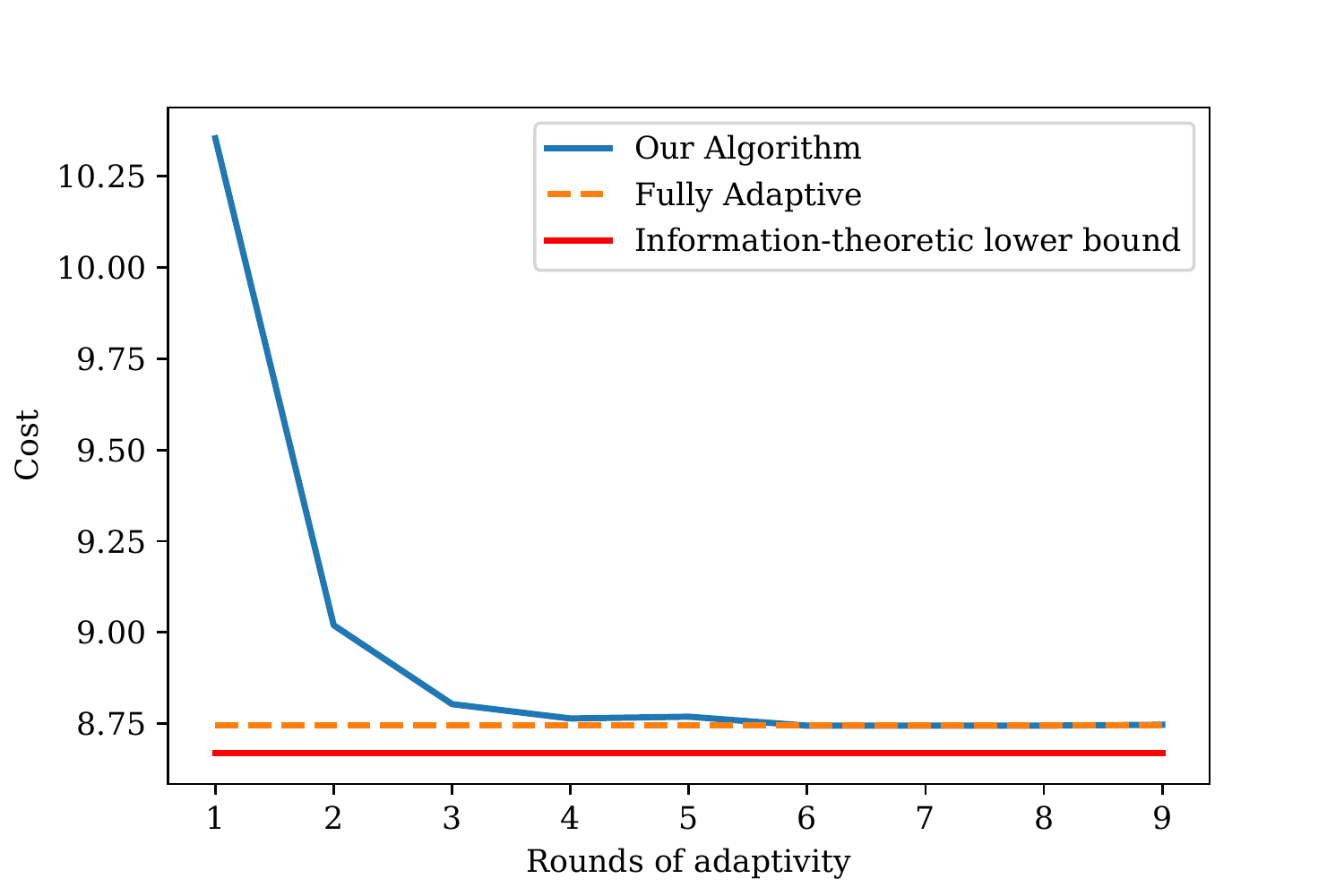}
     \end{subfigure}
     \begin{subfigure}[b]{0.4\textwidth}
         \centering
         \includegraphics[width=\textwidth]{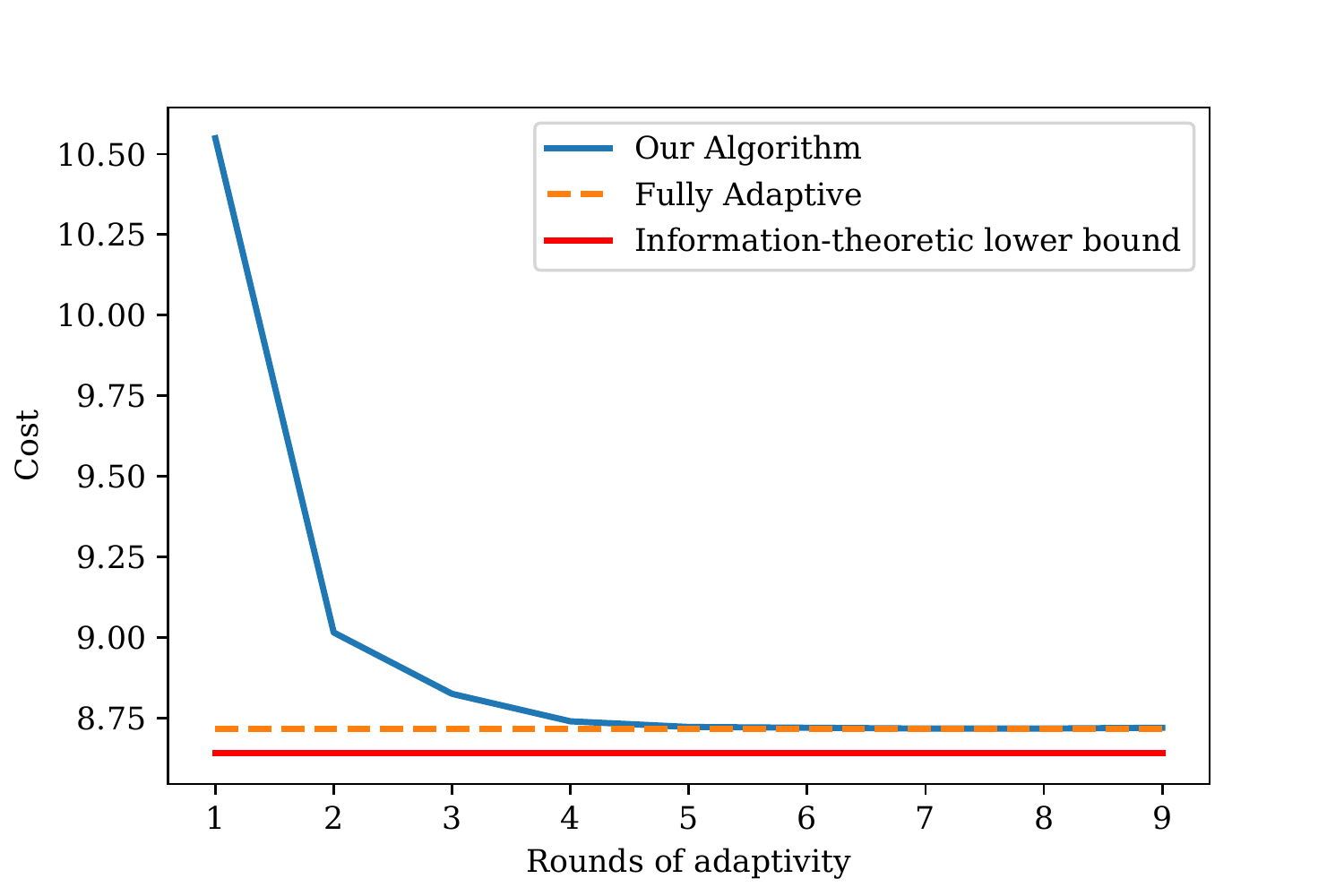}
     \end{subfigure}
     \begin{subfigure}[b]{0.4\textwidth}
         \centering
         \includegraphics[width=\textwidth]{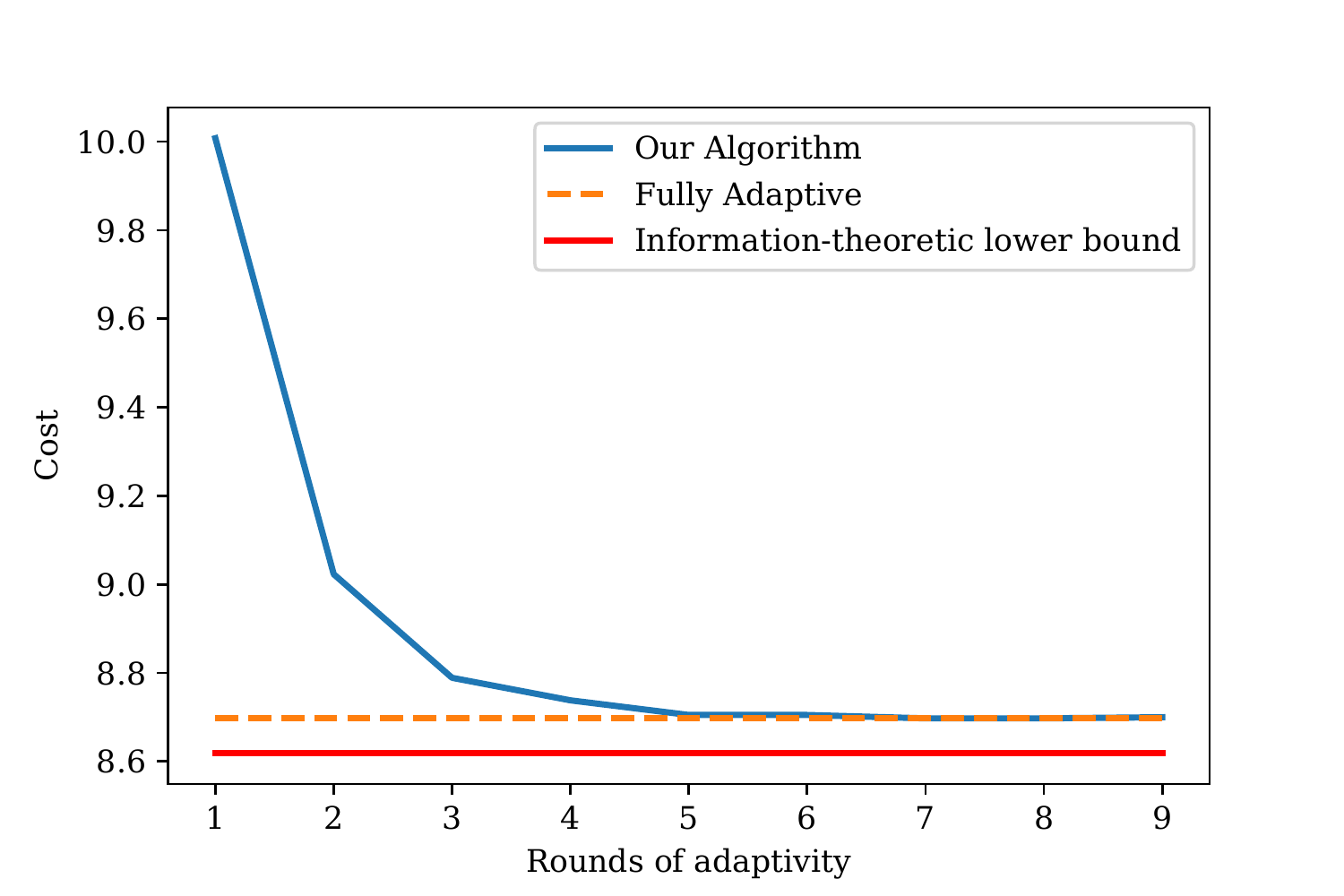}
     \end{subfigure}
     \begin{subfigure}[b]{0.4\textwidth}
         \centering
         \includegraphics[width=\textwidth]{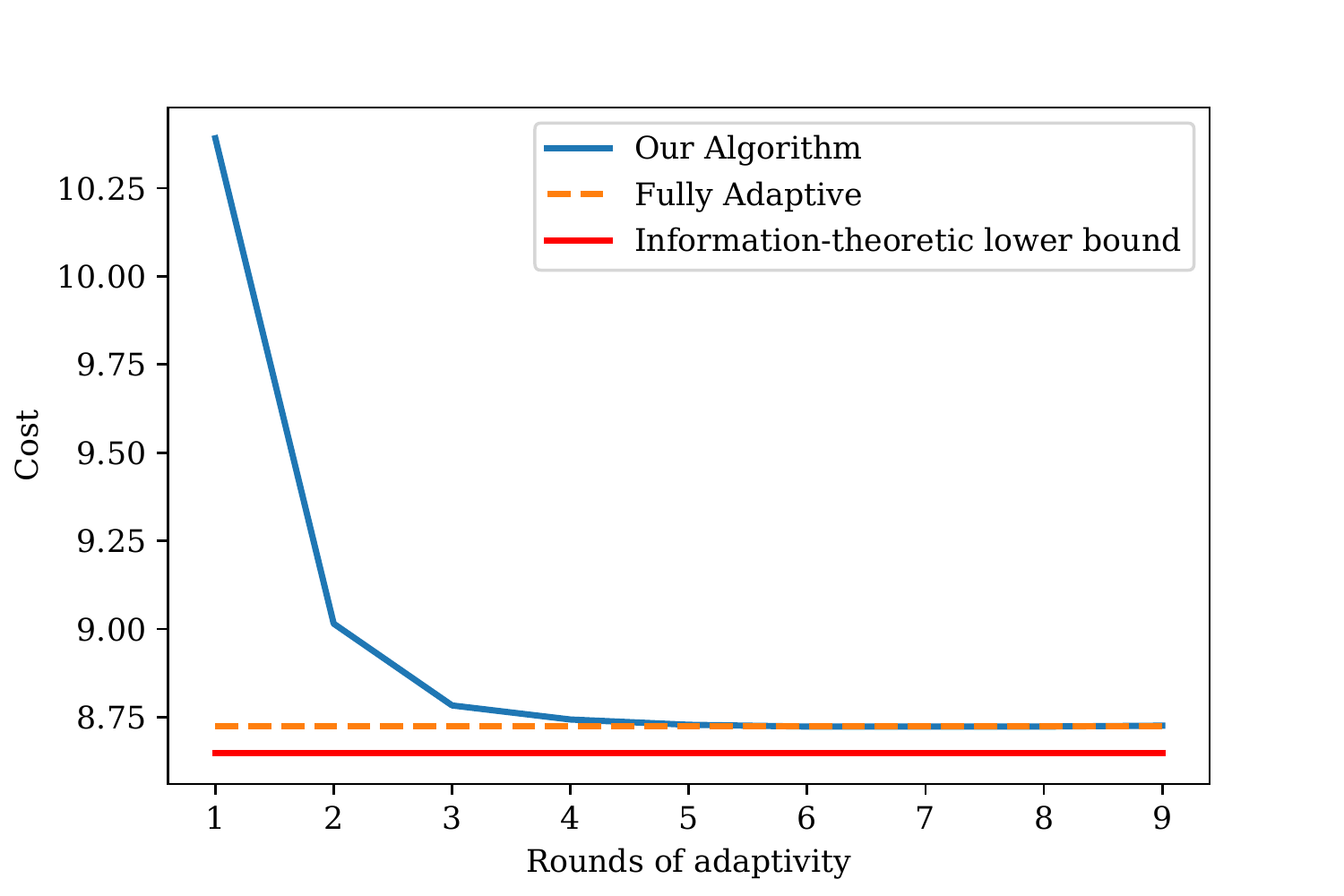}
     \end{subfigure}
     \hfill
        \caption{Computational results for $\ODT$ on $\WISERu$}
        \label{fig:comp-odt-wiser-unit-all}
\end{figure}

\begin{figure}
     \centering
     \begin{subfigure}[b]{0.4\textwidth}
         \centering
         \includegraphics[width=\textwidth]{figures_odt/wiser-0-rand-costs.pdf}
     \end{subfigure}
     \begin{subfigure}[b]{0.4\textwidth}
         \centering
         \includegraphics[width=\textwidth]{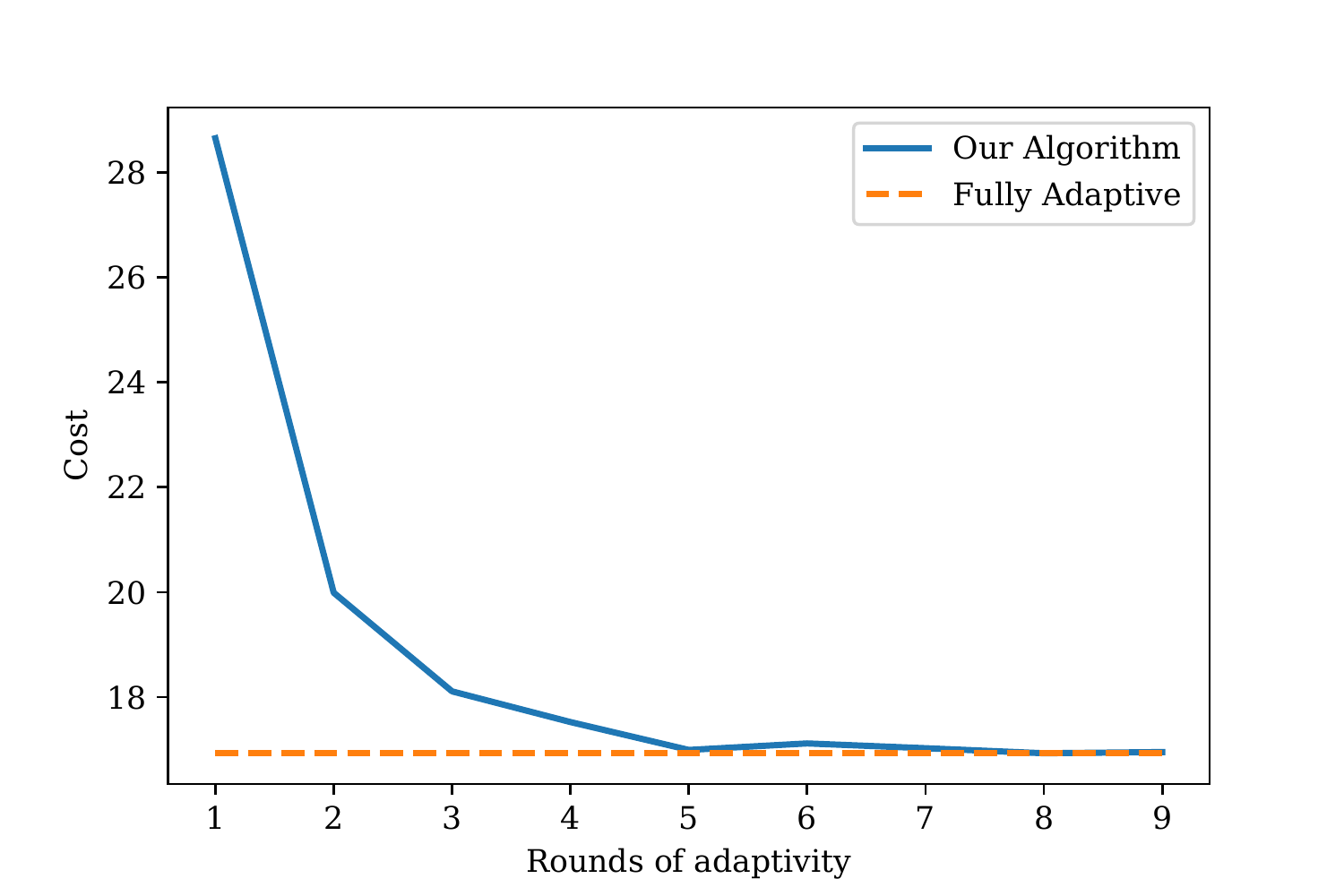}
     \end{subfigure}
     \begin{subfigure}[b]{0.4\textwidth}
         \centering
         \includegraphics[width=\textwidth]{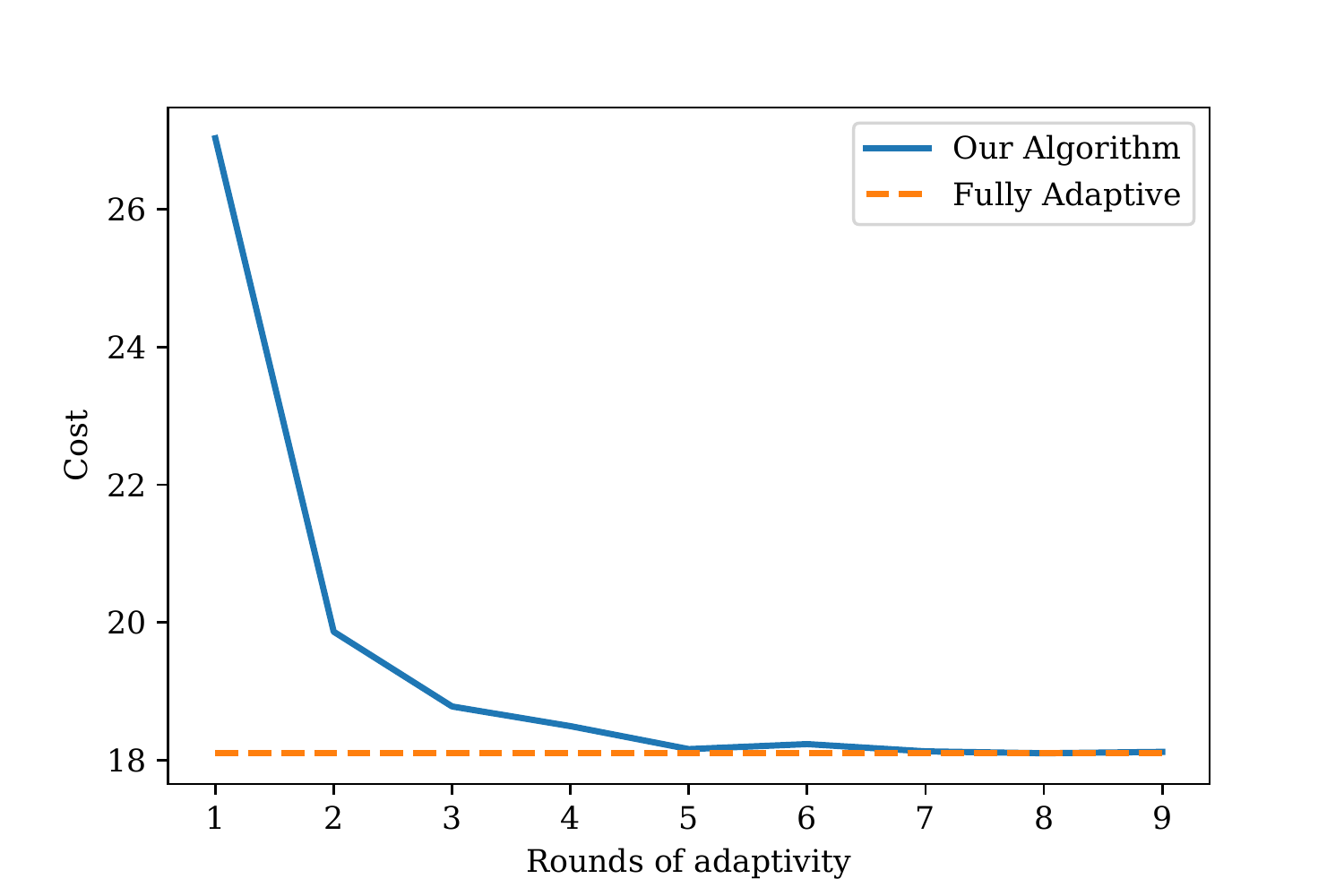}
     \end{subfigure}
     \begin{subfigure}[b]{0.4\textwidth}
         \centering
         \includegraphics[width=\textwidth]{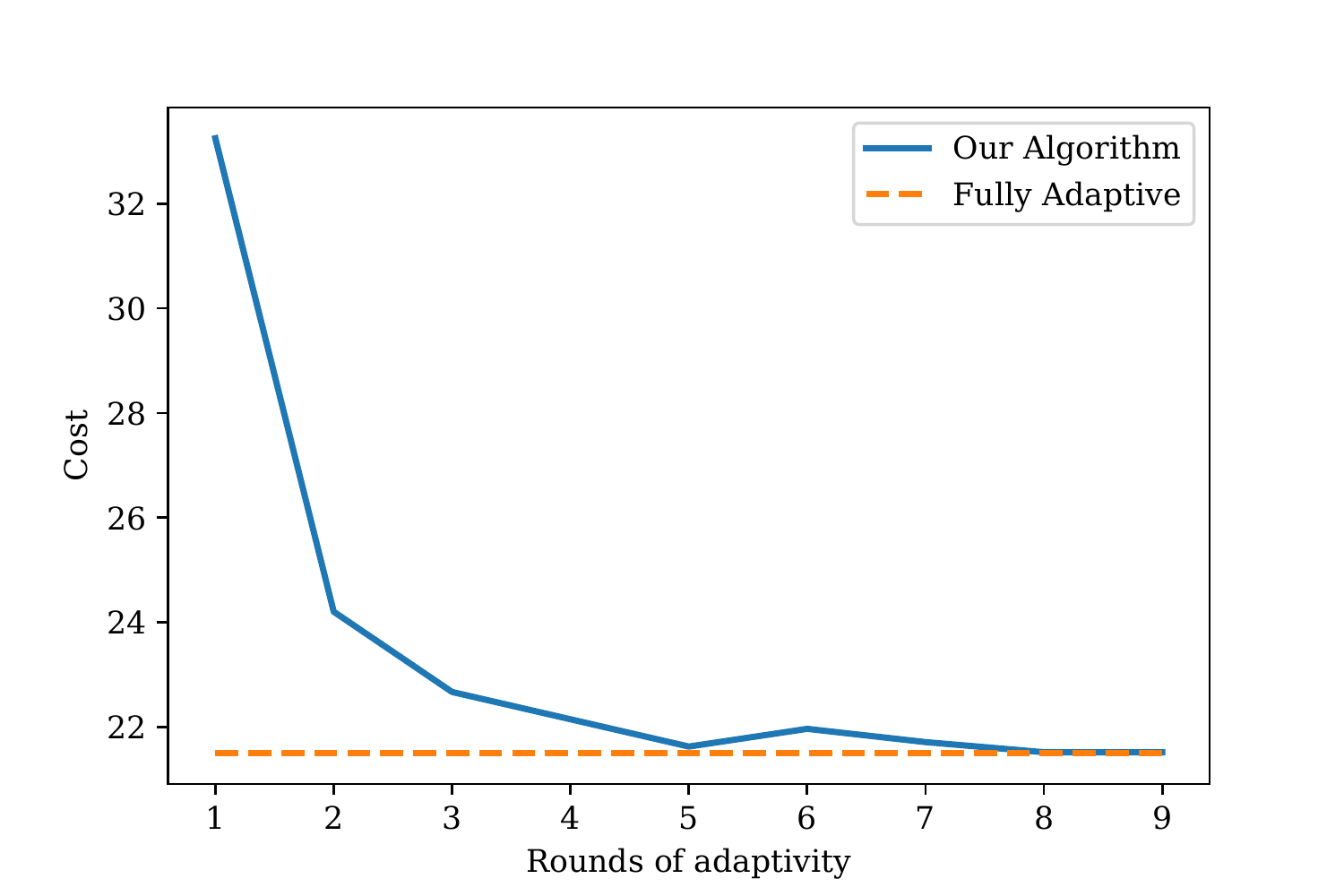}
     \end{subfigure}
     \begin{subfigure}[b]{0.4\textwidth}
         \centering
         \includegraphics[width=\textwidth]{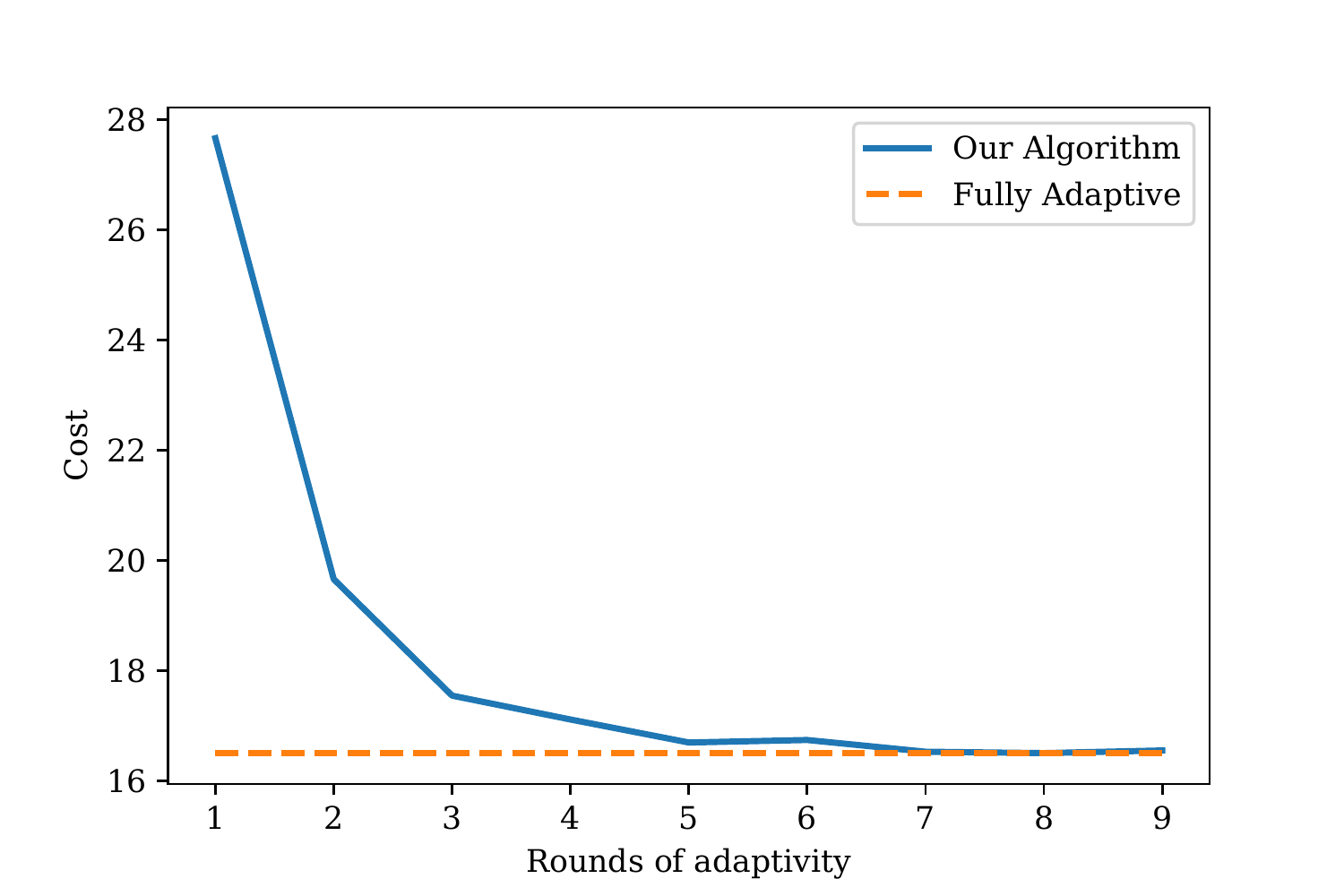}
     \end{subfigure}
     \begin{subfigure}[b]{0.4\textwidth}
         \centering
         \includegraphics[width=\textwidth]{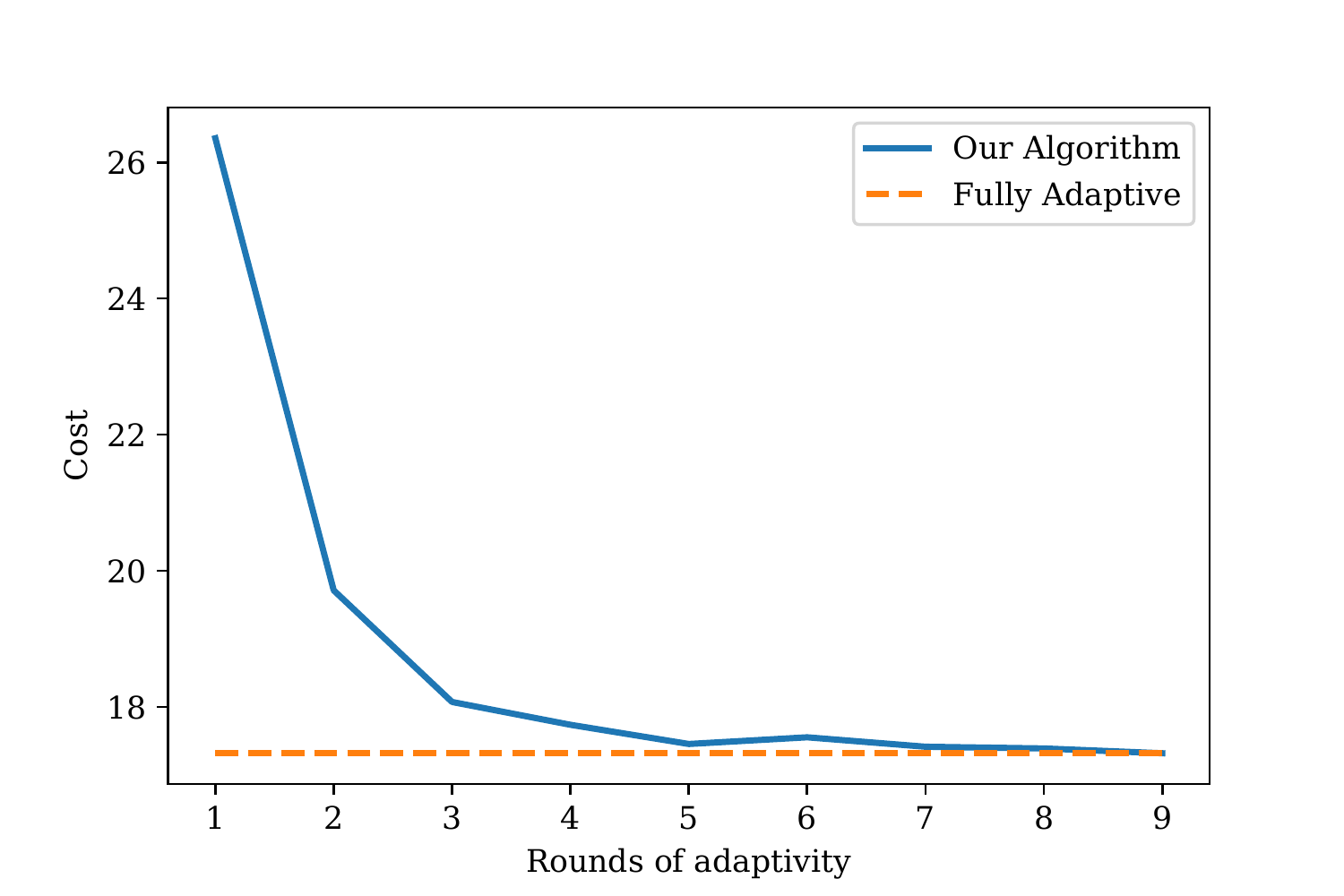}
     \end{subfigure}
     \begin{subfigure}[b]{0.4\textwidth}
         \centering
         \includegraphics[width=\textwidth]{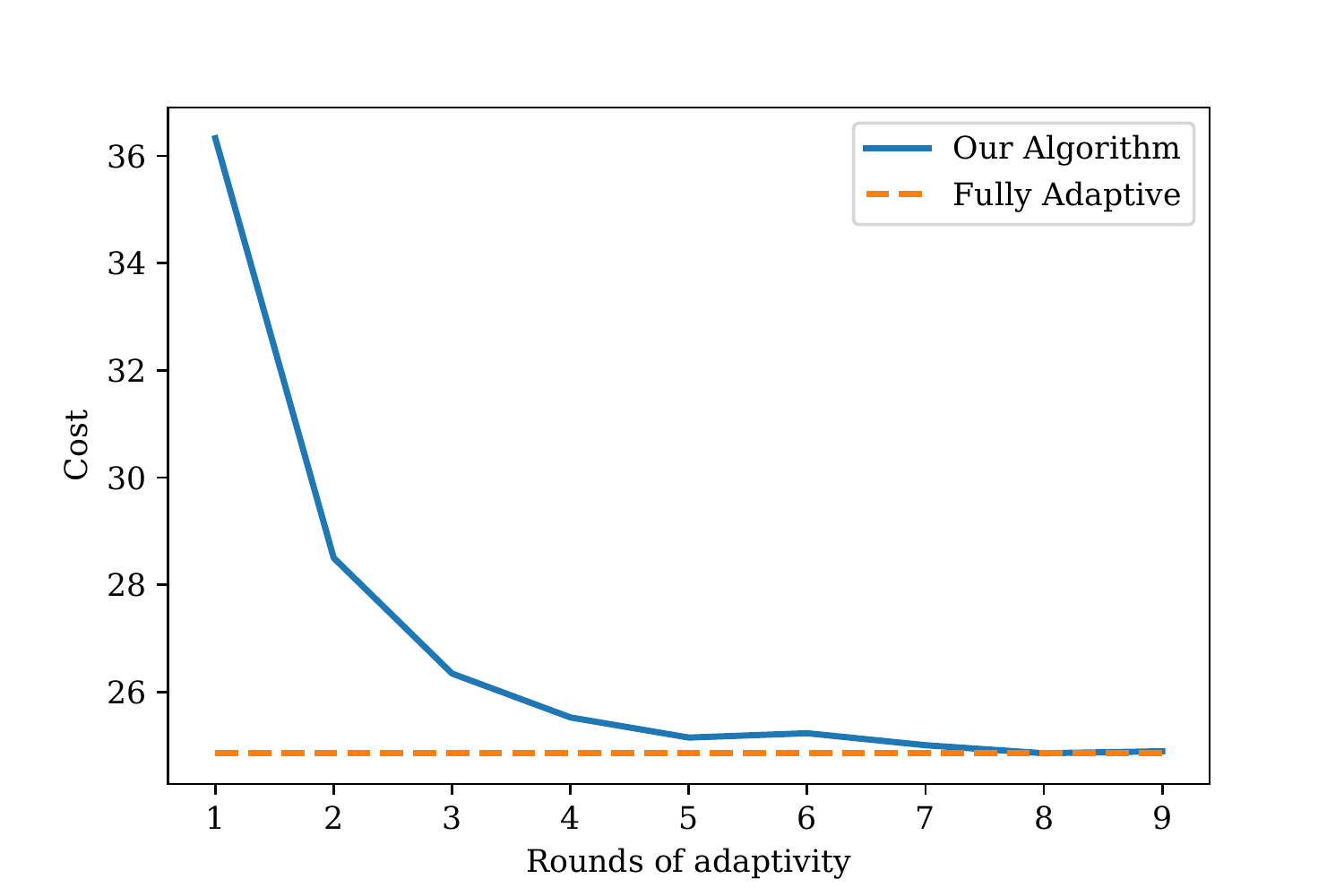}
     \end{subfigure}
     \begin{subfigure}[b]{0.4\textwidth}
         \centering
         \includegraphics[width=\textwidth]{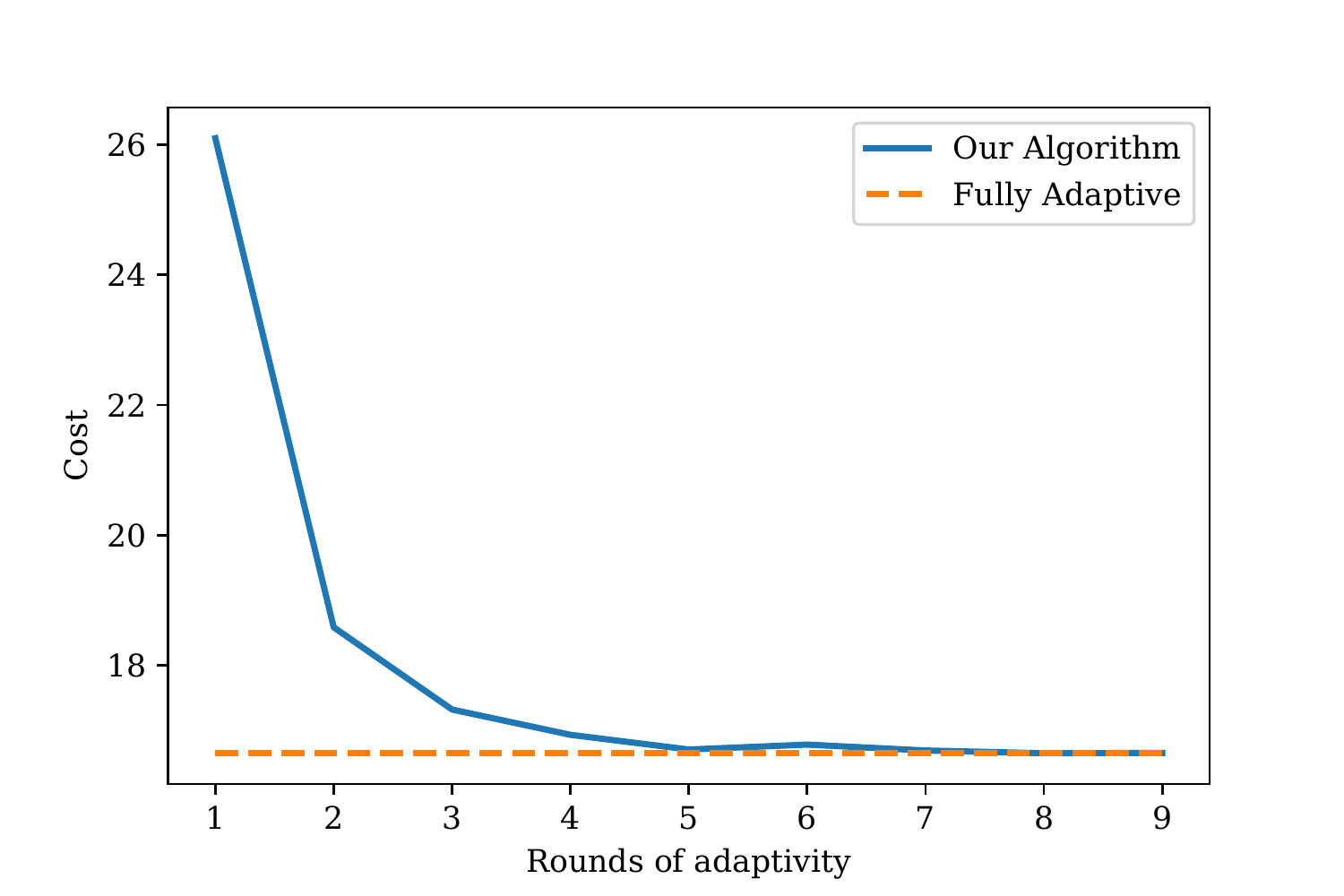}
     \end{subfigure}
     \begin{subfigure}[b]{0.4\textwidth}
         \centering
         \includegraphics[width=\textwidth]{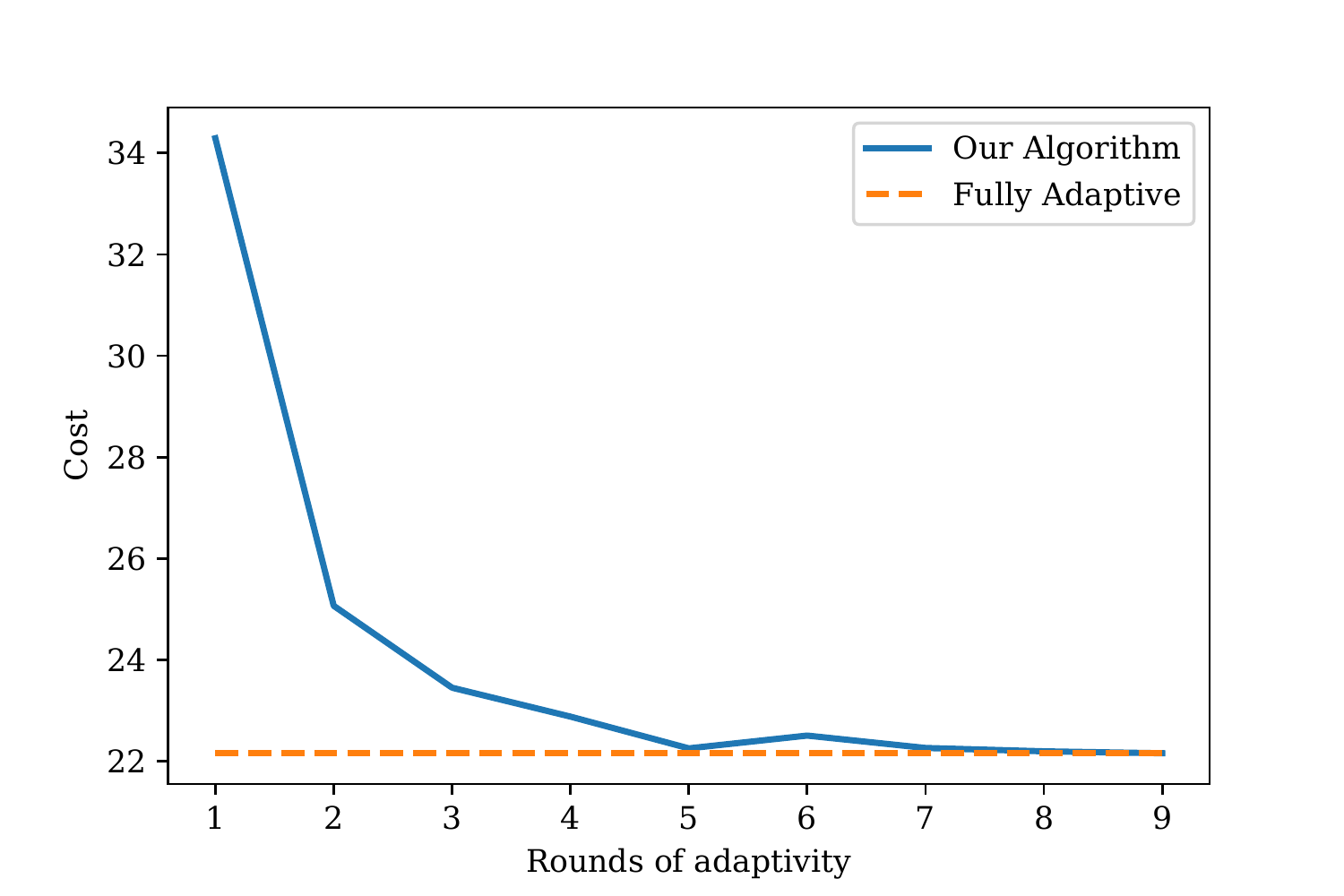}
     \end{subfigure}
     \begin{subfigure}[b]{0.4\textwidth}
         \centering
         \includegraphics[width=\textwidth]{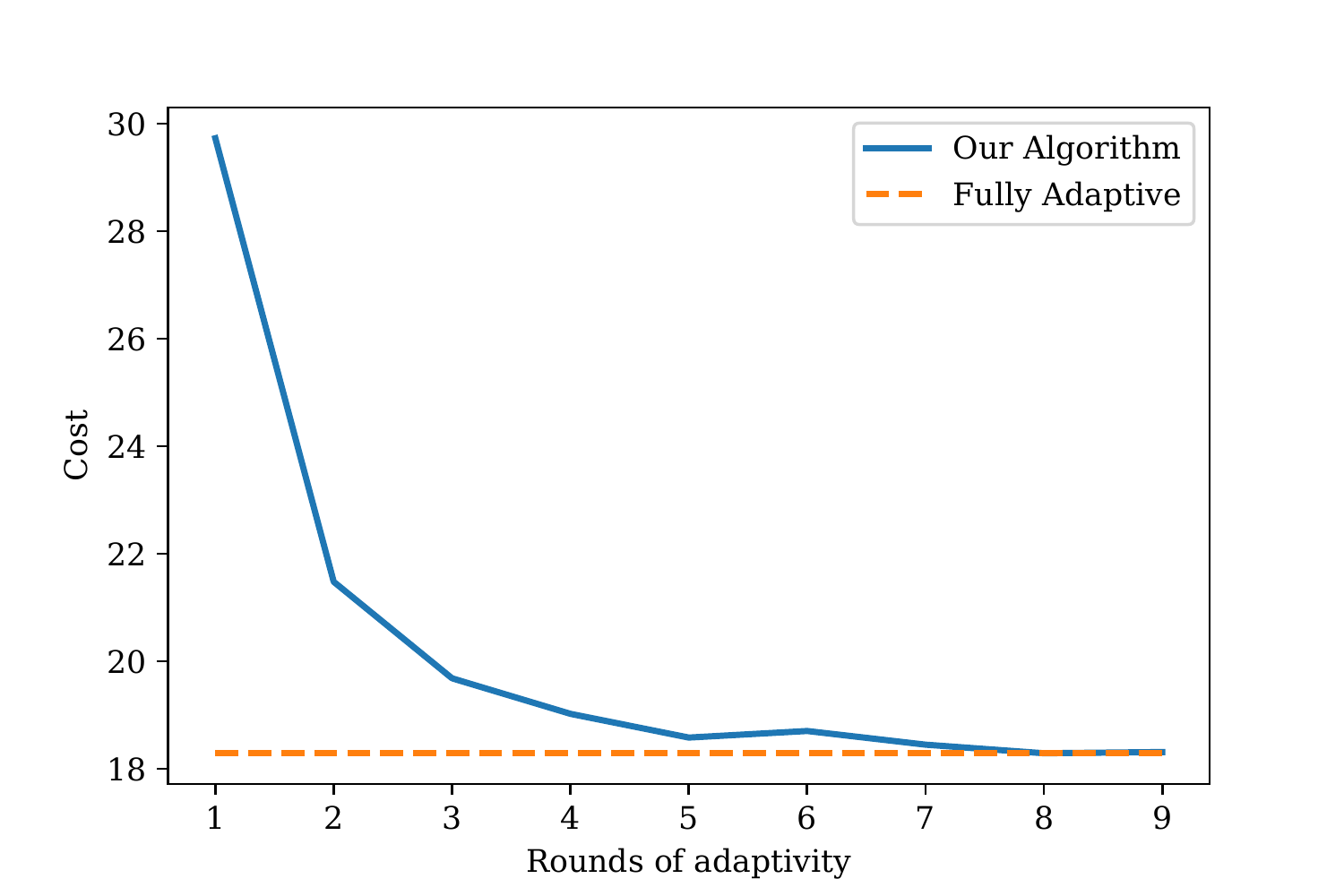}
     \end{subfigure}
     \hfill
        \caption{Computational results for $\ODT$ on $\WISERr$}
        \label{fig:comp-odt-wiser-random-all}
\end{figure}

\end{document}
